\author{Chris Camaño\thanks{Department of Computing and Mathematical Sciences, California Institute of Technology, Pasadena, CA 91125 USA (\email{ccamano@caltech.edu}, \email{eepperly@caltech.edu},  \email{ram900@caltech.edu}, \email{jtropp@caltech.edu})}\and Ethan N. Epperly\footnotemark[1] \and Raphael A. Meyer\footnotemark[1] \and Joel A. Tropp\footnotemark[1]}
\title{Faster Linear Algebra Algorithms \\ with Structured Random Matrices}
\date{Date: 5 July 2025.  Revised: 28 August 2025.}
\newcommand{\email}[1]{\href{mailto:#1}{#1}}
\renewcommand*{\backref}[1]{}
\renewcommand*{\backrefalt}[4]{%
  \ifcase #1 %
  (No citations.)%
  \or
  (Cited on page #2.)%
  \else
  (Cited on pages #2.)%
  \fi
}
\DeclareMathAlphabet{\mathcal}{OMS}{cmsy}{m}{n}
\titlespacing*{\section}{0pt}{12pt}{5pt}
\titlespacing*{\subsection}{0pt}{11pt}{5pt}
\titlespacing*{\subsubsection}{0pt}{11pt}{5pt}
\titlespacing*{\paragraph}{0pt}{6pt}{1em}
    \let\Cref\crtCref
    \let\cref\crtcref
\renewcommand{\algorithmicrequire}{\textbf{Input:}}
\renewcommand{\algorithmicensure}{\textbf{Output:}}
\algrenewcommand\alglinenumber[1]{\sf\scriptsize\color{NavyBlue}{#1}}
\algrenewcommand\algorithmicrequire{\textbf{Input:}}
\algrenewcommand\algorithmicensure{\textbf{Output:}}
\def\hlinewd#1{%
	\noalign{\ifnum0=`}\fi\hrule \@height #1 \futurelet
	\reserved@a\@xhline}
\newtheorem{theorem}{Theorem}
\newtheorem{proposition}[theorem]{Proposition}
\newtheorem{conjecture}[theorem]{Conjecture}
\newtheorem{corollary}[theorem]{Corollary}
\newtheorem{lemma}[theorem]{Lemma}
\newtheorem{importedtheorem}[theorem]{Imported Theorem}
\newtheorem{importedlemma}[theorem]{Imported Lemma}
\numberwithin{theorem}{section}
\numberwithin{equation}{section}
\theoremstyle{definition}
\newtheorem{definition}[theorem]{Definition}
\theoremstyle{remark}
\newtheorem{remark}[theorem]{Remark}
\crefname{setting}{setting}{settings}
\Crefname{setting}{Setting}{Settings}
\crefname{problem}{problem}{problems}
\Crefname{problem}{Problem}{Problems}
\crefname{equation}{}{}
\Crefname{equation}{Equation}{Equations}
\Crefname{importedtheorem}{Imported Theorem}{Imported Theorems}
\Crefname{importedlemma}{Imported Lemma}{Imported Lemmas}
\newtheorem*{rep@theorem}{\rep@title}
\newcommand{\newreptheorem}[2]{%
\newenvironment{rep#1}[1]{%
 \def\rep@title{\Cref{##1}, restated}%
 \begin{rep@theorem}}%
 {\end{rep@theorem}}}
\newtheorem*{rep@lemma}{\rep@title}
\newcommand{\newreplemma}[2]{%
\newenvironment{rep#1}[1]{%
 \def\rep@title{\Cref{##1} Restated}%
 \begin{rep@lemma}}%
 {\end{rep@lemma}}}
\newtheorem*{rep@definition}{\rep@title}
\newcommand{\newrepdefinition}[2]{%
\newenvironment{rep#1}[1]{%
 \def\rep@title{\Cref{##1}, restated}%
 \begin{rep@definition}}%
 {\end{rep@definition}}}
\newtheorem*{rep@corollary}{\rep@title}
\newcommand{\newrepcorollary}[2]{%
\newenvironment{rep#1}[1]{%
 \def\rep@title{\Cref{##1}, restated}%
 \begin{rep@corollary}}%
 {\end{rep@corollary}}}
\newcommand*{\vertbar}{\rule[-1ex]{0.5pt}{2.5ex}}
\newcommand{\e}{\mathrm{e}}
\renewcommand{\Re}{\mathrm{Re}}
\renewcommand{\Im}{\mathrm{Im}}
\renewcommand{\top}{\protect{\smash{*}}} %
\let\daggerFake\dagger
\renewcommand{\dagger}{{\smash{\daggerFake}}}
\newcommand{\defeq}[0]{\ensuremath{\;{\vcentcolon=}\;}\xspace}
\let\hat\relax
\newcommand{\hat}[1]{\smash{\skew{4}\widehat{\smash{\boldsymbol{#1}}\mathstrut}}}
\let\oldtilde\tilde
\let\tilde\relax
\newcommand{\tilde}[1]{\smash{\skew{4}\oldtilde{\smash{\boldsymbol{#1}}\mathstrut}}}
\let\norm\relax
\newcommand{\norm}[1]{\enVert{#1}}
\newcommand{\bignorm}[1]{\bigl\|#1\bigr\|}
\DeclareMathOperator*{\argmin}{arg\,min}
\DeclareMathOperator*{\Var}{Var}
\DeclareMathOperator*{\E}{\mathbb{E}}
\DeclareMathOperator{\tr}{Tr}
\DeclareMathOperator{\rank}{rank}
\DeclareMathOperator{\range}{range}
\DeclareMathOperator{\nnz}{nnz}
\let\Vec\relax
\DeclareMathOperator{\Vec}{vec}
\newcommand{\ie}{\text{i.e.}\xspace}
\newcommand{\etal}{\text{et al.}\xspace}
\newcommand{\eg}{\text{e.g.}\xspace}
\newcommand{\eps}[0]{\ensuremath{\varepsilon}}
\let\epsilon\eps
\newcommand{\rademacher}{\varrho}
\newcommand{\rad}{\rademacher}
\newcommand{\colsparse}{\xi}
\newcommand{\ecircumtilde}{\~{\^e}}
\newcommand{\Nguyen}{Nguy\ecircumtilde n}
\newcommand{\tsfrac}[2]{{\textstyle\frac{#1}{#2}}}
\newcommand{\vecalt}[1]{\boldsymbol{#1}} %
\newcommand{\mat}[1]{\bm{#1}} %
\renewcommand{\vec}[1]{\bm{#1}} %
\newcommand{\bmat}[1]{\begin{bmatrix} #1 \end{bmatrix}} %
\newcommand{\sbmat}[1]{\left[\begin{smallmatrix} #1 \end{smallmatrix}\right]} %
\newcommand{\R}{\bbR}
\newcommand{\C}{\bbC}
\newcommand{\F}{\mathbb{F}}
\newcommand{\coloneqq}{\defeq}
\newcommand{\prob}{\mathbb{P}}
\renewcommand{\Pr}{\prob}
\newcommand{\Id}{\mathbf{I}}
\newcommand{\order}{\cO}
\newcommand{\orderish}{\widetilde\cO}
\DeclareMathOperator*{\Mom}{Mom}
\DeclareMathOperator{\Span}{span} %
\newcommand{\lra}[2]{\ensuremath{\llbracket #1 \rrbracket_{#2}}}
\newcommand{\ptop}{{\vphantom{*}}} %
\newcommand{\nahutchpp}{\text{NA-Hutch\raisebox{0.35ex}{\relscale{0.75}++}}}
\definecolor{color1}{HTML}{2437E6}
\definecolor{color2}{HTML}{D12757}
\newcommand{\mA}{\ensuremath{\mat{A}}\xspace}
\newcommand{\mB}{\ensuremath{\mat{B}}\xspace}
\newcommand{\mC}{\ensuremath{\mat{C}}\xspace}
\newcommand{\mD}{\ensuremath{\mat{D}}\xspace}
\newcommand{\mF}{\ensuremath{\mat{F}}\xspace}
\newcommand{\mG}{\ensuremath{\mat{G}}\xspace}
\newcommand{\mH}{\ensuremath{\mat{H}}\xspace}
\newcommand{\mI}{\ensuremath{\mathbf{I}}\xspace}
\newcommand{\mM}{\ensuremath{\mat{M}}\xspace}
\newcommand{\mP}{\ensuremath{\mat{P}}\xspace}
\newcommand{\mQ}{\ensuremath{\mat{Q}}\xspace}
\newcommand{\mS}{\ensuremath{\mat{S}}\xspace}
\newcommand{\mT}{\ensuremath{\mat{T}}\xspace}
\newcommand{\mU}{\ensuremath{\mat{U}}\xspace}
\newcommand{\mV}{\ensuremath{\mat{V}}\xspace}
\newcommand{\mW}{\ensuremath{\mat{W}}\xspace}
\newcommand{\mX}{\ensuremath{\mat{X}}\xspace}
\newcommand{\mY}{\ensuremath{\mat{Y}}\xspace}
\newcommand{\mZ}{\ensuremath{\mat{Z}}\xspace}
\newcommand{\mLambda}{\ensuremath{\mat{\Lambda}}\xspace}
\newcommand{\mSigma}{\ensuremath{\mat{\Sigma}}\xspace}
\newcommand{\mOmega}{\ensuremath{\mat{\Omega}}\xspace}
\newcommand{\mPsi}{\ensuremath{\mat{\Psi}}\xspace}
\newcommand{\mPhi}{\ensuremath{\mat{\Phi}}\xspace}
\renewcommand{\eqref}{\cref}
\let\oldthebibliography\thebibliography
\renewcommand{\thebibliography}[1]{%
  \oldthebibliography{#1}%
  \setlength{\itemsep}{0pt}%
  \setlength{\parskip}{0pt}%
}
\newcommand{\va}{\ensuremath{\vec{a}}\xspace}
\newcommand{\vb}{\ensuremath{\vec{b}}\xspace}
\newcommand{\ve}{\ensuremath{\mathbf{e}}\xspace}
\newcommand{\vg}{\ensuremath{\vec{g}}\xspace}
\newcommand{\vq}{\ensuremath{\vec{q}}\xspace}
\newcommand{\vu}{\ensuremath{\vec{u}}\xspace}
\newcommand{\vw}{\ensuremath{\vec{w}}\xspace}
\newcommand{\vx}{\ensuremath{\vec{x}}\xspace}
\newcommand{\vy}{\ensuremath{\vec{y}}\xspace}
\newcommand{\vz}{\ensuremath{\vec{z}}\xspace}
\newcommand{\vomega}{\ensuremath{\vecalt{\omega}}\xspace}
\newcommand{\vnu}{\ensuremath{\vecalt{\nu}}\xspace}
\newcommand{\cF}{\ensuremath{{\mathcal F}}\xspace}
\newcommand{\cH}{\ensuremath{{\mathcal H}}\xspace}
\newcommand{\cN}{\ensuremath{{\mathcal N}}\xspace}
\newcommand{\cO}{\ensuremath{{\mathcal O}}\xspace}
\newcommand{\cR}{\ensuremath{{\mathcal R}}\xspace}
\newcommand{\cT}{\ensuremath{{\mathcal T}}\xspace}
\newcommand{\cV}{\ensuremath{{\mathcal V}}\xspace}
\newcommand{\cX}{\ensuremath{{\mathcal X}}\xspace}
\newcommand{\bbC}{\ensuremath{{\mathbb C}}\xspace}
\newcommand{\bbF}{\ensuremath{{\mathbb F}}\xspace}
\newcommand{\bbN}{\ensuremath{{\mathbb N}}\xspace}
\newcommand{\bbR}{\ensuremath{{\mathbb R}}\xspace}
\newcommand{\rC}{\ensuremath{\mathrm{C}}\xspace}
\begin{document}

\pagenumbering{gobble}

\maketitle

\begin{abstract}
To achieve the greatest possible speed,
practitioners regularly implement
randomized algorithms for low-rank approximation
and least-squares regression
with structured dimension reduction maps.
Despite significant research effort,
basic questions remain about the design and analysis
of randomized linear algebra algorithms
that employ structured random matrices.

This paper develops a new perspective on structured dimension reduction, %
based on %
the \emph{oblivious subspace injection} (OSI) property.
The OSI property is a relatively weak assumption
on a random matrix that holds when the matrix
preserves the length of vectors on average
and, with high probability, does not annihilate
any vector in a low-dimensional subspace.
With the OSI abstraction, the analysis
of a randomized linear algebra algorithm factors into two parts:
(i) proving that the algorithm works when implemented with an OSI;
and (ii) proving that a given random matrix model has the OSI property.

This paper develops both parts of the program.
First, it analyzes standard randomized algorithms for low-rank
approximation and least-squares regression under the OSI assumption.
Second, it identifies many examples of OSIs, including
random sparse matrices, randomized trigonometric transforms,
and random matrices with tensor product structure.
These theoretical results imply faster, near-optimal
runtimes for several fundamental linear algebra tasks.
The paper also provides guidance on implementation,
along with empirical evidence that structured
random matrices offer exemplary performance for
a range of synthetic problems
and contemporary scientific applications.
\end{abstract}

\setcounter{tocdepth}{2} 
\tableofcontents

\newpage
\setcounter{page}{0}
\pagenumbering{arabic}

\section{Introduction}

Randomized algorithms allow us to solve
certain linear algebra problems faster and
more reliably than ever before,
with real implications for the practice of 
machine learning and scientific computing.

In the field of randomized linear algebra,
many fundamental algorithms are based
on randomized linear dimension reduction,
often called \emph{sketching}
\cite{sarlos06,halko11,woodruff2014sketching,Drineas17,MT20,murray23,KT23:Randomized-Matrix}.
In this context, sketching algorithms compress %
a large
input matrix $\mA \in \R^{n \times d}$ by forming
the product $\mY = \mA \mOmega \in \R^{n \times k}$
with a smaller random \emph{test matrix}
$\mOmega \in \R^{d \times k}$,
where the \emph{embedding dimension} $k \ll d$.
We can quickly extract salient information about
the large input matrix $\mA$ by manipulating the
small \emph{sketch} matrix $\mY$.
This strategy leads to powerful algorithms for %
several important linear algebra problems,
including low-rank approximation~\cite{FKV98:Fast-Monte-Carlo,Williams00,DKM06:Fast-Monte-Carlo-II,Clarkson09,halko11,Clarkson13,Naka2020}
and least-squares regression~\cite{sarlos06,rokhlin08,avron10}.

For many sketching algorithms,
the cost of computing the sketch $\mY=\mA\mOmega$
dominates the overall runtime.
If we use an \emph{unstructured} test matrix $\mOmega$,
such as the standard Gaussian test matrix in~\cite{halko11},
the sketching step requires an expensive dense matrix--matrix product.
We can %
reduce the runtime by using a
\emph{structured} test matrix $\mOmega$
that supplies
a faster algorithm for the matrix--matrix product.
The most popular examples are
sparse random matrices~\cite{achlioptas03,Clarkson09,nelson13,KN14}
and randomized trigonometric transforms~\cite{ailon09,woolfe08,drineas11,tropp11SRHT}.
For linear algebra problems with tensor structure,
we can employ test matrices with a compatible tensor
structure, such as random Khatri--Rao products~\cite{biagioni15,battaglino2018,rakhshan20,bujanovic25,camano25,ahle20}.

According to the cognoscenti~\cite{kapralov16,MT20}, sketching with a Gaussian test matrix is the gold standard for quality of output.  
Moreover, we can derive precise theoretical results that %
predict the empirical performance of Gaussian sketching methods~\cite{halko11,tropp2017a,tropp17b,tropp19,MT20,derezinski23,bartan20}.
In practice, algorithms that sketch with
a \emph{structured} test matrix
produce outputs whose quality matches
the Gaussian standard---at a dramatically lower
computational cost~\cite{tropp19,epperly23,dong23,avron10,chen25,murray23,MT20}.
Yet we cannot fully explain the fabulous performance of
structured sketching methods through existing
theoretical frameworks.

\paragraph{Theory.}
This paper develops a new theoretical perspective
on sketching with structured random matrices.
We prove that several randomized linear algebra algorithms
succeed %
when implemented with any test matrix that
satisfies the \emph{oblivious subspace injection (OSI)}
property. %
The OSI condition is genuinely weaker than the
oblivious subspace \emph{embedding} (OSE) property that is
standard in the literature.  We explain how to establish
the OSI property %
using modern tools from high-dimensional probability~\cite{oliveira16,koltchinskii2015bounding,tropp23hdp,tropp25,Ver25:High-Dimensional-Probability-2ed}.
See \cref{sec:intro-osi,sec:intro-randlna-via-osi,sec:intro-osi-is-easier-than-ose} for an overview of this work.

\paragraph{Design.}

We make recommendations on how to design and implement structured sketching methods to attain the best empirical performance, with provable guarantees.
For sparse test matrices, our analysis justifies sparsity levels that were previously unsupportable, and %
we can contemplate %
levels that had been considered impossible
(\cref{sec:sparse-intro}).
Second, we construct a randomized
trigonometric transform with the minimal
embedding dimension and the fastest possible
sketching time (\cref{sec:fast-tranform-intro}).
Last, we produce the first qualitatively accurate bounds
on the worst-case behavior of Khatri--Rao sketching methods,
and we identify variations that offer exponential improvements
over the most popular constructions (\cref{sec:khatri-rao-intro}).

\paragraph{Why it matters.} For practical computation,
we must insist on algorithms that produce outputs with the
highest quality, as quickly as possible, and
with reliability guarantees.
Sketching methods implemented with our recommended
test matrices can meet these stringent requirements.
We include two stylized applications
that illustrate the benefits for
scientific computing (\cref{sec:science-applications}).
First, we use sparse test matrices to compress a large PDE simulation quickly and accurately.
Second, we use tensor product test matrices to estimate the partition function of a quantum mechanical system to high precision.

\paragraph{}

\noindent
\hspace{\parindent}
In sum, we hope that our work provides theoreticians
with new insights on sketching methods
for randomized linear algebra
and that it equips practitioners with faster and more reliable
algorithms.

\subsection{Empirical context: The power of structured random matrices}

Let us motivate our investigation by means of an empirical study.
Our experiments align with the conclusions of two decades
of empirical work on sketching algorithms, including \cite{tropp19,epperly23,dong23,avron10,chen25,murray23}.
\textbf{Structured test matrices provide the same accuracy as Gaussian test matrices, while the structure allows us to design far more efficient algorithms.}

To confirm the first point,
\Cref{fig:rsvd-gaussian-vs-structured} charts the approximation performance of the randomized SVD algorithm (RSVD, \cref{alg:rsvd}) \cite[p.~227]{halko11}, implemented with several structured test matrices.
The %
testbed consists of each matrix from
the \texttt{SuiteSparse} collection~\cite{Davis11}
whose dimensions both fall between $300$ and $500{,}000$;
these $2{,}314$
matrices arise from a range of applications
and exhibit a variety of singular value
distributions.
For each structured test matrix and for each input matrix,
we compare the rank $k = 200$ approximation error to the baseline error achieved with a (scaled) standard Gaussian matrix,
and we perform three trials. %
In most cases,
the approximation quality is almost indistinguishable from
the Gaussian baseline.
Even for the hardest examples, the structured
test matrix yields at most $4 \times$ higher error.

Even though we suffer no loss in output quality by implementing
sketching algorithms with structured test matrices,
we gain dramatic improvements in efficiency.
For example,~\cref{fig:sparseStack-speed} illustrates
that sparse test matrices can reduce the sketching
time by \emph{two orders of magnitude},
as compared with Gaussian test matrices.
The benefits only accrue as the problem instances become larger.

\begin{figure}[t]
    \centering
\includegraphics[width=1\linewidth]{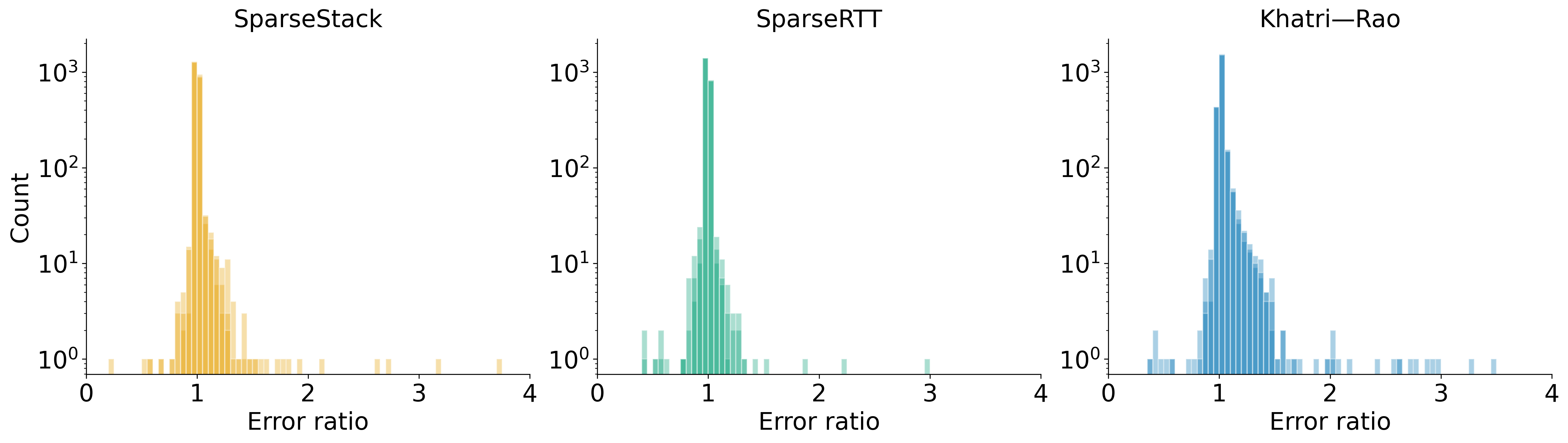}
    \caption{
\textbf{Randomized SVD: Structured vs.\ Gaussian.} 
For each structured test matrix, we form histograms of the
error ratio \(\Vert \mA-\hat\mA_{\rm Structured}\Vert_{\rm F} / \Vert \mA-\hat\mA_{\rm Gaussian}\Vert_{\rm F}\) for matrix approximations with rank $k=200$, computed via RSVD (\cref{alg:rsvd}).
Each panel superimposes histograms from three trials.
The testbed consists of all $2{,}314$ matrices from the \texttt{SuiteSparse} collection \cite{Davis11} whose
dimensions
fall
between $300$ and $500{,}000$.
From left to right, we consider the SparseStack test matrix (\cref{def:sparse-stack}), the SparseRTT test matrix (\cref{def:intro-sparsertt}), and the complex spherical Khatri--Rao test matrix (\cref{def:kr-intro}) with tensor order $\ell=\lceil\log_2 n\rceil$.
The worst-case error ratio never exceeds 4,
and structured test matrices sometimes beat Gaussians!
These experiments suggest that structured sketching algorithms reliably produce approximation errors within a small constant factor of their Gaussian counterparts.
}
\label{fig:rsvd-gaussian-vs-structured}
\end{figure}

\subsection{Subspace embeddings (OSEs) and subspace injections (OSIs)}
\label{sec:intro-osi}

To understand the behavior of structured sketching methods for linear algebra,
we adopt a perspective that departs from the existing literature.
Our assumptions are %
genuinely weaker,
and they open a portal %
to a more expansive universe of sketching techniques.

In his 2006 paper on sketch-and-solve algorithms for least-squares regression~\cite{sarlos06}, Sarl{\'o}s introduced the concept of an
\emph{oblivious subspace embedding (OSE)},
where ``oblivious'' means ``subspace independent.''
He proved that the sketch-and-solve algorithm
produces a near-optimal least-squares solution
whenever it is implemented with a random
test matrix that satisfies the OSE property.
Since then, researchers have employed OSEs to analyze
other randomized linear algebra algorithms,
including the RSVD and the generalized Nystr{\"o}m approximation~\cite{woodruff2014sketching,Drineas17,MT20}.
In our paper, the scalar field $\F \in \{ \R, \C \}$.

\begin{definition}[Oblivious subspace embedding] \label{def:ose}
    A random matrix \(\mOmega\in\F^{d \times k}\) is called an \emph{\((r,\alpha,\beta)\)-OSE} with \emph{subspace dimension} \(r\),
    \emph{embedding dimension} $k \geq r$, \emph{injectivity} \(\alpha\in(0,1]\), and \emph{dilation} \(\beta \geq 1\) if the following condition holds
    for each fixed \(r\)-dimensional subspace \(\cV\subseteq\F^d\).
    With probability at least $\frac{19}{20}$,
    \begin{equation} \label{eq:ose}
            \alpha \cdot \norm{\vx}_2^2
            \leq \norm{\mOmega^\top\vx}_2^2
            \leq \beta \cdot \norm\vx_2^2
        \quad
        \text{for all }
        \vx\in\cV.
    \end{equation}
\end{definition}

As it happens, the two inequalities in the
OSE property~\eqref{eq:ose} play asymmetric roles.
The truly essential condition is
the lower bound, involving the injection parameter $\alpha$,
because it ensures that the random matrix
$\mOmega$ does not annihilate any vector $\vx$ in the
fixed subspace $\cV$.
Upper bounds on the dilation $\beta$ are far less important.
This insight motivates us to focus on the injectivity
property.

\begin{definition}[Oblivious subspace injection] \label{def:osi}
    A random matrix $\mOmega \in \F^{d\times k}$ is called an \emph{$(r,\alpha)$-OSI} with \emph{subspace dimension} $r$, \emph{embedding dimension} $k \geq r$, and \emph{injectivity} $\alpha \in (0,1]$ when it meets two conditions:
    \begin{enumerate} %
    \setlength{\itemsep}{0pt}
        \item \textbf{{Isotropy}.} \label{item:osi-isotropy}
        On average, the matrix preserves the squared length of each vector:
        \begin{equation} \label{eqn:osi-isotropy}
        \E{} \norm{\mOmega^\top \vx}_2^2 = \norm{\vx}_2^2
        \quad\text{for all $\vx \in \F^d$.}
        \end{equation}
        \item \textbf{{Injectivity}.} For each fixed $r$-dimensional subspace $\cV \subseteq \F^d$, with probability at least $\frac{19}
        {20}$, %
        \begin{equation} \label{eq:injectivity}
           \alpha\cdot \norm{ \vx }_2^2 \le \norm{ \mOmega^\top \vx }_2^2 \
            \quad\text{for all }\vx\in\mathcal V.
        \end{equation}
    \end{enumerate}
\end{definition}

Morally, \cref{def:osi} of an OSI is weaker
than \cref{def:ose} of an OSE.
Indeed, the isotropy property of an OSI~\eqref{eqn:osi-isotropy} holds on average, while the dilation property of an OSE in~\eqref{eq:ose} must hold uniformly.
We have also obtained numerical evidence
that the OSI property is more permissive. %
In particular, \cref{fig:sparseStack-injectivity-intro} suggests
that an ultra-sparse test matrix %
can achieve \emph{constant injectivity},
independent of the subspace dimension $r$,
while its \emph{dilation grows} with $r$.
See \cref{conj:constant-sparsity} for a more precise
statement.

\begin{figure}
    \centering
    \includegraphics[width=0.95\linewidth]{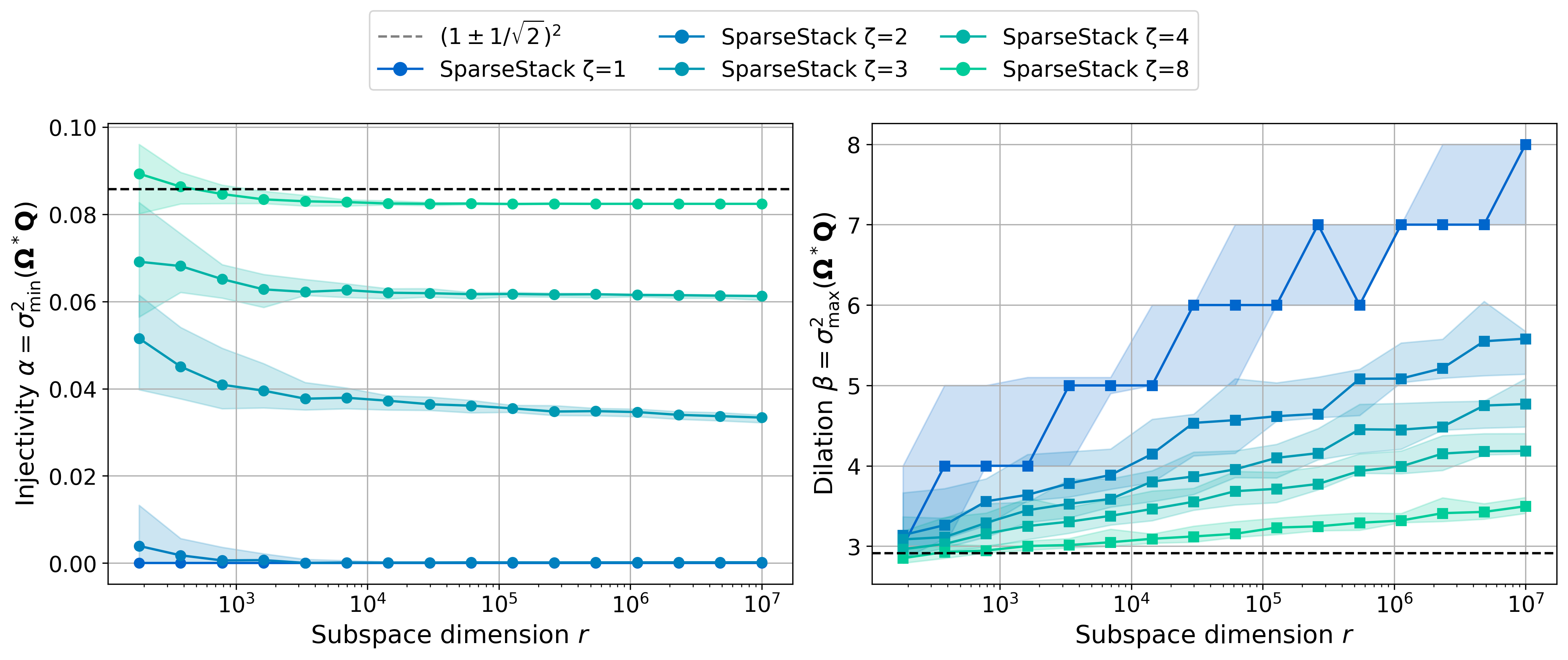}
    \caption{\textbf{SparseStack: OSI but not OSE?}
    Estimates for injectivity $\alpha$ (\textit{left}) and dilation $\beta$ (\textit{right})
    of SparseStack test matrices (\cref{def:sparse-stack})
    with constant row sparsity $\zeta$ and embedding
    dimension $k = 2r$, applied to subspaces with
    dimension $r = 10^2$ to $r = 10^7$. %
    The markers track the median over 10 trials;
    shaded regions are bounded by the 10\% and 90\% quantiles.
    We estimate $\alpha, \beta$ by considering the
    adversarial orthonormal matrix $\mQ \coloneqq [\ve_1 ~ \cdots ~ \ve_r]$. %
    The %
    dashed lines mark the asymptotic value
    of the injectivity $(1 - 1/\sqrt{2})^2$
    and of the dilation $(1 + 1/\sqrt{2})^2$
    of a $2r \times r$ Gaussian test matrix (\cref{def:gauss-test})
    as $r \to \infty$.
    The experiment supports \cref{conj:constant-sparsity},
    which speculates that the SparseStack matrix
    has constant injectivity $\alpha$, %
    while the dilation $\beta$ increases with
    the subspace dimension $r$.
    }
    \label{fig:sparseStack-injectivity-intro}
\end{figure}

In this paper, all test matrices %
satisfy the isotropy condition~\eqref{eqn:osi-isotropy}.
We sometimes refer to an OSI as an \emph{embedding}, even though its dilation factor $\beta$ is uncontrolled.
\Cref{sec:intro-osi-is-easier-than-ose} outlines
mathematical methods for establishing the OSI property.

\subsection{Randomized linear algebra with OSIs}
\label{sec:osi-program}

With the OSI abstraction (\cref{def:osi}), we can factor the analysis of a randomized sketching algorithm into two independent parts.
First, we argue that the algorithm succeeds when it is implemented using any test matrix $\mOmega \in \F^{d \times k}$ that is an OSI with specific parameters $(r, \alpha)$.
Second, we prove that a particular construction
of the random test matrix $\mOmega \in \F^{d \times k}$ enjoys the $(r,\alpha)$-OSI property.

To take full advantage of this theory, we must develop
constructions of OSI test matrices that have a
favorable computational profile.
Here is a summary of the desiderata for an OSI:

\begin{enumerate}
\setlength{\itemsep}{0pt}
\item   \textbf{OSI parameters.}
\label{item:test-matrix-desire-osi}
The embedding dimension is proportional to the subspace dimension: $k = \cO(r)$.  The injectivity parameter is constant: $\alpha \geq \mathrm{c}$, where $\mathrm{c} \in (0,1)$ is universal.

\item   \textbf{Arithmetic.}
\label{item:test-matrix-desire-arithmatic}
For each vector $\vx \in \F^d$, we can form the sketch $\mOmega^{\top} \vx$ with $\widetilde{\cO}( d )$
arithmetic operations.

\item   \textbf{Implementation.}
\label{item:test-matrix-desire-implement}
The OSI construction must be practical and implementable with available program libraries (e.g., sparse arithmetic, fast trig transforms).  It should perform well in applications.

 \item   \textbf{*Other resources.}
 \label{item:test-matrix-desire-other}
 We also hope to control the amount of storage, randomness, and communication required to build and apply the test matrix $\mOmega$.  Our work does not address these questions.
\end{enumerate}
This paper carries out our program to study OSIs.
First, we analyze randomized linear algebra algorithms that use OSIs
(\cref{sec:intro-randlna-via-osi}).
Then we explore the universe of OSI constructions
(\cref{sec:smorgasbord}).

\subsection{OSIs suffice for (some) matrix computations} \label{sec:intro-randlna-via-osi}

Several widely used sketching algorithms for randomized linear algebra
can be implemented with OSIs.
This section summarizes the main conclusions;
we postpone full descriptions of the algorithms
and the theoretical analysis to \cref{sec:randnla-via-osi}.
Our first result concerns three standard randomized
algorithms for computing a low-rank approximation
of an input matrix.

\begin{theorem}[OSIs for low-rank approximation;
summary of \protect{\Cref{thm:rsvd-via-osi,cor:nystrom-via-osi,thm:gen-nystrom-via-osi}}]
    \label{thm:osi-omnibus}
    Fix an input matrix \(\mA \in \F^{n \times d}\) and a target rank $r \leq \min\{n,d\}$. %
    Suppose that the random matrix \(\mOmega\in\F^{d \times k}\) is an ($r,\alpha$)-OSI with subspace dimension $r$ and embedding dimension $k \geq r$.
    Then each of the following statements holds with probability at least $\:\nicefrac45$.
    \begin{enumerate} \setlength{\itemsep}{0pt}
        \item
        \textbf{Randomized SVD.} RSVD (\cref{alg:rsvd}) with test matrix $\mOmega$ yields a rank-$k$ approximation $\hat\mA$ where %
        \begin{equation*}
            \norm{\mA-\hat\mA}_{\rm F}^2
                \leq (\rC / \alpha) \cdot
                \min_{\rank \mB \leq r}\  \norm{\mA- \mB}_{\rm F}^2.
        \end{equation*}
        
        \item \textbf{Nystr\"om approximation.}
        If $\mA$ is square and positive semidefinite, then the Nystr\"om approximation method (\cref{alg:nystrom}) with test matrix $\mOmega$ returns a rank-$k$ approximation $\hat\mA$ that satisfies
        \begin{equation*}
            \norm{\mA-\hat\mA}_*
                \leq (\rC / \alpha) \cdot
                \min_{\rank \mB \leq r}\ \norm{\mA- \mB}_*.
        \end{equation*}
        \item \textbf{Generalized Nystr\"om approximation.}
        Let $\mPsi\in \F^{n \times p}$ be a $(k,\alpha)$-OSI
        with %
        subspace dimension $k$ and embedding dimension $p \geq k$.
        With test matrices $\mOmega$ and $\mPsi$, the generalized Nystr\"om approximation method (\cref{alg:gen_nystrom_outer})
        constructs a rank-$k$ approximation $\hat\mA$ that satisfies
        \begin{equation*}
            \norm{\mA-\hat\mA}_{\rm F}^2
                \leq (\rC / \alpha)^2\cdot
                \min_{\rank \mB \leq r}\
                \norm{\mA-\mB}_{\rm F}^2.
        \end{equation*}
        \vspace{-1.75em}
    \end{enumerate}
    Here and elsewhere, \(\rC \geq 1\) is a universal constant.
    See \cref{table:dense-rates} for a summary of the
    arithmetic costs.
\end{theorem}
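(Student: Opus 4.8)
The plan is to prove the three parts in order, obtaining the two Nystr\"om bounds as consequences of the randomized-SVD bound. In every case the skeleton is the same: invoke a known \emph{deterministic} error inequality that separates a ``leading'' block of the sketch, controlled by the injectivity axiom \eqref{eq:injectivity} of \cref{def:osi}, from a ``tail'' block, controlled by the isotropy axiom \eqref{eqn:osi-isotropy} together with Markov's inequality.

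\textbf{Randomized SVD.} I would fix a full SVD $\mA = \mU\mSigma\mV^\top$ and split $\mU=[\mU_1~\mU_2]$, $\mV=[\mV_1~\mV_2]$, $\mSigma=\diag(\mSigma_1,\mSigma_2)$ into leading rank-$r$ and trailing blocks, so $\mA_r \defeq \mU_1\mSigma_1\mV_1^\top$ is an optimal rank-$r$ approximation; set $\mOmega_1 \defeq \mV_1^\top\mOmega$ and $\mOmega_2 \defeq \mV_2^\top\mOmega$. The classical deterministic bound for the randomized SVD \cite{halko11} states that, provided $\mOmega_1$ has full row rank,
\[
\norm{\mA-\hat\mA}_{\rm F}^2 \;\le\; \norm{\mSigma_2}_{\rm F}^2 + \norm{\mSigma_2\mOmega_2\mOmega_1^\dagger}_{\rm F}^2 .
\]
Applying injectivity \eqref{eq:injectivity} to the fixed $r$-dimensional subspace $\range(\mV_1)$ forces $\sigma_{\min}(\mOmega_1)\ge\sqrt\alpha$ with probability at least $\nicefrac{19}{20}$; in particular $\mOmega_1$ has full row rank and $\opnorm{\mOmega_1^\dagger}\le 1/\sqrt\alpha$. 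For the tail, the identity $\mSigma_2\mOmega_2 = \mU_2^\top(\mA-\mA_r)\mOmega$ gives $\norm{\mSigma_2\mOmega_2}_{\rm F}\le\norm{(\mA-\mA_r)\mOmega}_{\rm F}$, and isotropy \eqref{eqn:osi-isotropy}, applied row by row to $\mA-\mA_r$, gives $\E\norm{(\mA-\mA_r)\mOmega}_{\rm F}^2 = \norm{\mSigma_2}_{\rm F}^2$, so Markov yields $\norm{(\mA-\mA_r)\mOmega}_{\rm F}^2\le 20\norm{\mSigma_2}_{\rm F}^2$ with probability at least $\nicefrac{19}{20}$. On the intersection of the two events (probability $\ge\nicefrac{9}{10}$), the displayed bound collapses to $\norm{\mA-\hat\mA}_{\rm F}^2 \le (1+20/\alpha)\norm{\mSigma_2}_{\rm F}^2 \le (21/\alpha)\norm{\mSigma_2}_{\rm F}^2$ since $\alpha\le 1$, giving the claim with $\rC=21$.

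\textbf{Nystr\"om and generalized Nystr\"om.} For Nystr\"om I would factor $\mA=\mB\mB^\top$; a short pseudoinverse computation identifies the output with $\mB\mP\mB^\top$, where $\mP$ is the orthogonal projector onto $\range(\mB^\top\mOmega)$, so $\mA-\hat\mA$ is positive semidefinite and $\norm{\mA-\hat\mA}_* = \tr(\mA-\hat\mA) = \norm{(\mI-\mP)\mB^\top}_{\rm F}^2$. The right-hand side is precisely the randomized-SVD residual for the input $\mB^\top$ with test matrix $\mOmega$ and target rank $r$; applying the bound just proved, and using $\sum_{i>r}\sigma_i(\mB^\top)^2 = \sum_{i>r}\lambda_i(\mA) = \min_{\rank\mB'\le r}\norm{\mA-\mB'}_*$, yields the nuclear-norm bound. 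For generalized Nystr\"om, let $\mQ$ be an orthonormal basis of $\range(\mA\mOmega)$, so the output is $\hat\mA = \mQ(\mPsi^\top\mQ)^\dagger\mPsi^\top\mA$, and
\[
\norm{\mA-\hat\mA}_{\rm F} \;\le\; \norm{(\mI-\mQ\mQ^\top)\mA}_{\rm F} + \norm{\mQ\mQ^\top\mA-\hat\mA}_{\rm F},
\]
where the first term is the randomized-SVD residual. I would bound the second by conditioning on $\mOmega$ (fixing $\mQ$, with $\mPsi$ an independent $(k,\alpha)$-OSI): injectivity of $\mPsi$ on $\range(\mQ)$ gives $\opnorm{(\mPsi^\top\mQ)^\dagger}\le 1/\sqrt\alpha$ and $\mPsi^\top\mQ$ full column rank; the identity $\mQ\mQ^\top\mA-\hat\mA = -\mQ(\mPsi^\top\mQ)^\dagger\mPsi^\top(\mI-\mQ\mQ^\top)\mA$ holds; and isotropy of $\mPsi$ plus Markov give $\norm{\mPsi^\top(\mI-\mQ\mQ^\top)\mA}_{\rm F}^2 \le 20\norm{(\mI-\mQ\mQ^\top)\mA}_{\rm F}^2$, whence $\norm{\mQ\mQ^\top\mA-\hat\mA}_{\rm F}\le\sqrt{20/\alpha}\,\norm{(\mI-\mQ\mQ^\top)\mA}_{\rm F}$. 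Squaring $\norm{\mA-\hat\mA}_{\rm F}\le(1+\sqrt{20/\alpha})\norm{(\mI-\mQ\mQ^\top)\mA}_{\rm F}$, substituting the randomized-SVD bound, and absorbing constants (using $\alpha\le 1$) gives the $(\rC/\alpha)^2$ rate. A union bound over the four exceptional events---two from $\mOmega$, two from $\mPsi$---gives success probability $\nicefrac{4}{5}$.

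\textbf{Where the difficulty lies.} Every step is routine except the treatment of the tail block $\mOmega_2=\mV_2^\top\mOmega$ (and its generalized-Nystr\"om analogue $\mPsi^\top(\mI-\mQ\mQ^\top)\mA$). This block lives on the $(d-r)$-dimensional complement $\range(\mV_2)$, which is far too high-dimensional for any OSE/OSI subspace estimate, and we have deliberately relinquished control of the dilation. The key point---and the reason an OSI rather than an OSE suffices---is that the deterministic bound only needs $\norm{\mSigma_2\mOmega_2}_{\rm F}$ \emph{in expectation}: dominating it by $\norm{(\mA-\mA_r)\mOmega}_{\rm F}$ reduces the whole matter to the average-case identity $\E\norm{(\mA-\mA_r)\mOmega}_{\rm F}^2 = \norm{\mA-\mA_r}_{\rm F}^2$---exactly the isotropy axiom---followed by a single application of Markov's inequality.
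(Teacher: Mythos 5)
Your proposal is correct and follows essentially the same route as the paper: the HMT error decomposition with injectivity controlling $(\mV_1^\top\mOmega)^\dagger$ and isotropy plus Markov controlling the orthogonal-complement sketch is exactly the content of the paper's ``OSIs respect orthogonality'' lemma, the Nystr\"om bound is the Gram correspondence (which you prove inline rather than cite), and your generalized Nystr\"om identity is the paper's sketch-and-solve error formula. The only cosmetic differences are that you combine the two generalized Nystr\"om error terms by the triangle inequality where the paper composes its RSVD and sketch-and-solve theorems (via a Pythagorean identity), which changes nothing beyond constants.
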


We write $\norm{\cdot}_{\rm F}$ for the Frobenius norm,
while $\norm{\cdot}_*$ denotes the Schatten 1-norm.
As is standard for randomized low-rank approximation algorithms~\cite{halko11,MT20,woodruff2014sketching,gittens16},
\cref{thm:osi-omnibus} presents a bicriteria guarantee where
the approximation rank $k$ exceeds the target rank $r$.
If desired, we can promote the rank-$k$ approximations to rank-$r$
approximations using a standard truncation
method~\cite[Thm.~9.3]{halko11}.

Our second result treats the sketch-and-solve algorithm \cite{sarlos06}
for solving a linear regression problem.

\begin{theorem}[OSIs for least-squares; summary of \cref{thm:lsq-via-osi}] \label{thm:osi-sketch-solve-summary}
    Fix a design matrix $\mA \in \F^{n \times d}$ and a response matrix
    $\mB \in \F^{n \times m}$, and consider the linear least-squares problem
    \begin{equation*}
        \min\nolimits_{\mX \in \F^{d\times m}}
        \ \norm{\mA \mX - \mB}_{\rm F}^2.
    \end{equation*}
    Let $\mPsi \in \F^{n\times p}$ be a $(d,\alpha)$-OSI with subspace dimension $d$ and embedding dimension $p \geq d$. %
    With probability at least $\nicefrac{9}{10}$, %
    sketch-and-solve %
    (\cref{alg:sketch-and-solve}) with test matrix $\mPsi$ returns an approximate solution %
    $\tilde\mX \in \F^{d \times m}$ that satisfies %
    \begin{equation*}
        \norm{\mA\tilde\mX-\mB}_{\rm F}^2
            \leq (\rC / \alpha) \cdot
            \min\nolimits_{\mX} %
            \ \norm{\mA\mX-\mB}_{\rm F}^2,
    \end{equation*}
    where \(\rC \geq 1\) is a universal constant.
    See \cref{table:dense-rates} for a summary of the
    arithmetic costs.
\end{theorem}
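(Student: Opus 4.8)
The plan is to run the classical sketch-and-solve analysis, with one twist: in place of the usual uniform ``dilation'' (subspace-embedding upper) bound on $\mPsi$, I will use a single scalar Markov bound on $\norm{\mPsi^\top\mE}_{\rm F}$ that falls out of the isotropy property~\cref{eqn:osi-isotropy}, where $\mE$ is the optimal residual. This is precisely the place where the weaker OSI hypothesis suffices.

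First I would fix an exact minimizer $\mX_\star$ of the original least-squares problem and set $\mE \defeq \mB - \mA\mX_\star$, so that $\mA^\top\mE = \mat 0$ --- equivalently, each column of $\mE$ is orthogonal to $\range\mA$ --- and $\norm{\mE}_{\rm F}^2 = \min_\mX\norm{\mA\mX-\mB}_{\rm F}^2$. Substituting $\mX = \mX_\star + \mZ$ into the sketched objective shows that $\tilde\mZ \defeq \tilde\mX - \mX_\star$ minimizes $\norm{\mPsi^\top\mA\mZ - \mPsi^\top\mE}_{\rm F}^2$ over all $\mZ$, where $\tilde\mX$ is the output of \cref{alg:sketch-and-solve}. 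Writing $\tilde\mD \defeq \mA\tilde\mZ = \mA\tilde\mX - \mA\mX_\star$, whose columns lie in $\range\mA$, this says $\tilde\mD$ minimizes $\mD \mapsto \norm{\mPsi^\top(\mD - \mE)}_{\rm F}^2$ over matrices $\mD$ with columns in $\range\mA$. Comparing the minimizer to $\mD = \mat 0$ gives $\norm{\mPsi^\top\tilde\mD - \mPsi^\top\mE}_{\rm F}^2 \le \norm{\mPsi^\top\mE}_{\rm F}^2$; expanding the square and applying Cauchy--Schwarz yields $\norm{\mPsi^\top\tilde\mD}_{\rm F} \le 2\norm{\mPsi^\top\mE}_{\rm F}$.

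Next I would bring in the two OSI ingredients. Since $\range\mA$ has dimension at most $d$, applying the injectivity clause of the $(d,\alpha)$-OSI $\mPsi$ to a $d$-dimensional subspace containing $\range\mA$ shows that, with probability at least $\tfrac{19}{20}$, $\alpha\norm{\vx}_2^2 \le \norm{\mPsi^\top\vx}_2^2$ for all $\vx\in\range\mA$; applying this to each column of $\tilde\mD$ and summing gives $\alpha\norm{\tilde\mD}_{\rm F}^2 \le \norm{\mPsi^\top\tilde\mD}_{\rm F}^2 \le 4\norm{\mPsi^\top\mE}_{\rm F}^2$ on that event. Independently, the isotropy identity~\cref{eqn:osi-isotropy}, applied columnwise to $\mE$, gives $\E\norm{\mPsi^\top\mE}_{\rm F}^2 = \norm{\mE}_{\rm F}^2$, so Markov's inequality yields $\norm{\mPsi^\top\mE}_{\rm F}^2 \le 20\norm{\mE}_{\rm F}^2$ with probability at least $\tfrac{19}{20}$. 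A union bound puts both events in force with probability at least $\tfrac{9}{10}$, and on their intersection $\norm{\tilde\mD}_{\rm F}^2 \le (80/\alpha)\norm{\mE}_{\rm F}^2$. Finally, $\mA\tilde\mX - \mB = \tilde\mD - \mE$ with $\tilde\mD$ having columns in $\range\mA$ and $\mE$ orthogonal to it, so the Pythagorean theorem gives $\norm{\mA\tilde\mX - \mB}_{\rm F}^2 = \norm{\tilde\mD}_{\rm F}^2 + \norm{\mE}_{\rm F}^2 \le (80/\alpha + 1)\norm{\mE}_{\rm F}^2 \le (81/\alpha)\norm{\mE}_{\rm F}^2$, using $\alpha \le 1$; this is the claim with $\rC = 81$ (not optimized).

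I do not expect a serious obstacle here. The one substantive point --- and the reason an OSI suffices where the literature invokes an OSE --- is that sketch-and-solve needs no uniform upper bound on $\mPsi$ over a subspace, only the scalar estimate $\norm{\mPsi^\top\mE}_{\rm F}^2 = \cO(\norm{\mE}_{\rm F}^2)$ for the single fixed residual $\mE$, which isotropy plus Markov deliver at no extra cost. The items that need care are the probability budget (the two $\tfrac{1}{20}$-probability failure events must be allotted so the conclusion lands at $\tfrac{9}{10}$), the columnwise reduction that converts the matrix inequalities into the vector OSI statements, and --- when $\mA$ is column-rank-deficient --- checking that $\mA\tilde\mX$ is well-defined on the injectivity event (it is, since $\mPsi^\top$ is then injective on $\range\mA$), so the guarantee does not depend on which minimizer the solver returns.
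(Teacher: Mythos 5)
Your proof is correct, and it isolates exactly the two probabilistic ingredients that make the OSI hypothesis sufficient: injectivity of $\mPsi$ on a fixed $d$-dimensional subspace containing $\range\mA$, and isotropy plus Markov applied to the single fixed residual $\mE$, combined by a union bound into the $\nicefrac{9}{10}$ budget. The deterministic skeleton, however, is genuinely different from the paper's. The paper (\cref{thm:lsq-via-osi}) reparameterizes through a reduced SVD $\mA = \mQ\mSigma\mV^\top$, derives the exact error identity $\tilde\mZ - \mZ_\star = (\mPsi^\top\mQ)^\dagger(\mPsi^\top\mQ_\perp)\mQ_\perp^\top\mB$ (\cref{lem:sketch-and-solve-error-formula}), and then invokes \cref{lem:osi-preserve-orthogonality}, whose proof splits that quantity submultiplicatively into $1/\sigma_{\min}(\mPsi^\top\mQ)$ (injectivity) times $\norm{\mB^{\vphantom{\top}}\mQ_\perp^\top\mPsi}_{\rm F}$ (isotropy plus Markov). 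You avoid the error formula and the pseudoinverse altogether: comparing the sketched minimizer against the feasible point $\mD = \bm{0}$ and using the triangle inequality gives $\norm{\smash{\mPsi^\top\tilde\mD}}_{\rm F} \le 2\norm{\mPsi^\top\mE}_{\rm F}$, after which injectivity and Markov finish the bound, and the closing Pythagorean step is common to both arguments. What your route buys: it is more elementary (no SVD change of variables, no full-column-rank requirement on $\mPsi^\top\mQ$ beyond the injectivity event), it applies verbatim to \emph{any} minimizer of the sketched problem rather than only the minimum-norm solution returned by \cref{alg:sketch-and-solve}, and your explicit extension of $\range\mA$ to a $d$-dimensional subspace handles rank-deficient $\mA$ cleanly. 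What it gives up: an extra factor of $4$ in the constant at the triangle-inequality step (the paper's identity is exact there), and it does not produce the reusable \cref{lem:osi-preserve-orthogonality}, which the paper also feeds into the RSVD and generalized Nystr\"om analyses.
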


\begin{table}[t]
\centering \small
\begin{tabular}{@{}cllll@{}} \toprule
	Task & Algorithm & Sketching time & Processing time &
    Total (dense data, SparseStack) %
    \\ \midrule
    Low-rank &
	  RSVD & \(\cT_{\rm sk}(\mA, k)\) & $nk^2 + k\cT_{\rm mv}(\mA^*)$
        & $ndr$ \\[0.25em]
	   approximation & Nystr\"om (psd \mA) & \(\cT_{\rm sk}(\mA, k)\) & \(nk^2\)
      & $n^2 \log (r) + nr^2$
\\[0.25em]
	  & Gen. Nystr\"om & \(\cT_{\rm sk}(\mA, k)+\cT_{\rm sk}(\mA^\top, p)\) & \(nk^2 + dkp\)
      & $nd \log (r) + (n+d)r^2$ \\[0.25em]
        \midrule
    Least squares &
    	Sketch-and-solve & \(\cT_{\rm sk}(\mA^\top, p)\) + \(\cT_{\rm sk}(\mB^\top, p)\) & \(p d^2+ p dm\)
        & $nd \log(d) + d^2(d+m)$
		\\[0.25em]
		\bottomrule
\end{tabular}
\caption{\textbf{Sketching algorithms for linear algebra: Our runtimes.}
    Sketching time and processing time for the algorithms in \cref{thm:osi-omnibus,thm:osi-sketch-solve-summary} applied to input matrix $\mA \in \F^{n \times d}$ and, for linear regression, response matrix $\mB \in \F^{n \times m}$.
    The test matrices $\mOmega \in \F^{d \times k}$ and $\mPsi \in \F^{n \times p}$.
    The function $\cT_{\rm sk}(\mM, q)$ returns the cost of forming
    the sketch $\mM \mOmega$ for a test matrix $\mOmega$ with $q$
    columns, and $\cT_{\rm mv}(\mM)$ is the time to compute the matrix--vector product $\vx \mapsto \mM \vx$.
    The last column lists the runtimes implied by our analysis
    when $\mA$ and $\mB$ are dense arrays %
    and the test matrices are SparseStacks (\cref{def:sparse-stack})
    with row sparsity $\zeta = \cO(\log r)$ and with sketch dimensions $k,p = \order(r)$ for low-rank approximation
    and with $p = \order(d)$ for linear regression.
    For legibility, the order notation $\cO$ is suppressed.}
\label{table:dense-rates}
\end{table}

Together, \cref{thm:osi-omnibus,thm:osi-sketch-solve-summary}
prove %
that we can use OSIs to implement several fundamental
sketching algorithms from randomized linear algebra.
The runtime analysis (\cref{table:dense-rates})
indicates that we can control the total sketching cost
by constructing OSIs that support fast matrix--vector products (\emph{matvecs}),
while we can limit postprocessing costs by
minimizing the embedding dimension of the OSI.
This approach ultimately allows us to obtain
faster algorithms because the OSI property
is weaker than the familiar OSE property
that drives most existing analyses.

A number of prior works have studied the performance of
sketching algorithms under conditions less stringent than the OSE property.
In particular, the tutorial~\cite{Drineas17} establishes parts of \cref{thm:osi-omnibus} under conditions weaker than the OSE.
The results \cite[Thm.~11.2]{halko11} and \cite[Lem.~2]{mahoney16} implicitly justify that RSVD works under an injectivity hypothesis,
without control on the dilation.
Saibaba \& Mi{\k{e}}dlar have undertaken a direct analysis of
test matrices with iid columns that satisfy moment %
conditions~\cite{saibaba2025randomized}.
Nevertheless, the OSI condition has not been isolated or
systematically exploited.

\begin{remark}[Constant-factor approximations]
\Cref{thm:osi-omnibus,thm:osi-sketch-solve-summary} demonstrate
that several sketching algorithms produce errors that lie
within a \emph{constant factor} of the optimal error.
Many papers in the randomized linear algebra literature
seek stronger approximation guarantees, within a factor
of $1 + \eps$ of the optimal error, while the algorithm
costs exhibit a $\mathrm{poly}(\eps^{-1})$ dependence
in the parameter $\eps > 0$.
This type of high-accuracy guarantee is presently out
of reach of our proof techniques.

Regardless, a constant-factor error bound suffices
for typical applications.
Indeed, it is most fruitful
to deploy sketching %
algorithms %
in settings where the
optimal error is tiny~\cite[Sec.~1.1]{nakatsuksa24}.
To reach high accuracy (\ie, small $\eps$),
practitioners do not use simple sketching algorithms.
For low-rank approximation, %
block Krylov methods~\cite{musco2015randomized,tropp23} are preferred.
For linear regression, sketch-and-precondition methods~\cite{rokhlin08,avron10,epperly24a}
are the standard.
\end{remark}

\subsection{A \protect{\href{https://upload.wikimedia.org/wikipedia/commons/transcoded/e/e5/Sv-Smörgåsbord.oga/Sv-Smörgåsbord.oga.mp3}{sm{\"o}rg{\r a}sbord}} of OSI constructions}
\label{sec:smorgasbord}

\Cref{sec:intro-randlna-via-osi} advances the first part of the OSI
program, establishing that we can implement several randomized sketching
algorithms using OSIs. %
To address the second part of the program,
this section introduces
three examples of structured %
OSIs, based on sparse matrices (\cref{sec:sparse-intro}),
fast trigonometric transforms (\cref{sec:fast-tranform-intro}),
and tensor products (\cref{sec:khatri-rao-intro}).
These constructions meet the desiderata from
\cref{sec:osi-program},
allowing us to perform
fast linear algebra computations
in several environments.

\setcounter{subsubsection}{-1}
\subsubsection{Baseline: Gaussian test matrices}

As the gold standard for output quality,
Gaussian sketching methods provide
a point of comparison
in both the theoretical and the numerical work.
\begin{definition}[Gaussian test matrix]
\label{def:gauss-test}
    Fix ambient dimension \(d \in \bbN\) and embedding dimension \(k \in \bbN\).
    A real \emph{Gaussian test matrix} \(\mOmega\in\bbR^{d \times k}\) has iid \(\cN(0,\nicefrac1k)\) entries.
\end{definition}

The OSI properties of a Gaussian test matrix
reflect the
excellent quality of output.
According to~\cite[Sec.~8.7.2]{MT20},
Gaussian test matrices attain constant injectivity
$\alpha \geq \nicefrac{1}{2}$ for subspaces of
dimension $r$ at an embedding dimension
$k = \cO(r)$, 
which motivates the first desideratum
(\cref{item:test-matrix-desire-osi}).
On the other hand, for a vector $\vx \in \R^d$,
forming the Gaussian sketch $\mOmega^\top \vx \in \R^k$
is expensive, requiring $\cO(dk)$ arithmetic operations
(cf.~\cref{item:test-matrix-desire-arithmatic}).
We aim to design test matrices
that match the OSI properties and output
quality associated with a Gaussian test matrix,
while permitting more economical computations.

\subsubsection{Sparse test matrices} \label{sec:sparse-intro}

For our first in-depth demonstration of the OSI framework,
we study \emph{sparse test matrices}.
Sparse test matrices offer compelling theoretical benefits
for both sparse and dense linear algebra,
and they exhibit spectacular empirical performance \cite{tropp19,chen25,epperly23,melnichenko23,murray23,murray24,dong23}.
Yet there remain core questions about
their %
design and %
analysis.
Our results %
justify the application
of sparse test matrices at levels of sparsity
that were previously unsupportable, and they
hint at %
sparsity levels
that had been considered impossible.

The basic case for sparse sketching is simple.
If the test matrix $\mOmega \in \F^{d \times k}$
has only $\zeta$ nonzero entries \emph{per row},
then we can form the sketch $\mA\mOmega$ with
just $\cO(\zeta \cdot \nnz(\mA))$ arithmetic operations.
For contrast, sketching with an unstructured
dense test matrix typically costs $\cO(k \cdot \nnz(\mA))$ operations.
If the sparsity is much smaller than the
embedding dimension ($\zeta \ll k$),
then sparse test matrices
can offer significant acceleration
(when implemented with a high-performance sparse arithmetic library).
\Cref{fig:sparseStack-speed} provides empirical evidence
of this speedup.

\begin{figure}
\centering
\includegraphics[width=0.95\linewidth]
{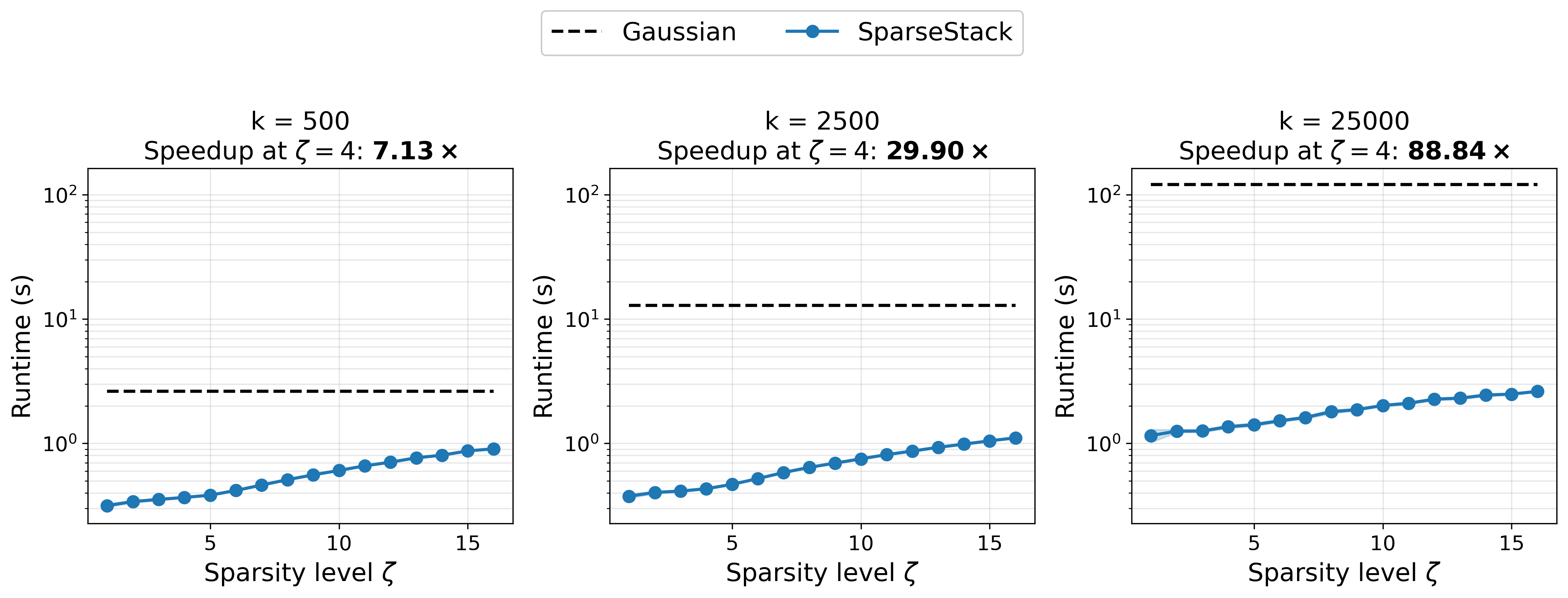}
    \caption{\textbf{SparseStack: Acceleration over Gaussian.}
    Runtime to form the sketch $\mA\mOmega \in \R^{n\times k}$
    of a dense input matrix $\mA \in \R^{n \times n}$
    when the test matrix $\mOmega \in \R^{n \times k}$
    is either a Gaussian (dashed, \cref{def:gauss-test}) or a SparseStack (solid, \cref{def:sparse-stack})
    with row sparsity $\zeta$. %
    The ambient dimension $n = 50,000$;
    the panels compare embedding dimensions $k = 500, 2500, 25000$;
    the series track the median over 10 trials.
    Minimizing the sparsity $\zeta$ is critical to
    achieve the greatest speed. The acceleration over
    the Gaussian baseline
    ranges from $7 \times$ to $88 \times$ at the
    recommended sparsity $\zeta = 4$, where the
    SparseStack is an OSI empirically (\cref{fig:sparseStack-injectivity-intro}) and conjecturally (\cref{conj:constant-sparsity}).
    }
     \label{fig:sparseStack-speed}
\end{figure}

Among several types of sparse test matrices,
we recommend using the \emph{SparseStack},
an existing construction due
to Kane \& Nelson~\cite[Fig.~1(c)]{KN14},
because of its excellent empirical performance.
To analyze this sparse test matrix, we want to
establish an OSI property at the minimal
row sparsity $\zeta$ (to control sketching costs)
and with minimal embedding dimension $k$ (to control postprocessing costs).

\begin{definition}[SparseStack]
    \label{def:sparse-stack}
    Fix the ambient dimension \(d \in \bbN\), the row sparsity $\zeta \in \bbN$, and
    the block size $b \in \bbN$.
    The embedding dimension $k \coloneqq b \: \zeta$.
    A \emph{SparseStack} is a random matrix \(\mOmega\in\F^{d \times k}\)
    of the form
    \begin{equation*}
        \mOmega \coloneqq \frac{1}{\sqrt{\zeta}}\begin{bmatrix}
            \rademacher_{11} \ve_{s_{11}}^\top & \cdots & \rademacher_{1\zeta} \ve_{s_{1\zeta}}^\top \\
            \rademacher_{21} \ve_{s_{21}}^\top & \cdots & \rademacher_{2\zeta} \ve_{s_{2\zeta}}^\top \\ 
            \vdots & \ddots & \vdots \\
            \rademacher_{d1} \ve_{s_{d1}}^\top & \cdots & \rademacher_{d\zeta} \ve_{s_{d\zeta}}^\top
        \end{bmatrix} \quad \text{where}\quad
        \begin{aligned}
        &\text{$\rademacher_{ij} \sim \textsc{rademacher}$ iid;} \\
        &\text{$s_{ij} \sim \textsc{uniform}\{1,\ldots,b\}$ iid.}
        \end{aligned}
    \end{equation*}
    In this display,
    $\mathbf{e}_{i} \in \F^{b}$
    is the $i$th standard basis vector; %
    \emph{iid} means \emph{independent and identically distributed}. 
    
\end{definition}

\noindent
The SparseStack test matrix consists of
$\zeta$ independent copies of a (scaled)
CountSketch test matrix \cite{Clarkson13},
stacked side by side, like books on a shelf.
\Cref{sec:rnla-with-sparse} discusses implementation matters,
while~\cref{sec:sparse-alternatives}
outlines reasons to prefer the SparseStack over other sparse matrix constructions.

In a recent paper,
Tropp studied the injectivity properties
of random sparse matrices~\cite[Sec.~6]{tropp25}.
Without %
proof, he stated a result that translates
into an OSI guarantee for the SparseStack matrix.
For reference, we present the full analysis in \cref{app:sprase-stack}.
\begin{importedtheorem}[SparseStacks are OSIs, \protect{\cite[Rem.~6.5]{tropp25}}] \label{thm:sparse-stack-osi}%
    A SparseStack test matrix $\mOmega \in \F^{d \times k}$
     serves as an $(r,\nicefrac12)$-OSI for some embedding dimension $k = \cO(r)$ and row sparsity $\zeta = \cO(\log r)$.
    With these parameters, building the sketch $\mA \mOmega$ has arithmetic cost $\cO( \nnz(\mA) \cdot \log r)$.
\end{importedtheorem}

\Cref{thm:sparse-stack-osi} compares favorably
with results %
for the OSE property. %
Cohen~\cite{cohen16} proved that a SparseStack is an
$(r, \nicefrac12, \nicefrac32)$-OSE
for some $k = \order(r\log r)$ and $\zeta = \order(\log r)$.
Chenakkod \etal~\cite{chenakkod25} proved a similar OSE bound for some
$k = \order(r)$ and $\zeta = \order(\log^3 r)$.
By shifting the goal from OSE to OSI,
\cref{thm:sparse-stack-osi} %
achieves the better embedding dimension
\emph{and} row sparsity.

We can now establish end-to-end guarantees for sketching algorithms
based on sparse test matrices by combining Tropp's result (\cref{thm:sparse-stack-osi}) with our new analysis (\cref{thm:osi-omnibus,thm:osi-sketch-solve-summary}).
We learn that randomized linear algebra algorithms implemented with the SparseStack test matrix can achieve near-optimal error bounds
with fast runtimes.
For sparse matrices, our approach yields sketching times
within a single logarithmic factor of the best possible.
See \cref{table:sparse-runtimes-intro} for a summary
of our theoretical results and a comparison
with existing research.

\paragraph{Empirical results.}

This paper also contains new evidence that SparseStack matrices
have exemplary empirical performance.
\Cref{fig:sparseStack-injectivity-intro,fig:sparseStack-speed}
illustrate that a SparseStack test matrix
with constant row sparsity $\zeta = 4$
reliably achieves an OSI guarantee,
while the sketching time
is up to \textbf{88$\times$ faster}
than the Gaussian benchmark.
\Cref{sec:sparse-experiments} documents additional experiments
on difficult synthetic test problems,
where SparseStack matrices %
achieve near-Gaussian output quality on
low-rank approximation tasks.

These benefits persist in practical computations.
\Cref{sec:science-pod-modes} showcases an
application that employs the generalized
Nystr{\"o}m method to compress and analyze
scientific simulation data.
For this problem SparseStack matrices
offer up to a \textbf{12$\times$ speedup}
over the Gaussian baseline, while
attaining the same accuracy.
Together, these results demonstrate
that SparseStack test matrices are highly
effective for linear algebra computations,
while they are significantly faster than alternatives.

\paragraph{Constant sparsity?}
While \cref{thm:sparse-stack-osi} allows for row sparsity $\zeta = \cO(\log r)$,
practitioners~\cite{epperly24a,melnichenko23,dong23,chen25,tropp19} have long observed that sparse test matrices are reliable even with \emph{constant}
row sparsity. %
For the SparseStack model, our numerical work (\cref{fig:sparse-map-comparison,fig:sparseStack-injectivity-intro}) suggests that the parameters $k = 2r$ and $\zeta = 4$ already suffice to achieve the OSI property.
Thus, we frame a conjecture:

\begin{conjecture}[SparseStack: OSI with constant row sparsity]
    \label{conj:constant-sparsity}
    A SparseStack test matrix is an $(r, \nicefrac{1}{2})$-OSI
    for some embedding dimension $k = \cO(r)$
    and some sparsity level $\zeta = \cO(1)$.
\end{conjecture}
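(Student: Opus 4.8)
Since the final statement is a conjecture rather than a proved result, the plan is a route of attack, not a complete argument. Fix an orthonormal basis $\mQ \in \F^{d\times r}$ for the worst-case subspace $\cV$. Because a SparseStack satisfies the isotropy condition \eqref{eqn:osi-isotropy} identically, only injectivity is at issue, and injectivity at level $\alpha = \nicefrac12$ is equivalent to $\lambda_{\min}(\mQ^\top\mOmega\mOmega^\top\mQ) \ge \nicefrac12$ with probability at least $\nicefrac{19}{20}$. Writing $\mOmega^\top = \zeta^{-1/2}[\mC_1;\cdots;\mC_\zeta]$ as a vertical stack of $\zeta$ independent, unscaled $b\times d$ CountSketch matrices (with $b = k/\zeta$), this reads $\lambda_{\min}\bigl(\tfrac1\zeta\sum_{j=1}^{\zeta}\mQ^\top\mC_j^\top\mC_j\mQ\bigr) \ge \nicefrac12$. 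The point where the OSI relaxation buys us something is that we need only a \emph{lower} bound on this average of PSD matrices: we may freely discard any collision that happens to inflate a bucket's norm, and we never pay for dilation.

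The next step is to split the rows of $\mQ$ by leverage: call coordinate $i$ \emph{heavy} if $\norm{\mQ_{i,:}}_2^2 > \tau$ for a small constant $\tau$, and \emph{light} otherwise; since $\sum_i \norm{\mQ_{i,:}}_2^2 = r$, there are at most $r/\tau$ heavy coordinates. Decompose $\mQ = \mQ_H + \mQ_L$. For the \textbf{light block} $\mQ_L^\top\mOmega\mOmega^\top\mQ_L$, I expect a matrix-concentration argument---matrix Bernstein, or a trace-moment estimate in the style of \cite{oliveira16,tropp23hdp}---to show that it lies within $\nicefrac14$ of $\mQ_L^\top\mQ_L$ in operator norm once $k = \cO(r)$, \emph{even for constant $\zeta$}: light rows have norm below $\sqrt\tau$, so no single collision moves a bucket norm by more than $O(\tau)$, and the fluctuations average over $k = b\zeta$ buckets. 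The heavy--light cross term $\mQ_H^\top\mOmega\mOmega^\top\mQ_L$ should then be controlled by combining a crude operator-norm bound on the heavy part with the smallness of the light part, after conditioning on the light randomness. The \textbf{heavy block} $\mQ_H^\top\mOmega\mOmega^\top\mQ_H$ is the essential difficulty, and it morally reduces to the adversarial experiment of \cref{fig:sparseStack-injectivity-intro}: show that a SparseStack with $k = \cO(r/\tau)$ columns is injective on a coordinate subspace of dimension $r/\tau$.

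The main obstacle is exactly this heavy block at \emph{constant} $\zeta$. The route used to get OSE bounds---Cohen \cite{cohen16}, Chenakkod \etal~\cite{chenakkod25}---union-bounds over all $\binom{r/\tau}{2}$ pairs of heavy coordinates to guarantee that every pair is separated in a large fraction of the $\zeta$ copies, which forces $\zeta = \Omega(\log r)$; and a naive $\eps$-net over the $r$-dimensional subspace fails too, since a single persistent double-collision has only polynomially small probability, not $e^{-\Omega(r)}$. The OSI relaxation suggests a workaround: a persistent collision between two heavy coordinates damages injectivity only when the corresponding rows of $\mQ$ are nearly parallel and of comparable norm, and then only within a two-dimensional subspace. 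I would therefore try a \emph{trim-and-patch} argument in the spirit of Feige--Ofek and Le--Levina--Vershynin for sparse random matrices: isolate a ``bad'' set $\cB$ of heavy coordinates involved in too many persistent collisions, prove $|\cB| = o(r)$ so that deleting those coordinates costs only a negligible fraction of $\dim\cV$, establish injectivity on the surviving coordinates---whose heavy rows are now well separated---by a net bound whose failure probability is finally small enough, and then recombine with the light block. Pinning down a tight enough bound on $|\cB|$ at constant $\zeta$ is the step I expect to be hard, and is presumably why the statement remains conjectural.
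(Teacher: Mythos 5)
There is nothing in the paper to compare your plan against: the statement you were given is \cref{conj:constant-sparsity}, which the paper explicitly leaves open, supported only by the numerical evidence of \cref{fig:sparseStack-injectivity-intro,fig:sparse-map-comparison}. The strongest proved result is \cref{thm:sparse-stack-osi} (proved in \cref{app:sprase-stack} via Tropp's Gaussian comparison method), which requires row sparsity $\zeta = \cO(\log r)$; that $\log r$ enters precisely through the fluctuation term $\smash{\sqrt{(6/\zeta)\log(2r/\delta)}}$ in the comparison bound, i.e.\ through the same union-bound-over-dimensions mechanism you are trying to evade. So your submission is correctly labeled as a research plan, and the right question is whether the plan survives scrutiny. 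Its framing is sound: reduce to $\lambda_{\min}(\mQ^\top\mOmega\mOmega^\top\mQ)$ via \cref{prop:osi-spec}, exploit the one-sidedness of the OSI requirement, and identify the heavy (high-leverage) coordinates as the crux, consistent with the paper's observation that the Nelson--\Nguyen lower bound \cite{nelson13b} only obstructs the dilation side.

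Two steps, however, have concrete problems beyond the gap you admit. First, the light-block claim ("matrix Bernstein \ldots even for constant $\zeta$") is not available off the shelf: matrix Bernstein and the trace-moment arguments you cite pay a $\log r$ (dimensional) factor, which is exactly why every known route—\cite{cohen16}, \cite{chenakkod25}, and the Gaussian comparison proof in \cref{app:sprase-stack}—lands at $\zeta = \cO(\log r)$ or $k = \cO(r\log r)$; the coherence-adapted result \cref{impthm:sparse-iid-coherence} removes the log only when $\mu(\mQ_L) = \cO(1/\log r)$, and pushing the leverage threshold $\tau$ down to $1/\log r$ inflates the heavy set to $r\log r$ coordinates, so the difficulty just migrates rather than disappears. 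Second, and more seriously, the trim-and-patch step is structurally incompatible with the OSI quantifier. Injectivity in \cref{def:osi} must hold for \emph{every} vector of the fixed subspace $\cV$; if $\cV$ contains even one direction supported on your bad set $\cB$ (e.g.\ $\cV$ could contain $\ve_i$ for a bad coordinate $i$), then deleting those coordinates, or proving injectivity only on a codimension-$o(r)$ subspace of $\cV$, yields no lower bound on $\sigma_{\min}(\mOmega^\top\mQ)$ at all—a single annihilated direction drives $\alpha$ to zero. Feige--Ofek/Le--Levina--Vershynin regularization works because their target is an operator-norm \emph{upper} bound, where discarding rows or entries only helps; for a smallest-singular-value bound you would instead have to show that bad coordinates still retain a clean bucket in some copy, or that sign cancellations along persistent collisions are uniformly improbable over the unit sphere of $\cV$—which is essentially the original problem restated, and is presumably where any eventual proof of \cref{conj:constant-sparsity} must do its real work.
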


In contrast, a lower bound of Nelson \& \Nguyen~\cite[Thm.~7]{nelson13b}
implies that a sparse test matrix with proportional
embedding dimension $k = \cO(r)$
cannot serve as an $(r, \alpha, \beta)$-OSE
with constant injection and dilation $\alpha, \beta = \Theta(1)$ %
\emph{unless} the row sparsity $\zeta = \Omega(\log r)$.
Some papers interpret this lower bound
as a total prohibition on sparse sketching methods
with constant row sparsity.
Yet the evidence in \cref{fig:sparseStack-injectivity-intro}
is consistent with \emph{both}
the impossibility result for OSEs \emph{and}
with \cref{conj:constant-sparsity} on OSIs.
Since the OSI property %
already suffices to implement
many sketching algorithms,
a proof of the conjecture would justify
the deployment of sparse test matrices
with constant row sparsity,
leading to optimal runtimes for
several linear algebra tasks.

\begin{table}[t]
\centering \small
\begin{tabular}{@{}llll@{}} \toprule
    & \multicolumn{3}{c}{Best available runtime}  \\ \cmidrule(r){2-4}
    Algorithm & Chenakkod \etal \cite{chenakkod25} & Cohen \cite{cohen16} & Ours \\
    \midrule
    Sketch \& solve (Alg.~\ref{alg:sketch-and-solve})
        & \(\textcolor{color1}{\nnz([\mA ~ \mB])\log^3(d)} + \textcolor{color2}{d^3}\)
        & \(\textcolor{color1}{\nnz([\mA ~ \mB])\log(d)} + \textcolor{color2}{d^3\log d}\)
        & \(\textcolor{color1}{\nnz([\mA ~ \mB])\log(d)} + \textcolor{color2}{d^3}\) \\ 
    \midrule
    RSVD (Alg.~\ref{alg:rsvd})
        & \(\textcolor{color2}{\nnz(\mA)r + nr^2}\)
        & \(\textcolor{color2}{\nnz(\mA)r + nr^2\log^2 r}\)
        & \(\textcolor{color2}{\nnz(\mA)r + nr^2}\) \\
    Nystr{\"o}m (psd \mA, Alg.~\ref{alg:nystrom})
        & \(\textcolor{color1}{\nnz(\mA)\log^3(r)} + \textcolor{color2}{nr^2}\)
        & \(\textcolor{color1}{\nnz(\mA)\log(r)} + \textcolor{color2}{nr^2\log^2 r}\)
        & \(\textcolor{color1}{\nnz(\mA)\log(r)} + \textcolor{color2}{nr^2}\) \\ 
    Gen.~Nystr{\"o}m (Alg.~\ref{alg:gen_nystrom_outer})
        & \(\textcolor{color1}{\nnz(\mA)\log^3(r)} + \textcolor{color2}{(n+d)r^2}\)
        & \(\textcolor{color1}{\nnz(\mA)\log(r)} + \textcolor{color2}{(n+d)r^2\log^3 r}\)
        & \(\textcolor{color1}{\nnz(\mA)\log(r)} + \textcolor{color2}{(n+d)r^2}\) \\
    \bottomrule
\end{tabular}
\caption{\textbf{Sketching algorithms for sparse linear algebra: Runtimes.}
For sketching algorithms with the SparseStack test matrix (\cref{def:sparse-stack}),
we compare our results with the existing analyses
of Cohen~\cite{cohen16}
and of Chenakkod \etal~\cite{chenakkod25}.
\Cref{thm:sparse-stack-osi} justifies a row sparsity \(\zeta=\cO(\log d)\) at an embedding dimension \(k=\cO(r)\), %
so we obtain \emph{both} the faster sketching time of Cohen \emph{and} the faster postprocessing time of Chenakkod \etal
Matrices $\mA \in \F^{n \times d}$ and $\mB \in \F^{n \times d}$ can be dense or sparse,
$\nnz(\cdot)$ returns the number of nonzero entries,
and we employ the standard matrix multiplication algorithm.
\textcolor{color1}{Blue terms} arise from sketching costs, while \textcolor{color2}{magenta terms} arise from postprocessing. %
We omit \cO symbols for legibility.
See \cref{table:sparse-runtimes} for another comparison.
}
\label{table:sparse-runtimes-intro}
\end{table}

\subsubsection{Test matrices based on fast trigonometric transforms} \label{sec:fast-tranform-intro}

Next, we study test matrices built from trigonometric transforms
with fast algorithms,
such as the discrete Fourier transform (DFT), the discrete
cosine transform (DCT), or the Walsh--Hadamard transform (WHT).
Historically, these constructions were among the first examples of structured test matrices~\cite{woolfe08,ailon09,tropp11SRHT,halko11}.
Sketching methods based on trigonometric transforms
remain popular because they show strong empirical performance~\cite{tropp2017a,Naka2020,nakatsuksa24}.
They also benefit from the wide availability
of software~\cite{fftw}
and hardware~\cite{slade13} for %
fast transforms.
Within this ecosystem, we will
design an effective test matrix %
that achieves
superior theoretical guarantees
by exploiting the OSI framework.

The standard template for a transform-based test matrix
consists of three matrix factors:
\begin{equation} \label{eqn:intro-srtt}
\mOmega \coloneqq \mD \mF \mS
\in \F^{d \times k}
\quad\text{where}\quad
\begin{aligned}
&\text{$\mD \in \F^{d\times d}$ is a random diagonal matrix;} \\
&\text{$\mF \in \F^{d\times d}$ is a trigonometric transform;} \\
&\text{$\mS \in \F^{d\times k}$ is a sparse random matrix.}
\end{aligned}
\end{equation}
The diagonal matrix $\mD$ is usually populated with iid Rademacher variables.
Typical choices for the trigonometric transform $\mF$ include the
WHT or the DCT (when $\F \in \{ \R, \C \}$)
or the DFT (when $\F = \C$).
The sparse matrix $\mS$ enacts a fast sampling operation that
reduces the dimension and rescales the output so that $\mOmega$ is isotropic. %
For an input matrix $\mA \in \F^{n \times d}$,
computing the sketch $\mA \mOmega$ usually
costs $\cO(nd \log d)$ operations for the
trigonometric transform, plus the cost of
applying the sparse sampling matrix.

To complete the construction~\eqref{eqn:intro-srtt},
we must design a sparse sampling matrix $\mS \in \F^{d \times k}$
that delivers fast sketching times along with injectivity guarantees.
The most common choice for $\mS$ is a restriction matrix
that selects $k$ coordinates from $\{1,\ldots,d\}$ uniformly at random;
the resulting test matrix %
is called a
\emph{subsampled randomized trigonometric transform (SRTT)}
~\cite[Sec.~9.3]{MT20}.
To achieve an OSI with subspace dimension $r$,
standard SRTTs require embedding dimension $k = \Omega( r \log r)$
to guard against the worst-case input matrix~\cite[Sec.~3.3]{tropp11SRHT}.
This fact has prompted researchers to explore
alternative forms of the sparse matrix~\cite{cartis21,chenakkod24,chenakkod25}
that allow for the minimal
embedding dimension $k = \cO(r)$.

We recommend using the \emph{SparseRTT}, a transform-based test matrix that achieves the optimal embedding dimension and the minimal sketching cost within the class~\eqref{eqn:intro-srtt}.

\begin{definition}[SparseRTT] \label{def:intro-sparsertt}
    Fix a column sparsity parameter $\colsparse \in \bbN$.  A \emph{SparseRTT} is a random matrix $\mOmega \coloneqq \mD\mF\mS \in \F^{d \times k}$, %
    where the diagonal matrix $\mD \in \F^{d \times d}$ has iid Rademacher entries,
    and the trigonometric transform $\mF \in \F^{d\times d}$ is a unitary WHT, DCT, or DFT.
    The sparse matrix $\mS \in \F^{d \times k}$ is a
    {SparseCol} random matrix (\cref{def:sparsecol})
    with exactly $\colsparse$ nonzero entries per column.
\end{definition}

\begin{theorem}[SparseRTTs are OSIs, variant of~\cref{thm:sparsertt}]
    \label{thm:sparsertt-intro}
    Let the subspace dimension $r\ge \log d$.  A SparseRTT test matrix \(\mOmega\in\F^{d \times k}\) serves as an $(r, \nicefrac{1}{2})$-OSI for some embedding dimension
    \(k = \cO(r)\)
    and some column sparsity $\colsparse = \cO(\log r)$.
    For a matrix $\mA \in \F^{n \times d}$,
    the sketch $\mA\mOmega$ has arithmetic cost $\cO(nd\log r)$.
\end{theorem}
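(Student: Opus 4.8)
The plan is to verify the isotropy and injectivity conditions of \cref{def:osi} with $\alpha=\nicefrac12$ separately, and then to bound the arithmetic. \emph{Isotropy} is immediate. Write $\mOmega^\top=\mS^\top\mF^\top\mD$; the sign matrix $\mD$ and the unitary transform $\mF$ are isometries, so $\norm{\mF^\top\mD\vx}_2=\norm{\vx}_2$ for every realization of $\mD$ and every $\vx\in\F^d$. Conditioning on $\mD$ and using that a SparseCol matrix is isotropic by construction (\cref{def:sparsecol}, i.e.\ $\E\,\mS\mS^\top=\mI$), we get $\E\norm{\mOmega^\top\vx}_2^2=\E\bigl[\E_\mS\norm{\mS^\top(\mF^\top\mD\vx)}_2^2\bigm|\mD\bigr]=\E\norm{\mF^\top\mD\vx}_2^2=\norm{\vx}_2^2$, which is \eqref{eqn:osi-isotropy}.

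For \emph{injectivity}, fix an $r$-dimensional subspace $\cV$ with orthonormal basis $\mQ\in\F^{d\times r}$ and put $\mW\coloneqq\mF^\top\mD\mQ$, a matrix with orthonormal columns that depends only on $\mD$. Since $\mOmega^\top\mQ=\mS^\top\mW$, the bound \eqref{eq:injectivity} with $\alpha=\nicefrac12$ is exactly the event $\{\lambda_{\min}(\mW^\top\mS\mS^\top\mW)\ge\nicefrac12\}$, which I would establish in two independent stages. \textbf{Stage 1 (flattening).} Each squared row norm $\norm{\ve_i^\top\mW}_2^2$ is a quadratic form in the Rademacher vector $\diag(\mD)$ with mean of order $r/d$ and coefficient matrix of operator norm $\cO(1/d)$, since $\max_{i,j}|F_{ij}|^2=\cO(1/d)$ for the WHT, DCT, and DFT. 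The Hanson--Wright inequality then gives $\norm{\ve_i^\top\mW}_2^2=\cO(r/d)$ with probability $1-\e^{-cr}$, and a union bound over the $d$ rows shows that, with probability at least $1-\delta$, the coherence obeys $\mu(\mW)\coloneqq\max_i\norm{\ve_i^\top\mW}_2^2\le\rC_0\,r/d$; here the hypothesis $r\ge\log d$ is exactly what absorbs the $\log d$ from the union bound. \textbf{Stage 2 (sparse embedding of a flattened subspace).} Conditioned on a realization of $\mD$ with $\mu(\mW)\le\rC_0\,r/d$, I would show that a SparseCol matrix with $k=\cO(r)$ columns and column sparsity $\colsparse=\cO(\log r)$ satisfies $\lambda_{\min}(\mW^\top\mS\mS^\top\mW)\ge\nicefrac12$ with probability at least $1-\delta$. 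Writing $\mW^\top\mS\mS^\top\mW=\sum_{j=1}^k\vz_j\vz_j^\top$ with $\vz_j\coloneqq\mW^\top\vs_j$ and $\vs_j$ the $j$th column of $\mS$, the summands are iid given $\mD$ with conditional mean $\frac1k\mI$, so this is a matrix concentration statement in which the flatness of $\mW$ controls the fluctuations and the OSI-specific machinery outlined in \cref{sec:intro-osi-is-easier-than-ose} delivers the proportional embedding dimension $k=\cO(r)$. A union bound over the two stages then yields \eqref{eq:injectivity} with probability at least $1-2\delta=\nicefrac{19}{20}$ on taking $\delta=\nicefrac1{40}$.

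For the \emph{arithmetic}, forming $\mA\mOmega=((\mA\mD)\mF)\mS$ costs $\cO(nd)$ for the column scaling $\mA\mD$ and $\cO(n\cdot k\colsparse)=\cO(nr\log r)$ for the sparse multiply by $\mS$; and because $\mS$ occupies only $\cO(k\colsparse)=\cO(r\log r)$ of the $d$ coordinates, only that many output coordinates of $\mF$ are needed for each row of $\mA\mD$, which a pruned fast transform supplies in $\cO(d\log r)$ operations per row, for a total of $\cO(nd\log r)$. I expect the main obstacle to be Stage 2: obtaining the $\lambda_{\min}$ lower bound at the optimal embedding dimension $k=\cO(r)$ rather than $k=\cO(r\log r)$. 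A plain matrix Chernoff bound controls $\lambda_{\min}(\sum_j\vz_j\vz_j^\top)$ only through $\max_j\norm{\vz_j}_2^2$, and the crude estimate $\norm{\vz_j}_2^2=\cO(\colsparse\,\mu(\mW)\,d/k)=\cO(\colsparse\,r/k)$ forces $k=\Omega(r\colsparse\log r)=\Omega(r\log^2r)$; recovering $k=\cO(r)$ requires a sharper one-sided concentration estimate in the spirit of the sparse-embedding analyses of Cohen~\cite{cohen16} and Chenakkod \etal~\cite{chenakkod25}, specialized to the injectivity inequality---which is precisely where replacing the OSE goal by the OSI goal earns its keep. The complete argument appears in the proof of \cref{thm:sparsertt}.
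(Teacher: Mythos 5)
Your overall architecture is the same as the paper's: condition on the randomized transform $(\mD\mF)^\top$ flattening the subspace (your Stage 1 is exactly the imported coherence lemma, \cref{impthm:srft-coherence}, and the hypothesis $r\ge\log d$ plays the same role of absorbing the $\log(d/\delta)$ term so that $\mu(\mW)=\cO(r/d)$), then show that a SparseCol matrix acts injectively on an incoherent $r$-dimensional subspace at $k=\cO(r)$ and $\colsparse=\cO(\log r)$, and finish with a union bound. Your isotropy argument and your accounting of the arithmetic cost (column scaling, pruned/subsampled transform at $\cO(d\log k)$ per row, sparse multiply) also match \cref{sec:sparsertt-cost}.

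The genuine gap is Stage 2, which is not an implementation detail but the entire mathematical content of the theorem: it is precisely \cref{thm:sparse-indep-cols-osi}, and you do not prove it. You correctly diagnose that a plain matrix Chernoff bound cannot reach $k=\cO(r)$, but the two escape routes you point to do not deliver the claimed parameters either: Cohen's trace-moment analysis~\cite{cohen16} yields an embedding only at $k=\cO(r\log r)$, and the universality-based analysis of Chenakkod \etal~\cite{chenakkod24,chenakkod25} reaches $k=\cO(r)$ only at sparsity $\cO(\log^3 r)$ — so ``in the spirit of'' those works, one does not obtain $k=\cO(r)$ and $\colsparse=\cO(\log r)$ simultaneously. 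Moreover, deferring to ``the proof of \cref{thm:sparsertt}'' is circular, since the statement you are asked to prove is a variant of that very theorem. The paper closes this gap with a different tool: Tropp's Gaussian comparison theorem for minimum eigenvalues of iid sums of psd matrices (\cref{impthm:comparison-iid-sum}). Concretely, one computes the first and second moments of $\mW=k^{-1}\vomega\vomega^\top$ for a SparseCol column (\cref{lem:indep-cols-moments}), builds a surrogate Gaussian model $\mX=k^{-1}\bigl(\mI+\sqrt{d/\colsparse}\,\mD_d+\sqrt2\,g\mI+2\mG_{\rm gue}^{(d)}\bigr)$ whose variance dominates the second moment, conjugates by the (incoherent) orthonormal matrix, and bounds $\lambda_{\min}$ of the Gaussian sum via \cref{table:gaussian-library}; the coherence enters through the compressed-diagonal term and is exactly what lets $\colsparse=\cO\bigl(\tfrac dr\mu(\mW)\log r\bigr)=\cO(\log r)$ suffice while $k=\cO(r)$. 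Without supplying this (or an equivalent) argument for Stage 2, your proposal establishes only the reduction, not the theorem.
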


In the setting of \cref{thm:sparsertt-intro},
the random sparse matrix $\mS$ has
$\Theta( r \log r )$ entries.
This is the minimum sparsity level possible because
of the SRTT counterexample~\cite[Sec.~3.3]{tropp11SRHT}.
In particular, the sketching cost cannot be reduced further
with an (oblivious) test matrix of the form~\cref{eqn:intro-srtt}.
For a comparison with prior theoretical work~\cite{ailon09,chenakkod24,chenakkod25},
\cref{table:sparsertt-comparison-body}
lists the %
costs of the sketch-and-solve method (\cref{alg:sketch-and-solve})
with various trigonometric test matrices of the
form~\eqref{eqn:intro-srtt}. %
Our numerical work (\cref{fig:SparseRTT})
confirms that the SparseRTT
outperforms other trigonometric transform test matrices,
with accuracy matching a Gaussian test matrix.
We discuss additional implementation details in \cref{sec:sparsertt-implementation}.

For a dense unstructured input matrix $\mA$,
the SparseRTT %
and the SparseStack (\cref{def:sparse-stack})
yield OSIs with the same arithmetic costs,
up to constant factors. %
Sparse test matrices are typically faster than those based on randomized trigonometric transforms because they exploit hardware-level parallelism more effectively and involve less data movement.
Therefore, for most applications,
we recommend the SparseStack over the SparseRTT, %
but SparseRTT test matrices remain useful in environments that lack
sparse linear algebra libraries.

\subsubsection{Test matrices with tensor structure}
\label{sec:khatri-rao-intro}

Many linear algebra problems in machine learning and scientific computing
exhibit \emph{tensor structure}.
Because of the towering scale of these problems, 
randomized dimension reduction serves as a powerful ally.
Nevertheless, sketching methods for tensors remain
poorly understood.
This section employs the OSI framework to develop
and analyze sketching methods for tensor data.

\paragraph{Khatri--Rao test matrices.}
The \emph{Kronecker matvec access model} \cite{meyer23,meyer25} offers a useful abstraction for operations on tensor data.
Here is the simplest version of the model.
Fix the \emph{base dimension} $d_0 \in \bbN$
and the \emph{tensor order} $\ell \in \bbN$.
We interact with an input matrix
$\mA \in \F^{n \times d_0^\ell}$ by
forming the matvec $\mA \vomega \in \F^{n}$
with a vector $\vomega \in \F^{d_0^{\ell}}$
that has \emph{Kronecker product} structure:
\begin{equation*}
    \vomega = \vomega^{(1)} \otimes \cdots \otimes \vomega^{(\ell)}
    \hspace{1cm}
    \text{where}
    \hspace{1cm}
    \vomega^{(1)},\ldots,\vomega^{(\ell)} \in \F^{d_0}.
\end{equation*}
\Cref{sec:khatri-rao-applications} lists several applications
that are well described by the Kronecker matvec access model.

Let us outline a natural construction of a random test matrix that
we can implement in the Kronecker matvec access model.
First, choose an isotropic random vector $\vnu \in \F^{d_0}$,
called the \emph{base distribution}.
The columns of the test matrix are
tensor products of independent copies of the base distribution.
Common examples of base distributions include
the real and complex Rademacher distributions,
as well as the real and complex Gaussian distributions.
We also treat the real and complex \emph{spherical distributions};
that is, $\vnu \sim \textsc{uniform}\big\{ \vx \in \F^{d_0} : \norm{ \vx }_2 = \sqrt{d_0} \big\}$.
For additional details, see~\cref{sec:prelims,sec:kr-review}.

\begin{definition}[Khatri--Rao test matrix] \label{def:kr-intro}
    Fix base dimension $d_0 \in \bbN$, tensor order $\ell \in \bbN$,
    and embedding dimension $k \in \bbN$.
    Select an isotropic base distribution $\vnu$ on $\F^{d_0}$.
    Define the \emph{Khatri--Rao} test matrix
    \begin{equation*}
    \mOmega \coloneqq \frac{1}{\sqrt{k}}\, \begin{bmatrix}
        \vertbar & & \vertbar \\
        \vomega_1 & \cdots & \vomega_k \\
        \vertbar & & \vertbar
    \end{bmatrix} \in \bbF^{d_0^\ell \times k}
    \quad \text{where}\quad
    \vomega_j\coloneqq\vomega_j^{(1)} \otimes \cdots \otimes \vomega_j^{(\ell)} \in\F^{d_0^\ell}
    \quad\text{and}\quad
    \text{$\vomega_j^{(i)} \sim \vnu$ iid.}
    \end{equation*}
\end{definition}

For an input matrix $\mA \in \F^{n \times d_0^\ell}$,
we can form the sketch $\mA \mOmega \in \F^{n \times k}$
by performing $k$ queries in the Kronecker matvec
access model, one for each column of the test matrix.
Moreover, the test matrix $\mOmega$ inherits the
isotropy property~\eqref{eqn:osi-isotropy} from the
base distribution $\vnu$.
The matrix $\mOmega$ can also be written as a Khatri--Rao product~\cite[p.~462]{kolda09} of independent random matrices,
but we avoid this formalism.

\paragraph{OSI properties.}
There is an extensive literature on %
randomized linear algebra computations
with Khatri--Rao test matrices;
\eg, see~\cite{avron14,ahle20,rakhshan20,chen21,bujanovic25}.
To advance our understanding,
we establish
that Khatri--Rao test matrices are OSIs.
Here is an essential special case,
extracted from
\cref{corol:kr-large-sketch-osi,corol:kr-small-sketch}.

\begin{theorem}[Gaussian Khatri--Rao test matrices are OSIs]
    \label{thm:khatri-rao-intro}
    Assume that the base distribution $\vnu \in \R^{d_0}$
    is the \emph{real} standard normal distribution,
    and fix the tensor order $\ell$.
    For an embedding dimension $k$,
    construct the Khatri--Rao test matrix $\mOmega \in \R^{d_0^{\ell} \times k}$ described by \cref{def:kr-intro}.
    \begin{enumerate}[label=(\alph*)] \setlength{\itemsep}{0pt}
        \item
        \label{item:kr-intro-large-sketch}
        \textbf{Exponential embedding dimension, constant injectivity.}
        The test matrix $\mOmega$ serves as an $(r,\nicefrac{1}{2})$-OSI for some embedding dimension $k = \order(3^{\ell} r)$.
        .
        \item
        \label{item:kr-intro-small-sketch}
        \textbf{Linear embedding dimension, exponentially-small injectivity.}
        The test matrix $\mOmega$ serves as an $(r,\mathrm{c}^\ell)$-OSI for some embedding dimension $k = \order(r)$,
        where $\mathrm{c} \in (0,1)$ is a universal constant.
    \end{enumerate}

\noindent
    \Cref{item:kr-intro-large-sketch} extends to many other
    distributions (\cref{corol:kr-large-sketch-osi}), while
    \Cref{item:kr-intro-small-sketch} is
    more delicate (\cref{corol:kr-small-sketch}).
\end{theorem}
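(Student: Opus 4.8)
The isotropy requirement~\eqref{eqn:osi-isotropy} holds for any isotropic base distribution, not just the Gaussian one: by independence of the factors, $\E[\vomega_j\vomega_j^{\top}] = \bigotimes_{i=1}^{\ell}\E[\vomega_j^{(i)}\vomega_j^{(i)\top}] = \Id_{d_0^{\ell}}$, so $\E[\mOmega\mOmega^{\top}] = \Id$ and $\E\norm{\mOmega^{\top}\vx}_2^2 = \norm{\vx}_2^2$. For the injectivity condition~\eqref{eq:injectivity}, fix an orthonormal basis $\mQ \in \R^{d_0^{\ell}\times r}$ of $\cV$ and put $\vg_j \coloneqq \mQ^{\top}\vomega_j \in \R^r$; the $\vg_j$ are iid and isotropic, and it suffices to show $\lambda_{\min}\bigl(\tfrac1k\sum_{j=1}^k\vg_j\vg_j^{\top}\bigr) \ge \alpha$ with probability at least $\tfrac{19}{20}$. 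In both parts the plan is the same: derive a \emph{direction-uniform} scalar estimate on the linear statistics $\langle\vomega,\vw\rangle$ over unit vectors $\vw\in\R^{d_0^{\ell}}$, and then feed it into the paper's general machinery (the OSI criteria built on the tools of~\cite{oliveira16,tropp25}) that converts such an estimate into a lower bound on $\lambda_{\min}$ of the empirical second-moment matrix.

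For Part~\ref{item:kr-intro-large-sketch} the scalar estimate is the fourth-moment bound $\E\langle\vomega,\vw\rangle^4 \le 3^{\ell}$ for every unit $\vw$. I would prove it by induction on $\ell$, peeling one factor: write $\vomega = \vg^{(1)}\otimes\vomega^{\prime}$ with $\vomega^{\prime}$ an order-$(\ell-1)$ Khatri--Rao vector, reshape $\vw = \sum_i\ve_i\otimes\vw_i$ with $\sum_i\norm{\vw_i}_2^2 = 1$, and condition on $\vomega^{\prime}$; the Gaussian identity $\E G^4 = 3(\E G^2)^2$ gives $\E\langle\vomega,\vw\rangle^4 = 3\,\E\bigl(\sum_i\langle\vomega^{\prime},\vw_i\rangle^2\bigr)^2$, and Cauchy--Schwarz plus the inductive hypothesis bound the right side by $3\cdot 3^{\ell-1}\bigl(\sum_i\norm{\vw_i}_2^2\bigr)^2 = 3^{\ell}$. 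An $L^4$--$L^2$ hypothesis with constant $L = 3^{\ell}$ then makes $\mOmega$ an $(r,\tfrac12)$-OSI at $k = \order(3^{\ell} r)$ via the covariance-estimation criterion; this argument uses only a bounded fourth moment, which is the source of the extension in \cref{corol:kr-large-sketch-osi}.

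Part~\ref{item:kr-intro-small-sketch} is the delicate case, and the right scalar estimate is a \emph{small-ball} bound with a dimension-free success probability: there are universal $\eta_0\in(0,1)$ and $p_0\in(0,1)$ with $\Pr[\langle\vomega,\vw\rangle^2 \ge \eta_0^{\ell}] \ge p_0$ for every $\ell$ and every unit $\vw$. The key point is that the worst direction is ``rank-one at every level'': iterating the conditional Gaussian reduction $\E[\e^{-t\langle\vomega,\vw\rangle^2}\mid\vomega^{\prime}] = (1 + 2t\,\vomega^{\prime\top}\mat{M}\vomega^{\prime})^{-1/2}$ with $\tr\mat{M} = 1$, and using that $s\mapsto(1+2ts)^{-1/2}$ is convex so that $\E[(1+2t\,\vomega^{\prime\top}\mat{M}\vomega^{\prime})^{-1/2}]$ is maximized over trace-one positive semidefinite $\mat{M}$ at a rank-one matrix, yields
\begin{equation*}
    \sup_{\norm{\vw}_2 = 1} \E\,\e^{-t\langle\vomega,\vw\rangle^2} \;=\; \E\,\e^{-t\,V_1 V_2\cdots V_{\ell}},
    \qquad V_1,\ldots,V_{\ell}\ \text{iid}\ \chi^2_1 .
\end{equation*}
Since $\log(V_1\cdots V_{\ell})$ is a sum of $\ell$ iid variables with mean $-\gamma-\log 2$ and an exponential left tail, a large-deviation (Cram\'er) bound shows $\Pr[V_1\cdots V_{\ell} < \eta_0^{\ell}]$ is exponentially small in $\ell$ once the universal constant $\eta_0$ is small enough; a Chernoff step on the displayed transform then delivers the small-ball bound with, say, $p_0 = \tfrac9{10}$. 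A small-ball hypothesis with threshold $\eta = \eta_0^{\ell}$ and dimension-free probability $p_0$ forces, through a sharp lower-tail estimate for empirical second-moment matrices under a small-ball condition (of the kind developed in~\cite{oliveira16} and the high-dimensional-probability references cited earlier), that $k = \order(r/p_0) = \order(r)$ copies make $\mOmega$ an $(r,\alpha)$-OSI with $\alpha = \Theta(\eta_0^{\ell})$, i.e.\ $\alpha \ge \mathrm{c}^{\ell}$. The dimension-freeness of $p_0$ is exactly what keeps the embedding dimension $\order(r)$ with a universal constant; the naive ``small-ball method'' would instead re-insert a factor $\eta_0^{-\ell}$ into $k$.

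The main obstacle is the direction-uniformity in Part~\ref{item:kr-intro-small-sketch}. A single bad unit vector makes $\langle\vomega,\vw\rangle$ extremely heavy-tailed (its square is a product of $\ell$ heavy-tailed factors), the empirical matrix $\tfrac1k\sum_j\vg_j\vg_j^{\top}$ genuinely fails to concentrate in operator norm at $k = \order(r)$---which is precisely why the dilation is uncontrolled---and a crude $\epsilon$-net over the sphere of $\cV$ loses a $\log r$ factor one cannot afford here. The reduction to the rank-one worst case via the convexity argument above is what tames this; carrying it out rigorously, and then invoking the correct sharp (rather than union-bound) conversion lemma, is the crux. The fourth-moment route of Part~\ref{item:kr-intro-large-sketch} is, by comparison, routine.
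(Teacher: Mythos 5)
Your proposal is correct, and at the architectural level it is the same two-pronged argument as the paper's: part (a) is a fourth-moment bound fed into the fourth-moment-to-OSI criterion (\cref{impthm:fourth-to-second}), and part (b) is a uniform small-ball probability bound fed into the small-ball criterion (\cref{thm:osi-from-small-ball}). The difference is at the level of the scalar ingredients. The paper simply imports them: the Kronecker fourth-moment bound $\E|\langle\vomega,\vw\rangle|^4\le \rC_{\vnu}^{\ell}$ from Meyer \& Avron (\cref{impthm:meyer-avron}), and the Kronecker small-ball bound from Hu \& Paouris (\cref{thm:hupao}); this is what lets \cref{corol:kr-large-sketch-osi} cover all the base distributions in \cref{table:KR-constants} and lets \cref{corol:kr-small-sketch} cover bounded-density and log-concave bases. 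You instead give self-contained Gaussian-specific derivations, both of which check out: the induction with conditional Gaussianity and Cauchy--Schwarz reproduces the $3^{\ell}$ bound, and your Laplace-transform argument—convexity of $M\mapsto\E[(1+2t\,\vomega'^{\top}M\vomega')^{-1/2}]$ on trace-one psd matrices, extreme points rank-one, iterate—correctly identifies the fully tensor-product direction as the worst case and reduces the small-ball estimate to a Chernoff bound for a product of iid $\chi^2_1$ variables, which indeed gives $\Pr\{\langle\vomega,\vw\rangle^2\le\eta_0^{\ell}\}\le 0.01$ uniformly in $\ell$ for a small enough universal $\eta_0$. Two minor quantitative notes: \cref{thm:osi-from-small-ball} as stated needs failure probability at most $0.01$ rather than your $p_0=\nicefrac{9}{10}$ (achievable by shrinking $\eta_0$, as your own Chernoff step allows), and it then yields injectivity $\alpha=\tau/10$ with $\tau=\eta_0^{\ell}$, matching your claimed $\alpha=\Theta(\eta_0^{\ell})=\mathrm{c}^{\ell}$. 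What your route buys is an elementary, fully explicit proof for the Gaussian case with an identified extremal direction; what the paper's route buys is immediate generality across base distributions by citation.
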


\Cref{thm:khatri-rao-intro} suggests that the quality
of a Khatri--Rao test matrix degrades quickly %
as the tensor order \(\ell\) increases.
\Cref{item:kr-intro-large-sketch} states that we can achieve constant injectivity $\alpha$ when the embedding dimension $k$ is exponential in the tensor order $\ell$.
\Cref{item:kr-intro-small-sketch} states that we can reach
the optimal embedding dimension $k = \cO(r)$
at the cost of exponentially small injectivity $\alpha$.
The exponential scaling is not an artifact of the
analysis---it is necessary for worst-case examples~\cite{meyer25}.
Khatri--Rao sketching also has the surprising feature~\cite{meyer23,meyer25} that
changing the base distribution can result in an
\emph{exponentially large difference}
in the embedding dimension $k$ required
for constant injectivity $\alpha \ge \mathrm{c}$.
See \cref{sec:khatri-rao-top-level-section} for details.

We are not aware of any result similar to \cref{thm:khatri-rao-intro}, aside from the concurrent research \cite{saibaba2025improved},
although the literature contains OSE guarantees in special cases.
For example, Bujanovi{\'c} \etal
\cite{bujanovic25} prove that a Khatri--Rao test matrix with a real Gaussian base distribution and tensor order $\ell = 2$ is an OSE when the embedding dimension $k = \cO(r^{3/2})$.

\paragraph{Practitioner recommendations.}

In the literature, the most common base distributions
for Khatri--Rao test matrices are the
real Gaussian and real Rademacher distributions.
We \emph{advise against} using these base distributions.
Instead, we recommend the complex spherical
distribution.
If complex arithmetic is not available,
then we recommend using the real spherical distribution.

Why?
To guarantee the OSI property with constant injectivity
$\alpha \geq \nicefrac{1}{2}$,
the real Gaussian distribution requires embedding
dimension $k = \order(\smash{3^{\ell}} r)$.
By using a spherical distribution, the base of the exponent shrinks
by a factor depending on the base dimension \(d_0\).
In particular,
in the important case $d_0 =2$,
the real spherical distribution allows for
$k = \cO(\smash{1.5^{\ell}} r)$ and the complex spherical
distribution permits $k = \cO(\smash{1.\overline{33}}^{\ell}r)$.
The argument against using a Rademacher base distribution
is even more brutal, as Khatri--Rao test matrices over this distribution can fail catastrophically.
Indeed, for a worst-case subspace,
the Rademacher Khatri--Rao test matrix has
injectivity $\alpha = 0$ \emph{unless}
the embedding dimension $k = \Omega(2^{\ell})$.
\cref{sec:kr-rad-bad-numerics} presents
numerical experiments that document these effects.

\paragraph{Computational consequences.}

While the %
exponential factors in \cref{thm:khatri-rao-intro} may seem deadly, %
Khatri--Rao test matrices remain powerful tools for scientific computing.

First, many applications of tensor linear algebra concern tensors
with small order, such as $\ell = 2$ or $\ell = 3$.
In these cases, the exponential factors in $\ell$ become constants,
and we recover near-optimal approximations
for some embedding dimension $k = \order(r)$.
As a concrete example, the next theorem treats
the problem of structured matrix approximation
under a bilinear query model.

\begin{theorem}[Matrix recovery from bilinear queries, variant of \protect{\cref{thm:matrix-recovery}}.]
\label{thm:matrxi-recovery-intro}
    Choose a $d$-dimensional subspace of matrices \(\cF \coloneqq \Span\{\mM_1, \ldots, \mM_d\} \subseteq \R^{n \times n} \).
    Fix a target matrix \(\mB \in \R^{n \times n}\),
    accessible with bilinear queries: $(\vx,\vy) \mapsto \vy^{\sf T} \mB \vx$.
    \Cref{alg:matrix-recovery} performs \(\cO(d)\)
    bilinear queries on $\mB$, and it returns a near-optimal
    orthogonal projection $\tilde\mB \in \R^{n \times n}$
    of the target matrix $\mB$ onto the subspace $\cF$
    with probability at least  $\:\nicefrac{9}{10}$.  That is,
    \[
        \norm{\smash{\mB - \tilde\mB}}_{\rm F}^2 \leq \rC \cdot \min_{\mM\in\cF}\ \norm{\mB - \mM}_{\rm F}^2.
    \]
    As usual, \(\rC \geq 1\) is a universal constant. 
\end{theorem}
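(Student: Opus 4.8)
The plan is to reduce the problem to sketched least-squares regression (\cref{thm:osi-sketch-solve-summary}) and to instantiate the sketch with a Khatri--Rao test matrix of tensor order $\ell = 2$ (\cref{def:kr-intro}), chosen precisely so that each sketched entry of the right-hand side is computed by a single bilinear query. First, I would vectorize. Stack the basis matrices into a design matrix $\mA \coloneqq [\,\Vec(\mM_1)~\cdots~\Vec(\mM_d)\,] \in \R^{n^2 \times d}$ and set $\vb \coloneqq \Vec(\mB) \in \R^{n^2}$. Since $\Vec(\sum_i c_i \mM_i) = \mA\vc$, one has $\min_{\mM \in \cF}\norm{\mB - \mM}_{\rm F}^2 = \min_{\vc}\norm{\mA\vc - \vb}_2^2$, and the exact orthogonal projection of $\mB$ onto $\cF$ corresponds to the least-squares minimizer $\vc^\star$. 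So it suffices to produce $\tilde\vc$ with $\norm{\mA\tilde\vc - \vb}_2^2 \le \rC\cdot\min_\vc\norm{\mA\vc - \vb}_2^2$ using $\cO(d)$ bilinear queries and return $\tilde\mB \coloneqq \sum_i \tilde c_i \mM_i$.

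The key design choice is the sketch. A generic sketching matrix $\mPsi \in \R^{n^2 \times p}$ is useless here, because evaluating a single inner product $\langle \mPsi_{:,j}, \vb\rangle$ would read all $n^2$ entries of $\mB$. Instead I would take $\mPsi$ to be a Khatri--Rao test matrix (\cref{def:kr-intro}) with base dimension $d_0 = n$ and tensor order $\ell = 2$, so that column $j$ is $\mPsi_{:,j} = p^{-1/2}\,\vomega_j^{(1)} \otimes \vomega_j^{(2)}$. Choosing the vectorization convention so that $\Vec(\vy\vx^{\sf T}) = \vx \otimes \vy$, one gets $\langle \mPsi_{:,j}, \vb\rangle = p^{-1/2}\,(\vomega_j^{(2)})^{\sf T}\mB\,\vomega_j^{(1)}$, i.e., exactly one bilinear query with $\vx = \vomega_j^{(1)}$ and $\vy = \vomega_j^{(2)}$. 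Thus the sketched response $\mPsi^{\sf T}\vb$ costs $p$ queries, while the sketched design $\mPsi^{\sf T}\mA$ --- whose $(j,i)$ entry is $p^{-1/2}(\vomega_j^{(2)})^{\sf T}\mM_i\,\vomega_j^{(1)}$ --- costs no queries at all, since the $\mM_i$ are known. \Cref{alg:matrix-recovery} would then form these two small arrays and solve $\min_\vc\norm{(\mPsi^{\sf T}\mA)\vc - \mPsi^{\sf T}\vb}_2^2$ via \cref{alg:sketch-and-solve}.

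To conclude, I would check that $\mPsi$ is a $(d,\alpha)$-OSI with a constant $\alpha$ and embedding dimension $p \ge d$. Because the tensor order $\ell = 2$ is a fixed constant, \cref{thm:khatri-rao-intro}\ref{item:kr-intro-large-sketch} --- or its extension \cref{corol:kr-large-sketch-osi} to spherical base distributions --- gives that the Khatri--Rao matrix is a $(d,\nicefrac12)$-OSI already at embedding dimension $p = \order(3^{\ell}d) = \order(d)$. Then \cref{thm:osi-sketch-solve-summary} yields $\norm{\mA\tilde\vc - \vb}_2^2 \le (\rC/\alpha)\cdot\min_\vc\norm{\mA\vc - \vb}_2^2$ with probability at least $\nicefrac{9}{10}$, which is exactly the claimed bound after undoing the vectorization. (An $(r,\alpha)$-OSI also serves every subspace of dimension at most $r$, so this still works when $\mM_1,\dots,\mM_d$ are linearly dependent.) The total query count is $p = \cO(d)$, as promised.

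The hard part is conceptual rather than computational: the sketch cannot be an arbitrary OSI, since most OSIs are not accessible through bilinear queries; one has to recognize that a Kronecker-structured sketch is exactly a Khatri--Rao test matrix of order $2$, and that order $2$ is small enough for the exponential-in-$\ell$ degradation in \cref{thm:khatri-rao-intro} to collapse to a constant. Once that is in place, the remaining steps --- the vectorization identity, the choice of Kronecker/vec convention so that a column of $\mPsi$ really is one query, and the two black-box invocations (\cref{thm:osi-sketch-solve-summary} and \cref{thm:khatri-rao-intro}) --- are routine.
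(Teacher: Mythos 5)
Your proposal is correct and follows essentially the same route as the paper's proof: vectorize to an overdetermined least-squares problem, observe that bilinear queries realize a Khatri--Rao sketch of tensor order $\ell = 2$, certify the $(d,\nicefrac12)$-OSI property at embedding dimension $\cO(d)$ via \cref{corol:kr-large-sketch-osi}, and conclude with the sketch-and-solve guarantee (\cref{thm:lsq-via-osi}). The only cosmetic difference is that the paper fixes the Gaussian base distribution (as in \cref{alg:matrix-recovery}) and remarks on the conjugation needed for the plain-transpose query in the complex case, which is vacuous over $\R$.
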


The subspace $\cF$ of structured matrices is sometimes called a \emph{linearly parameterized family}.
Typical examples include the set of Toeplitz matrices or the
set of matrices with fixed bandwidth.
The bilinear query model arises in many settings
\cite{swartworth2023optimal,chen24,andoni2013eigenvalues,li2014sketching,swartworth2023optimal}, and it has been the focus of
several theoretical studies
\cite{wimmer2014optimal,rashtchian2020vector,needell2022testing}.
To prove~\cref{thm:matrxi-recovery-intro},
we observe that a family of random bilinear queries
amounts to a Khatri--Rao sketch of (the vectorization of)
the target matrix \mB.
\Cref{thm:khatri-rao-intro} provides
an OSI for this test matrix,
which implies the \emph{optimal} query complexity $\cO(d)$.
In contrast,
existing OSE results for Khatri--Rao test matrices~\cite{bujanovic25}
yield a weaker query complexity of $\cO(d^{3/2})$.
The concurrent research~\cite{saibaba2025improved}
offers an improvement, but it also falls short of the optimal bound.

Second, although Khatri--Rao test matrices behave exponentially
badly for worst-case problem instances, their empirical
performance is often substantially
better than the worst-case theory suggests.
As evidence, \cref{fig:rsvd-gaussian-vs-structured}
applies (complex spherical) Khatri--Rao test matrices with large tensor
order $\ell$ to accurately approximate thousands of %
matrices from \texttt{SparseSuite}.
In \cref{sec:science-trace}, we successfully perform
variance-reduced trace estimation~\cite{meyer2021hutch++,persson22,epperly24trace}
to compute the partition function of a quantum system.
These experiments require Khatri--Rao products
with tensor order $\ell = 16$,
and yet the methods
are %
powerful enough to achieve \textbf{12 digits of precision}
in regimes where baseline methods \textbf{fail to achieve
even 1 digit of precision}.

\subsection{How to establish the OSI property}
\label{sec:intro-osi-is-easier-than-ose}

So far, we have argued that many random test matrices
exhibit the OSI property.  Moreover, when we use these
test matrices to implement randomized linear algebra
algorithms, we can achieve near-optimal error bounds and
state-of-the-art runtimes.
Still, some conceptual questions nag us.
\textit{How is the OSI property fundamentally
different from the OSE property?
Why is the OSI property more likely to hold?
How can we prove that a test matrix is an OSI?}

Throughout this section, we assume that the test matrix
has iid columns.
More precisely,
let $\vomega \in \F^d$ be an isotropic random vector:
$\E{} [ \vomega \vomega^\top ] = \Id$.
Consider the random test matrix
\begin{equation} \label{eqn:test-iid-cols}
\mOmega \coloneqq \frac{1}{\sqrt{k}} \begin{bmatrix} \vomega_1 & \dots & \vomega_k \end{bmatrix} \in \F^{d \times k}
\quad\text{where $\vomega_j \sim \vomega$ iid.}
\end{equation}

\paragraph{Intuition: Dilation versus injection.}

On closer inspection of the embedding property~\eqref{eq:ose},
we discover a fundamental asymmetry
between the dilation and injection conditions.
To see why, write %
\begin{equation} \label{eqn:test-mtx-action-sum}
\norm{ \mOmega^\top \vx }_2^2
    = \frac{1}{k} \sum_{i=1}^k | \langle \vomega_i, \vx \rangle |^2
    \quad\text{for each fixed vector $\vx \in \F^d$.}
\end{equation}
Note that the right-hand side is the average
of $k$ iid \emph{nonnegative} summands,
each with expectation $\norm{\vx}_2^2$.

We realize that a \emph{sufficient} condition
for the test matrix $\mOmega$ to stretch out
the vector $\vx$ by a large factor is \emph{for one} summand
in~\eqref{eqn:test-mtx-action-sum}
to be unusually large.
When linear marginals of $\vomega$ have heavy tails,
the dilation parameter $\beta$ tends to be large.
On the other hand, a \emph{necessary} condition for the
test matrix $\mOmega$ to annihilate the vector $\vx$
is \emph{for all} summands in~\cref{eqn:test-mtx-action-sum}
to equal zero.
Therefore, the injectivity parameter $\alpha$ hates to be zero.
In summary: \textbf{It is almost a {law of nature} that a test matrix
is an OSI (\cref{def:osi}) when the embedding dimension $k$ exceeds
the subspace dimension $r$ by a constant factor};
cf.~\cite[p.~12994]{koltchinskii2015bounding}.

\paragraph{Spectral formulations.}

We still need to identify strategies for establishing subspace injectivity.
To that end, observe that the OSI property admits an
equivalent formulation as a spectral inequality.

\begin{proposition}[OSI: Spectral formulation] \label{prop:osi-spec}
Suppose that $\mOmega \in \F^{d \times k}$ is an isotropic random matrix, as in~\eqref{eqn:osi-isotropy}.  The following conditions are equivalent.
\begin{enumerate} %
\setlength{\itemsep}{0pt}
\item   The test matrix $\mOmega$ is an $(r, \alpha)$-OSI.

\item   For each fixed matrix $\mQ \in \F^{d \times r}$ with orthonormal columns, %
\[
    \sigma_{\min}^2(\mOmega^\top \mQ) = \lambda_{\min}(\mQ^\top\mOmega\mOmega^\top\mQ) \geq \alpha > 0
    \quad\text{with probability at least $\:\nicefrac{19}{20}$.}
\]
\end{enumerate}
\end{proposition}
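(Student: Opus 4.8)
The plan is to prove the equivalence directly by unwinding the definitions of the OSI property (\cref{def:osi}) and connecting the quadratic form $\norm{\mOmega^\top \vx}_2^2$ to the smallest singular value of $\mOmega^\top \mQ$. The isotropy hypothesis \eqref{eqn:osi-isotropy} is assumed throughout, so condition \eqref{item:osi-isotropy} of \cref{def:osi} is automatic and only the injectivity condition \eqref{eq:injectivity} is in play; thus it suffices to show that \eqref{eq:injectivity} is equivalent to the stated spectral bound.

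First I would record the elementary linear-algebra fact that, for any matrix $\mM \in \F^{d \times r}$ with orthonormal columns and any $r$-dimensional subspace $\cV$ spanned by those columns,
\[
    \inf_{\vx \in \cV,\ \vx \neq \vzero} \frac{\norm{\mOmega^\top \vx}_2^2}{\norm{\vx}_2^2}
    = \inf_{\vc \in \F^r,\ \vc \neq \vzero} \frac{\norm{\mOmega^\top \mQ \vc}_2^2}{\norm{\mQ\vc}_2^2}
    = \inf_{\vc \neq \vzero} \frac{\norm{(\mOmega^\top \mQ)\vc}_2^2}{\norm{\vc}_2^2}
    = \sigma_{\min}^2(\mOmega^\top \mQ),
\]
where the middle equality uses that $\mQ$ is an isometry ($\norm{\mQ\vc}_2 = \norm{\vc}_2$) and the substitution $\vx = \mQ\vc$ is a bijection from $\F^r \setminus \{\vzero\}$ onto $\cV \setminus \{\vzero\}$. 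I would also note the identity $\sigma_{\min}^2(\mOmega^\top \mQ) = \lambda_{\min}(\mQ^\top \mOmega \mOmega^\top \mQ)$, which is just the definition of singular values applied to the Gram matrix (and $\mQ^\top = \mQ^*$ real or conjugate-transpose as appropriate in $\F$). This reduces the inequality ``$\alpha \norm{\vx}_2^2 \le \norm{\mOmega^\top\vx}_2^2$ for all $\vx \in \cV$'' to the single scalar inequality ``$\sigma_{\min}^2(\mOmega^\top\mQ) \ge \alpha$.''

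With this dictionary in hand, the equivalence follows by matching the probabilistic quantifiers. For (1) $\Rightarrow$ (2): given a fixed $\mQ \in \F^{d\times r}$ with orthonormal columns, let $\cV \coloneqq \range(\mQ)$, which is a fixed $r$-dimensional subspace; the OSI injectivity event \eqref{eq:injectivity} holds with probability at least $\nicefrac{19}{20}$, and on that event the displayed computation gives $\sigma_{\min}^2(\mOmega^\top\mQ) \ge \alpha > 0$. For (2) $\Rightarrow$ (1): given a fixed $r$-dimensional subspace $\cV \subseteq \F^d$, pick any matrix $\mQ$ whose columns form an orthonormal basis of $\cV$ (such a basis exists, e.g.\ by Gram--Schmidt); applying (2) to this $\mQ$ and reading the displayed chain of equalities in reverse shows that, with probability at least $\nicefrac{19}{20}$, every $\vx \in \cV$ satisfies $\norm{\mOmega^\top\vx}_2^2 \ge \alpha \norm{\vx}_2^2$, which is precisely \eqref{eq:injectivity}. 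Combined with the standing isotropy assumption, this gives the full OSI property.

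There is no real obstacle here---the statement is essentially a restatement of the definition through the Courant--Fischer characterization of the smallest singular value---so the only things to be slightly careful about are the bookkeeping of which objects are "fixed" versus random (the subspace/matrix is fixed, $\mOmega$ is random, so the probability is over $\mOmega$ alone), and the observation that the correspondence $\cV \leftrightarrow \mQ$ is many-to-one but that any choice of orthonormal basis yields the same infimum, so the two formulations quantify over the "same" family of subspaces. I would also remark that the equality $\norm{\mOmega^\top\vx}_2^2 = \vx^\top(\mOmega\mOmega^\top)\vx$ makes the isotropy condition \eqref{eqn:osi-isotropy} equivalent to $\E[\mOmega\mOmega^\top] = \Id$, which is why it is natural to phrase everything in terms of the positive-semidefinite matrix $\mOmega\mOmega^\top$; this sets up the spectral tools (matrix Chernoff / Matrix Bernstein–type bounds from high-dimensional probability) used later to verify the OSI property for concrete constructions.
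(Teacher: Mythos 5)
Your argument is correct and is exactly the routine variational unwinding that the paper leaves implicit (it states \cref{prop:osi-spec} without proof, relying on the definitions of $\sigma_{\min}$ and $\lambda_{\min}$ given immediately afterward). The correspondence $\cV \leftrightarrow \range(\mQ)$, the isometry $\norm{\mQ\vc}_2 = \norm{\vc}_2$, and the quantifier bookkeeping are all handled correctly, so nothing further is needed.
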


We define the minimum singular value and the minimum eigenvalue as
\[
\sigma_{\min}(\mB) \coloneqq \min_{\norm{\vu}_2 = 1}\ \norm{\mB \vu}_2
\quad\text{and}\quad
\lambda_{\min}(\mA) \coloneqq \min_{\norm{\vu}_2 = 1}\ \vu^\top \mA \vu.
\]
In particular, the minimum singular value of a matrix is zero if and only the matrix has a trivial null space.

Therefore, we can check the OSI property by obtaining lower bounds
for the minimum singular value %
of certain random matrices.
Fortuitously, the literature on high-dimensional probability %
contains several powerful tools for studying the minimum
singular value of a random matrix.

\paragraph{Fourth-moment bounds (\cref{sec:fourth-to-second}).}
Oliveira~\cite[Thm.~1.1]{oliveira16} established that fourth-moment bounds
suffice to control the minimum eigenvalue of a sample covariance matrix.
In detail, let $\vomega \in \F^d$ be an isotropic random vector
where the fourth moments of the linear marginals are uniformly bounded.
That is, for a fixed number $h \geq 1$, we assume that
\[
\E \big[ | \langle \vomega, \vu \rangle |^4 \big]
    \leq h
    \quad\text{for each unit vector $\vu \in \F^d$.}
\]
Consider a random test matrix $\mOmega \in \F^{d \times k}$ with iid columns, as in~\eqref{eqn:test-iid-cols}.
Then the random matrix
$\mOmega$ is an $(r, \nicefrac{1}{2})$-OSI
for some embedding dimension $k = \cO(hr)$.
We use this strategy to establish  \cref{item:kr-intro-large-sketch} of \cref{thm:khatri-rao-intro} on Khatri--Rao test matrices
with constant injectivity.

\paragraph{Small-ball probabilities (\Cref{sec:small-ball}).}

Koltchinskii \& Mendelson~\cite[Thm.~1.3]{koltchinskii2015bounding}
have demonstrated that small-ball probability bounds
allow us to control the minimum singular value of a random matrix.
We employ a variant of their approach,
adapted from~\cite{GKK+22:Geometry-Polytopes,tropp23hdp,Ver25:High-Dimensional-Probability-2ed}.
Consider an isotropic random vector $\vomega \in \F^d$ whose linear marginals admit a uniform bound on the small-ball probability.
That is, for a fixed number $\tau \in (0,1)$, we assume that
\[
\prob\big\{ | \langle\vomega, \vu \rangle |^2 \leq \tau \big\}
    \leq 0.01
    \quad\text{for each unit vector $\vu \in \F^d$.}
\]
Consider a random test matrix $\mOmega \in \F^{d \times k}$
with iid columns, as in~\eqref{eqn:test-iid-cols}.
Then the test matrix is an $(r, \nicefrac\tau{10})$-OSI
for some embedding dimension $k = \cO(r)$.
We apply this strategy to prove
\cref{item:kr-intro-small-sketch} of \cref{thm:khatri-rao-intro}
on Khatri--Rao test matrices
with proportional embedding dimension. %

\paragraph{Gaussian comparison (\Cref{sec:gaussian-compare}).}
Tropp~\cite[Thm.~2.3]{tropp25} recently devised a method for
bounding the minimum eigenvalue of a random psd matrix by
comparison with the minimum eigenvalue of a suitable Gaussian
random matrix.  This approach is the most elaborate of the
three techniques we consider.
Here is the simplest setting.
Consider a random test matrix $\mOmega \in \F^{d \times k}$
with iid columns, as in~\eqref{eqn:test-iid-cols}.
For a fixed matrix $\mQ \in \F^{d \times r}$ with orthonormal columns,
we introduce the random psd matrix
\[
\mY \coloneqq \mQ^\top \mOmega \mOmega^\top \mQ
    = \frac{1}{k} \sum_{j=1}^k \mW_j
    \quad\text{where the summands $\mW_j = \mQ^\top \vomega_j \vomega_j^{\top} \mQ$ are iid.}
\]
The idea is to construct a Gaussian matrix $\mZ \in \F^{r \times r}$
whose statistics are determined by the summands:
\[
\E[ \mZ ] = \E[ \mW_1 ] %
\quad\text{and}\quad
\Var[ \tr( \mM \mZ) ] = k^{-1} \cdot \E\big[ \tr( \mM \mW_1 )^2 \big]
\quad\text{for all self-adjoint $\mM \in \F^{r\times r}$.}
\]
In an appropriate sense, the minimum eigenvalue $\lambda_{\min}(\mY)$
is stochastically bounded below by the minimum eigenvalue
$\lambda_{\min}(\mZ)$.
Since $\mZ$ is Gaussian, we can employ specialized methods
to study its minimum eigenvalue.
This strategy is effective for analyzing sparse test matrices.
We use it to establish \cref{thm:sparsertt-intro} about SparseRTTs
and to confirm \cref{thm:sparse-stack-osi} about SparseStacks.

\subsection{Outline}

This paper carries out the OSI program
for studying randomized linear algebra algorithms
with structured test matrices.
\Cref{sec:randnla-via-osi} shows that the OSI property
suffices to justify several fundamental algorithms.
Afterward, we establish OSI properties and discuss 
constructions of sparse test matrices (\cref{sec:sparse-sketching}),
fast trigonometric transforms (\cref{sec:sparse_trig_transforms}),
and Khatri--Rao products (\cref{sec:khatri-rao-top-level-section}).
\Cref{sec:science-applications} highlights two scientific applications
of randomized linear algebra that benefit from structured test matrices.
Last, \cref{sec:osi-proof-strats} elaborates on the proof strategies
that we use to check the OSI property. \emph{{\textexclamdown}V{\'a}monos!}

\subsection{Notation \& terminology}
\label{sec:prelims}

Our results are valid in either the real field $\R$ or the complex field $\C$; the symbol $\F$ refers to either field.
The bar $\overline{z}$ denotes the complex conjugate of $z \in \C$.
For vectors and matrices, the asterisk ${}^\top$ denotes the conjugate transpose,
while the superscript \(\smash{{}^{\sf T}}\) denotes the plain transpose.
The vector space \(\bbF^{d}\) is equipped with the canonical
inner product $\langle  \vx, \vy \rangle \coloneqq \vx^\top \vy$,
which induces the $\ell_2$ norm $\norm{ \cdot }_2$.
We define the standard Kronecker product \(\otimes\) %
and the vectorization operation \(\Vec(\cdot)\).
A \emph{positive semidefinite (psd)} matrix
is self-adjoint and has nonnegative eigenvalues.
An \emph{orthonormal} matrix $\mQ \in \F^{n\times r}$
satisfies the condition $\mQ^\top\mQ = \mI$.
The \emph{coherence} of an orthonormal matrix
is the numerical quantity
$\mu(\mQ) \coloneqq \max_{1\le i \le n}\ \norm{\smash{\ve_i^*\mQ}}_2^2$,
where $\ve_i$ denotes the $i$th canonical basis vector.
We work with the spectral norm $\norm{\cdot}_2$, the Frobenius norm $\norm{\cdot}_{\rm F}$, and the nuclear norm $\norm{\cdot}_*$.
For a matrix $\mA$, the symbol \(\lra{\mA}{r}\) denotes a best rank-$r$
approximation with respect to the Frobenius norm.
The adjective ``rank-$r$'' means ``with rank at most $r$.''

We employ two sketching matrices, which are always called \(\mOmega\) and \(\mPsi\), each of which is tall and skinny.
Consequently, \mOmega always acts on the right ($\mA \mapsto \mA\mOmega$), while $\mPsi^\top$ acts on the left ($\mA \mapsto \mPsi^\top \mA$).
The symbol $\zeta$ represents the (expected) number of nonzeros per \textit{row} of a sparse random matrix,
while $\colsparse$ is the (expected) number of nonzeros per \textit{column}.

The operator $\prob$ returns the probability of an event,
while $\E$ and $\Var$ compute the expectation and variance
of a random variable.
The symbol $\sim$ means ``has the distribution.''
As always, iid abbreviates ``independent and identically distributed.''
The real standard normal distribution \(\cN(0,1)\)
describes a real Gaussian random variable
with expectation zero and variance one.
The complex standard normal distribution \(\cN_{\bbC}(0,1)\) induces a complex random variable of the form $(Z_1 + \mathrm{i} Z_2)/\sqrt{2}$,
where $Z_1, Z_2 \sim \cN(0,1)$ iid.
The real Rademacher distribution $\textsc{rademacher} = \textsc{uniform}\{\pm 1\}$.
The complex Rademacher distribution $\textsc{rademacher}_{\C}$
induces a random variable of the form $(\varrho_1 + \mathrm{i} \varrho_2)/\sqrt{2}$ where $\varrho_1, \varrho_2 \sim \textsc{rademacher}$ iid.
We also define another complex distribution, $\textsc{steinhaus} = \textsc{uniform}\{ z \in \C : |z|=1 \}$.
The spherical distribution on $\F^d$ is uniform over the set of vectors with $\ell_2$ norm $\sqrt{d_0}$.

We employ the standard computer-science definitions
of the order symbols $\cO, \Omega, \Theta$.
The notation $\orderish(\cdot)$ suppresses polylogarithmic factors. %
As usual, the upright letters $\mathrm{c}, \mathrm{C}$ %
denote positive universal constants that may change from appearance to appearance.

\section{Randomized matrix computations via OSIs}
\label{sec:randnla-via-osi}

In this section, we review standard algorithms for least-squares regression and low-rank approximation that employ randomized dimension reduction.  When these algorithms are implemented with OSIs, we can prove that they achieve errors within a constant factor of the optimal value.
The runtime of these algorithms also depends on the choice of the OSI,
so it is productive to isolate the sketching costs from the auxiliary arithmetic performed by each algorithm; see \cref{table:dense-rates}.

\Cref{sec:osis-respect-orth} contains our main technical observation,
which states that OSIs ``respect orthogonality.''
In \cref{sec:rvsd-via-osi}, we discuss the randomized SVD algorithm, which is a two-pass algorithm for computing a low-rank approximation.
\Cref{sec:nystrom,sec:gen-nystrom-via-osi} discuss the Nystr\"om and generalized Nystr\"om approximations, which produce low-rank approximations after a single pass over the matrix.
\Cref{sec:lsq-via-osi} treats the sketch-and-solve algorithm for least-squares, a fundamental method
that also figures in the analysis of the generalized Nystr\"om approximation.

\subsection{Basic technical observation: OSIs respect orthogonality} \label{sec:osis-respect-orth}

Before treating specific algorithms, we highlight a core theoretical insight:

\begin{lemma}[OSIs respect orthogonality]
    \label{lem:osi-preserve-orthogonality}
    Fix two orthonormal matrices $\mQ \in \F^{n \times r}$ and $\mQ_{\perp} \in \F^{n \times s}$ whose ranges are mutually orthogonal: $\mQ^\top \mQ_\perp = \bm{0}$.    
        Choose an arbitrary matrix $\mB \in \bbF^{t \times s}$.
    Draw a random $(r, \alpha)$-OSI test matrix $\mOmega \in \bbF^{n\times k}$. 
    With probability at least \(\frac9{10}\), the matrix
    $\mQ^\top \mOmega$ has full row rank, and
    \[
        \norm{\mB(\mQ_\perp^\top\mOmega)(\mQ^\top\mOmega)^\dagger}_{\rm F}^2
        \leq %
        (\rC / \alpha) \cdot \norm{\mB}_{\rm F}^2.
    \]
\end{lemma}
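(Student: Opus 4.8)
The plan is to split the randomness into two pieces controlled by a union bound, and to combine them through the submultiplicative estimate
\[
  \norm{\mB(\mQ_\perp^\top\mOmega)(\mQ^\top\mOmega)^\dagger}_{\rm F}
  \;\le\;
  \norm{\mB(\mQ_\perp^\top\mOmega)}_{\rm F}\cdot\opnorm{(\mQ^\top\mOmega)^\dagger}.
\]
It then suffices to bound $\opnorm{(\mQ^\top\mOmega)^\dagger}$ using the injectivity half of the OSI property, and to bound $\norm{\mB(\mQ_\perp^\top\mOmega)}_{\rm F}$ using the isotropy half together with Markov's inequality.

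First I would invoke injectivity~\eqref{eq:injectivity}---equivalently its spectral form, \cref{prop:osi-spec}---on the fixed $r$-dimensional subspace $\cV \coloneqq \range(\mQ)$. Because $\mQ$ is orthonormal, the event yields $\sigma_{\min}^2(\mOmega^\top\mQ) = \lambda_{\min}(\mQ^\top\mOmega\mOmega^\top\mQ) \ge \alpha$ with probability at least $\tfrac{19}{20}$. On this event $\mQ^\top\mOmega \in \F^{r\times k}$ has full row rank, so $(\mQ^\top\mOmega)^\dagger = (\mOmega^\top\mQ)(\mQ^\top\mOmega\mOmega^\top\mQ)^{-1}$ is a genuine right inverse, and $\opnorm{(\mQ^\top\mOmega)^\dagger} = \sigma_{\min}^{-1}(\mOmega^\top\mQ) \le \alpha^{-1/2}$.

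Second I would use isotropy~\eqref{eqn:osi-isotropy}. Since the identity $\E\norm{\mOmega^\top\vx}_2^2 = \norm{\vx}_2^2$ holds for every $\vx$ and $\E[\mOmega\mOmega^\top]$ is self-adjoint, it follows that $\E[\mOmega\mOmega^\top] = \Id$; hence
\[
  \E\norm{\mB\mQ_\perp^\top\mOmega}_{\rm F}^2
  = \tr\!\bigl(\mB\mQ_\perp^\top\,\E[\mOmega\mOmega^\top]\,\mQ_\perp\mB^\top\bigr)
  = \tr\!\bigl(\mB\mQ_\perp^\top\mQ_\perp\mB^\top\bigr)
  = \norm{\mB}_{\rm F}^2,
\]
where the last step uses $\mQ_\perp^\top\mQ_\perp = \Id$. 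Markov's inequality then gives $\norm{\mB\mQ_\perp^\top\mOmega}_{\rm F}^2 \le 20\,\norm{\mB}_{\rm F}^2$ with probability at least $\tfrac{19}{20}$. Note this step never uses $\mQ^\top\mQ_\perp = \bm{0}$; that hypothesis is what makes the conclusion \emph{useful} downstream---letting $\mQ_\perp^\top$ capture a part of a matrix complementary to $\range(\mQ)$---rather than what makes it true.

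Finally I would union-bound the two failure probabilities, $\tfrac{1}{20} + \tfrac{1}{20} = \tfrac{1}{10}$, and substitute the two estimates into the displayed submultiplicative bound and square, obtaining $\norm{\mB(\mQ_\perp^\top\mOmega)(\mQ^\top\mOmega)^\dagger}_{\rm F}^2 \le 20\,\alpha^{-1}\norm{\mB}_{\rm F}^2$, so $\rC = 20$ is admissible. I do not expect a serious obstacle here: the only places requiring care are (i) passing cleanly from the scalar isotropy identity to $\E[\mOmega\mOmega^\top] = \Id$ in the complex case, where one must track conjugate transposes; and (ii) keeping the probability budget tight enough that both sub-events fit inside the allotted $\tfrac{1}{10}$ slack.
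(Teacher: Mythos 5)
Your proposal is correct and follows essentially the same route as the paper's proof: the same submultiplicative split into $\norm{\mB\mQ_\perp^\top\mOmega}_{\rm F}\cdot\opnorm{(\mQ^\top\mOmega)^\dagger}$, the spectral form of injectivity (\cref{prop:osi-spec}) for the denominator, isotropy plus Markov (with the same constant $20$) for the numerator, and a union bound giving failure probability $\tfrac1{10}$. Your side remarks—that mutual orthogonality of $\mQ$ and $\mQ_\perp$ is not needed for the bound itself, and that one should verify $\E[\mOmega\mOmega^\top]=\Id$ in the complex case—are accurate but do not change the argument.
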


We can interpret the mysterious quantity $F \coloneqq \norm{\smash{\mB(\mQ_\perp^\top\mOmega)(\mQ^\top\mOmega)^\dagger}}_{\rm F}$ as the minimum Frobenius norm of a solution to the sketched least-squares problem
\[
\min\nolimits_{\mX}\ \norm{ \mOmega^\top ( \mQ \mX - \mQ_{\perp} \mB^\top ) }_{\rm F}^2.
\]
In words, $F$ describes how well we can approximate the
point $\mQ_{\perp} \mB^\top$ by a point in the
\emph{orthogonal} subspace $\range(\mQ)$
after sketching both with $\mOmega$.
The lemma states that an OSI $\mOmega$ roughly preserves the
orthogonality between the point and the subspace.
As we will discover, the result is valuable because
$F$ provides an exact expression
for the residual in the sketch-and-solve method for least-squares regression.  It also controls the error in randomized algorithms
for low-rank approximation of matrices.

\begin{proof} %
    Using the operator-norm bound for the Frobenius norm, we have %
    \begin{equation} \label{eqn:osi-orth-proof-1}
        \norm{\mB(\mQ_\perp^\top\mOmega)(\mQ^\top\mOmega)^\dagger}_{\rm F}^2
        \leq \norm{\mB\mQ_\perp^\top\mOmega}_{\rm F}^2 \cdot \norm{(\mQ^\top\mOmega)^\dagger}_2^2
        \leq \frac{\norm{\mB\smash{\mQ_\perp^\top}\mOmega}_{\rm F}^2}{\sigma_{\rm min}^2(\mOmega^\top \mQ)}.
    \end{equation}
    Since $\mQ$ has $r$ columns and $\mOmega$ is an $(r, \alpha)$-OSI, \cref{prop:osi-spec} %
    ensures that \(\sigma_{\rm min}^2(\mOmega^\top \mQ) \geq \alpha\) with probability at least \(1-\frac{1}{20}\).
    After we expand the numerator, the isotropy property~\eqref{eqn:osi-isotropy} of the OSI yields
    \[
        \E{}\bigl[\norm{\mB\mQ_\perp^\top\mOmega}_{\rm F}^2\bigr]
        = \E\big[\tr(\mB\mQ_\perp^\top\mOmega\mOmega^\top\mQ_\perp\mB^\top)\big]
        = \tr(\mB\mQ_\perp^\top\mQ_\perp\mB^\top)
        = \norm{\mB}_{\rm F}^2.
    \]
    Indeed, $\E{} [ \mOmega \mOmega^{\top} ] = \mI$.
    By Markov's inequality, %
    \[\norm{\mB\mQ_\perp^\top\mOmega}_{\rm F}^2 \leq 20 \cdot \norm{\mB}_{\rm F}^2 \quad \text{with probability at least \(1-\tfrac{1}{20}\)}.\]
    With a union bound, we can control the numerator and denominator in~\eqref{eqn:osi-orth-proof-1} simultaneously
    with probability at least $1 - \frac{1}{10}$.
    On this event, we can combine the inequalities to reach the stated result.    
\end{proof}

The rest of this section establishes that we can use OSIs
to implement RSVD, Nystr{\"o}m, generalized Nystr{\"o}m,
and sketch-and-solve regression.  The proofs all hinge
on \cref{lem:osi-preserve-orthogonality}.

\subsection{The randomized SVD}
\label{sec:rvsd-via-osi}

The randomized SVD \cite{halko11}, or RSVD, is a basic and widely used randomized algorithm for computing a low-rank approximation of a matrix.

Fix an input matrix $\mA \in \F^{n \times d}$ and an approximation rank  $k \leq \min\{n, d\}$.  We draw a random test matrix $\mOmega \in \F^{d \times k}$,
and we acquire information about the input matrix via the sketch $\mY \coloneqq \mA \mOmega \in \F^{n \times k}$.
The range of the sketch $\mY$ captures important directions in the column space of $\mA$.
The orthogonal projection of the input matrix $\mA$ onto the range of the sketch $\mY$ induces a rank-$k$ approximation of the form
$
\widehat{\mA} \coloneqq \mY \mY^{\dagger} \mA \in \F^{n \times d}.
$
The RSVD algorithm reports a compact SVD of this approximation: $\widehat{\mA} = \mU \mSigma \mV^\top$.

For numerical stability, the RSVD algorithm accomplishes this task by forming an orthonormal basis for the range of the test matrix:
\begin{equation} \label{eqn:rdm-rangefinder}
    \mQ \coloneqq \operatorname{orth}(\mA\mOmega) \in \F^{n \times k'}
    \qquad\text{where $k' \leq k$.}
\end{equation}
The rank-$k$ approximation $\widehat{\mA}$ of the input matrix can be expressed with the %
\textsf{QB} factorization:
\begin{equation} \label{eqn:rsvd-approx}
\hat\mA \coloneqq \mQ \mB \in \F^{n \times d} \quad \text{for }\mB\defeq \mQ^\top \mA.
\end{equation}
To obtain the approximation,
we need to compute the matrix--matrix
product $\mQ^* \mA = (\mA^\top \mQ)^\top \in \F^{k' \times d}$, which is often the dominant cost of the RSVD algorithm.
Afterward, the RSVD algorithm manipulates the \textsf{QB}
factorization to obtain a compact SVD.

See \cref{alg:rsvd} for reliable RSVD pseudocode.
The RSVD algorithm performs
\emph{two passes} over the input matrix $\mA$, including one sketching operation
($\mA \mOmega$) and one matrix--matrix product ($\mQ^* \mA$).  It also requires $\cO(k^2 n)$ operations for orthogonalization and SVD computations.   While we can exploit properties of the
input matrix $\mA$ in the matrix--matrix product $\mQ^* \mA$,
this step does not benefit from the structure of the test matrix $\mOmega$.  See \cref{table:dense-rates} for a summary of the arithmetic costs. \Cref{sec:nystrom,sec:gen-nystrom-via-osi} describe low-rank approximation algorithms that can achieve arithmetic costs strictly smaller than the RSVD.

\begin{algorithm}[t]
    \caption{Randomized singular value decomposition (RSVD) with structured sketching}
    \label{alg:rsvd}
    \begin{algorithmic}[1]
        \Require Matrix $\mA \in \F^{n \times d}$, approximation rank $k$
        \Ensure Orthonormal $\mU \in \F^{n \times k}$, $\mV \in \F^{d \times k}$ and nonnegative diagonal $\mSigma \in \F^{k \times k}$ defining $\widehat{\mA} \coloneqq \mU\mSigma\mV^\top$         \State Draw a \textit{structured} random test matrix $\mOmega \in \F^{d \times k}$ \Comment{ \textit{e.g., SparseStack, Khatri‐-Rao, etc.}}
        \State $\mY \leftarrow \mA\mOmega$
        \Comment{Sketch the input matrix}
        \State $\mQ \gets \texttt{orth}(\mY)$ \Comment{Orthogonalize columns of \mY}
        \State $\mB \leftarrow \mQ^\top\mA$
        \Comment{$\widehat{\mA} = \mQ \mB$ is the \textsf{QB} approximation}
        \State $[\widehat{\mU},\,\mSigma,\,\mV] \leftarrow \texttt{svd\_econ}(\mB)$
        \State $\mU \leftarrow \mQ\widehat{\mU}$
        \Comment{$\widehat{\mA} = \mU \mSigma \mV^\top$ is the RSVD approximation}
    \end{algorithmic}
\end{algorithm}

Here is our main theoretical result for the RSVD.
Recall that \(\lra{\mA}{r}\) denotes a best rank-\(r\)
approximation of a matrix $\mA$ with respect to the
Frobenius norm.

\begin{theorem}[Randomized RSVD with an OSI]
    \label{thm:rsvd-via-osi}
    Fix an input matrix \(\mA\in\F^{n \times d}\) and a target rank $r \leq \min \{n, d\}$.
    Draw a random $(r,\alpha)$-OSI test matrix \(\mOmega \in \F^{d \times k}\) where %
    $k \geq r$.
    Construct the rank-$k$ RSVD approximation $\widehat{\mA} \in \F^{n \times d}$ determined by~\eqref{eqn:rdm-rangefinder} and \eqref{eqn:rsvd-approx}.
    Then, with probability at least \(\frac9{10}\),
    \[
        \norm{\mA - \hat\mA}_{\rm F}^2 \leq %
        (\rC / \alpha) \cdot \norm{\smash{\mA-\lra{\mA}{r}}}_{\rm F}^2.
    \]
    \end{theorem}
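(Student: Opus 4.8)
The plan is to combine a deterministic error bound for the randomized rangefinder with the orthogonality principle of \Cref{lem:osi-preserve-orthogonality}, applied to the \emph{right} singular subspace of $\mA$ --- the subspace on which the OSI $\mOmega \in \F^{d\times k}$ actually operates.

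First I would fix a best rank-$r$ approximation $\lra{\mA}{r} = \mA\mV_1\mV_1^\top$, where $\mV_1 \in \F^{d\times r}$ collects $r$ leading right singular vectors of $\mA$, and let $\mV_2 \in \F^{d\times(d-r)}$ span the orthogonal complement, so that $\mA = \mA\mV_1\mV_1^\top + \mA\mV_2\mV_2^\top$ and $\norm{\mA\mV_2}_{\rm F} = \norm{\smash{\mA - \lra{\mA}{r}}}_{\rm F}$. Since $\mQ = \operatorname{orth}(\mA\mOmega)$, the RSVD output $\hat\mA = \mQ\mQ^\top\mA$ is the best Frobenius-norm approximation of $\mA$ among matrices with column space in $\range(\mA\mOmega)$; hence $\norm{\mA - \hat\mA}_{\rm F} \le \norm{\mA - \mM}_{\rm F}$ for \emph{any} such matrix $\mM$. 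I would use the convenient choice $\mM \coloneqq \mA\mOmega\,(\mV_1^\top\mOmega)^\dagger\,\mV_1^\top$, whose column space indeed sits inside $\range(\mA\mOmega)$. On the event that $\mV_1^\top\mOmega$ has full row rank $r$, substituting $\mA\mOmega = \mA\mV_1\,(\mV_1^\top\mOmega) + \mA\mV_2\,(\mV_2^\top\mOmega)$ and using that $(\mV_1^\top\mOmega)(\mV_1^\top\mOmega)^\dagger$ is the identity gives
\[
\mA - \mM \;=\; \bigl(\mA - \lra{\mA}{r}\bigr) \;-\; \mA\mV_2\,(\mV_2^\top\mOmega)(\mV_1^\top\mOmega)^\dagger\mV_1^\top .
\]
The two summands are Frobenius-orthogonal because $\mV_1^\top\mV_2 = \bm{0}$, and right multiplication by $\mV_1^\top$ preserves the Frobenius norm (as $\mV_1$ has orthonormal columns), so the Pythagorean identity yields the deterministic bound
\[
\norm{\mA - \hat\mA}_{\rm F}^2 \;\le\; \norm{\smash{\mA - \lra{\mA}{r}}}_{\rm F}^2 \;+\; \bignorm{\mA\mV_2\,(\mV_2^\top\mOmega)(\mV_1^\top\mOmega)^\dagger}_{\rm F}^2 .
\]

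Next I would invoke \Cref{lem:osi-preserve-orthogonality}, with $\mV_1$ (which has exactly $r$ columns) and $\mV_2$ in the roles of its two orthonormal matrices, $\mA\mV_2$ in the role of its coefficient matrix, and the given $(r,\alpha)$-OSI $\mOmega$ (the ambient dimension there being $d$). The lemma supplies, with probability at least $\nicefrac{9}{10}$, simultaneously that $\mV_1^\top\mOmega$ has full row rank --- which is exactly the condition licensing the deterministic bound above, so no additional union bound is needed --- and that
\[
\bignorm{\mA\mV_2\,(\mV_2^\top\mOmega)(\mV_1^\top\mOmega)^\dagger}_{\rm F}^2 \;\le\; (\rC/\alpha)\cdot\norm{\mA\mV_2}_{\rm F}^2 \;=\; (\rC/\alpha)\cdot\norm{\smash{\mA - \lra{\mA}{r}}}_{\rm F}^2 .
\]
On this event, combining the two displays and using $\alpha \le 1$ to absorb the first term into the second gives $\norm{\mA - \hat\mA}_{\rm F}^2 \le \bigl((\rC+1)/\alpha\bigr)\,\norm{\smash{\mA - \lra{\mA}{r}}}_{\rm F}^2$; after relabelling the universal constant this is the claimed inequality.

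I do not anticipate a real obstacle: the argument is matrix bookkeeping once the pieces are identified. The point needing care is the bridge to \Cref{lem:osi-preserve-orthogonality} --- recognizing that the relevant orthonormal factors are the \emph{right} singular vectors $\mV_1, \mV_2$ of $\mA$ (because $\mOmega$ acts on $\F^d$), that the coefficient matrix handed to the lemma should be $\mA\mV_2$, and that the ``full row rank of $\mV_1^\top\mOmega$'' hypothesis needed for the deterministic rangefinder bound is delivered by the \emph{same} lemma, so the success probability stays at $\nicefrac{9}{10}$ with no loss.
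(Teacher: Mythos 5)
Your proof is correct and takes essentially the same route as the paper: the deterministic bound you derive inline (via the projection optimality of $\hat\mA$ and the comparison matrix $\mA\mOmega(\mV_1^\top\mOmega)^\dagger\mV_1^\top$) is exactly the imported Halko--Martinsson--Tropp error decomposition (\Cref{impthm:hmt-error-decomp}) that the paper cites instead of reproving. The probabilistic step is identical to the paper's: apply \Cref{lem:osi-preserve-orthogonality} with $\mQ = \mV_1$, $\mQ_\perp = \mV_2$, and coefficient matrix $\mA\mV_2$ (the paper's $\mSigma_2$, equivalent up to a left unitary factor), then absorb the additive $1$ using $\alpha \le 1$.
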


\Cref{thm:rsvd-via-osi} provides a \emph{bicriteria} guarantee for the RSVD approximation.  The approximation error for the \emph{rank-$k$} output of the RSVD lies within a constant factor of the best \emph{rank-$r$} approximation error.
This type of bound is typical for the RSVD and related algorithms \cite{halko11,tropp23,gittens16,woodruff2014sketching}.
We can  upgrade the rank-$k$ approximation to a rank-$r$
approximation by a standard truncation technique~\cite[Thm.~9.3]{halko11}.

The proof of \cref{thm:rsvd-via-osi} combines the standard RSVD error bound
with the core technical observation about OSIs from \cref{sec:osis-respect-orth}.
Halko \etal %
employed a similar approach in their analysis of the RSVD with SRFT test matrices \cite[Thm.~11.2]{halko11}, as do Saibaba \& Mi{\k{e}}dlar \cite{saibaba2025randomized}.
A related idea appears in the tutorial of Drineas \& Mahoney~\cite[Rem.~98]{Drineas17}.

\begin{importedtheorem}[RSVD error decomposition \protect{\cite[Thm.~9.1]{halko11}, \cite[Lem.~2]{mahoney16}}]
\label{impthm:hmt-error-decomp}
Fix an input matrix \(\mA\in\F^{n \times d}\).  Choose an arbitrary test matrix \(\mOmega \in \F^{d \times k}\), let \(\mQ \in \F^{n \times k}\) be an orthonormal matrix %
whose range contains the range of \(\mA\mOmega\),
and construct the approximation \(\hat\mA = \mQ\mQ^\top \mA \in \F^{n \times n}\).

To analyze the RSVD, fix $r \leq k$,
and introduce the partitioned SVD of the input matrix:
\[
	\mA = \mU \bmat{ \mSigma_1 \\ & \mSigma_2 } \bmat{\mV_1^\top \\ \mV_2^\top},
\]
where \(\mSigma_1\in\F^{r \times r}\) and \(\mV_1\in\F^{d \times r}\) contain the top \(r\) singular values and right singular vectors of \mA, while \(\mSigma_2\) and \(\mV_2\) contain the other singular values and right singular vectors.
Provided that $\mV_1^\top \mOmega$ has full row-rank,
\[
    \norm{\mA-\hat\mA_{\phantom{\hspace{0cm}}}}_{\rm F}^2 \leq
    \norm{\mSigma^\ptop_2}_{\rm F}^2
    + \norm{\mSigma^\ptop_2(\mV_2^\top \mOmega)(\mV_1^\top \mOmega)^\dagger}_{\rm F}^2.
\]
\end{importedtheorem}

\begin{proof}[Proof of \cref{thm:rsvd-via-osi}]
To proceed, we simply invoke \cref{lem:osi-preserve-orthogonality}
to bound the excess error term from \cref{impthm:hmt-error-decomp}.
Thus, with probability at least \(\frac9{10}\), we can compute
\begin{align*}
    \norm{\mA - \hat\mA}_{\rm F}^2
    &\leq \norm{\mSigma^\ptop_2}_{\rm F}^2 %
    + \norm{\mSigma^\ptop_2(\mV_2^\top\mOmega)(\mV_1^\top\mOmega)^\dagger}_{\rm F}^2 \tag{\cref{impthm:hmt-error-decomp}}\\
    &\leq \norm{\mSigma^\ptop_2}_{\rm F}^2 %
    + (\rC/\alpha) \cdot \norm{\mSigma^\ptop_2}_{\rm F}^2  \tag{\cref{lem:osi-preserve-orthogonality}}\\
    &= (1 + \rC/\alpha)\cdot %
    \norm{\mA - \lra{\mA}{r}}_{\rm F}^2.
\end{align*}
In the last line, we applied the Eckart--Young theorem to rewrite \(\norm{\mSigma_2}_{\rm F}^2 = \norm{\mA - \lra{\mA}{r}}_{\rm F}^2\).
Finally, note that $1 + \rC/\alpha \leq \rC'/\alpha$,
where $\rC'$ is a larger constant.
\end{proof}

\subsection{Nystr\"om approximation} \label{sec:nystrom}

\begin{algorithm}[t]
    \caption{Randomized Nystr\"om approximation with structured sketching}
    \label{alg:nystrom}
    \begin{algorithmic}[1]
        \Require Psd target matrix $\mA \in \F^{n\times n}$, approximation rank $k$
        \Ensure Orthonormal $\mU \in \F^{n \times k}$ and nonnegative diagonal $\mLambda \in \F^{k\times k}$ defining $\hat\mA = \mU \mLambda \mU^\top$ 
        \State Draw a \textit{structured} random test matrix $\mOmega \in \F^{n \times k}$ \Comment{ \textit{e.g., SparseStack, Khatri--Rao, etc.}}
        \State $\mY \leftarrow \mA \mOmega$
        \State $\nu \leftarrow \sqrt{n}\,\varepsilon_{\rm mach}\|\mY\|$ \Comment{Compute shift where $\eps_{\rm mach}$ is floating-point unit precision}
        \State $\mY_{\nu} \leftarrow \mY + \nu \mOmega$ \Comment{Sketch of shifted matrix}
        \State $\mC \leftarrow \texttt{chol}(\mOmega^\top \mY_{\nu})$
        \Comment{Or pivoted Cholesky, to be extra safe}
        \State $\mB \leftarrow \mY_{\nu} \mC^{-1}$ \Comment{Triangular solve}
        \State $[\mU, \mSigma, \sim] \leftarrow \texttt{svd\_econ}(\mB)$ \Comment{Dense SVD}
        \State $\mLambda \leftarrow \max\{0, \mSigma^{2} - \nu \mI\}$ \Comment{Remove shift}
        \State $\mU \leftarrow \mU(:,1:k)$ \Comment{Truncate rank to $k$}
        \State $\mLambda \leftarrow \mLambda(1:k,1:k)$
    \end{algorithmic}
\end{algorithm}

When the input matrix is psd, we can obtain a low-rank approximation in a single pass using 
the \emph{Nystr\"om approximation} \cite{tropp17b,gittens16,drineas05,halko11}.

Fix a psd input matrix $\mA \in \F^{n \times n}$ and an approximation rank $k \leq n$.  Draw a random test matrix $\mOmega \in \F^{n \times k}$, and form the sketch $\mY \coloneqq \mA \mOmega \in \F^{n \times k}$.  The Nystr{\"o}m approximation is the rank-$k$ psd matrix
\begin{equation} \label{eqn:nystrom-approx}
    \hat\mA = \mY (\mOmega^\top \mY)^\dagger \mY^\top \in \F^{n \times n}.
\end{equation}
The Nystr{\"o}m approximation algorithm manipulates the approximation~\eqref{eqn:nystrom-approx} to obtain its eigenvalue decomposition $\widehat{\mA} = \mU \mLambda \mU^\top$.  A numerically stable implementation
of this procedure demands care. %
\Cref{alg:nystrom} presents reliable pseudocode, adapted from \cite{li17,tropp17b}.  The algorithm performs a single pass over the input matrix $\mA$.  It requires one sketch of the input matrix, a sketch of intermediate data, and $\order(n k^2)$ additional arithmetic operations.
See \cref{table:dense-rates}.

To analyze the Nystr{\"o}m approximation, %
we exploit an algebraic relationship with the RSVD approximation~\cite{gittens11,gittens16}.  This equivalence, recently termed the \textit{Gram correspondence} \cite{Epperly24}, allows us to transfer \emph{Frobenius-norm} error bounds for the RSVD to \emph{nuclear-norm} guarantees for the Nystr\"om approximation.

\begin{importedtheorem}[Gram correspondence \protect{\cite[Lem.~1]{gittens16}}]
\label{thm:gram-corr}
    Let \(\mA\in\F^{n \times n}\) be a psd matrix.  Choose an arbitrary test matrix \(\mOmega\in\F^{n \times k}\), and form the orthonormal matrix $\mQ = \operatorname{orth}(\mA^{1/2} \mOmega)$.  Then the Nystr{\"o}m approximation $\widehat{\mA} \in \F^{n \times n}$, defined in~\eqref{eqn:nystrom-approx}, satisfies
    \[
    \norm{ \mA^\ptop - \smash{\widehat{\mA}} }_{*}
        = \bignorm{{\mA^{1/2} - \mQ\mQ^\top\mA^{1/2}}}_{\rm F}^2.
    \]
    \end{importedtheorem}

Put simply, the Nystr\"om approximation applied to $\mA$ is equivalent to the RSVD applied to $\mA^{1/2}$.  Combining the Gram correspondence (\cref{thm:gram-corr}) and the RSVD analysis (\cref{thm:rsvd-via-osi}), %
we immediately obtain an error bound for the single-pass Nystr\"om
approximation.
This argument is entirely standard~\cite{gittens11,gittens16,tropp17b,MT20,tropp23,Epperly24,persson25}.

\begin{corollary}[Randomized Nystr\"om approximation with an OSI]
    \label{cor:nystrom-via-osi}
    Fix a psd input matrix \(\mA\in\F^{n \times n}\) and a target rank $r \leq n$.  Draw a random $(r,\alpha)$-OSI test matrix \(\mOmega \in \F^{n \times k}\) where $k \geq r$.
    Construct the rank-$k$ psd Nystr{\"o}m approximation
    $\widehat{\mA} \in \F^{n \times n}$
    determined by~\eqref{eqn:nystrom-approx}.
    Then, with probability at least \(\frac9{10}\),
    \[
        \norm{\mA^\ptop - \smash{\widehat{\mA}}}_{*} \leq %
        (\rC / \alpha) \cdot \norm{\mA-\lra{\mA}{r}}_*.
    \]
\end{corollary}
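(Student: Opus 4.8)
The plan is to piggyback on the RSVD analysis we have already completed, using the algebraic bridge between the Nystr\"om approximation of a psd matrix and the RSVD approximation of its square root. Concretely, I would first apply the Gram correspondence (\cref{thm:gram-corr}) with the test matrix $\mOmega \in \F^{n \times k}$: setting $\mQ \coloneqq \operatorname{orth}(\mA^{1/2}\mOmega)$, it identifies the nuclear-norm error of the Nystr\"om approximation with the squared Frobenius error of the orthogonal projection of $\mA^{1/2}$ onto $\range(\mQ)$,
\[
    \norm{\mA - \hat\mA}_* = \bignorm{\mA^{1/2} - \mQ\mQ^\top\mA^{1/2}}_{\rm F}^2 .
\]
The key observation is that $\mQ\mQ^\top\mA^{1/2}$ is \emph{exactly} the rank-$k$ RSVD approximation of the input matrix $\mA^{1/2}$ built from the test matrix $\mOmega$, in the sense of~\cref{eqn:rdm-rangefinder,eqn:rsvd-approx}. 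Since $\mA^{1/2}$ has $n$ columns and $\mOmega$ is an $(r,\alpha)$-OSI, \cref{thm:rsvd-via-osi} applies directly with input matrix $\mA^{1/2}$ and target rank $r$, so that with probability at least $\nicefrac{9}{10}$,
\[
    \bignorm{\mA^{1/2} - \mQ\mQ^\top\mA^{1/2}}_{\rm F}^2 \leq (\rC/\alpha) \cdot \bignorm{\smash{\mA^{1/2} - \lra{\mA^{1/2}}{r}}}_{\rm F}^2 .
\]

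To finish, I would translate the right-hand side back into a statement about $\mA$. Writing $\lambda_1 \geq \lambda_2 \geq \cdots \geq 0$ for the eigenvalues of the psd matrix $\mA$, the matrix $\mA^{1/2}$ has singular values $\sqrt{\lambda_1} \geq \sqrt{\lambda_2} \geq \cdots$; hence the Eckart--Young theorem gives $\bignorm{\smash{\mA^{1/2} - \lra{\mA^{1/2}}{r}}}_{\rm F}^2 = \sum_{i>r}\lambda_i = \bignorm{\smash{\mA - \lra{\mA}{r}}}_*$. Chaining the three displays yields the claimed bound $\norm{\mA - \hat\mA}_* \leq (\rC/\alpha)\cdot\norm{\smash{\mA - \lra{\mA}{r}}}_*$ on the same high-probability event.

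This argument is almost entirely bookkeeping and contains no serious obstacle: the proof is, in effect, three one-line invocations stitched together. If anything deserves care, it is checking that the orthonormal matrix $\mQ = \operatorname{orth}(\mA^{1/2}\mOmega)$ appearing in the Gram correspondence is precisely the rangefinder that \cref{thm:rsvd-via-osi} analyzes for the input $\mA^{1/2}$, so that the two results compose cleanly---and noting that the OSI hypothesis concerns $r$-dimensional subspaces of the ambient space $\F^n$, which is exactly the setting in which \cref{thm:rsvd-via-osi} is invoked here, so the success probability transfers without change.
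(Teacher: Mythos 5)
Your proposal is correct and follows exactly the paper's argument: the paper also deduces the corollary by combining the Gram correspondence (\cref{thm:gram-corr}) with the RSVD bound of \cref{thm:rsvd-via-osi} applied to $\mA^{1/2}$, using the identity $\bignorm{\smash{\mA^{1/2} - \lra{\mA^{1/2}}{r}}}_{\rm F}^2 = \norm{\smash{\mA - \lra{\mA}{r}}}_*$ to translate the optimal error. Your write-up simply spells out the bookkeeping that the paper leaves implicit.
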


\subsection{Sketch-and-solve for least-squares regression}
\label{sec:lsq-via-osi}

\begin{algorithm}[t]
    \caption{Sketch-and-solve with structured sketching}
    \label{alg:sketch-and-solve}
    \begin{algorithmic}[1]
        \Require Design matrix $\mA \in \F^{n\times d}$, response matrix $\mB \in \F^{n\times m}$, sketching dimension $p \geq d$
        \Ensure Approximate solution $\tilde\mX \in \F^{d \times m}$ to the least-squares problem~\eqref{eq:least-squares}         \State Draw a \textit{structured} random test matrix $\mPsi \in \F^{n \times p}$ \Comment{ \textit{e.g., SparseStack, Khatri--Rao, etc.}}
        \State $\tilde\mA \leftarrow \mPsi^\top \mA$ and $\tilde\mB \leftarrow \mPsi^\top \mB$ \Comment{Sketch the problem data}
        \State $[\mU,\mSigma,\mV] \gets \texttt{svd\_econ}(\tilde\mA)$
        \State $r\gets |\{ i : \mSigma(i,i) > 5\varepsilon_{\rm mach}\mSigma(1,1) \}|$ \Comment{Numerical rank of \mA}
        \State $\tilde\mX \gets \mV(:,1:r)(\mSigma(1:r,1:r)^{-1} (\mU(:,1:r)^\top\tilde\mB))$ \Comment{Truncated pseudoinverse}
    \end{algorithmic}
\end{algorithm}

Sketch-and-solve~\cite{sarlos06} is a standard randomized algorithm
for least-squares problems.  If we use an OSI to
implement the sketch-and-solve method,
the residual error lies within a constant factor of optimal.

Suppose that we are given a (tall) design matrix \(\mA\in\F^{n \times d}\) with $n \geq d$ and a response matrix \(\mB\in\F^{n \times m}\).  We wish to solve %
an overdetermined least-squares problem:
\begin{equation} \label{eq:least-squares}
    \mX_{\star} \in \argmin\nolimits_{\mX}\ \norm{\mA\mX-\mB}_{\rm F}^2.
\end{equation}
To quickly obtain an approximate solution $\tilde\mX \in \F^{d \times m}$, we can employ the sketch-and-solve method.  Draw a random test matrix $\mPsi \in \F^{n \times p}$, and extract the two sketches \(\mPsi^\top\mA \in \F^{p \times d}\) and \(\mPsi^\top\mB \in \F^{p \times m}\).  We report the minimum-Frobenius-norm solution to the sketched least-squares problem:
\begin{equation} \label{eq:least-squares-ss}
    \tilde\mX \coloneqq (\mPsi^\top\mA)^\dagger (\mPsi^\top \mB) \in \argmin\nolimits_{\mX}\ \norm{\mPsi^\top (\mA \mX- \mB)}_{\rm F}^2.
\end{equation}
See \cref{alg:sketch-and-solve} for numerically stable pseudocode.
This algorithm requires sketching both \mA and \mB,
plus $\mathcal{O}(d^2 (p+m) )$ arithmetic operations.
See \cref{table:dense-rates}.

To analyze the sketch-and-solve procedure, we make a change of variables.
Construct a reduced SVD $\mA = \mQ \mSigma \mV^\top$, %
where %
the square diagonal matrix $\mSigma$ is invertible.
After the transformation $\mSigma \mV^{\top} \mX \mapsto \mZ$,
the original least-squares problem \eqref{eq:least-squares}
and the sketched-least-squares problem~\eqref{eq:least-squares-ss} become %
\begin{align}
    \mZ_\star \coloneqq \mQ^\top \mB \in &\argmin\nolimits_{\mZ}\ \norm{\mQ\mZ - \mB}_{\rm F}^2;  \label{eq:true-ls} \\
    \tilde\mZ \coloneqq (\mPsi^\top\mQ)^\dagger (\mPsi^\top\mB) \in &\argmin\nolimits_{\mZ}\ \norm{ \mPsi^\top (\mQ \mZ - \mB)}_{\rm F}^2. \label{eq:true-ls-ss}    
\end{align}
The (unique) solution $\tilde{\mZ}$ to~\eqref{eq:true-ls-ss} induces
the minimum-norm solution $\tilde\mX = \mV \mSigma^{-1} \tilde\mZ$
to the sketched least-squares problem~\eqref{eq:least-squares-ss}.
The next lemma provides an expression for the difference between
the solutions of~\eqref{eq:true-ls} and~\eqref{eq:true-ls-ss}.
The argument is adapted from \cite[Lemma A.4]{tropp2017a},
where it appears in the analysis of the generalized Nystr{\"o}m
approximation.

\begin{lemma}[Sketch-and-solve error formula] \label{lem:sketch-and-solve-error-formula}
                    
    Instate the prevailing notation.  Let $\mQ_{\perp}$ be an orthonormal
    matrix that satisfies $\mQ^{\vphantom{\top}}_{\perp} \mQ_{\perp}^\top = \mI - \mQ \mQ^\top$.
    Define the matrices \(\mPsi_1 \defeq \mPsi^\top \mQ \)
    and \(\mPsi_2 \defeq \mPsi^\top\mQ_\perp \).         Provided that \(\mPsi_1\) has full column rank,     \[
        \tilde\mZ - \mZ_\star = \mPsi_1^\dagger \mPsi^{\vphantom{\dagger}}_2\mQ_\perp^\top\mB.
    \]

\end{lemma}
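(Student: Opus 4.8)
The plan is to reduce the claim to a short calculation built on the orthogonal decomposition $\mI = \mQ\mQ^\top + \mQ^{\vphantom{\top}}_\perp\mQ_\perp^\top$, which is valid because $[\mQ\ \ \mQ_\perp]$ is a full orthonormal basis of $\F^n$. The starting point is that, since $\mPsi_1 = \mPsi^\top\mQ$ has full column rank by hypothesis, the pseudoinverse satisfies $\mPsi_1^\dagger\mPsi^{\vphantom{\dagger}}_1 = \mI$, and the minimizer $\tilde\mZ$ in~\eqref{eq:true-ls-ss} is unique and given exactly by $\tilde\mZ = \mPsi_1^\dagger\mPsi^\top\mB$. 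It is also worth recording in one line that $\mZ_\star = \mQ^\top\mB$ is the (unique) minimizer in~\eqref{eq:true-ls}, which is immediate because $\mQ$ has orthonormal columns, so $\mQ^\dagger = \mQ^\top$.

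Next I would split the response matrix along $\range(\mQ)$ and its orthogonal complement:
\[
\mB = \mQ\mQ^\top\mB + \mQ^{\vphantom{\top}}_\perp\mQ_\perp^\top\mB = \mQ\mZ_\star + \mQ^{\vphantom{\top}}_\perp\mQ_\perp^\top\mB .
\]
Left-multiplying by $\mPsi^\top$ and using the definitions of $\mPsi_1,\mPsi_2$ gives $\mPsi^\top\mB = \mPsi^{\vphantom{\dagger}}_1\mZ_\star + \mPsi^{\vphantom{\dagger}}_2\mQ_\perp^\top\mB$. Applying $\mPsi_1^\dagger$ and invoking $\mPsi_1^\dagger\mPsi^{\vphantom{\dagger}}_1 = \mI$ then yields
\[
\tilde\mZ = \mPsi_1^\dagger\mPsi^\top\mB = \mPsi_1^\dagger\mPsi^{\vphantom{\dagger}}_1\mZ_\star + \mPsi_1^\dagger\mPsi^{\vphantom{\dagger}}_2\mQ_\perp^\top\mB = \mZ_\star + \mPsi_1^\dagger\mPsi^{\vphantom{\dagger}}_2\mQ_\perp^\top\mB,
\]
and subtracting $\mZ_\star$ produces the stated formula.

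There is no real obstacle here: the argument is pure bookkeeping once the orthogonal split of $\mB$ is written down, and the sole nontrivial ingredient is the identity $\mPsi_1^\dagger\mPsi_1 = \mI$, which holds precisely because of the full-column-rank hypothesis on $\mPsi_1$. (This mirrors the structure of the generalized Nystr\"om error analysis in~\cite{tropp2017a}, from which the lemma is adapted.) The one place to be slightly careful is to state explicitly which objects are uniquely determined — namely $\mZ_\star$ and $\tilde\mZ$ — so that the displayed closed forms $\mQ^\top\mB$ and $\mPsi_1^\dagger\mPsi^\top\mB$ are legitimate, rather than merely one choice among a solution set.
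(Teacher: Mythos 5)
Your proof is correct and follows essentially the same route as the paper: both insert the resolution of the identity $\mI = \mQ\mQ^\top + \mQ^{\vphantom{\top}}_\perp\mQ_\perp^\top$ into $\tilde\mZ = \mPsi_1^\dagger\mPsi^\top\mB$ and then use $\mPsi_1^\dagger\mPsi^{\vphantom{\dagger}}_1 = \mI$, which is exactly where the full-column-rank hypothesis enters. Your added remark about the uniqueness of $\mZ_\star$ and $\tilde\mZ$ is a harmless clarification, not a departure.
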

\begin{proof}
Starting from the definition~\eqref{eq:true-ls-ss} of $\tilde\mZ$, we calculate that %
\begin{align*}
    \tilde\mZ
    &= (\mPsi^\top\mQ)^\dagger\mPsi^\top\mB \\
    &= \mPsi_1^\dagger\mPsi^\top(\mQ\mQ^\top+\mQ^{\vphantom{\dagger}}_\perp\mQ_\perp^\top)\mB \tag*{since \(\mI = \mQ\mQ^\top + \mQ^{\vphantom{\dagger}}_\perp\mQ_\perp^\top\)} \\
    &= \mPsi_1^\dagger\mPsi^{\vphantom{\dagger}}_1 \mQ^\top\mB+ \mPsi_1^\dagger\mPsi^{\vphantom{\dagger}}_2\mQ_\perp^\top\mB \\
    &= \mQ^\top\mB + \mPsi_1^\dagger\mPsi^{\vphantom{\dagger}}_2\mQ_\perp^\top\mB \tag*{since \(\mPsi_1^\dagger\mPsi_1=\mI\)}
    = \mZ^{\vphantom{\dagger}}_\star + \mPsi_1^\dagger\mPsi^{\vphantom{\dagger}}_2\mQ_\perp^\top\mB.
\end{align*}
This is the advertised statement.
         \end{proof}

With this lemma in place, we can now prove the central result on sketch-and-solve.
Several prior works~\cite{woodruff2014sketching,drineas11,parulekar21,Clarkson13} have obtained similar guarantees under alternative hypotheses on $\mPsi$.
The standard analysis assumes that $\mPsi$
is an OSE for the range of the concatenated matrix $[\mA ~ \mB]$.
Parulekar et al.~\cite{parulekar21} assume that
$\mPsi$ is an isotropic matrix that
is also an OSE for the range of $\mA$.
Drineas et al.~\cite{drineas11} %
assume that $\mPsi$ is a subspace injection for the range of $\mA$ and that $\mPsi$ has an approximate matrix multiplication property.
We are not aware of any sources which recognize that the OSI property is enough.

\begin{theorem}[Sketch-and-solve with an OSI]
    \label{thm:lsq-via-osi}
    Fix matrices \(\mA\in\F^{n \times d}\) and \(\mB\in\F^{n \times m}\).
    Draw a random $(d, \alpha)$-OSI test matrix \(\mPsi\in\F^{n \times p}\) where $p \geq d$, and let \(\tilde\mX\in\F^{d \times m}\) be the output~\eqref{eq:least-squares-ss} of the sketch-and-solve procedure.                     Then, with probability at least $\tfrac{9}{10}$,
    \[
          \norm{\mA\tilde\mX - \mB}_{\rm F}^2 \leq %
          (\rC / \alpha) \cdot \min\nolimits_{\mX}\ \norm{\mA\mX-\mB}_{\rm F}^2. %
    \]
\end{theorem}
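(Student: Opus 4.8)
The plan is to reuse the change of variables set up just before the theorem, and then to reduce everything to the structural fact that OSIs respect orthogonality (\cref{lem:osi-preserve-orthogonality}). First I would pass to the transformed coordinates: writing a reduced SVD $\mA = \mQ\mSigma\mV^\top$ with $\mSigma$ square and invertible, the sketch-and-solve output obeys $\mA\tilde\mX = \mQ\tilde\mZ$ with $\tilde\mZ = (\mPsi^\top\mQ)^\dagger(\mPsi^\top\mB)$, while the optimal residual of the original problem \eqref{eq:least-squares} equals $\norm{\mB - \mQ\mQ^\top\mB}_{\rm F}^2 = \norm{\mQ_\perp^\top\mB}_{\rm F}^2$, where $\mQ_\perp$ is any orthonormal matrix with $\mQ^{\vphantom{\top}}_\perp\mQ_\perp^\top = \mI - \mQ\mQ^\top$.

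Next I would split the residual of the approximate solution. Since $\tilde\mZ = \mZ_\star + (\tilde\mZ - \mZ_\star)$ with $\mZ_\star = \mQ^\top\mB$, one has $\mB - \mQ\tilde\mZ = \mQ^{\vphantom{\top}}_\perp\mQ_\perp^\top\mB - \mQ(\tilde\mZ - \mZ_\star)$, and the two summands lie in the mutually orthogonal ranges of $\mQ_\perp$ and $\mQ$. The Pythagorean theorem then gives
\[
\norm{\mA\tilde\mX - \mB}_{\rm F}^2 = \norm{\mQ_\perp^\top\mB}_{\rm F}^2 + \norm{\tilde\mZ - \mZ_\star}_{\rm F}^2 .
\]
It remains to bound the excess term $\norm{\tilde\mZ - \mZ_\star}_{\rm F}^2$. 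Here I invoke \cref{lem:sketch-and-solve-error-formula}, which — once $\mPsi_1 \coloneqq \mPsi^\top\mQ$ has full column rank — identifies $\tilde\mZ - \mZ_\star = \mPsi_1^\dagger\mPsi^{\vphantom{\dagger}}_2\mQ_\perp^\top\mB$ with $\mPsi_2 \coloneqq \mPsi^\top\mQ_\perp$. Taking a conjugate transpose (which preserves the Frobenius norm), $\tilde\mZ - \mZ_\star$ has the same norm as $(\mQ_\perp^\top\mB)^\top(\mQ_\perp^\top\mPsi)(\mQ^\top\mPsi)^\dagger$, which is precisely the quantity controlled by \cref{lem:osi-preserve-orthogonality} when that lemma is applied with test matrix $\mPsi$ and with the role of its ``$\mB$'' played by $(\mQ_\perp^\top\mB)^\top$. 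That lemma then simultaneously certifies, on a single event of probability at least $\tfrac{9}{10}$, that $\mQ^\top\mPsi$ has full row rank (hence $\mPsi_1$ has full column rank, validating the use of \cref{lem:sketch-and-solve-error-formula}) and that $\norm{\tilde\mZ - \mZ_\star}_{\rm F}^2 \le (\rC/\alpha)\,\norm{\mQ_\perp^\top\mB}_{\rm F}^2$. Substituting into the Pythagorean identity yields $\norm{\mA\tilde\mX - \mB}_{\rm F}^2 \le (1 + \rC/\alpha)\,\norm{\mQ_\perp^\top\mB}_{\rm F}^2 = (1 + \rC/\alpha)\min_{\mX}\norm{\mA\mX - \mB}_{\rm F}^2$, and absorbing the $1$ into the constant finishes the proof.

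The step that needs the most care is bookkeeping rather than conceptual: aligning the error formula of \cref{lem:sketch-and-solve-error-formula} with the exact shape of \cref{lem:osi-preserve-orthogonality} through the conjugate transpose above, and handling the case $\rho \coloneqq \rank(\mA) < d$, where $\mQ$ has only $\rho$ columns. For the latter I would observe that a $(d,\alpha)$-OSI is automatically a $(\rho,\alpha)$-OSI for every $\rho \le d$: isotropy~\eqref{eqn:osi-isotropy} does not involve the subspace, and injectivity on any $\rho$-dimensional $\cV$ follows from injectivity on any $d$-dimensional subspace that contains $\cV$. With that remark, \cref{lem:osi-preserve-orthogonality} applies verbatim with $r = \rho$, so no full-rank assumption on $\mA$ is needed and the probability matches the claimed $\tfrac{9}{10}$.
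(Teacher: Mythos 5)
Your proposal is correct and takes essentially the same route as the paper's proof: the same change of variables, the Pythagorean split of the residual, the error formula of \cref{lem:sketch-and-solve-error-formula}, and a final application of \cref{lem:osi-preserve-orthogonality}. Your explicit conjugate-transpose bookkeeping and your remark that a $(d,\alpha)$-OSI is automatically a $(\rho,\alpha)$-OSI when $\rank(\mA)=\rho<d$ simply make precise details that the paper's proof leaves implicit.
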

\begin{proof}
    We adopt the notation and the reparameterization outlined above.
    The goal is to show that
    \[
        \norm{\mQ\tilde\mZ - \mB}_{\rm F}^2
        \leq %
        (\rC / \alpha) \cdot \norm{\mQ\mZ_\star - \mB}_{\rm F}^2.
    \]
    The optimal least-squares residual \(\mQ\mZ_\star-\mB\) is orthogonal to the range of \mQ, so %
    \begin{align*}
        \norm{\mQ\tilde\mZ-\mB}_{\rm F}^2
        &= \norm{\mQ\mZ_\star-\mB}_{\rm F}^2 + \norm{\mQ\tilde\mZ-\mQ\mZ_\star}_{\rm F}^2 \\
        &= \norm{\mQ\mZ_\star-\mB}_{\rm F}^2 + \norm{\tilde\mZ-\mZ_\star}_{\rm F}^2
        = \norm{\mQ\mZ_\star-\mB}_{\rm F}^2 + \norm{\mPsi_1^\dagger\mPsi_2^\ptop\mQ_\perp^\top\mB}_{\rm F}^2.
    \end{align*}
    The last identity follows from \cref{lem:sketch-and-solve-error-formula}.
    Finally, apply \cref{lem:osi-preserve-orthogonality}
    to deduce that
    \[
        \norm{\mPsi_1^\dagger\mPsi_2^\ptop\mQ_\perp^\top\mB}_{\rm F}^2
        \leq %
        (\rC / \alpha) \cdot \norm{\mQ_\perp^\top\mB}_{\rm F}^2
        = (\rC / \alpha) \cdot \norm{\mQ_\perp^\ptop\mQ_\perp^\top\mB}_{\rm F}^2
        = (\rC / \alpha) \cdot \norm{\mQ\mZ_\star-\mB}_{\rm F}^2,
    \]
    where the bound holds with probability at least \(\frac{9}{10}\).
\end{proof}

\subsection{Generalized Nystr\"om approximation}
\label{sec:gen-nystrom-via-osi}

\begin{algorithm}[t]
    \caption{Generalized Nyström with structured sketching: Output in outer-product form}
    \label{alg:gen_nystrom_outer}
    \begin{algorithmic}[1]
        \Require Matrix $\mA \in \F^{n \times d}$, approximation rank $k$
        \Ensure Matrices $\mF \in \F^{n \times k}$ and $\mG \in \F^{d \times k}$
        determining $\hat\mA=\mF\mG^\top$
        \State Draw \emph{structured} random test matrices $\mOmega \in \F^{d \times k}$ and $\mPsi \in \F^{n \times p}$ \Comment{Recommendation: $p = \lceil 1.5 k\rceil$}
        \State $\mY \leftarrow \mA\mOmega$ and $\mX \leftarrow \mA^\top\mPsi$ \Comment{Compute sketches}
        \State $[\mU,\mSigma,\mV] \gets \texttt{svd\_econ}(\mX^\top \mOmega)$
        \State $r\gets |\{ i : \mSigma(i,i) > 5\varepsilon_{\rm mach}\mSigma(1,1) \}|$ \Comment{Numerical rank of $\mPsi^\top \mY$}
        \State $\mF \gets \mY\mV(:,1:r)\mSigma(1:r,1:r)^{-1}$ and $\mG \gets \mX\mU(:,1:r)$ 
    \end{algorithmic}
\end{algorithm}

\begin{algorithm}[t]
    \caption{Generalized Nyström with structured sketching: Output in SVD form}
    \label{alg:gen_nystrom_svd}
    \begin{algorithmic}[1]
        \Require Matrix $\mA \in \F^{n \times d}$, approximation rank $k$
        \Ensure Orthonormal $\mU \in \F^{n \times k}$, $\mV \in \F^{d \times k}$ and nonnegative diagonal $\mSigma \in \F^{k \times k}$
                determining $\hat\mA= \mU\mSigma\mV^\top$
        \State Draw \emph{structured} random test matrices $\mOmega \in \F^{d \times k}$ and $\mPsi \in \F^{n \times p}$ \Comment{Recommendation: $p = \lceil 1.5 k\rceil$}
        \State $\mY \leftarrow \mA\mOmega$ and $\mX \leftarrow \mA^\top\mPsi$ \Comment{Compute sketches}
        \State $[\mQ,\sim] \leftarrow \texttt{qr\_econ}(\mY)$ and $[\mP,\mT] \leftarrow \texttt{qr\_econ}(\mX)$ \Comment{Orthogonalize}
        \State $[\mU_1,\mSigma_1,\mV_1] \gets \texttt{svd\_econ}(\mPsi^\top\mQ)$
        \State $r\gets |\{ i : \mSigma_1(i,i) > 5\varepsilon_{\rm mach}\mSigma_1(1,1) \}|$ \Comment{Numerical rank of $\mPsi^\top\mQ$}
        \State $\mC \leftarrow \mV_1(:,1:r)(\mSigma_1(1:r,1:r)^{-1}(\mU_1(:,1:r)^\top\mT^\top))$ \Comment{$\mC = (\mPsi^\top\mQ)^\dagger \mT^\top $}
        \State $[\widehat{\mU},\mSigma,\widehat{\mV}] \leftarrow \texttt{svd\_econ}(\mC)$
        \State $\mU \leftarrow \mQ\widehat{\mU}\vphantom{\widehat{\widehat{\mU}}}$ and $\mV \leftarrow \mP\widehat{\mV}$
    \end{algorithmic}
\end{algorithm}

Last, we treat low-rank approximation of a matrix via the generalized Nystr{\"o}m method~\cite{woolfe08,Clarkson09,Naka2020,tropp2017a}.

Fix an (arbitrary) input matrix $\mA \in \F^{n \times d}$
and an approximation rank parameter $k \leq \min \{n, d\}$.
We draw two random test matrices $\mOmega \in \F^{d \times k}$
and $\mPsi \in \F^{n \times p}$ where the embedding dimensions $k \leq p \leq \min\{n, d\}$.
Collect two sketches of the input matrix: $\mY \coloneqq \mA \mOmega \in \F^{n \times k}$ and $\mX \coloneqq \mPsi^\top \mA \in \F^{p \times d}$.  The generalized Nystr{\"o}m approximation takes the form
\begin{equation} \label{eq:generalized-nystrom}
	\hat\mA \coloneqq \mY (\mPsi^\top \mY)^\dagger \mX \in \F^{n \times d}.
\end{equation}

We present two numerically stable implementations of~\eqref{eq:generalized-nystrom} that prepare the approximation
in outer-product form (\cref{alg:gen_nystrom_outer})
or in compact SVD form (\cref{alg:gen_nystrom_svd}).
Each algorithm employs two sketches of the input matrix,
as well as a sketch of intermediate data.  The outer-product algorithm
requires $\cO(n k^2 + dpk)$ extra arithmetic,
while the SVD algorithm requires $\cO(nk^2 + dp^2)$ extra arithmetic.
Since $p \geq k$, both algorithms are more efficient
when $\mA$ is wide ($n \geq d$);
otherwise, it is faster to apply them to the adjoint $\mA^\top$.
See \cref{table:dense-rates} for a summary of arithmetic costs.

To motivate the construction of the generalized Nystr\"om approximation \cref{eq:generalized-nystrom} and our analysis,
we adopt the perspective from \cite{woodruff2014sketching,tropp2017a,Naka2020}
that the RSVD approximation~\eqref{eqn:rsvd-approx} can be written as
\begin{equation} \label{eqn:rsvd-approx-gn}
    \hat\mA_{\rm SVD} = (\mA\mOmega)(\mA\mOmega)^\dagger \mA.
\end{equation}
We recognize the matrix $(\mA\mOmega)^\dagger \mA$ as a solution to the least-squares problem
\begin{equation*}
    (\mA\mOmega)^\dagger \mA \in \argmin\nolimits_{\mZ \in \F^{k\times d}}\ \norm{(\mA\mOmega)\mZ - \mA}_{\rm F}.
\end{equation*}
The generalized Nystr\"om method can be derived by approximating the solution of this least-squares problem with sketch-and-solve, resulting in a further approximation
\begin{equation} \label{eq:sketched-ls-gen-nys}
    (\mA\mOmega)^\dagger \mA \approx
    (\mPsi^\top \mA\mOmega)^\dagger (\mPsi^\top \mA)
    \in \argmin\nolimits_{\mZ \in \F^{k\times d}} \ \norm{(\mPsi^\top \mA\mOmega)\mZ - \mPsi^\top \mA}_{\rm F}.
\end{equation}
Hence, the generalized Nystr{\"o}m approximation~\eqref{eq:generalized-nystrom} is a proxy for the RSVD approximation~\eqref{eqn:rsvd-approx-gn}.

With this observation, the error bound for the generalized Nystr\"om approximation is an immediate corollary of our analyses of RSVD (\cref{thm:rsvd-via-osi}) and sketch-and-solve (\cref{thm:lsq-via-osi}).
Our argument parallels~\cite[Thm.~4.3]{tropp2017a},
although the cited work only treats Gaussian test matrices.

\begin{corollary}[Generalized Nystr\"om with OSIs]
    \label{thm:gen-nystrom-via-osi}
    Fix an input matrix $\mA \in \R^{n \times d}$ and a target rank $r \leq \min \{n, d\}$.  Draw a random $(r,\alpha)$-OSI test matrix \(\mOmega \in \bbR^{d \times k}\) and a random $(k,\alpha)$-OSI test matrix \(\mPsi \in \bbR^{n \times p}\)
    where $r \leq k \leq p$.
    Construct the generalized Nystr{\"o}m approximation $\widehat{\mA} \in \F^{n \times d}$ determined by~\eqref{eq:generalized-nystrom}.
    Then, with probability at least \(\frac45\),
    \[
        \norm{\mA-\hat\mA_\ptop}_{\rm F}^2 \leq %
        (\rC / \alpha)^2 \cdot \norm{\mA-\lra{\mA}{r}}_{\rm F}^2.
    \]
\end{corollary}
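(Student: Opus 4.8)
The plan is to realize the generalized Nystr\"om approximation as a sketch-and-solve estimate built on top of the RSVD approximation of $\mA$ (using the test matrix $\mOmega$), and then to chain \cref{thm:rsvd-via-osi,thm:lsq-via-osi}. This mirrors the Gaussian argument of~\cite[Thm.~4.3]{tropp2017a}, made precise through the reformulation~\eqref{eqn:rsvd-approx-gn} and the subsequent sketch-and-solve step~\eqref{eq:sketched-ls-gen-nys}.

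First I would establish the underlying algebraic identity. Set $\mQ \coloneqq \operatorname{orth}(\mA\mOmega) \in \F^{n \times k'}$ with $k' \coloneqq \rank(\mA\mOmega) \leq k$, and factor $\mA\mOmega = \mQ\mR$ with $\mR \in \F^{k' \times k}$ of full row rank. Note that $\mPsi$, being a $(k,\alpha)$-OSI, is automatically a $(k',\alpha)$-OSI (extend a $k'$-dimensional subspace to a $k$-dimensional one and restrict), so on an event of probability at least $\tfrac{19}{20}$ the matrix $\mPsi^\top\mQ$ has full column rank by \cref{prop:osi-spec}. On this event the product rule for pseudoinverses gives $(\mPsi^\top\mA\mOmega)^\dagger = (\mPsi^\top\mQ\mR)^\dagger = \mR^\dagger(\mPsi^\top\mQ)^\dagger$, and since $\mR\mR^\dagger = \mI_{k'}$, the definition~\eqref{eq:generalized-nystrom} collapses to
\[
    \hat\mA = (\mA\mOmega)(\mPsi^\top\mA\mOmega)^\dagger(\mPsi^\top\mA) = \mQ(\mPsi^\top\mQ)^\dagger(\mPsi^\top\mA) = \mQ\tilde\mW, \qquad \tilde\mW \coloneqq (\mPsi^\top\mQ)^\dagger(\mPsi^\top\mA).
\]
We recognize $\tilde\mW$ as the sketch-and-solve solution~\eqref{eq:least-squares-ss} to the least-squares problem $\min_{\mW}\norm{\mQ\mW - \mA}_{\rm F}^2$ with design matrix $\mQ$, response $\mA$, and sketching matrix $\mPsi$; its optimal residual is $\norm{\mA - \mQ\mQ^\top\mA}_{\rm F}^2$, which is exactly the squared error of the RSVD approximation $\hat\mA_{\rm SVD} \coloneqq \mQ\mQ^\top\mA$ from~\eqref{eqn:rsvd-approx} formed with $\mOmega$.

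Next I would invoke the two theorems, tracking the probabilistic coupling. Conditional on $\mOmega$, the matrix $\mQ$ is fixed with $k' \leq k$ columns, while $\mPsi$ (independent of $\mOmega$) is still a $(k',\alpha)$-OSI with embedding dimension $p \geq k \geq k'$; hence \cref{thm:lsq-via-osi} applies to the problem above and yields, conditionally with probability at least $\tfrac{9}{10}$,
\[
    \norm{\mA - \hat\mA}_{\rm F}^2 = \norm{\mQ\tilde\mW - \mA}_{\rm F}^2 \leq (\rC/\alpha)\cdot\norm{\mA - \mQ\mQ^\top\mA}_{\rm F}^2 = (\rC/\alpha)\cdot\norm{\mA - \hat\mA_{\rm SVD}}_{\rm F}^2.
\]
Averaging over $\mOmega$ shows this bound holds unconditionally with probability at least $\tfrac{9}{10}$. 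Separately, \cref{thm:rsvd-via-osi} applied to the $(r,\alpha)$-OSI $\mOmega$ gives $\norm{\mA - \hat\mA_{\rm SVD}}_{\rm F}^2 \leq (\rC/\alpha)\cdot\norm{\smash{\mA - \lra{\mA}{r}}}_{\rm F}^2$ with probability at least $\tfrac{9}{10}$. A union bound places us on both events with probability at least $1 - \tfrac1{10} - \tfrac1{10} = \tfrac45$, and there we chain the two inequalities and absorb the (a priori distinct) universal constants into their maximum to obtain $\norm{\mA - \hat\mA}_{\rm F}^2 \leq (\rC/\alpha)^2 \cdot \norm{\smash{\mA - \lra{\mA}{r}}}_{\rm F}^2$.

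The routine part is the pseudoinverse bookkeeping in the first step (including the rank-deficient case for $\mA\mOmega$). The one genuinely delicate point is the coupling in the last step: \cref{thm:lsq-via-osi,thm:rsvd-via-osi} are both stated for a \emph{fixed} input matrix, whereas here the inner least-squares problem has a \emph{random} design matrix $\mQ$ that depends on $\mOmega$. Independence of $\mPsi$ from $\mOmega$ is exactly what rescues the argument---conditioning on $\mOmega$ freezes $\mQ$ without perturbing the law of $\mPsi$---so the conditional form of \cref{thm:lsq-via-osi} integrates cleanly against $\mOmega$, and the two error events then combine additively under the union bound.
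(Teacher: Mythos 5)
Your proposal is correct and follows essentially the same route as the paper's proof: interpret the generalized Nystr\"om approximation as the sketch-and-solve estimate for the least-squares problem whose exact solution yields the RSVD approximation, chain \cref{thm:lsq-via-osi} and \cref{thm:rsvd-via-osi}, and combine the two $\nicefrac{9}{10}$-probability events by a union bound. The only difference is that you make explicit the pseudoinverse bookkeeping through $\mQ = \operatorname{orth}(\mA\mOmega)$ and the conditioning on $\mOmega$ (using independence of $\mPsi$), which the paper leaves implicit.
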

\begin{proof}
    Using the interpretation \cref{eq:sketched-ls-gen-nys} of generalized Nystr\"om as the solution to a sketched least-squares problem,
    the bound follows from our results for sketch-and-solve and RSVD with OSIs:
    \begin{align*}
        \norm{\hat\mA - \mA}_{\rm F}^2 &= \norm{(\mA\mOmega)(\mPsi^\top \mA\mOmega)^\dagger (\mPsi^\top \mA) - \mA}_{\rm F}^2 \\
        &\le (\rC / \alpha) \cdot \norm{(\mA\mOmega)(\mA\mOmega)^\dagger \mA - \mA}_{\rm F}^2 \tag{Sketch-and-solve, \cref{thm:lsq-via-osi}}\\
        &\le (\rC / \alpha)^2 \cdot \norm{\mA - \llbracket\mA\rrbracket_r}_{\rm F}^2. \tag{RSVD, \cref{thm:rsvd-via-osi}}
    \end{align*}
    The sketch-and-solve bound and the RSVD bound each holds with probability $\tfrac{9}{10}$, so they hold simultaneously with probability $\tfrac{4}{5}$.                                                                 \end{proof}

\begin{remark}[Truncated generalized Nystr{\"o}m]
For a class of $(r, \alpha)$-OSI matrices,
the ratio $\gamma \coloneqq k/r$ of the
embedding dimension to the subspace dimension 
is called the \emph{oversampling factor}.
For the generalized Nystr\"om algorithm,
\Cref{thm:gen-nystrom-via-osi} requires
the test matrices $\mOmega$ and $\mPsi$
to have embedding dimensions
\(k = \gamma r\) and \(p = \gamma k = \gamma^2 r\), respectively.
In case %
the oversampling parameter \(\gamma = \cO(1)\), as with the SparseStack (\cref{def:sparse-stack}) and SparseRTT (\cref{def:intro-sparsertt}) test matrices, 
the larger embedding dimension still satisfies \(p = \cO(r)\).
However, for test matrices with oversampling \(\gamma \gg 1\),
such as an SRTT \cite{tropp11SRHT} or a Khatri--Rao construction 
(\cref{def:kr-intro}), %
the larger embedding dimension \(p = \gamma^2 r \) may be prohibitively expensive.
In theory, we can avoid this cost by using
the \emph{truncated} generalized
Nystr{\"o}m approximation~\cite[Thm.~4.3]{woodruff2014sketching},
which we can analyze by adapting the proof of~\cref{thm:gen-nystrom-via-osi}.  This algorithm allows the test matrix $\mPsi$ to have embedding dimension $p = \cO(\gamma r)$.
In practice, we cannot endorse this algorithm
because the truncation step severely impairs the quality of the resulting low-rank approximation.
Designing an algorithm with the theoretical guarantees of truncated generalized Nystr\"om estimator and the accuracy of the standard generalized Nystr\"om estimator is an open problem.
\end{remark}

\section{Sparse test matrices}
\label{sec:sparse-sketching}

\Cref{sec:randnla-via-osi} shows that we can implement
several sketching algorithms with OSIs.
As a first example, we study the implications
for randomized linear algebra with sparse test matrices.
Sparse test matrices are fundamental because they
allow us to design sketching methods whose runtime
is (almost) linear in the sparsity of the input
matrix~\cite{Clarkson13,nelson13,cohen16}.
In practice, sparse embeddings exhibit superb
empirical performance for both sparse and dense linear algebra~\cite{epperly23,epperly24a,dong23,chen25,melnichenko23,tropp19}.
See \cref{fig:sparseStack-speed}.

This section supplements the discussion of sparse
test matrices from~\cref{sec:sparse-intro}
with additional information about theoretical guarantees,
implementation, and empirical performance.
\Cref{sec:rnla-with-sparse} reviews the SparseStack
construction; \cref{sec:sparse-alternatives}
presents alternative constructions;
and \cref{sec:sparse-experiments} provides
an empirical study.

\subsection{Randomized linear algebra with SparseStacks} \label{sec:rnla-with-sparse}

We highly recommend the SparseStack test matrix,
introduced in \cref{def:sparse-stack},
for %
randomized linear algebra.
This section summarizes the construction,
the content of our analysis,
and the overall benefits.

The parameters of a SparseStack test matrix
are the ambient dimension $d$,
the row sparsity $\zeta$,
and the block size $b$;
the embedding dimension $k \coloneqq b\zeta$.
The matrix takes the form
    \begin{equation*}
        \mOmega \coloneqq \frac{1}{\sqrt{\zeta}}\begin{bmatrix}
            \rademacher_{11} \ve_{s_{11}}^\top & \cdots & \rademacher_{1\zeta} \ve_{s_{1\zeta}}^\top \\
            \rademacher_{21} \ve_{s_{21}}^\top & \cdots & \rademacher_{2\zeta} \ve_{s_{2\zeta}}^\top \\ 
            \vdots & \ddots & \vdots \\
            \rademacher_{d1} \ve_{s_{d1}}^\top & \cdots & \rademacher_{d\zeta} \ve_{s_{d\zeta}}^\top
        \end{bmatrix} \in \F^{d \times k} \quad \text{where}\quad
        \begin{aligned}
        &\text{$\rademacher_{ij} \sim \textsc{rademacher}$ iid;} \\
        &\text{$s_{ij} \sim \textsc{uniform}\{1,\ldots,b\}$ iid.}
        \end{aligned}
    \end{equation*}
In this formula, $\mathbf{e}_{i} \in \F^{b}$ refers to the $i$th standard basis vector. %
For an input matrix $\mA \in \F^{n \times d}$,
the sketch $\mA \mOmega$ requires
$\cO(\zeta \cdot \nnz(\mA))$ operations
because
$\mOmega$ has $\zeta$ nonzero entries per row.

In the complex field ($\F = \C$), we recommend switching to
the distributions $\rademacher_{ij} \sim \textsc{steinhaus}$ or \(\rademacher_{ij}\sim\textsc{rademacher}_\bbC\)
for greater reliability.
In all these cases, our theoretical guarantees remain valid.

\Cref{thm:sparse-stack-osi}, adapted from~\cite[Rem.~6.5]{tropp25},
states that the SparseStack serves as an $(r, \nicefrac{1}{2})$-OSI
for some embedding dimension $k = \order(r)$ and row sparsity $\zeta = \order(\log r)$.
When we implement \cref{alg:rsvd,alg:nystrom,alg:sketch-and-solve,alg:gen_nystrom_outer,alg:gen_nystrom_svd}
using SparseStacks with these parameter choices,
we can certify that the algorithms run faster than
previously known.  
\Cref{table:sparse-runtimes-intro,table:sparse-runtimes}
provide a comparison with existing work.

In addition to the OSI guarantee and the superior runtime bounds,
the SparseStack test matrix has several other merits
(and one demerit):

\begin{description} \setlength{\itemsep}{0pt}
\item[\textbf{($+$) Construction.}]  We form the SparseStack by drawing and storing $2d\zeta$ simple random variables.  The storage cost is fixed in advance by the parameters, which aids with memory management.  In addition, the SparseStack is \emph{extensible},
which means that we can increase the ambient dimension $d$
on the fly,
just by tacking on additional independent rows of the same form.

\item[\textbf{($+$) Reliability.}]  The SparseStack is extremely reliable in practice.  Numerical experiments in \cref{fig:sparseStack-injectivity-intro} indicate that \emph{constant} row sparsity $\zeta = 4$ and proportional embedding dimension $k = 2r$ suffice to obtain an OSI with subspace dimension $r$.  See \cref{sec:sparse-experiments} for evidence, which gives credence to \cref{conj:constant-sparsity}.

\item[\textbf{($-$) Library support.}]
SparseStack matrices are practical only when
they are implemented with high-quality sparse linear
algebra libraries.  Sadly, these software libraries are less widely available than one would hope~\cite{sparseblas}.
\end{description}

\noindent
When fast sparse linear algebra primitives are not available,
we recommend the SparseRTT test matrix (\cref{def:intro-sparsertt})
as an alternative to the SparseStack.

\begin{table}[t]
\centering \small
\begin{tabular}{@{}lllll@{}} \toprule
    Guarantee & Test matrix & Embedding dim.\ \(k\) & Row sparsity \(\zeta\) & Gen.\ Nystr\"om runtime \\ \midrule
    OSE & CountSketch \cite{meng13,nelson13} & \(\cO(r^2)\) & \(1\) & \(\cO(\nnz(\mA) + nr^4)\) \\[0.25em]
    & SparseStack \cite{cohen16} & \(\cO(r \log r)\) & \(\cO(\log r)\) & \(\cO(\nnz(\mA)\log(r) + nr^2\log^3 r)\) \\[0.25em]
    & SparseStack \cite{chenakkod24} & \(\cO(r)\) & \(\cO(\log^3 r)\) & \(\cO(\nnz(\mA)\log^3(r) + nr^2)\) \\[0.25em] \midrule
    OSI & SparseIID \cite{tropp25} & \(\cO(r)\) & \(\cO(\log r)\) & \(\cO(\nnz(\mA)\log(r) + nr^2)\) \\[0.25em]
    & SparseStack (Thm.~\ref{thm:sparse-stack-osi}) & \(\cO(r)\) & \(\cO(\log r)\) & \(\cO(\nnz(\mA)\log(r) + nr^2)\) \\[0.25em] \midrule
    None & SparseStack (Practice) & \(2r\) & \(4\) & \(\cO(\nnz(\mA) + nr^2)\) \\[0.25em]
    \bottomrule
\end{tabular}
\caption{
    \textbf{Sparse test matrices: Comparison.}
    Theoretical results for sparse test matrices,
    including the type of guarantee (OSE or OSI), the embedding dimension $k$ and row sparsity $\zeta$ for sketching an $r$-dimensional subspace, and the runtime for generalized Nystr\"om (\cref{alg:gen_nystrom_outer}).
    OSI guarantees for the SparseStack test matrix (\cref{def:sparse-stack}) permit \emph{both} the smaller row sparsity $\zeta = \order(\log r)$ of Cohen's result \cite{cohen16} \emph{and} the smaller sketch dimension $k = \order(r)$ of Chenakkod \etal \cite{chenakkod25}.
}
\label{table:sparse-runtimes}
\end{table}

\subsection{Alternative sparse test matrices}
\label{sec:sparse-alternatives}

The literature contains many constructions for sparse dimension reduction maps; for instance, see \cite{chenakkod24,KN14,dasgupta10,tropp25,Clarkson13,meng13,nelson13}.
This section reviews two natural alternatives: SparseUniform (\cref{sec:sparse-uniform}) and SparseIID (\cref{sec:stack-vs-iid}).
We state reasons for (and against) preferring the SparseStack construction.

\subsubsection{The SparseUniform test matrix}
\label{sec:sparse-uniform}

By and large,
practitioners~\cite{murray23,epperly24a,chen25,dong23,tropp19}
have worked with the \emph{SparseUniform} test matrix,
also proposed by Kane \& Nelson~\cite[Fig.~1(b)]{KN14}.
\begin{definition}[SparseUniform]\label{def:sparse_unif}
Fix the ambient dimension $d$, embedding dimension $k$,
and row sparsity $\zeta$.
The \emph{SparseUniform} test matrix $\mOmega\in\F^{d \times k}$ is a random sparse matrix with exactly $\zeta$ independent copies of the random variable \(\zeta^{-1/2}\rad\) in each row,
placed in uniformly random positions,
sampled without replacement,
where \(\rad\sim\textsc{rademacher}\).
\end{definition}

\Cref{fig:sparsity_structure} illustrates how the sparsity patterns
of the SparseUniform and SparseStack constructions differ.
In our experience, the SparseUniform and SparseStack test matrices have similar injectivity $\alpha$ and dilation $\beta$, and they attain similar accuracy on downstream tasks, such as low-rank approximation.

Although the performance is comparable,
we prefer the SparseStack test matrix for two reasons.
First, the SparseUniform construction involves repeated sampling
of $\zeta$ indices \emph{without replacement},
which is hard to implement efficiently~\cite{chen25,epperly23,murray23}.
Second, we presently have stronger theoretical guarantees for the SparseStack construction.
Indeed, the best OSI result for a SparseStack (\cref{thm:sparse-stack-osi})
allows for an embedding dimension $k = \cO(r)$
and a row sparsity $\zeta = \cO(\log r)$,
while SparseUniform admits an OSE guarantee~\cite{cohen16}
for some embedding dimension $k = \cO(r \log r)$
and row sparsity $\zeta = \cO(\log r)$.

\begin{figure}[t]
    \centering
    \includegraphics[width=0.7\linewidth]{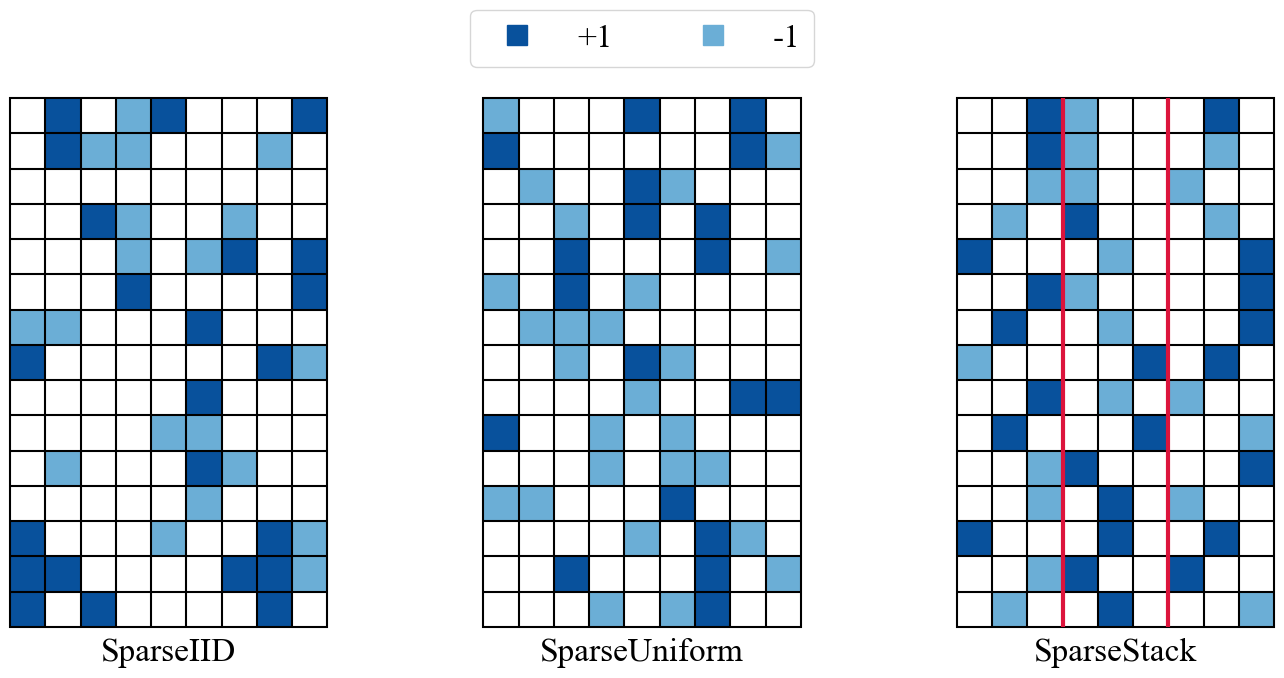}
\caption{\textbf{Sparse test matrices: Sparsity patterns.} Illustration of the sparsity patterns for several sparse sketching matrices with $k = 9$ columns and row sparsity $\zeta = 3$. From left to right: SparseIID (\cref{def:sparse-iid}), SparseUniform (\cref{def:sparse_unif}), and SparseStack (\cref{def:sparse-stack}). \label{fig:sparsity_structure}
    }
\end{figure}

\subsubsection{The SparseIID test matrix}

\label{sec:stack-vs-iid}

Another alternative %
is the SparseIID model,
a construction with roots in the work of Achlioptas~\cite{achlioptas03}.

\begin{definition}[SparseIID]
    \label{def:sparse-iid}
    Fix the ambient dimension $d$, embedding dimension $k$,
    and the \emph{expected} row sparsity $\zeta \in (0,k]$.
    Define a random variable $\varphi \coloneqq \zeta^{-1/2} \varrho \delta$,
    where \(\rad\sim\textsc{rademacher}\) and 
    $\delta \sim \textsc{bernoulli}(\zeta/k)$.
    The entries of the \emph{SparseIID} test matrix $\mOmega\in\F^{d \times k}$ are iid copies of $\varphi$.
\end{definition}

\cref{fig:sparsity_structure} illustrates the sparsity pattern
of the SparseIID test matrix.
Tropp~\cite[Thm.~6.3]{tropp25} proved that the
SparseIID model admits an OSI guarantee
when the expected row sparsity $\zeta$
is adapted to the \emph{subspace coherence},
defined in \cref{sec:prelims}.

\begin{importedtheorem}[SparseIID matrices are OSIs, \protect{\cite[Thm.~6.3]{tropp25}}]
    \label{impthm:sparse-iid-coherence}
    Fix an orthonormal matrix \(\mQ\in\bbF^{d \times r}\),
    with coherence $\mu(\mQ) \in [\nicefrac{r}{d}, 1]$.
    A SparseIID test matrix $\mOmega \in \F^{d \times k}$
    serves as an $(r, \nicefrac{1}{2})$-OSI for some
    embedding dimension $k = \cO(r)$ and some
    expected row sparsity $\zeta = \cO(\mu(\mQ) \log r)$.
\end{importedtheorem}

\noindent
Because of the role of the subspace coherence,
\cref{impthm:sparse-iid-coherence} for SparseIID
provides a \emph{stronger guarantee}
than \cref{thm:sparse-stack-osi} offers for SparseStack.

Nevertheless, we deprecate the use of the
SparseIID construction because of three significant flaws.
First, it is hard to develop efficient implementations
of the iid sampling procedure that have expected
runtime $\cO(d \zeta)$.
Second, the storage costs are variable.
Third, the SparseIID model cannot serve as an OSI
with constant expected row sparsity $\zeta = \cO(1)$
because there is a high probability that the random
matrix has rows that equal zero;
this failure mode is visible in \cref{fig:sparse-map-comparison}.
As a consequence, we believe that the SparseStack
construction remains preferable.

\subsection{Numerical evidence} \label{sec:sparse-experiments}

\begin{figure}[t]
    \centering
    \includegraphics[width=.85\linewidth]{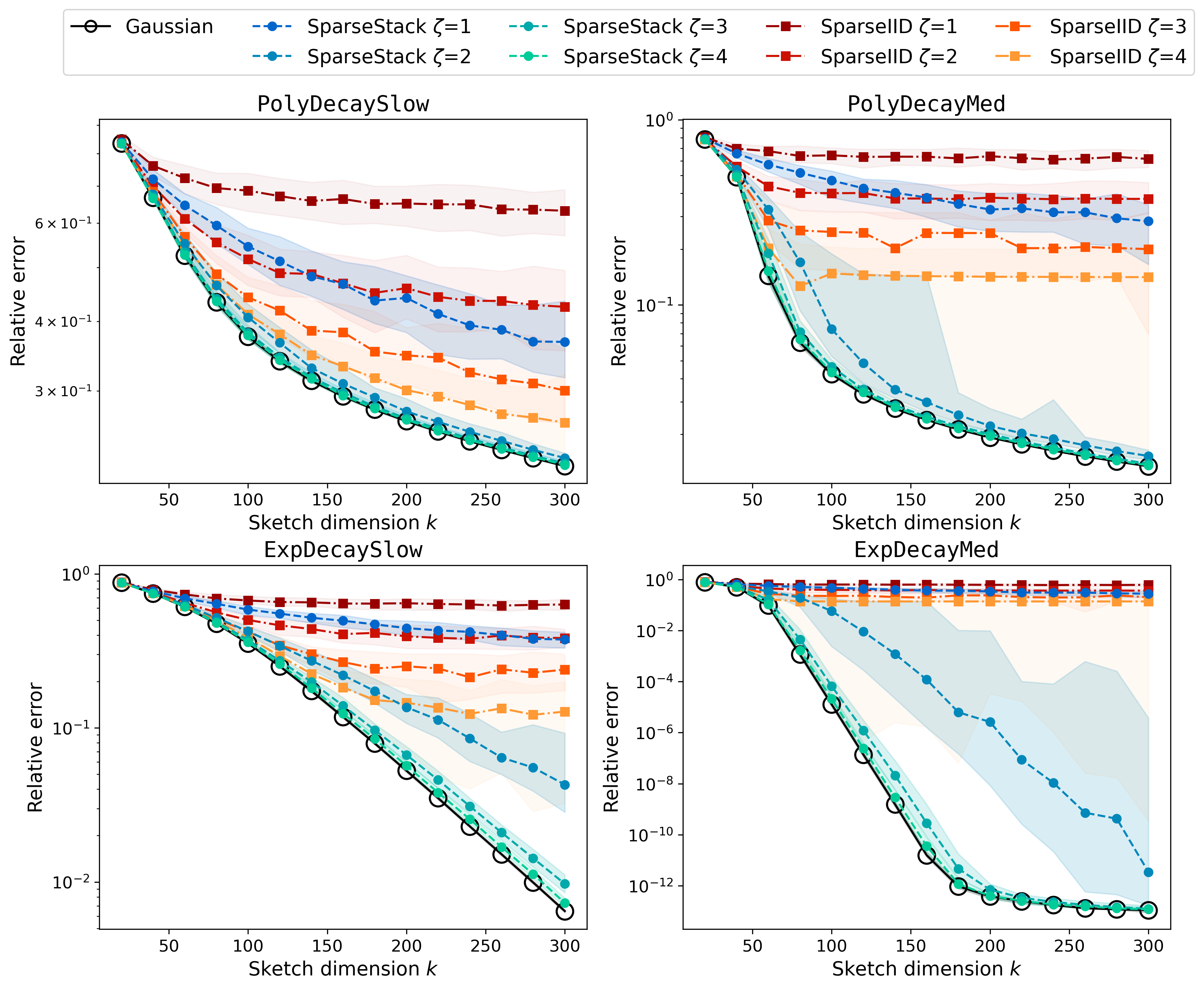}
    \caption{
    \textbf{SparseStack versus SparseIID.}
    Relative error
    \(\norm{\mA - \hat\mA}_{\rm F}^2 / \norm{\mA}_{\rm F}^2\)
    for approximations $\hat\mA$ obtained by the RSVD (\cref{alg:rsvd}) using a SparseStack (\cref{def:sparse-stack}), a SparseIID (\cref{def:sparse-iid}), or a Gaussian test matrix (\cref{def:gauss-test}).
    We consider four diagonal input matrices $\mA \in \R^{1024 \times 1024}$
    from the testbed in~\cite{tropp17b,tropp2017a}.
    For each instance, we perform $100$ trials.
    The data markers track the median errors; %
    shaded regions are bounded by the 10\% and 90\% quantiles.
    The SparseStack test matrix with row sparsity $\zeta=3$ or $\zeta= 4$ consistently matches the accuracy benchmark of a Gaussian test matrix,
    whereas the SparseIID test matrix often fails to produce accurate approximations when the spectrum decays rapidly.
    } \label{fig:sparse-map-comparison}
\end{figure}

\Cref{fig:sparse-map-comparison} charts the error
in the RSVD approximation (\cref{alg:rsvd}) using the
SparseStack test matrix with a range of row sparsity
parameters $\zeta$ and embedding dimensions $k$.
The SparseStack with $\zeta = 1$ agrees with the CountSketch
test matrix~\cite{Clarkson13,meng13,nelson13,charikar04}.
The plot also includes comparisons with SparseIID
test matrices and the Gaussian baseline.
The input matrices $\mA$ are diagonal matrices
from the testbed in~\cite{tropp17b,tropp2017a}, as
these instances are adversarial %
for sparse test matrices~\cite{nelson13b}.
The experiments suggest that \emph{constant} row sparsity
$\zeta = 4$ is sufficient for the SparseStack test
matrix to achieve the same low-rank approximation
error as a Gaussian test matrix.
This evidence supports our surmise about
the reliability of the SparseStack
with constant row sparsity
(\cref{conj:constant-sparsity}).

For more targeted evidence, \cref{fig:sparseStack-injectivity-intro}
tracks numerical approximations of the injectivity and dilation
parameters of a SparseStack matrix as the subspace
dimension $r$ varies and the embedding dimension $k = 2r$.
We estimate both parameters by applying
the test matrix to the adversarial orthonormal matrix
\(\mQ = [\ve_1 ~ \cdots ~ \ve_r] \in \bbR^{d \times r}\),
which is likely the worst-case instance.
When \(\zeta \geq 4\), the \emph{injectivity} of SparseStack
appears to plateau as \(r\) increases.
When $\zeta = 8$, the SparseStack \emph{injectivity} almost matches
the asymptotic injectivity $\alpha = (1 - 1/\sqrt{2})^2$
of a Gaussian test matrix (dashed line).
However, for all values of $\zeta$, the SparseStack \emph{dilation}
deviates from the asymptotic dilation $\beta = (1 + 1/\sqrt{2})^2$
of a Gaussian.
This disparity strongly suggests that the SparseStack matrix
with constant row sparsity behaves as an OSI---but not as an OSE.

In \cref{sec:science-pod-modes}, we showcase a scientific application
of SparseStack matrices for the compression and analysis of a numerical simulation of a Bose--Einstein condensate.
In this context, the generalized Nystr{\"o}m method runs up to
\textbf{12$\times$ faster} with SparseStack test matrices,
as compared with the baseline cost using Gaussian matrices.

\section{SparseRTT: An optimized fast trigonometric transform test matrix} \label{sec:sparse_trig_transforms}

This section elaborates on the SparseRTT test matrix
(\cref{def:intro-sparsertt}).
Trigonometric transform-based test matrices,
such as the SparseRTT, are widely used in
practice~\cite{tropp2017a,Naka2020,nakatsuksa24},
and they are valuable for computing
environments that lack high-quality sparse arithmetic libraries.

\Cref{sec:sparsertt-construction} summarizes
the design and analysis of SparseRTTs;
\cref{sec:sparsertt-cost} contains an accounting
of their computational cost;
and \cref{sec:sparsertt-implementation} offers implementation guidance.
\Cref{sec:sparsertt-alternatives} outlines several related constructions,
along with numerical comparisons.
Last, \cref{sec:sparsertt-proof} contains a proof
of the main theorem on SparseRTTs.

\subsection{The SparseRTT construction and its analysis} \label{sec:sparsertt-construction}

In \cref{sec:fast-tranform-intro}, we introduced
random test matrices based on fast trigonometric transforms.
For ambient dimension $d$ and embedding dimension $k$,
the standard construction takes the form
\begin{equation} \label{eq:general-fast-trig-transform}
\mOmega \coloneqq \mD \mF \mS
\in \F^{d \times k}
\quad\text{where}\quad
\begin{aligned}
&\text{$\mD \in \F^{d\times d}$ is a random diagonal matrix;} \\
&\text{$\mF \in \F^{d\times d}$ is a trigonometric transform;} \\
&\text{$\mS \in \F^{d\times k}$ is a sparse random matrix.}
\end{aligned}
\end{equation}
Among constructions of this type,
we recommend using the SparseRTT test matrix.
The SparseRTT has the fastest
possible sketching time in the class~\eqref{eq:general-fast-trig-transform},
and it enjoys the \((r,\nicefrac12)\)-OSI property
at a proportional embedding dimension \(k=\cO(r)\).

In the SparseRTT construction (\cref{def:intro-sparsertt}), we populate the random diagonal matrix $\mD$ with iid \(\textsc{rademacher}\) random variables; the trigonometric transform $\mF$ is a WHT or DCT (when \(\mF \in \{\bbR,\bbC\}\)) or a DFT (when \(\bbF = \bbC\)); the sparse matrix $\mS$ follows the \emph{SparseCol} distribution, defined below.

\begin{definition}[SparseCol]
    \label{def:sparsecol}
    Fix the dimension $d$, the embedding dimension $k$,
    and the \emph{column sparsity} $\colsparse \in \bbN$.
    Construct a random sparse vector
    \[
        \vomega \coloneqq \sqrt{\frac{d}{\colsparse}} \sum_{i=1}^\colsparse \rademacher_i \ve_{s_i} \in \F^{d}.
    \]
    The random variables $\rad_1,...,\rad_\xi \sim \textsc{rademacher}$ iid, and the selectors $s_1,...,s_\xi$ are sampled uniformly from $\{1, \dots, d\}$ \emph{without} replacement.
    As usual, $\mathbf{e}_i \in \F^d$ is the $i$th standard basis vector. %
    Up to scaling, the columns of the \emph{SparseCol} matrix
    \(\mS \in \F^{d \times k}\)
    are iid copies of the random sparse vector $\vomega$:
    \[
        \mS \coloneqq \frac1{\sqrt k} \begin{bmatrix}
            \vomega_1 & \cdots & \vomega_k \end{bmatrix}
            \in \F^{d \times k}
        \quad \text{where $\vomega_j \sim \vomega$ iid.}
    \]
\end{definition}

Our main result contains an OSI guarantee for
the SparseRTT test matrix.

\begin{theorem}[SparseRTTs are OSIs]
    \label{thm:sparsertt}
    The SparseRTT test matrix $\mOmega \in \F^{d \times k}$
    is an $(r,\nicefrac12)$-OSI for some  embedding dimension \(k = \cO(r)\) and some column sparsity $\smash{\colsparse = \cO\big(\big(1+\tfrac{\log d}{r}\big) \cdot \log r\big)}$.
    With these parameters,
    when the input matrix $\mA \in \F^{n \times d}$, the sketch \(\mA\mOmega\) has arithmetic cost \(\cO(nd \log r)\).
\end{theorem}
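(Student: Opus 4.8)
The plan is to verify the two conditions of the spectral reformulation in \cref{prop:osi-spec}: that $\mOmega = \mD\mF\mS$ is isotropic, and that $\lambda_{\min}(\mQ^*\mOmega\mOmega^*\mQ) \ge \tfrac{1}{2}$ with probability at least $\tfrac{19}{20}$ for every fixed orthonormal $\mQ \in \F^{d\times r}$. Isotropy should be immediate: $\mD$ and the trigonometric transform $\mF$ are unitary, and the SparseCol factor $\mS$ is isotropic (each scaled column $\vomega$ has $\E[\vomega\vomega^*] = \mI$ because of sign cancellation and the uniform choice of supports), so $\E[\mOmega\mOmega^*] = \mD\mF\,\E[\mS\mS^*]\,\mF^*\mD = \mI$. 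For the injectivity bound, fix $\mQ$ and write $\mQ^*\mOmega = \mV^*\mS$ where $\mV \coloneqq \mF^*\mD\mQ$; since $\mD,\mF$ are unitary, $\mV$ is again orthonormal, and it is independent of $\mS$. Thus everything reduces to a lower bound on $\lambda_{\min}(\mV^*\mS\mS^*\mV)$ for a \emph{randomized} orthonormal $\mV$. I would split this into (i) showing that the randomized transform ``flattens'' $\mV$, i.e.\ forces its coherence $\mu(\mV)$ to be small, and (ii) showing that a SparseCol matrix acts as an OSI on \emph{any} low-coherence $r$-dimensional subspace at embedding dimension proportional to $r$.

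For step (i) I would condition on $\mD$ alone. Each row of $\mV$ has the form $\ve_s^*\mV = \sum_j c_{sj}\,D_{jj}\,(\ve_j^*\mQ)$, a sum of independent Rademacher-weighted vectors whose scalar weights satisfy $|c_{sj}| \le \cO(d^{-1/2})$ (for the WHT, DCT, or DFT), so $\E\|\ve_s^*\mV\|_2^2 = \cO(r/d)$. Writing $\|\ve_s^*\mV\|_2^2$ as a Rademacher quadratic form and applying the Hanson--Wright inequality \cite{Ver25:High-Dimensional-Probability-2ed}, then union-bounding over the $d$ rows, should give $\mu(\mV) \le \bar\mu \coloneqq \cO\big((r+\log d)/d\big)$ with probability at least $1 - \tfrac{1}{40}$. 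This is the standard coherence estimate for randomized trigonometric transforms \cite{tropp11SRHT}.

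For step (ii), condition on a realization of $\mV$ with $\mu(\mV) \le \bar\mu$, and write $\mV^*\mS\mS^*\mV = \tfrac{1}{k}\sum_{j=1}^k \mW_j$ with iid psd summands $\mW_j \coloneqq \mV^*\vomega_j\vomega_j^*\mV$ and $\E[\mW_j] = \mI_r$. The plan is to invoke the Gaussian comparison method of \cref{sec:gaussian-compare} \cite{tropp25}: $\lambda_{\min}\big(\tfrac{1}{k}\sum_j\mW_j\big)$ is controlled from below by $\lambda_{\min}(\mZ)$ for a Gaussian matrix $\mZ \in \F^{r\times r}$ with $\E[\mZ] = \mI_r$ and second-moment data $\Var[\tr(\mM\mZ)] = k^{-1}\E[\tr(\mM\mW_1)^2]$ for self-adjoint $\mM$. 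Estimating $\E\big[(\vomega_1^*\mV\mM\mV^*\vomega_1)^2\big]$ — where the coherence bound $\max_s\|\ve_s^*\mV\|_2^2 \le \bar\mu$ caps the contribution of each individual nonzero of $\vomega_1$, and the column sparsity $\colsparse$ caps the number and magnitude of the cross-terms (the without-replacement sampling within a column only helps here) — should show that the covariance of $\mZ$ is small enough that $\lambda_{\min}(\mZ) \ge \tfrac{1}{2}$ with probability at least $1 - \tfrac{1}{40}$ as long as $k = \cO(r)$ and $\colsparse = \Omega\big(\tfrac{\bar\mu d}{r}\log r\big)$; since $\bar\mu d = \cO(r + \log d)$, the choice $\colsparse = \cO\big((1 + \tfrac{\log d}{r})\log r\big)$ suffices. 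A union bound over the failure events in steps (i) and (ii) then gives $\lambda_{\min}(\mQ^*\mOmega\mOmega^*\mQ) \ge \tfrac{1}{2}$ with probability at least $\tfrac{19}{20}$, which is the desired OSI property.

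The arithmetic cost follows by direct operation counting — $\cO(nd)$ for the diagonal scaling, the cost of the fast transform applied to the $n$ rows, and $\cO(n\,k\colsparse) = \cO\big(n(r+\log d)\log r\big)$ for the sparse sampling — which with the stated parameters (and the blockwise implementation discussed in \cref{sec:sparsertt-cost}) is $\cO(nd\log r)$. The main obstacle will be step (ii): obtaining the \emph{simultaneously} optimal parameters (embedding dimension proportional to $r$ \emph{and} column sparsity only $\cO((1+\log d/r)\log r)$) needs the Gaussian comparison rather than a matrix Chernoff bound — the latter would inflate $k$ to $\cO(r\log r)$ — together with a careful accounting of how $\mu(\mV)$ and $\colsparse$ enter the second-moment data of the comparison matrix; the SRTT counterexample \cite{tropp11SRHT} shows $\colsparse$ cannot drop below $\Theta(\log r)$, so the analysis must be essentially tight.
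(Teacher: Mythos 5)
Your proposal follows essentially the same route as the paper: first use the randomized trigonometric transform to flatten the subspace (the coherence bound $\mu((\mD\mF)^\top\mQ) = \cO((r+\log d)/d)$, which the paper imports from \cite[Lem.~3.3]{tropp11SRHT} rather than re-deriving via Hanson--Wright), then show that a SparseCol matrix with column sparsity $\colsparse = \cO(\tfrac{d}{r}\mu\log r)$ is injective on any incoherent $r$-dimensional subspace at $k=\cO(r)$ via Tropp's Gaussian comparison method (this is exactly \cref{thm:sparse-indep-cols-osi}, proved through the moment computation of \cref{lem:indep-cols-moments} and the comparison model built from diagonal, scalar, and GUE Gaussians), and finish with a union bound and the same operation count. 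Your parameter accounting, the identification of Gaussian comparison (rather than matrix Chernoff) as the ingredient needed to get $k=\cO(r)$ and $\colsparse=\cO((1+\tfrac{\log d}{r})\log r)$ simultaneously, and the appeal to the SRTT counterexample for optimality all match the paper's argument.
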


We justify the bound on the sketching cost
in \cref{sec:sparsertt-cost};
the proof of the OSI guarantee appears
in~\cref{sec:sparsertt-proof}.
An example of Tropp \cite[Sec.~3.3]{tropp11SRHT}
implies that a column sparsity $\colsparse = \order(\log r)$
is \emph{optimal} for the SparseRTT construction
at embedding dimension $k = \order(r)$.
For trigonometric transform embeddings
of the form~\eqref{eq:general-fast-trig-transform}
with embedding dimension $k = \cO(r)$,
the best available results~\cite{chenakkod24,chenakkod25}
establish the OSE property
with %
a weaker column sparsity
guarantee $\colsparse = \order(\log^3 r)$
using a slightly different sparse matrix $\mS$.

\subsection{Arithmetic costs} \label{sec:sparsertt-cost}

This section justifies the sketching cost for the SparseRTT,
stated in \cref{thm:sparsertt}.
Consider a SparseRTT matrix $\mOmega \in \F^{d \times k}$
with embedding dimension $k \leq d$.
It suffices to work out the cost of forming
$\vy = \mOmega^\top \vx$ for a single vector $\vx \in \F^d$.
A direct implementation using a standard fast trigonometric transform
requires $\cO(d \log d)$ operations.  Here is an accounting of the cost:
\begin{enumerate} \setlength{\itemsep}{0pt}
\item   Form $\vy_1 \gets \mD^\top \vx$. Since $\mD$ is diagonal,
this step expends $\cO(d)$ operations.

\item   Compute $\vy_2 \gets \mF^{\top} \vy_1$ using a fast trig transform algorithm, at a cost of $\cO(d \log d)$ operations.

\item   Evaluate $\vy \gets \mS^\top \vy_2$.  This step requires $\cO(k \xi)$ %
operations.
\end{enumerate}
Assuming the optimal column sparsity $\xi = \cO(\log k)$,
we obtain a final operation count of $\cO( d \log d )$
Practitioners almost always use the implementation just described.

To optimize the asymptotic cost,
we can employ a fast \emph{subsampled} trigonometric transform.
See~\cite{sorensen93} for the subsampled DFT and DCT;
see~\cite{ailon09} for the subsampled WHT.
We adjust steps 2 and 3:

\begin{enumerate} \setlength{\itemsep}{0pt}
\item[2'.]   Let $\cR$ be the set of nonzero rows of $\mS$.
Compute $\vy_2' \gets (\mF(:, \cR))^\top \vy_1$ using the
fast subsampled trigonometric transform.  The cost is $\cO(d \log k)$
operations.

\item[3'.]   Evaluate $\vy \gets (\mS(\cR, :))^{\top} \vy_2'$ at a cost
of $\cO(k \xi)$ operations.
\end{enumerate}
Indeed, the set $\cR$ contains at most $k \xi$ coordinates.
Assuming the optimal column sparsity $\xi = \cO(\log k)$,
the total cost is $\cO(d \log k)$ %
operations,
consistent with~\cref{thm:sparsertt}.

\subsection{Implementation guidance} \label{sec:sparsertt-implementation}

This section offers some advice on the proper implementation
of SparseRTT test matrices.
First, let us discuss the distribution of the
iid entries of the random diagonal matrix $\mD$.
We remark that real Rademacher variables are effective,
regardless of the field $\F$.
Unfortunately, real and complex Rademacher random variables
both have failure modes for discrete data; cf.~\cite[Sec.~2.4]{murray23}.
It is often preferable to employ a continuous distribution,
such as $\textsc{uniform}[-\sqrt{3},+\sqrt{3}]$ when $\F = \R$
or the Steinhaus distribution when $\F = \C$.
Theoretical results for SparseRTTs with these distributions
are the same with the results for real Rademachers, modulo constant factors.

For the distribution of entries in the sparse matrix $\mS$,
we can make similar recommendations. %
When \(\bbF=\bbC\), we endorse the use of \(\textsc{rademacher}_\bbC\) or \(\textsc{steinhaus}\) variables in place of real Rademachers.
The complex distributions increase the reliability
and yield equivalent theoretical guarantees.

For the trigonometric transform $\mF$,
we generally recommend the DCT when $\F = \R$ 
and the DFT when $\F = \C$.
Although the WHT is simpler arithmetically than the DCT or DFT,
it is only defined when the dimension $d = 2^{m}$.
In practice, the cost of applying the SparseRTT is
usually dominated by the cost of the standard
fast trigonometric transform.
While subsampled trigonometric transforms are potentially
faster, high-quality implementations of these
algorithms are not available at the time of writing.

To construct the sparse matrix $\mS$, the user must
select an explicit value for the column sparsity $\xi$.
We recommend $\xi = \lceil 1.5 \log k \rceil$
to avoid worst-case examples identified by
Tropp~\cite[Rem.~1.5]{tropp11SRHT}.
When the subspace dimension $r$ is known,
we generally set the embedding dimension $k = 2r$.
The numerical evidence in \cref{fig:SparseRTT}
suggests that more aggressive parameter choices
can be effective.

In the available runtime comparisons~\cite{dong23,epperly23,chen25},
sparse test matrices are significantly faster than test matrices
based on trigonometric transforms.
On the other hand, to achieve competitive performance,
sparse test matrices require a high-quality
sparse linear algebra library, and it may be necessary to
implement the sketching method in a low-level programming
language, such as \texttt{C}.
Absent a sparse linear algebra library, the SparseRTT
construction can be the better alternative.
Although the SparseRTT involves a sparse
matrix multiplication, the matrix $\mS$ is so sparse that
special libraries are unnecessary.

\subsection{Alternative constructions} \label{sec:sparsertt-alternatives}

The literature describes many trigonometric transform test
matrices that fit the template~\eqref{eq:general-fast-trig-transform}.
These constructions make different choices for the sparse matrix $\mS$.
In this section, we summarize these alternatives, their
theoretical guarantees (\cref{table:sparsertt-comparison-body}),
and their empirical performance (\cref{fig:SparseRTT}).

\begin{table}[t]
\centering \small
\begin{tabular}{@{}llll@{}} \toprule
	Sparse matrix \mS & Embedding dim.\ $k$
              & Sparsity $\nnz(\mS)$ & Sketch-and-solve time \\ \midrule
   Subsampling \cite{ailon09} & \(\cO(r \log r)\)
        & \(\cO(r \log r)\) & $\order(nd \log(d) + d^3\log d)$
		\\[0.25em]
    CountSketch \cite{cartis21} & $\order(r)$ & $d$ & $\order(nd\log(n) + d^3)$\\[0.25em]
    LESS \cite{chenakkod24,chenakkod25} & $\order(r)$ & $\order(r\log^3 r)$ & $\order(nd\log^3(d) + d^3)$ \\ \midrule 
    SparseCol (Def.~\ref{def:sparsecol}) & $\order(r)$ & $\order(r\log r)$ & $\order(nd\log(d) + d^3)$ \\
		\bottomrule
\end{tabular}
\caption{\textbf{Fast trigonometric transform test matrices: Comparison.}
Choices for the sparse matrix $\mS$ used within the trigonometric transform test matrix $\mOmega \in \F^{d \times k}$, described in~\cref{eq:general-fast-trig-transform}.
We report the embedding dimension $k$ and sparsity $\nnz(\mS)$ that suffice to justify the $(r, \alpha)$-OSI property.
The last column lists the runtimes for the sketch-and-solve method (\cref{alg:sketch-and-solve}) for an input matrix $\mA \in \F^{n \times d}$ and response matrix $\mB \in \F^{n \times d}$.
The SparseCol test matrix (\cref{def:sparsecol}), used in our SparseRTT construction (\cref{def:intro-sparsertt}), simultaneously achieves the minimum embedding dimension $k$ and sparsity $\nnz(\mS)$, resulting in improved runtimes for sketch-and-solve.
}
\label{table:sparsertt-comparison-body}
\end{table}

\begin{figure}[t]
    \centering
\includegraphics[width=.85\linewidth]{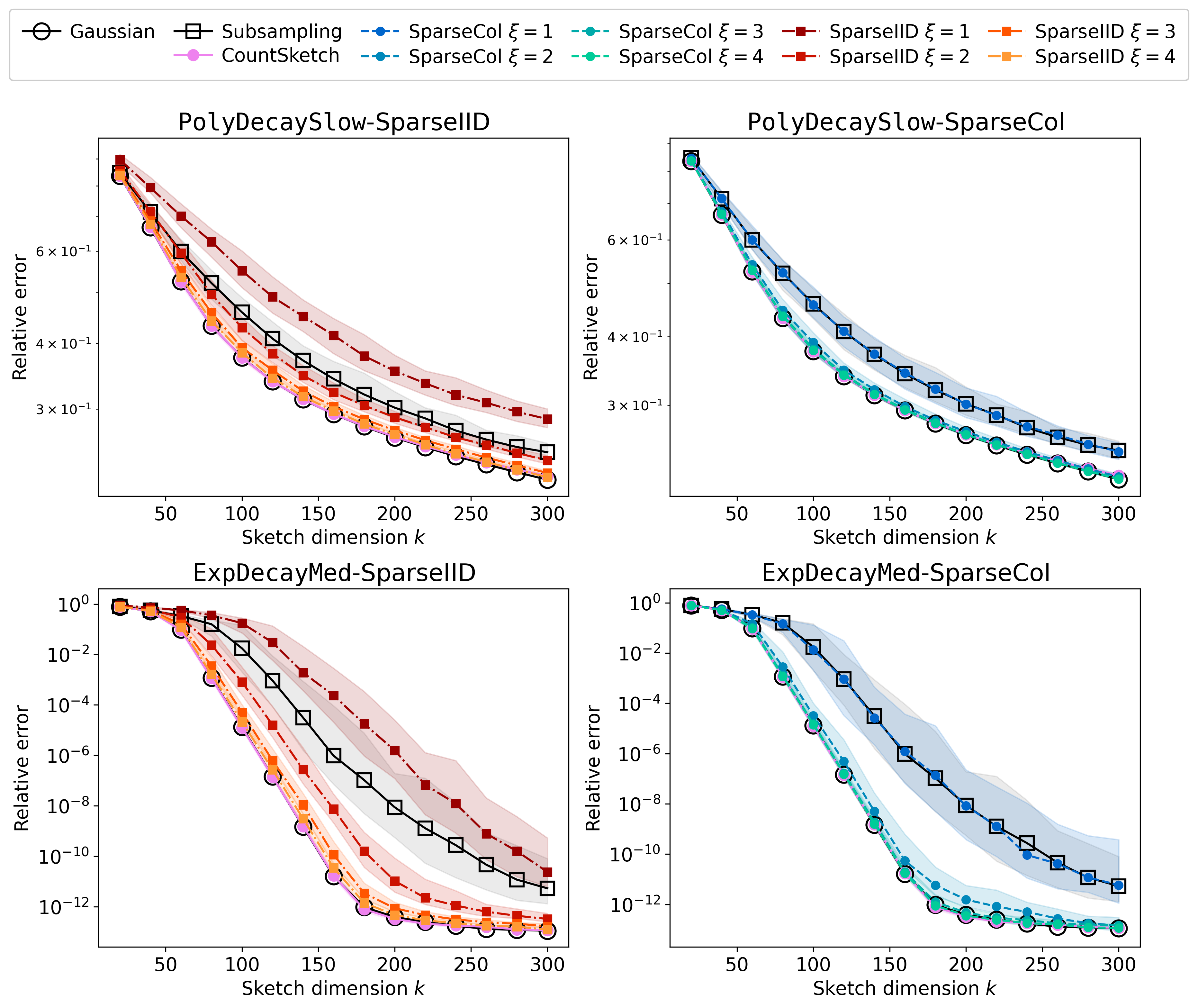}
\caption{\textbf{Fast trigonometric transform test matrices: Comparison.} Relative error
\(\norm{\mA - \hat\mA}_{\rm F}^2 / \norm{\mA}_{\rm F}^2\)
for the rank-$k$ approximation $\hat\mA$ obtained by the RSVD (\cref{alg:rsvd}).
Sketching is performed with a fast transform test matrix \cref{eq:general-fast-trig-transform} with Rademacher $\mD$ and DCT $\mF$;
the sampling matrix $\mS$ is SparseIID (\emph{left}, \cref{def:sparse-iid}) or SparseCol (\emph{right}, \cref{def:sparsecol}).
All plots compare against a Gaussian test matrix (\cref{def:gauss-test})
and a fast transform test matrix \cref{eq:general-fast-trig-transform}
where $\mS$ is subsampling or CountSketch (\cref{sec:sparsertt-alternatives}).
We consider four diagonal input matrices $\mA \in \R^{1024 \times 1024}$ from the testbed in \cite{tropp2017a,tropp17b}.  For each instance, we perform $100$ trials. The data markers track the median errors; shaded regions are bounded by the 10\% and 90\% quantiles.
On these examples, the SparseCol matrix achieves better accuracy than SparseIID matrix for every value of the column sparsity $\colsparse$.}
    \label{fig:SparseRTT} %
\end{figure}

Here are several options for the sparse matrix $\mS$
in a fast trigonometric transform~\eqref{eq:general-fast-trig-transform}.
The tag describes the style of the line and data
marker in \cref{fig:SparseRTT}.

\begin{enumerate} \setlength{\itemsep}{0pt}
\item   \textbf{Subsampling (\emph{solid line, open square}).}
    The classic subsampled trigonometric transform test matrix~\cite{ailon09,woolfe08} arises when $\mS$
    is a (scaled) restriction matrix.
    This matrix has minimal sparsity $\nnz(\mS) = k$,
    but it requires a larger embedding dimension
    $k = \Omega(r \log r)$ to serve as an OSI.

\item   \textbf{CountSketch (\emph{solid line, solid circle}).}
    Cartis \etal \cite{cartis21} choose $\mS$
    to be a CountSketch embedding.
    This test matrix is reliable,
    but it has higher sparsity ($\nnz(\mS) = d$)
    than the SparseCol.
    It is also more expensive, requiring $\cO(nd \log d)$
    operations to form the sketch $\mA \mOmega$
    of the matrix $\mA \in \F^{n \times d}$.

\item   \textbf{SparseCol (\emph{dashed line, solid circle}).}
    The SparseRTT construction (\cref{def:intro-sparsertt})
    relies on the SparseCol matrix (\cref{def:sparsecol}).

\item   \textbf{SparseIID (\emph{solid line, solid square}).}
    If we replace SparseCol with the SparseIID map
    defined in \cref{sec:stack-vs-iid},
    our main result for the SparseRTT (\cref{thm:sparsertt})
    holds with the same embedding dimension $k$ and
    (expected) column sparsity $\xi$.
    Empirically, the SparseIID map requires higher
    column sparsity $\colsparse$ than SparseCol
    to perform reliably.
\end{enumerate}

\Cref{table:sparsertt-comparison-body} summarizes
these constructions and the influence
on the arithmetic cost of the 
sketch-and-solve method (\cref{alg:sketch-and-solve}).
We see that the SparseRTT test matrix simultaneously
allows the minimum embedding dimension $k = \cO(r)$,
while the SparseCol matrix $\mS$ attains the minimum
possible sparsity $\nnz(\mS) = \cO(r \log r)$.
In addition, among the trigonometric transform sketches,
the SparseRTT offers the fastest runtime for
sketch-and-solve.

\Cref{fig:SparseRTT} provides an empirical comparison
of the RSVD (\cref{alg:rsvd}) with trigonometric transform
sketches~\eqref{eq:general-fast-trig-transform}
as we vary the sparse matrix $\mS$.
The left panels show the performance of when $\mS$
is a SparseIID matrix, while the right panels
show the recommended SparseCol matrix.
For reference, all panels display the results
for the existing constructions
(subsampling $\mS$, CountSketch $\mS$, Gaussian $\mOmega$).
For these examples, the recommended SparseRTT test matrix
with column sparsity $\colsparse = 4$ %
achieves
accuracy comparable with a Gaussian test matrix.
The SparseIID sampling method is less effective than
the SparseCol sampling method at every value of $\colsparse$.

\subsection{Proof of \cref{thm:sparsertt}} \label{sec:sparsertt-proof}

The proof of \cref{thm:sparsertt} parallels the analysis
of the SRTT construction with the WHT matrix.
We rely on the following incoherence result,
adapted from~\cite[Lem.~3.3]{tropp11SRHT}.

\begin{importedtheorem}[Randomized trig transforms create incoherence, \protect{\cite[Lem.~3.3]{tropp11SRHT}}]
    \label{impthm:srft-coherence}
    Fix an orthonormal matrix $\mQ \in \F^{d\times r}$.
    Select $\mD \in \F^{d \times d}$ and $\mF \in \F^{d\times d}$
    as in \cref{def:intro-sparsertt}.
    With probability at least \(1-\delta\),
    the coherence of the orthonormal matrix
    \(\mQ_{\rm F} \defeq (\mD\mF)^\top \mQ \in \R^{d\times k}\) satisfies
    \begin{align}
        \mu(\mQ_{\rm F})
        = \cO\left(\frac{r + \log(d/\delta)}{d}\right).
        \label{eq:srht-coherence}
    \end{align}
\end{importedtheorem}

\noindent
\Cref{impthm:srft-coherence} is based on
the result~\cite[Lem.~3.3]{tropp11SRHT},
which is stated under the assumptions
that $\mD \in \F^{d\times d}$ consists
of real Rademacher variables and
that $\mF \in \F^{d \times d}$ is a WHT.
Both conditions can be relaxed without
any additional insight.
To handle complex Rademacher variables,
we split them into real and imaginary parts.
The DCT and the DFT (like the WHT)
are unitary matrices whose
entries are bounded in magnitude
by $\cO(d^{-1/2})$.
The proof only depends on these
two features of the WHT.

\Cref{impthm:srft-coherence} states that %
the (unsampled) random trigonometric transform $(\mD \mF)^\top$
converts an arbitrary orthonormal
matrix into an incoherent orthonormal matrix. %
Thus, to prove that a SparseRTT is an OSI,
it suffices to confirm that a SparseCol matrix
$\mS$ acts as an injection on an incoherent subspace. %
This is the content of the next result.

\begin{theorem}[SparseCol: Injectivity for incoherent matrices]
    \label{thm:sparse-indep-cols-osi}
    Fix an orthonormal matrix $\mQ\in\F^{d \times r}$
    with coherence \(\mu(\mQ)\).
    For some choice of embedding dimension $k = \cO(r)$
    and column sparsity $\colsparse = \cO(\frac dr \mu(\mQ) \log r)$,
    a random SparseCol matrix $\mS\in\F^{d \times k}$
    satisfies
    \[
        \sigma_{\rm min}^2(\mS^{\top} \mQ) \geq \nicefrac12
        \quad\text{with probability at least~ $\nicefrac{39}{40}$.}
    \]
\end{theorem}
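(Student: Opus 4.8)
The plan is to run the Gaussian comparison strategy (the method behind \cite[Thm.~2.3]{tropp25}) on the random psd matrix $\mQ^\top\mS\mS^\top\mQ$. Since $\sigma_{\min}^2(\mS^\top\mQ)=\lambda_{\min}(\mQ^\top\mS\mS^\top\mQ)$ (cf.\ \cref{prop:osi-spec}), it suffices to prove $\lambda_{\min}(\mQ^\top\mS\mS^\top\mQ)\ge\nicefrac12$ with probability at least $\nicefrac{39}{40}$. Write $\vz_j\coloneqq\mQ^\top\vomega_j\in\F^{r}$, where the $\vomega_j$ are the iid sparse columns of \cref{def:sparsecol}; then $\mQ^\top\mS\mS^\top\mQ=\tfrac1k\sum_{j=1}^{k}\mW_j$ with $\mW_j\coloneqq\vz_j\vz_j^\top$ iid. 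A quick calculation (cross terms vanish by independence of the Rademacher signs, diagonal terms use that each selector is marginally uniform on $\{1,\dots,d\}$) shows that $\vomega_j$ is isotropic, so $\E\mW_j=\mQ^\top\mQ=\Id_r$ and the sum has mean $\Id_r$. Denoting the $l$th row of $\mQ$ by $\vq_l$, so that $\norm{\vq_l}_2^2\le\mu(\mQ)$ and $\sum_l\vq_l\vq_l^\top=\Id_r$, we may write $\vz_j=\sqrt{d/\colsparse}\sum_{i=1}^{\colsparse}\rademacher_{ji}\vq_{s_{ji}}$, a scaled Rademacher combination of at most $\colsparse$ rows of $\mQ$.

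Applying \cite[Thm.~2.3]{tropp25} reduces the lower-tail estimate for $\lambda_{\min}(\tfrac1k\sum_j\mW_j)$ to the same estimate for $\lambda_{\min}(\mZ)$, where $\mZ=\Id_r+\mG$ and $\mG\in\F^{r\times r}$ is the centered self-adjoint Gaussian matrix with $\Var[\tr(\mM\mZ)]=k^{-1}\,\E[(\vz_1^\top\mM\vz_1)^2]$ for self-adjoint $\mM$, at the cost of an explicit correction term that is controlled by higher moments of $\vz_1$ and is lower order in our regime. It then suffices to show $\norm{\mG}_2\le\nicefrac14$ with high probability; this I would obtain from the two variance parameters of $\mG$, namely the matrix variance $v_2\le k^{-1}\norm{\E[\norm{\vz_1}_2^2\,\vz_1\vz_1^\top]-\Id_r}_2$ and the weak variance $v_1\le k^{-1}\sup_{\norm{\vu}_2=\norm{\vw}_2=1}\E[\lvert\vu^\top\vz_1\rvert^2\lvert\vw^\top\vz_1\rvert^2]$. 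Expanding these quartic forms in $\vz_1=\sqrt{d/\colsparse}\sum_i\rademacher_{1i}\vq_{s_{1i}}$, pairing Rademacher signs, averaging over the selectors, and using $\norm{\vq_l}_2^2\le\mu(\mQ)$ and $\sum_l\vq_l\vq_l^\top=\Id_r$ throughout, I expect bounds of the form $v_1,v_2=\cO\bigl((r+\tfrac{d\,\mu(\mQ)}{\colsparse})/k\bigr)$, where the leading $r$ is the ``Gaussian'' contribution and $d\,\mu(\mQ)/\colsparse$ is the price of sparsity. The stated choices $\colsparse=\cO(\tfrac dr\mu(\mQ)\log r)$ and $k=\cO(r)$ make $d\,\mu(\mQ)/\colsparse=\cO(r/\log r)$, so $v_2=\cO(1)$ and $v_1=\cO(1/\log r)$. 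The sharp operator-norm bound for Gaussian matrices then gives $\E\norm{\mG}_2=\cO(\sqrt{v_2}+\sqrt{v_1\log r})=\cO(\sqrt{r/k})$, which falls below $\nicefrac18$ once $k\ge\rC r$ for a large enough universal constant $\rC$; Gaussian concentration controls the fluctuation of $\norm{\mG}_2$ above its mean, the same parameters bound the comparison correction, and a union bound over the two bad events yields probability at least $\nicefrac{39}{40}$.

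The main obstacle is the sharp estimate of the weak variance $v_1$. The crude bound $\E[\lvert\vu^\top\vz_1\rvert^4]\le(\sup\norm{\vz_1}_2^2)\cdot\E\lvert\vu^\top\vz_1\rvert^2\asymp d\,\colsparse\,\mu(\mQ)$ is far too lossy---it would force an unacceptably large $k$ and destroy the target $k=\cO(r)$ (indeed, the cruder route of splitting $\mQ^\top\mS\mS^\top\mQ$ into its ``diagonal'' part, handled by a matrix Chernoff bound, and its off-diagonal Rademacher chaos, handled by matrix Bernstein, loses extra logarithmic factors, which is why the Gaussian comparison is needed). One must instead exploit (i) the Rademacher-sum structure, so that $\E[\lvert\vu^\top\vz_1\rvert^4]\le 3(\E\lvert\vu^\top\vz_1\rvert^2)^2$ conditional on the selectors, and (ii) the concentration of $\sum_i\lvert\vu^\top\vq_{s_{1i}}\rvert^2$ about its mean $\colsparse/d$ over the random selectors (its variance is at most $\mu(\mQ)\,\colsparse/d$); the interplay of these two facts is exactly what pins down the requirement $\colsparse=\Omega(\tfrac dr\mu(\mQ)\log r)$. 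A secondary point is to deploy \cite[Thm.~2.3]{tropp25} in a quantitative form suited to the present setting and to verify that its correction term is genuinely lower order; the favorable negative correlations from sampling the selectors without replacement only help and can be absorbed. Finally, I would note that the argument runs parallel to Tropp's analysis of the SparseIID model (\cref{impthm:sparse-iid-coherence}), the only essential change being that each column of a SparseCol matrix has a fixed support of size $\colsparse$ rather than iid Bernoulli entries---a SparseCol matrix with column sparsity $\colsparse$ has expected row sparsity $k\colsparse/d$, which plays the role of $\cO(\mu(\mQ)\log r)$ there.
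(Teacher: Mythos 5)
Your proposal runs on the same master tool as the paper: Tropp's Gaussian comparison theorem (\cref{impthm:comparison-iid-sum}) applied to $\mQ^\top\mS\mS^\top\mQ$ written as a sum of $k$ iid compressed rank-one terms. The difference is in how the Gaussian model is handled. The paper computes the second moment of a SparseCol column in the \emph{ambient} coordinates (\cref{lem:indep-cols-moments}), dominates it by the explicit model $\tfrac1k\bigl(\mI_d+\sqrt{d/\colsparse}\,\mD_d+\sqrt{2}\,g\mI_d+2\mG_{\rm gue}^{(d)}\bigr)$, and only then compresses by $\mQ$; the coherence enters solely through the tabulated statistics of the compressed diagonal Gaussian $\mQ^\top\mD_d\mQ$ (\cref{table:gaussian-library}), so no general Gaussian operator-norm bound is ever needed. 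You compress first, take the Gaussian whose covariance exactly matches the second moment of $\mQ^\top\vomega\vomega^\top\mQ$, and then estimate its norm by a two-parameter (Bandeira--van Handel-type) bound, computing the fourth moments of $\mQ^\top\vomega$ directly via Khintchine conditional on the selectors plus concentration of $\sum_i|\langle \vu,\vq_{s_i}\rangle|^2$ (variance at most $\mu(\mQ)\colsparse/d$). That route is viable, and your estimates $v_1,v_2=\cO\bigl((r+d\,\mu(\mQ)/\colsparse)/k\bigr)$ do come out right, which is exactly where the requirement $\colsparse=\Omega(\tfrac dr\mu(\mQ)\log r)$ at $k=\cO(r)$ is pinned down; it costs you an external sharp Gaussian norm inequality that the paper's ``library of explicit models'' argument avoids, but it makes the role of the coherence more transparent.

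Two points to tighten if you execute this. First, your surrogate $v_2\le k^{-1}\bigl\|\E[\norm{\vz_1}_2^2\,\vz_1\vz_1^\top]-\Id_r\bigr\|_2$ is not an identity for the matrix variance $\norm{\E[\mG^2]}_2$ of the covariance-matching Gaussian: matching $\Var[\tr(\mM\mG)]$ on self-adjoint $\mM$ does not force $\E[\mG^2]=\E[\mW^2]$ (already for $\vz$ standard Gaussian the true value is $\approx 2r/k$ versus $\approx r/k$ from your formula). The orders agree here, so the conclusion survives, but you must either compute $\E[\mG^2]$ from the covariance form or, as the paper does, dominate by explicit models whose statistics are known. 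Second, \cref{impthm:comparison-iid-sum} as used in the paper carries no higher-moment ``correction term'': the only deduction from $\E[\lambda_{\min}(\mZ)]$ is the weak-variance fluctuation $\sqrt{2\sigma_*^2(\mZ)\log(2r/\delta)}$, which is your $v_1\log r$ term, so that part of your plan is simpler than you anticipate; what the theorem does require is domination ($\Var[\mX](\mM)\ge\Mom[\mW](\mM)$), which your exact matching trivially satisfies.
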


\Cref{sec:sparsecol-pf} contains a proof of 
\cref{thm:sparse-indep-cols-osi} based on
the Gaussian comparison theorem (\cref{sec:gaussian-compare}).
With this preparation, we can establish \cref{thm:sparsertt}.

\begin{proof}[Proof of \cref{thm:sparsertt}.]
    Introduce the orthonormal matrix
    $\mQ_{\rm F} \defeq (\mD\mF)^\top \mQ \in \F^{d \times k}$.
    \Cref{impthm:srft-coherence} states that this matrix has coherence
    $\mu(\mQ_{\rm F}) = \cO((r + \log d)/d)$
    with probability at least \(\nicefrac{39}{40}\).
    We construct a SparseCol matrix
    $\mS \in \F^{d \times k}$ with column sparsity
    $
    \colsparse
        = \cO\big( \big(1 + \frac{\log d}{r} \big) \cdot \log r \big)
        = \cO\big( \frac{d}{r} \mu(\mQ_{\rm F}) \cdot \log r \big).
    $
    According to \cref{thm:sparse-indep-cols-osi},
    this choice of $\xi$ is sufficient to ensure that
    the SparseRTT acts injectively on $\range(\mQ)$: 
    \[
    \sigma_{\min}^2(\mOmega^{\top} \mQ)
        = \sigma_{\min}^2((\mD \mF \mS)^\top \mQ)
        = \sigma_{\min}^2(\mS^{\top} \mQ_{\rm F})
        \geq \nicefrac12
    \]
    with probability at least \(\nicefrac{39}{40}\).
    Take a union bound over the two events to complete the argument.
\end{proof}

\section{Khatri--Rao test matrices and their applications} \label{sec:khatri-rao-top-level-section}

In recent years, we have witnessed an explosion of research
on linear algebra problems with \emph{tensor structure}
\cite{camano25,aldaas23,bharadwaj23,ahle20,bujanovic25,jin21,bamberger22,jin24,meyer23,meyer25,pham13,feldman22,ma22,rakhshan20,rakhshan22}.
For these problems, the simplest 
sketching techniques are grossly inefficient or entirely inapplicable.  Instead, we must resort to special test matrices that are compatible with the tensor structure.

\Cref{sec:kr-review} summarizes the construction of the Khatri--Rao
test matrix, and \cref{sec:khatri-rao-applications} describes some applications where tensor-structured sketching is essential.
As we explain in~\cref{sec:khatri-rao-overwhelming-orthogonality},
Khatri--Rao test matrices can degrade exponentially as the tensor order
increases.
\Cref{sec:khatri-rao-large-sketch,sec:khatri-rao-small-sketch}
present two theorems that quantify the exponential loss in
complementary ways.
Throughout, we will argue in favor of constructing Khatri--Rao test matrices with a \emph{spherical} base distribution.
We conclude with an application of this theory
to a matrix recovery problem in~\cref{sec:matrix-recovery}.

\subsection{Khatri--Rao test matrices}
\label{sec:kr-review}

Recall that the \emph{Kronecker matvec access model} \cite{meyer23,meyer25} provides a useful abstraction for a variety of tensor applications.  In this model, we can only access the input matrix $\mA$ via the matvec $\mA \vomega$ with a vector $\vomega$ that has \emph{Kronecker product} structure:
\begin{equation*}
    \vomega \coloneqq \vomega^{(1)} \otimes \cdots \otimes \vomega^{(\ell)}
    \hspace{1cm}
    \text{where}
    \hspace{1cm}
    \vomega^{(1)},\ldots,\vomega^{(\ell)} \in \F^{d_0}.
\end{equation*}
We call $d_0$ the \emph{base dimension}, and we call $\ell$ the \emph{tensor order}. 

In the Kronecker matvec access model, the natural choice of embedding is a \emph{Khatri--Rao test matrix} (\cref{def:kr-intro}).
For an isotropic random vector $\vnu \in \F^{d_0}$,
called the \emph{base distribution},
the columns of the random matrix $\mOmega$ are Kronecker products of iid copies of $\vnu$.  That is,
    \begin{equation} \label{eqn:kr-matrix-body}
    \mOmega \coloneqq \frac{1}{\sqrt{k}}\, \begin{bmatrix}
        \vertbar & & \vertbar \\
        \vomega_1 & \cdots & \vomega_k \\
        \vertbar & & \vertbar
    \end{bmatrix} \in \F^{d_0^\ell \times k}
    \quad
    \text{where}\quad
    \vomega_i\coloneqq\vomega_i^{(1)} \otimes \cdots \otimes \vomega_i^{(\ell)}
    \in\F^{d_0^\ell}
    \quad\text{and}\quad
    \text{$\vomega_i^{(j)} \sim \vnu$ iid.}
\end{equation}
Since the random vector $\vnu$ is isotropic,
each column $\vomega_i$ of the test matrix $\mOmega$ is isotropic,
and the test matrix $\mOmega$ satisfies the isotropy property
(\cref{def:osi}).
The technical challenge is to understand when the
test matrix~\eqref{eqn:kr-matrix-body} is a subspace injection.

We treat seven isotropic base distributions, three in $\R^{d_0}$ and four in $\C^{d_0}$.
In the real case \(\bbF=\bbR\), we consider vectors with iid standard normal and Rademacher entries, as well as spherical random vectors.
In the complex case \(\bbF=\bbC\), we consider the complex analogues of these three distributions, as well as vectors with iid Steinhaus entries.
See \cref{sec:prelims} for some background. %

These base distributions yield test matrices
that exhibit disparate performance,
both in theory and in practice.
For real arithmetic (\(\bbF=\bbR\)), we recommend using the real spherical distribution.
For complex arithmetic (\(\bbF=\bbC\)), we recommend the complex spherical distributions.
In support of these choices, we offer numerical evidence in \cref{sec:kr-rad-bad-numerics} and theoretical evidence in \cref{sec:khatri-rao-large-sketch,sec:khatri-rao-small-sketch}.

\subsection{Applications} \label{sec:khatri-rao-applications}

Khatri--Rao test matrices fill a different niche in the computational ecosystem, as compared with sparse test matrices and trigonometric transform test matrices.
Indeed, for certain applications, %
our best option is to use Khatri--Rao sketching.
Here are several examples.

\begin{enumerate}
\item   \textbf{Sylvester equations.}  For fixed matrices $\mB, \mC, \mY$, the linear equation $\mB \mX + \mX \mC = \mY$ in the matrix variable $\mX$ is called a \emph{Sylvester equation}.  Equivalently,
\[
\Vec(\mX) = \mA \cdot \Vec(\mY)
\quad\text{where}\quad
\mA \coloneqq (\mI \otimes \mB + \mC^{\mathsf{T}} \otimes \mI)^{-1}.
\]
The solution operator $\mA$ is often numerically low rank,
so we can approximate it by means of the generalized
Nystr{\"o}m method (\cref{alg:gen_nystrom_outer}).
To sketch the solution operator, we can solve the Sylvester
equation with (random) rank-one matrices $\mY$,
for example, via the fADI algorithm~\cite{benner09}.
The vectorization of a rank-one matrix has
tensor-product structure (with order $\ell = 2$),
so this is an instance of the Kronecker matvec access model.
For other applications of Khatri--Rao sketching
to matrix equations, see~\cite{bujanovic25}.

\item   \textbf{Tensor networks.}  When simulating entangled quantum systems, scientists often employ tensor networks to represent exponentially large matrices~\cite{orus19}.  The tensor network format is an instance of the Kronecker matvec access model, so Khatri--Rao sketching offers an efficient method for compressing tensor networks.
For examples,
see the recent papers~\cite{camano25,feldman22}.

\item   \textbf{CP decompositions.}  The canonical polyadic decomposition (CPD) is a powerful tool for analyzing tensor-structured data~\cite[Sec.~3.5]{kolda09}.
Algorithms for computing CPDs repeatedly solve overdetermined least-squares problems with tensor structure~\cite[Sec.~3.4]{kolda09}.
Existing randomized algorithms depend on leverage-score sampling~\cite{larsen22,bharadwaj23}.
Khatri--Rao sketching offers a natural alternative.
\end{enumerate}

To complement these %
fancy examples,  \cref{sec:matrix-recovery} describes an elementary matrix recovery problem that involves Khatri--Rao test matrices.
\Cref{sec:science-trace} highlights a scientific
application of Khatri--Rao test matrices for estimating the
partition function of a quantum-mechanical system.
Both applications benefit from a deeper theoretical understanding of tensor-structured sketches.

\subsection{Overwhelming orthogonality: When Khatri--Rao goes bad} \label{sec:khatri-rao-overwhelming-orthogonality}

\Cref{fig:rsvd-gaussian-vs-structured}, in the introduction, provides empirical evidence that Khatri--Rao test matrices serve almost as well as Gaussian test matrices for computing low-rank approximations
of sparse matrices that arise in applications.
Nevertheless, tensor-structured test matrices exhibit behaviors that
are fundamentally different from the constructions we studied before.
To explain why, let us describe a phenomenon called \emph{overwhelming orthogonality}~\cite{meyer25}.  Here is the most vivid example.

Consider the $2$-dimensional Rademacher base distribution
$\vnu \sim \textsc{uniform}\{(\pm 1, \pm 1)\} \in \R^2$.
Note that
\[
\prob\big\{ | \langle \vnu, \bm{1} \rangle | = 0 \big\}
    = \tfrac{1}{2}
    \quad\text{where $\bm{1} \coloneq (1,1)^\top \in \R^2$.}
\]
Form the tensor product $\vomega = \vomega^{(1)} \otimes \cdots \otimes \vomega^{(\ell)}$ where the factors $\vomega^{(i)} \sim \vnu$ iid.
For the vector of ones $\bm{1} \in \R^{2^{\ell}}$,
\[
\langle \vomega, \bm{1} \rangle
    = \big\langle \vomega^{(1)} \otimes \cdots \otimes \vomega^{(\ell)},
    \bm{1} \otimes \cdots \otimes \bm{1} \big\rangle
    = \prod_{i=1}^{\ell} \big\langle \vomega^{(i)}, \bm{1} \big\rangle.
\]
Note that the inner product equals zero when
any one of the
random vectors $\vomega^{(i)}$ is orthogonal to $\bm{1}$.
It follows that
the orthogonality probability is almost one:
\[
\prob\big\{ | \langle \vomega, \bm{1} \rangle | = 0 \big\}
    = 1 - 2^{-\ell}.
\]
Thus, to certify that a vector in $\R^{2^{\ell}}$ is nonzero,
it requires $\Omega(2^{\ell})$ tensor Rademacher measurements!

The same phenomenon persists for any isotropic base distribution $\vnu \in \F^{d_0}$ that is sometimes orthogonal to a fixed vector:
\begin{equation} \label{eqn:base-zero-prob}
\prob\big\{ \vert \langle \vnu, \va \rangle \vert = 0 \big \}
    = \tau > 0
    \quad\text{for some vector $\va \in \F^{d_0}$.}
\end{equation}
Define the tensor product vector $\vq \coloneqq \va \otimes \cdots \otimes \va$ and the tensor product distribution $\vomega = \vomega^{(1)} \otimes \cdots \otimes \vomega^{(\ell)}$ where the factors $\vomega^{(i)} \sim \vnu$ iid.  Then the measurement $\langle \vomega, \vq \rangle$
is overwhelmingly likely to equal zero unless we repeat
the measurement process an \emph{exponential} number of
times (with respect to the tensor order $\ell$).

Similar behavior occurs for any base distribution,
even when the orthogonality probability \eqref{eqn:base-zero-prob}
equals zero (rather than $\tau > 0$).
For example, suppose that the base distribution $\vnu \in \R^{d_0}$ is the real spherical distribution and that $\vq = \va \otimes \cdots \otimes \va$ for
any fixed unit vector $\va \in \R^{d_0}$.
Meyer \etal~\cite[Lem.~8]{meyer25} proved that \(| \langle \vomega, \vq \rangle| \leq \tau^\ell\) with probability at least \(1- \tau^{\ell}\) for some \( \tau \in (0,1)\).
As a consequence, Khatri--Rao embeddings have trouble
picking up information about vectors with Kronecker structure.
For these worst-case instances, we must anticipate an
exponential dependence on the tensor order $\ell$ in
either the embedding dimension $k$ or in the
injectivity parameter $\alpha$.

\begin{figure}[t!]
    \centering
\includegraphics[width=0.85\linewidth]{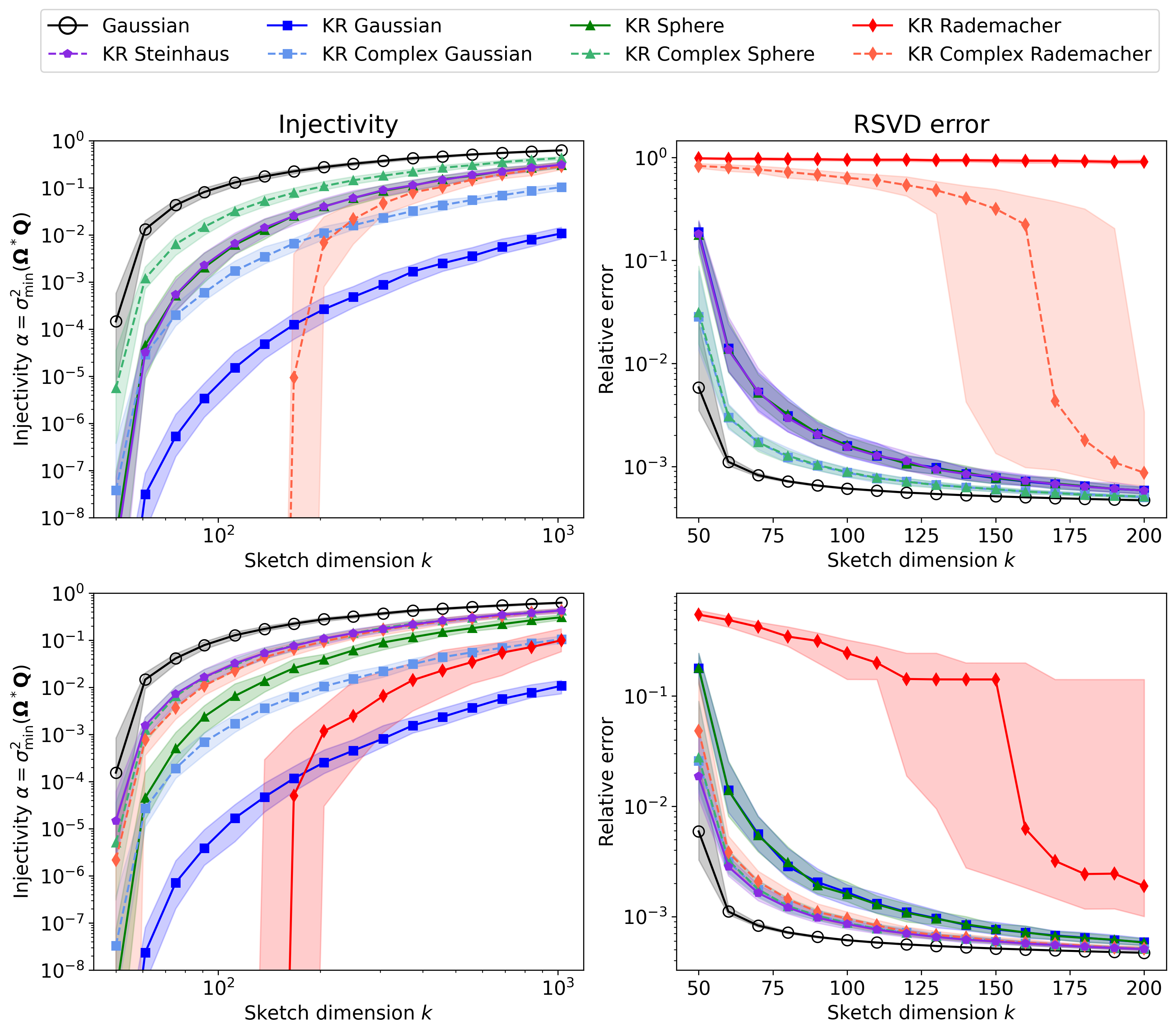}
\caption{\textbf{Khatri--Rao test matrices: Comparison of base distributions.} %
 (\emph{left}) Injectivity \(\alpha \coloneqq \sigma_{\rm min}^2(\mOmega^\top \mQ)\) and (\emph{right}) RSVD relative error \(\norm{\mA - \hat\mA}_{\rm F} / \norm{\mA}_{\rm F}\)
for Khatri--Rao test matrices~\eqref{eqn:kr-matrix-body} with several base distributions
where the base dimension $d_0 = 2$ and the tensor order $\ell = 10$;
see \cref{sec:kr-review}.
        The orthonormal matrix $\mQ \in \R^{2^{10} \times 50}$ consists of (\emph{top row}) the first 50 columns of the WHT matrix or (\emph{bottom row}) 50 orthonormalized Kronecker--Gaussian vectors.
        The RSVD algorithm is applied to the matrix
        $\mA = \mQ\mQ^\top + 10^{-4}\mI$.
        Markers track the median of 100 trials;
        the shaded region is bounded by the 10\% and 90\% quantiles.
        The Gaussian baseline comes from a full Gaussian test matrix (\cref{def:gauss-test}).
        In the top-left panel, the real Rademacher distribution is not visible because the median value of $\sigma_{\rm min}(\mOmega^\top \mQ) = 0$ for $k \le 10^3$.
    } %
    \label{fig:kr-hadamard}
\end{figure}

\subsubsection{Numerical demonstration} \label{sec:kr-rad-bad-numerics}

\Cref{fig:kr-hadamard} illustrates the problem of overwhelming orthogonality and compares several base distributions.
In this experiment, we fix an orthonormal matrix $\mQ \in \F^{d_0^{\ell} \times r}$ whose columns have Kronecker product structure.
The base dimension $d_0 = 2$, the tensor order $\ell = 10$,
and the subspace dimension $r = 50$.
We construct a Khatri--Rao embedding $\mOmega \in \F^{d_0^{\ell} \times k}$
with each of the seven base distributions $\vnu$.
The left panel charts the injectivity $\sigma_{\min}^2(\mOmega^\top \mQ)$
as a function of the embedding dimension $k$;
the right panel shows the relative error in the RSVD approximation
of the matrix $\mA = \mQ\mQ^\top + 10^{-4}\mI$.

Since the columns of the input matrix $\mQ$ have Kronecker structure,
overwhelming orthogonality manifests fiercely.
For all base distributions, observe that embedding dimension $k \gg r$ is required to achieve injectivity similar with that of a Gaussian test matrix.
Nevertheless, we can achieve competitive RSVD performance
using a Khatri--Rao embedding with the complex spherical base distribution.
We also observe that the injectivity is very sensitive to the base distribution.
In particular, the complex Rademacher distribution
fails to achieve injectivity $\alpha > 0$ until the embedding dimension $k \geq 200$, and the real Rademacher distribution fails catastrophically even when $k = 1000$.
Notably, the Steinhaus base distribution,
another complex generalization of the Rademacher,
avoids this failure mode entirely.

\subsection{First result: Khatri--Rao embeddings with constant injectivity} \label{sec:khatri-rao-large-sketch}

In this section, we take the first step toward understanding
the OSI properties of a Khatri--Rao test matrix.
We obtain a sufficient condition on the embedding dimension
$k$ to obtain a \emph{constant} bound for the
injectivity, say $\alpha \geq \tfrac{1}{2}$.
The result reflects the choice of base distribution,
and---because of overwhelming orthogonality---%
the embedding dimension depends exponentially
on the tensor order $\ell$.

\begin{theorem}[Khatri--Rao: Constant injectivity]
    \label{corol:kr-large-sketch-osi}
    Consider an isotropic base distribution $\vnu \in \F^{d_0}$
    that satisfies the fourth-moment bound
    \begin{equation} \label{eqn:kr-4th}
    \E \big[ |\langle \vnu, \va \rangle|^4 \big]
        \leq \rC_{\vnu}
    \quad\text{for each unit-norm vector $\va \in \F^{d_0}$.}
    \end{equation}
    For tensor order $\ell$, the Khatri--Rao test matrix $\mOmega \in \F^{d_0^\ell \times k}$ enjoys the $(r,\nicefrac12)$-OSI property
    for some embedding dimension $k = \cO({\rC}_{\vnu}^{\ell} r )$.
\end{theorem}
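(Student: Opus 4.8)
The plan is to obtain \cref{corol:kr-large-sketch-osi} from the fourth-moment criterion for the OSI property recorded in \cref{sec:fourth-to-second} (Oliveira's theorem): a random matrix with iid columns drawn from an isotropic vector whose linear marginals have fourth moment at most $h$ is an $(r,\nicefrac12)$-OSI at embedding dimension $k = \cO(hr)$. The Khatri--Rao test matrix $\mOmega$ of \cref{def:kr-intro} already has this iid-column form, with each column distributed as $\vomega = \vomega^{(1)} \otimes \cdots \otimes \vomega^{(\ell)}$ for iid copies $\vomega^{(j)} \sim \vnu$. Isotropy of $\vomega$ is immediate from the mixed-product property of the Kronecker product and independence:
\[
\E\bigl[\vomega\vomega^\top\bigr]
= \E\bigl[\vomega^{(1)}(\vomega^{(1)})^\top\bigr] \otimes \cdots \otimes \E\bigl[\vomega^{(\ell)}(\vomega^{(\ell)})^\top\bigr]
= \Id \otimes \cdots \otimes \Id = \Id .
\]
So everything reduces to showing that the linear marginals of $\vomega$ have fourth moment at most $h := \rC_{\vnu}^{\ell}$; then Oliveira's theorem yields $k = \cO(\rC_{\vnu}^{\ell} r)$, as claimed.

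The heart of the argument is the moment bound
\[
\E\bigl[\,\lvert\langle\vomega,\va\rangle\rvert^4\,\bigr] \le \rC_{\vnu}^{\ell}\,\norm{\va}_2^4
\qquad\text{for every }\va\in\F^{d_0^{\ell}},
\]
which I would prove by induction on the tensor order $\ell$. The base case $\ell=1$ is precisely the hypothesis \eqref{eqn:kr-4th} (by homogeneity it suffices to treat unit vectors). For the inductive step I treat the real case; the complex case is identical once conjugates are tracked carefully in the reshaping, using that $\overline{\vnu}$ also satisfies \eqref{eqn:kr-4th} with the same constant. Reshape $\va \in \F^{d_0^\ell}$ into a matrix $\mA \in \F^{d_0 \times d_0^{\ell-1}}$ whose first index is matched to the first tensor leg, and set $\vomega_{\ge 2} := \vomega^{(2)} \otimes \cdots \otimes \vomega^{(\ell)}$, so that $\langle\vomega,\va\rangle = \langle\vomega^{(1)}, \vc\rangle$ where $\vc := \mA\,\vomega_{\ge 2}$ depends only on the last $\ell-1$ factors. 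Conditioning on $\vomega_{\ge 2}$ and applying \eqref{eqn:kr-4th} to the unit vector $\vc/\norm{\vc}_2$ (then rescaling) gives
\[
\E\bigl[\,\lvert\langle\vomega,\va\rangle\rvert^4\,\bigr]
\le \rC_{\vnu}\cdot\E\bigl[\,\norm{\vc}_2^4\,\bigr]
= \rC_{\vnu}\cdot\E\bigl[\,(\vomega_{\ge 2}^\top \mM\, \vomega_{\ge 2})^2\,\bigr],
\qquad \mM := \mA^\top\mA,
\]
a psd matrix with $\tr\mM = \norm{\mA}_{\rm F}^2 = \norm{\va}_2^2$.

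To finish, I would diagonalize $\mM = \sum_k \lambda_k \vu_k\vu_k^\top$ (with $\lambda_k \ge 0$ and $\vu_k$ orthonormal), so that $\vomega_{\ge 2}^\top\mM\,\vomega_{\ge 2} = \sum_k \lambda_k \langle\vomega_{\ge 2},\vu_k\rangle^2$ is a nonnegative combination, and bound
\[
\E\bigl[(\vomega_{\ge 2}^\top\mM\,\vomega_{\ge 2})^2\bigr]
= \sum_{k,l}\lambda_k\lambda_l\,\E\bigl[\langle\vomega_{\ge 2},\vu_k\rangle^2\langle\vomega_{\ge 2},\vu_l\rangle^2\bigr]
\le \rC_{\vnu}^{\ell-1}\Bigl(\sum_k\lambda_k\Bigr)^{\!2}
= \rC_{\vnu}^{\ell-1}(\tr\mM)^2,
\]
where the inequality uses Cauchy--Schwarz on each cross term together with the inductive hypothesis applied to the unit vectors $\vu_k$ (note $\vomega_{\ge 2}$ is itself a Khatri--Rao vector of order $\ell-1$ built from the same base distribution). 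Combining the last two displays gives $\E[\lvert\langle\vomega,\va\rangle\rvert^4] \le \rC_{\vnu}^{\ell}\norm{\va}_2^4$, completing the induction and hence the proof.

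The content of this argument is the reshape-and-condition induction and the psd moment estimate in the final display; neither step is deep, so the main obstacle is bookkeeping: carrying the conjugates through the reshaping in the complex field, and verifying that the eigenbasis/Cauchy--Schwarz computation does not lose a dimension-dependent factor (it does not, because $\mM$ is psd, so $\vomega_{\ge 2}^\top\mM\vomega_{\ge 2}$ is a \emph{nonnegative} combination of the marginals $\langle\vomega_{\ge 2},\vu_k\rangle^2$ with weights summing to $\tr\mM$). Everything else is a direct appeal to the imported fourth-moment OSI theorem.
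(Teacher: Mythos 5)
Your proposal is correct, and its top-level route is the same as the paper's: both reduce the theorem to Oliveira's fourth-moment criterion (\cref{impthm:fourth-to-second}), applied to the iid Kronecker-product columns of $\mOmega$, whose isotropy follows exactly as you say from the mixed-product property. The only real difference is how the tensor fourth-moment bound $\E[|\langle\vomega,\va\rangle|^4]\le \rC_{\vnu}^{\ell}$ is obtained: the paper simply imports it from Meyer \& Avron (\cref{impthm:meyer-avron}), whereas you re-derive it by a reshape-and-condition induction on $\ell$, passing from the vector bound at order $\ell-1$ to the quadratic-form bound $\E[(\vomega_{\ge 2}^\top\mM\,\vomega_{\ge 2})^2]\le \rC_{\vnu}^{\ell-1}(\tr\mM)^2$ via diagonalization and termwise Cauchy--Schwarz. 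That step is sound and loses no dimension-dependent factor (it is an elementary substitute for the extreme-point argument the paper sketches when verifying the equivalence of \cref{eqn:ma-vector} and \cref{eqn:ma-matrix}), and your remark about the complex case is also right: after reshaping, the conditioning vector is $\mA\,\overline{\vomega_{\ge 2}}$, and $\overline{\vnu}$ inherits both isotropy and the bound \eqref{eqn:kr-4th} with the same constant, so the induction applies verbatim. In short, your argument is a self-contained proof of the imported moment lemma plus the same final appeal to the fourth-moment OSI theorem; what it buys is independence from the citation, at the cost of redoing a known computation.
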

\noindent 
The proof of~\cref{corol:kr-large-sketch-osi}
appears in~\cref{sec:fourth-to-second}.
The argument depends on the fourth-moment theorem
of Oliveira~\cite[Thm.~1.1]{oliveira16}
and variance bounds for Kronecker quadratic forms due to
Meyer \& Avron \cite{meyer23}.
For the embedding dimension columns, we omit \(\cO(\cdot)\) symbols for visual clarity.

\begin{table}[t]
\centering
\begin{tabular}{@{}cllrr@{}} \toprule
    & & & \multicolumn{2}{c}{Embedding dimension \(k = \cO(\cdots)\)}  \\ \cmidrule(r){4-5}
	Field ($\F$) & Base distribution ($\vnu$) & $\rC_{\vnu}$ & General \(d_0\) & \(d_0=2\) \\ \midrule
	\bbR
        &\(\cN(\vec0,\mI)\)
        & \(3\)
		& \(3^\ell r\)
		& \(3^\ell r\)
		\\[0.25em]
	&Real Rademacher
        & \(3-\frac2{d_0}\)
		& \hspace{2pc}\((3-\frac2{d_0})^\ell r\)
		& \(2^\ell r\)
		\\[0.25em]
	&Real spherical
		& \(3-\frac{6}{d_0+2}\)
		& \((3-\frac{6}{d_0+2})^\ell r\)
		& \(1.5^\ell r\)
		\\[0.25em]
	\bbC
        &\(\cN_{\C}(\vec0,\mI)\)
		& \(2\)
		& \(2^\ell r\)
		& \(2^\ell r\)
		\\[0.25em]
	&Complex Rademacher
		& \(2-\frac1{d_0}\)
		& \((2-\frac1{d_0})^\ell r\)
		& \(1.5^\ell r\)
		\\[0.25em]
	&Steinhaus
		& \(2-\frac1{d_0}\)
		& \((2-\frac1{d_0})^\ell r\)
		& \(1.5^\ell r\)
		\\[0.25em]
	&Complex spherical
		& \(2-\frac{2}{d_0+1}\)
		& \((2-\frac{2}{d_0+1})^\ell r\)
		& \(\smash{1.\overline{33}}^\ell r\)
		\\[0.25em]
		\bottomrule
\end{tabular}
\caption{
    \textbf{Base distributions: Fourth-moment bounds,
    \protect{\cite[Tab.~1]{meyer23}}.}
    The exact fourth-moment constant \(\rC_{\vnu}\) %
    for several base distributions $\vnu \in \F^{d_0}$, including standard normal distributions, Rademacher distributions, the Steinhaus distribution, and the uniform distribution on the sphere of radius $\sqrt{d_0}$.  The last two columns list the embedding dimension $k$ that suffices to obtain an $(r, \nicefrac{1}{2})$-OSI, where the $\cO$ is omitted for legibility.
    See~\cref{corol:kr-large-sketch-osi} for the definition of $\rC_{\vnu}$ and \cref{sec:kr-review} for the definitions
    of the distributions.  %
}
\label{table:KR-constants}
\end{table}

Meyer \& Avron~\cite{meyer23} have calculated
the exact value of the fourth-moment constant
$\rC_{\vnu}$ for each of several base distributions;
\cref{table:KR-constants} summarizes their results.
The relative ordering of the fourth-moment constants
is wholly consistent with the numerical
experiments on injectivity in~\cref{fig:kr-hadamard}.

In contrast, the RSVD error
induces a different ordering of the base distributions
in \cref{fig:kr-hadamard}.  The reason for the discrepancy
is that RSVD (\cref{alg:rsvd})
and psd Nystr{\"o}m (\cref{alg:nystrom}) only
depend on the range of the test matrix $\mOmega$,
so Gaussian and spherical base distributions result
in the same performance in exact arithmetic.
On the other hand, sketch-and-solve (\cref{alg:sketch-and-solve}) and 
generalized Nystr{\"o}m (\cref{alg:gen_nystrom_outer}) are sensitive
to the scaling of columns of the test matrix $\mOmega$,
so spherical distributions are the best choice
for general-purpose use.

In quantum science, we frequently
work in the regime where %
the base dimension $d_0 = 2$ and the tensor order
$\ell$ is very large~\cite{orus19,feldman22}.
In this case, real standard normal vectors have the
worst predicted embedding dimension bound $k = \cO(3^{\ell} r)$,
and the complex spherical distribution has the
best embedding dimension bound $k = \cO(\smash{1.\overline{33}{}^\ell r})$.
It is remarkable that changes in the scalar field
and the base distribution can lead to exponential
changes in the embedding dimension of a Khatri--Rao test matrix
(and the downstream cost of linear algebra algorithms).

\subsection{Second result: Khatri--Rao embeddings with proportional embedding dimension}
\label{sec:khatri-rao-small-sketch}

\Cref{corol:kr-large-sketch-osi} requires the embedding
dimension $k$ of a Khatri--Rao test matrix
to scale exponentially with the tensor order $\ell$.
This cost is acceptable for small $\ell$,
but we can rarely afford the exponential
scaling when $\ell$ is large.
Instead, we would prefer to employ a Khatri--Rao test matrix
whose embedding dimension is proportional to the subspace
dimension: $k = \cO(r)$, without dependence on $\ell$.

To achieve this goal, we must avoid base distributions
that have a zero orthogonality probability,
as in~\eqref{eqn:base-zero-prob}.
Therefore, we restrict our attention to continuous
base distributions that have an anticoncentration property.
The resulting Khatri--Rao embedding achieves the
linear scaling of the embedding dimension with
the subspace dimension.
Nevertheless, as predicted by overwhelming orthogonality,
we must accept an exponentially small injectivity parameter.

\begin{theorem}[Khatri--Rao: Proportional embedding dimension]
    \label{corol:kr-small-sketch}
    Consider a \emph{real}, isotropic base distribution
    $\vnu \in \R^{d_0}$
    whose entries are independent, continuous random variables,
    with densities uniformly bounded by $M$.
    The associated Khatri--Rao embedding
    $\mOmega \in \R^{d_0^\ell \times k}$
    serves as an $(r, \alpha)$-OSI
    with injectivity $\alpha = (\rC M)^{-2\ell}$
    at an embedding dimension $k = \cO(r)$.
    The same result holds with $M = 1$ when the base distribution \(\vnu\) is isotropic and log-concave,
    or when $\vnu$ is the real
    spherical distribution.
\end{theorem}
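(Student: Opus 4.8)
The plan is to verify the Khatri--Rao matrix is an OSI through the small-ball (anti-concentration) route of \cref{sec:small-ball,sec:intro-osi-is-easier-than-ose}. The isotropy condition of \cref{def:osi} is automatic, since each column of $\mOmega$ is a tensor product of independent isotropic vectors; for injectivity, the columns are iid copies of $\vomega \coloneqq \vomega^{(1)} \otimes \cdots \otimes \vomega^{(\ell)}$, so it suffices to produce, uniformly over unit vectors $\vu \in \R^{d_0^\ell}$, a small-ball bound of the form
\[
    \prob\bigl\{\, \abs{\langle \vomega, \vu \rangle}^2 \le \tau_\ell \,\bigr\} \le 1 - \theta_\ell,
    \qquad \tau_\ell = (\rC M)^{-2\ell},\quad \theta_\ell = 2^{-\ell},
\]
with $\rC$ depending only on $d_0$. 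Feeding this into the Koltchinskii--Mendelson-type estimate behind \cref{sec:small-ball} produces an $(r,\alpha)$-OSI with $\alpha$ of order $\theta_\ell^2 \tau_\ell = (\rC M)^{-2\ell}$ (after enlarging $\rC$) at embedding dimension $k = \cO(\theta_\ell^{-2} r) = \cO(r)$, the tensor order $\ell$ being held fixed so that its contribution folds into the implied constant. The statement for log-concave and spherical base distributions follows from the same argument with $M$ replaced by a universal constant, using that one-dimensional marginals of an isotropic log-concave vector (\resp of the real spherical distribution with $d_0 \ge 3$) have universally bounded densities, and that a marginal of the $d_0 = 2$ real spherical distribution obeys $\prob\{\abs{\langle \vnu, \va \rangle} \le t\} \le \rC \sqrt{t}$ for unit $\va$ --- still enough for the recursion below.

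The small-ball bound I would prove by induction on the tensor order, peeling one factor at a time. For $\ell = 1$, independence of the entries of $\vomega^{(1)}$ and the density bound $M$ give that $\langle \vomega^{(1)}, \vu \rangle$ has density at most $M/\norm{\vu}_\infty \le M\sqrt{d_0}$, so $\prob\{\abs{\langle \vomega^{(1)}, \vu \rangle} \le \epsilon_1\} \le \tfrac12$ once $\epsilon_1 = (4M\sqrt{d_0})^{-1}$. For the inductive step, fix a unit vector $\vu \in \R^{d_0^\ell}$ and split it into $d_0$ blocks $\vs_1,\dots,\vs_{d_0} \in \R^{d_0^{\ell-1}}$, so that $\sum_i \norm{\vs_i}^2 = 1$; write $\vomega^{\mathrm{rest}} \coloneqq \vomega^{(2)} \otimes \cdots \otimes \vomega^{(\ell)}$, an order-$(\ell-1)$ Khatri--Rao column independent of $\vomega^{(1)}$. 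Then $\langle \vomega, \vu \rangle = \langle \vomega^{(1)}, \vb \rangle$, where $\vb \in \R^{d_0}$ has coordinates $b_i = \langle \vs_i, \vomega^{\mathrm{rest}} \rangle$. Letting $i^\star$ index a block of maximal norm ($\norm{\vs_{i^\star}} \ge d_0^{-1/2}$), we have $\norm{\vb} \ge \norm{\vs_{i^\star}} \cdot \abs{\langle \vs_{i^\star}/\norm{\vs_{i^\star}}, \vomega^{\mathrm{rest}} \rangle}$, and $\vs_{i^\star}/\norm{\vs_{i^\star}}$ is a fixed unit vector, so the induction hypothesis gives $\norm{\vb} \ge \epsilon_{\ell-1}/\sqrt{d_0}$ with probability at least $\theta_{\ell-1}$. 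Conditionally on any value of $\vomega^{\mathrm{rest}}$ with $\norm{\vb}$ this large, the density bound applied to $\vomega^{(1)}$ gives $\prob\{\abs{\langle \vomega^{(1)}, \vb \rangle} > \epsilon_\ell\} \ge \tfrac12$ whenever $\epsilon_\ell \le \norm{\vb}/(4M\sqrt{d_0})$, so the choice $\epsilon_\ell \coloneqq \epsilon_{\ell-1}/(4Md_0)$ works. Integrating over $\vomega^{\mathrm{rest}}$,
\[
    \prob\bigl\{\, \abs{\langle \vomega, \vu \rangle} > \epsilon_\ell \,\bigr\} \;\ge\; \tfrac12\, \theta_{\ell-1},
\]
which yields $\theta_\ell = 2^{-\ell}$ and $\epsilon_\ell = (4Md_0)^{-\ell}$ up to the base constant; setting $\tau_\ell = \epsilon_\ell^2 = (\rC M)^{-2\ell}$ closes the induction.

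The hard part is the step controlling the \emph{lower} tail of the contracted vector $\vb$ --- showing $\norm{\vb}$ is not too often near zero. This is exactly where overwhelming orthogonality (\cref{sec:khatri-rao-overwhelming-orthogonality}) bites: for worst-case Kronecker-structured slices $\vs_{i^\star}$, the quantity $\norm{\vb}$ really can be as small as $c^\ell$ with probability bounded away from $0$, which is what forces the exponentially small injectivity $\alpha = (\rC M)^{-2\ell}$ and the fixed-$\ell$ dependence hidden inside $k = \cO(r)$. The device that lets the induction go through with only a constant-factor loss per level (one factor $\rC M$ in the scale, one factor $2$ in the success probability) is to discard all but the single dominant block $\vs_{i^\star}$ and invoke the order-$(\ell-1)$ hypothesis for that one direction; a cruder control of $\norm{\vb}$ --- say via its second moment and Paley--Zygmund --- would only guarantee $\norm{\vb}^2 \gtrsim 1$ with exponentially small probability and would collapse the recursion into a super-exponentially small $\alpha$, which must be avoided.
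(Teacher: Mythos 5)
There is a genuine gap, and it sits exactly where you place the ``hard part.'' Your induction peels one tensor factor per step and loses a constant factor in \emph{probability} each time, so the small-ball statement you actually prove is $\prob\{\abs{\langle\vomega,\vu\rangle} > \epsilon_\ell\} \geq 2^{-\ell}$ at scale $\epsilon_\ell = (\rC M)^{-\ell}$. Feeding a success probability $\theta_\ell = 2^{-\ell}$ into the VC/small-ball machinery forces the deviation term $\sqrt{(r/k)\log(\e k/r)}$ to be smaller than $\theta_\ell$, i.e.\ $k = \Omega(4^{\ell} r)$. Declaring this to be $\cO(r)$ ``because $\ell$ is fixed'' misreads the theorem: the entire content of \cref{corol:kr-small-sketch}, in contrast with \cref{corol:kr-large-sketch-osi}, is that the embedding dimension is proportional to $r$ with a \emph{universal} constant, independent of $\ell$ (the paper invokes \cref{thm:osi-from-small-ball}, which needs only $k \geq \max\{3r,131\}$), and that \emph{all} of the exponential degradation is confined to the injectivity $\alpha$. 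With an $\ell$-exponential factor hidden in $k$, your conclusion is weaker than what is claimed and offers nothing beyond the constant-injectivity regime.

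The missing ingredient is the Hu--Paouris small-ball theorem (\cref{thm:hupao}), which is the paper's key external input: at the exponentially small scale $(\rC M)^{-\ell}$, the small-ball probability is at most $\tfrac{1}{100}$ \emph{uniformly over all unit vectors} $\vu \in \R^{d_0^\ell}$ --- a constant, not $1 - 2^{-\ell}$. Your closing heuristic, that a factor-of-two probability loss per level is the best one can do, is precisely what their result refutes: overwhelming orthogonality forces the \emph{scale} $\tau$ to shrink like $\mathrm{c}^{\ell}$ (consistent with \cite[Lem.~8]{meyer25}), but it does not force the failure probability at that scale to approach one, and a factor-by-factor conditioning argument that multiplies per-level success probabilities cannot see this. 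Given \cref{thm:hupao}, the paper's proof is a two-line combination with \cref{thm:osi-from-small-ball} (plus, for the spherical case, the dilation argument from the log-concave uniform-on-the-ball distribution, rather than marginal density bounds as you propose). So your overall route (isotropy is automatic; reduce injectivity to a uniform small-ball bound; apply the VC argument) matches the paper's, but the elementary induction you substitute for Hu--Paouris is quantitatively too weak to yield $k = \cO(r)$ with an $\ell$-independent constant, which is the theorem's point.
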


\noindent 
We prove \cref{corol:kr-small-sketch} using a version of the small-ball method~\cite[Exer.~8.27]{Ver25:High-Dimensional-Probability-2ed}. %
The argument requires small-ball probability bounds
for tensor products due to Hu \& Paouris \cite{hu2024small}.
See~\cref{sec:small-ball} for details.

In particular, a Khatri--Rao embedding with a real standard normal
base distribution is an $(r,\alpha)$-OSI with injectivity
$\alpha = \rC^{-\ell}$ at an embedding dimension $k = \cO(r)$.
Recall that the randomized linear algebra algorithms (\cref{sec:randnla-via-osi}) produce errors that are within a $\mathrm{poly}(\alpha^{-1})$ factor of optimal.  As a consequence, Gaussian Khatri--Rao embeddings with $k = \mathcal{O}(r)$ can yield errors that are suboptimal by an exponential factor $\rC^{\ell}$.
This debility is visible in \cref{fig:kr-hadamard} (left) when
the embedding dimension $k \approx r = 50$.

Observe that discrete base distributions,
such as Rademacher distributions, do not
meet the hypotheses of \cref{corol:kr-small-sketch}.
This limitation reflects a genuine failure mode
of the Khatri--Rao embedding, arising from
the overwhelming orthogonality phenomenon
outlined in \cref{sec:khatri-rao-overwhelming-orthogonality}.
For a discrete base distribution supported on a finite
number of points, %
the associated Khatri--Rao test matrix %
can \emph{never} achieve a proportional embedding dimension
$k = \cO(r)$ that is independent of the tensor order $\ell$.
See~\cite[Thm.~19]{meyer25} for confirmation of this point.

In practice, we can often avoid the  exponential scaling
of the Khatri--Rao embedding dimension
or the injectivity parameter. %
First, some applications of Khatri--Rao sketching
involve small tensor order $\ell$.
For example, we have $\ell = 2$
for Sylvester equations (\cref{sec:khatri-rao-applications})
and for the matrix recovery problem discussed
below (\cref{sec:matrix-recovery}).
Second, even when $\ell$ is large,
Khatri--Rao embeddings only exhibit
exponential scaling for worst-case subspaces.
Indeed, \cref{fig:rsvd-gaussian-vs-structured}
already provides clear evidence that the RSVD
with a Khatri--Rao embedding can reliably
approximate matrices arising from applications.

\subsection{Application: Approximate matrix recovery from bilinear queries} \label{sec:matrix-recovery}

In recent years, \emph{matrix recovery problems} have captured the imagination of many researchers in computational mathematics~\cite{amsel24,halikias24,chen25a,boulle23,boulle24,amsel25}.
Matrix recovery problems ask us to find a near-optimal approximation of an
input matrix $\mB \in \F^{n \times n}$ from within a class $\cF \subseteq \F^{n \times n}$ of structured matrices:
\begin{equation} \label{eqn:mtx-recovery}
\text{Find $\oldtilde\mB \in \F^{n \times n}$ : }
\quad
\norm{\mB - \tilde\mB}_{\rm F}
    \le \Gamma \cdot \min_{\mM \in \cF}\ \norm{\mB - \mM}_{\rm F}.
\end{equation}
The parameter $\Gamma \geq 1$ is a factor of suboptimality.
Examples of structured classes $\cF$ include patterned matrices
(e.g., circulant, Toeplitz, Hankel)
or matrices with a fixed bandwidth. %

In many setting, we can only access the target matrix
$\mB \in \F^{n \times n}$ via a restricted class of queries.
For example, in the \emph{matvec access model}, the algorithm can interact with the matrix by extracting the product $\vx \mapsto \mB \vx$ with a test vector $\vx \in \F^n$.
The goal is to produce the approximation $\oldtilde{\mB}$ after performing the minimum number of queries.  This challenge arises in operator learning problems~\cite{boulle23,boulle24}.

In this paper, we consider the problem of approximating a matrix from a \emph{linearly parameterized matrix family}~\cite[Def.~2.1]{halikias24}.
More precisely, the class $\cF$ is a $d$-dimensional linear subspace spanned by a linearly independent family of matrices $\mM_1, \dots, \mM_d \in \F^{n \times n}$:
\begin{equation} \label{eqn:lin-mtx-fam}
    \cF \coloneqq \operatorname{span} \{ \mM_1,\ldots,\mM_d \}
    \subseteq \F^{n \times n}.
\end{equation}
We study the \emph{bilinear access model}, where we interact with the
input matrix $\mB \in \F^{n \times n}$ by evaluating bilinear %
forms $(\vx,\vy) \mapsto \vy^{\sf T} \mB \vx$ for input vectors $\vx, \vy \in \F^{n}$ of our choice.
Let us emphasize that the bilinear form involves the plain transpose,
\emph{without} a conjugate; this choice reflects the fact that
tensor products are \emph{multilinear}.
This access model provides an abstraction for resource-constrained
operator learning problems where we can only test the operator
by applying it to elements of the computational basis,
measuring the output in another computational basis.

A simple dimension-counting argument shows that $d$ bilinear queries are necessary and sufficient to perfectly reconstruct a target matrix $\mB \in \cF$ that belongs to the class.
For an \emph{arbitrary} target matrix $\mB \in \F^{n \times n}$,
we can prove that $\cO(d)$ bilinear queries suffice
to find a near-optimal approximation $\oldtilde{\mB} \in \cF$
from the structured class.
This result is a consequence of our OSI guarantees for Khatri--Rao embeddings.

\begin{theorem}[Approximate matrix recovery from bilinear queries] \label{thm:matrix-recovery}
    Let $\cF$ be a \(d\)-dimensional linearly parametrized matrix family,
    as in~\eqref{eqn:lin-mtx-fam}, and let \(\mB\in\bbF^{n \times n}\) be any matrix.
    With $p = \cO(d)$ bilinear queries, \cref{alg:matrix-recovery}
    returns a matrix \(\tilde\mB \in \F^{n \times n}\) that satisfies
    \begin{equation*}
    \norm{\mB - \tilde\mB}_{\rm F}
        \le \rC \cdot
        \min_{\mM \in \cF}\ \norm{\mB - \mM}_{\rm F}
    \quad\text{with probability at least $\tfrac{9}{10}$.}
    \end{equation*}
\end{theorem}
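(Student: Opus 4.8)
The plan is to reduce matrix recovery to a least-squares problem in the vectorized variables and to recognize that the bilinear access model furnishes exactly a Khatri--Rao sketch of the vectorized target matrix. First, I would vectorize: writing $\vb \coloneqq \Vec(\mB) \in \F^{n^2}$ and $\mM \coloneqq [\Vec(\mM_1) ~ \cdots ~ \Vec(\mM_d)] \in \F^{n^2 \times d}$, the best approximation of $\mB$ from the family $\cF$ corresponds to the least-squares problem $\min_{\vx \in \F^d} \norm{\mM \vx - \vb}_2^2$; its optimal residual equals $\min_{\mM' \in \cF} \norm{\mB - \mM'}_{\rm F}$ by the isometry $\norm{\cdot}_2 = \norm{\Vec^{-1}(\cdot)}_{\rm F}$. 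So the task is an overdetermined least-squares problem with design matrix $\mM$ of width $d$ and height $n^2$, and we want an approximate solution using only $\cO(d)$ bilinear queries.

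Second, I would observe that a bilinear query $(\vx, \vy) \mapsto \vy^{\sf T} \mB \vx$ is precisely the inner product $\langle \Vec(\mB), \vx \otimes \vy \rangle$ (up to the usual identification $\vy^{\sf T}\mB\vx = \langle \mB, \vy\vx^{\sf T}\rangle$ and $\Vec(\vy\vx^{\sf T}) = \vx\otimes\vy$), so a batch of $p$ queries with random test vectors $\vx_i, \vy_i$ applied to \emph{both} $\mB$ and each $\mM_j$ amounts to sketching from the left by $\mPsi^{\sf T}$, where $\mPsi \in \F^{n^2 \times p}$ is a Khatri--Rao test matrix (\cref{def:kr-intro}) with tensor order $\ell = 2$ and base dimension $d_0 = n$. \Cref{alg:matrix-recovery} then runs sketch-and-solve (\cref{alg:sketch-and-solve}) on the sketched data $\mPsi^{\sf T}\mM$ and $\mPsi^{\sf T}\vb$, which costs exactly $p = \cO(d)$ bilinear queries on $\mB$ (and on each fixed $\mM_j$, which are known). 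Invoking \cref{thm:lsq-via-osi}, it suffices that $\mPsi$ be a $(d, \alpha)$-OSI with $\alpha$ bounded below by a universal constant, since the suboptimality factor is $\rC/\alpha$.

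Third, I would supply the OSI guarantee for $\mPsi$. Here I would use \cref{corol:kr-large-sketch-osi} (equivalently \cref{item:kr-intro-large-sketch} of \cref{thm:khatri-rao-intro}) with the recommended complex (or real) spherical base distribution: because $\ell = 2$ is fixed, the exponential factor $\rC_{\vnu}^{\ell}$ collapses to a constant, so $\mPsi$ is a $(d, \nicefrac12)$-OSI at embedding dimension $p = \cO(\rC_{\vnu}^2 d) = \cO(d)$. The isotropy property of $\mPsi$ is inherited from the isotropy of the base distribution, as noted after \cref{def:kr-intro}. Combining this with \cref{thm:lsq-via-osi} yields, with probability at least $\tfrac{9}{10}$, an approximate solution $\tilde\vx$ with $\norm{\mM\tilde\vx - \vb}_2^2 \le (\rC/\alpha)\cdot\min_{\vx}\norm{\mM\vx - \vb}_2^2 = \rC' \cdot \min_{\mM'\in\cF}\norm{\mB - \mM'}_{\rm F}^2$; undoing the vectorization and taking square roots gives the claimed bound with $\tilde\mB \coloneqq \Vec^{-1}(\mM\tilde\vx)$, which is the orthogonal projection onto $\cF$ of the sketch-and-solve reconstruction and hence lies in $\cF$.

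The main obstacle I anticipate is bookkeeping rather than a genuine mathematical difficulty: one must be careful that a single bilinear query $(\vx_i, \vy_i)$ on $\mB$ simultaneously provides the $i$th coordinate of the sketch $\mPsi^{\sf T}\vb$ while the corresponding sketches $\mPsi^{\sf T}\Vec(\mM_j)$ are computed offline (no queries), so the query budget is exactly $p$; and that the transpose-without-conjugate convention in the bilinear form matches the (multilinear, hence plain-transpose) Kronecker structure of $\mPsi$'s columns, so that $\mPsi$ is genuinely the Khatri--Rao matrix of \cref{def:kr-intro} and not some conjugated variant. A secondary point is confirming that \cref{alg:matrix-recovery} indeed returns the projection of $\mM\tilde\vx$ onto $\cF$ — but since $\mM\tilde\vx \in \range(\mM) = \Vec(\cF)$ automatically, $\tilde\mB$ is already in $\cF$ and no projection step is needed.
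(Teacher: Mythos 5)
Your proposal is correct and follows essentially the same route as the paper's proof: vectorize the problem as an overdetermined least-squares problem, recognize the batch of bilinear queries as an order-$\ell=2$ Khatri--Rao sketch of the vectorized data (modulo the plain-transpose/conjugation bookkeeping you flag, which the paper handles by applying the OSI result to the entrywise conjugate of the test matrix), and combine \cref{thm:lsq-via-osi} with \cref{corol:kr-large-sketch-osi} at constant tensor order to get $p=\cO(d)$. The only nit is that \cref{alg:matrix-recovery} draws iid standard normal vectors, so the OSI guarantee should be invoked for the Gaussian base distribution ($\rC_{\vnu}=3$) rather than the spherical one you suggest; since $\ell=2$ is fixed, this changes nothing in the argument.
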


\begin{algorithm}[t]
    \caption{Approximate matrix recovery from bilinear queries}
    \label{alg:matrix-recovery}
    \begin{algorithmic}[1]
        \Require Target matrix $\mB \in \bbF^{n \times n}$ accessible by bilinear queries, matrices $\mM_1,\ldots,\mM_d$ defining a linearly parametrized matrix family $\cF$, as in~\eqref{eqn:lin-mtx-fam},
        number $p\ge d$ of bilinear queries 
        \Ensure Near-optimal approximation $\tilde\mB \in \cF$ to the target matrix $\mB$
        \State Generate iid standard normal vectors $\vx_1,\ldots,\vx_p,\vy_1,\ldots,\vy_p \sim \cN_{\F}(\bm{0}, \mI)$ in $\F^n$
        \State Form the matrix $\mF\in\bbF^{p\times d}$ with entries $\smash{f_{ij} = \vy_i^{\sf T}\mM_j^{\vphantom{\sf T}}
        \vx_i^{\vphantom{\sf T}}}$
        \State Form the vector $\vg \in \bbF^p$ with entries $\smash{g_i = \vy_i^{\sf T} \mB^{\vphantom{\sf T}} \vx_i^{\vphantom{\sf T}}}$
            \Comment{Bilinear queries!}
        \item $\oldtilde\vx \gets \mF \setminus \vg$ \Comment{Stable least-squares solver}
        \item $\tilde\mB \gets \sum_{j=1}^d \oldtilde x_j \mM_j$
        \Comment Approximation to target matrix
    \end{algorithmic}
\end{algorithm}

To the best of our knowledge, no one has investigated
the approximate matrix recovery problem
in the bilinear access model.
Thus, we believe \cref{thm:matrix-recovery} is the first result
to achieve a nontrivial query complexity for this problem.
Using the best existing OSE result for
Khatri--Rao embeddings~\cite[Thm.~5]{bujanovic25},
we can only attain the weaker query complexity
bound $p = \cO(d^{3/2})$.
The concurrent research~\cite{saibaba2025improved}
yields an improvement,
but it still falls short of the optimal query complexity.

\begin{proof}
    The matrix recovery problem~\eqref{eqn:mtx-recovery} is as an overdetermined linear least-squares problem.  Define %
    \begin{equation*}
        \mA \coloneqq \begin{bmatrix}
    \vertbar & \vertbar &        & \vertbar \\
    \Vec (\mM_1)    & \Vec (\mM_2)    & \cdots & \Vec (\mM_d)    \\
    \vertbar & \vertbar &        & \vertbar 
        \end{bmatrix} \quad\text{and}\quad
        \vec{b} \coloneqq \Vec(\mat{B}).
    \end{equation*}
    Given an input matrix $\mB$, the best Frobenius-norm approximation $\mB_{\star}$ from the subspace $\cF$ admits a representation
    as a linear combination of the basis matrices:
    $\mB_{\star} = \sum_{j=1}^d x_j^\star \mM_j$.
    The vector $\vx^{\star} \coloneqq (x_1^\star, \dots, x_d^{\star})$
    of optimal coefficients is obtained as the (unique) solution to the least-squares problem
    \begin{equation*}
        \min_{\vx \in \bbF^d}\ \norm{\smash{\vb - \mA \vx}}_2
            = \min_{\vx \in \bbF^d}\ \norm{ \mB - \sum_{j=1}^d x_j \mM_j }_{\rm F}
            = \min_{\mM \in \cF}\ \norm{\mB - \mM}_{\rm F}.
    \end{equation*}

    Construct the Khatri--Rao test matrix $\mOmega \in \F^{n^2 \times k}$
    whose columns $(\vx_i \otimes \vy_i : i = 1, \dots, k)$ are tensor products of independent standard normal random vectors.
    According to~\cref{corol:kr-large-sketch-osi}
    and \cref{table:KR-constants},
    (the entrywise complex conjugate of)
    this test matrix
    satisfies the $(d,\nicefrac{1}{2})$-OSI property
    for some $k = \mathcal{O}(d)$.
    
    \Cref{alg:matrix-recovery} attacks the vectorized least-squares problem
    by sketching $\mF = \mOmega^{\sf T} \mA$
    and $\vg = \mOmega^{\sf T} \vb$.
    It returns the approximate solution
    $\oldtilde\vx = (\mOmega^{\sf T} \mA)^\dagger (\mOmega^{\sf T} \vg)$.
    The resulting matrix solution $\oldtilde{\mB} = \sum_{j=1}^d \oldtilde{x}_j \mM_j$ satisfies
    \begin{equation*}
        \norm{\mB - \tilde\mB}_{\rm F}
        = \norm{\smash{\vb - \mA \oldtilde\vx}}_2
        \le \rC\cdot \min_{\vx \in \bbF^d}\ \norm{\smash{ \vb - \mA \vx}}_2
        = \rC \cdot \min_{\mM \in \cF}\ \norm{\mB - \mM}_{\rm F}
        = \rC \cdot \norm{\smash{\mB - \mB_{\star}}}_{\rm F}.
    \end{equation*}
The inequality follows from our analysis of sketch-and-solve with an OSI (\cref{thm:lsq-via-osi}).
\end{proof}

\section{Scientific applications}
\label{sec:science-applications}

We present two scientific applications of the structured test matrices studied in this paper.
In \cref{sec:science-pod-modes}, we employ the SparseStack test matrix for compression and analysis of scientific simulation data.
This approach produces an accurate approximation of the input data
faster than baseline algorithms that use Gaussian test matrices.
We apply this methodology to data from a simulation of Bose--Einstein condensates in condensed matter physics.

In \cref{sec:science-trace}, Khatri--Rao test matrices allow us to estimate the \emph{partition function} of a large quantum-mechanical system, a key quantity that describes the system's thermodynamic behavior.
The tensor structure of the quantum Hamiltonian
motivates us to deploy algorithms based on tensor-structured sketches.
We can use Khatri--Rao test matrices to implement several
variance-reduced trace estimation algorithms~\cite{meyer2021hutch++,persson22,epperly24trace}
that include a low-rank approximation stage.
For the quantum systems that we study, these
approaches achieve much better accuracy than
simpler trace estimation methods.

Our experiments were run on the UC Berkeley SCF cluster, using 64 cores of a dual-socket AMD EPYC 7543 (2.8 GHz) with 500 GB of memory. All experiments were run in MATLAB v24.1
Our code is publicly available at
\textbf{https://github.com/chriscamano/OSI-and-Structured-Random-Matrices}.
\begin{figure}[t]
    \centering
    \includegraphics[width=1\linewidth]{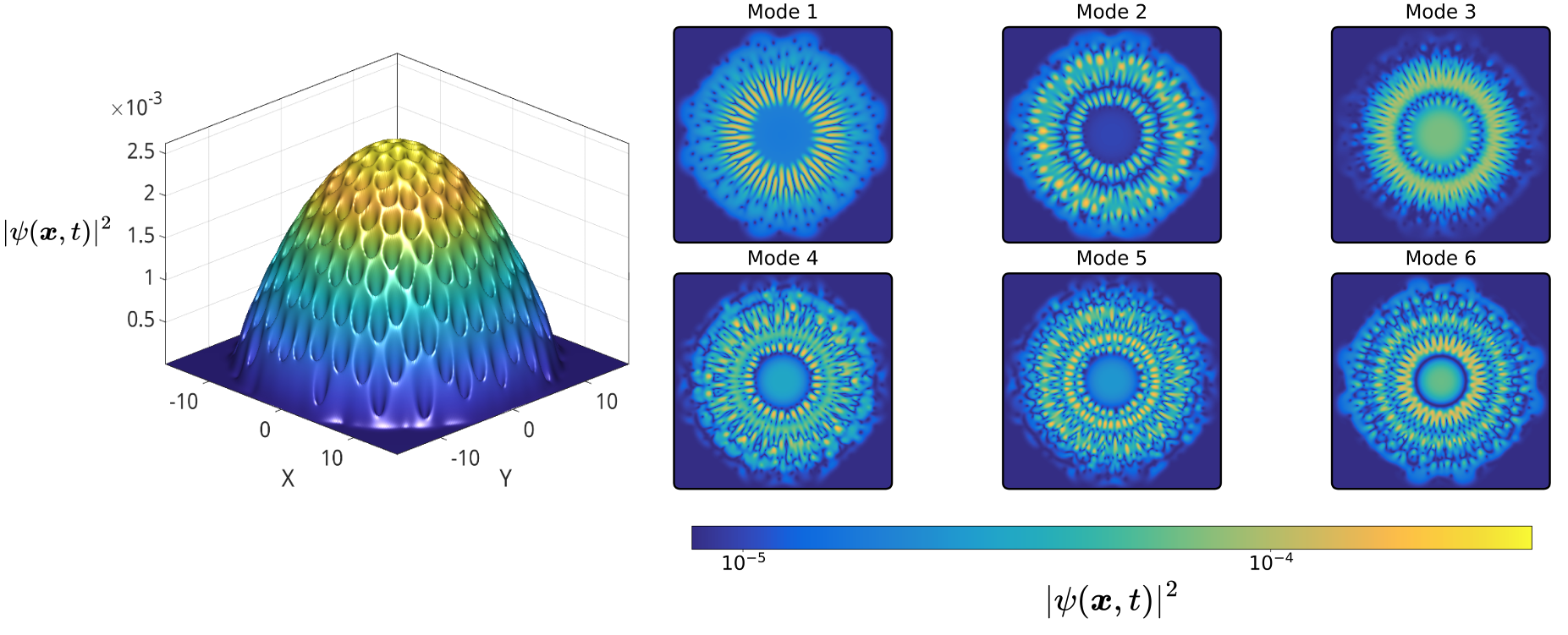}
    \caption[]{\textbf{Dynamics of imaginary-time Gross--Pitaevskii equation}.
    (\textit{left})
        Ground-state density $|{\psi}_{\text{final}}(\mathbf{x},t)|^2$ of a rotating Bose--Einstein condensate obtained by imaginary-time integration of the Gross--Pitaevskii equation.
    (\textit{right})
        The first six POD modes of the mean centered snapshot matrix $\mA$, computed using generalized Nystr\"om SVD (\cref{alg:gen_nystrom_svd}) with SparseStack test matrices (\cref{def:sparse-stack}).
        The POD modes display the dominant dynamics of quantized vortices forming within the BEC as it evolves towards a ground state.  See \cref{app:gross-pitaevskii} for more details.}
    \label{fig:gross-pitaevskii}
\end{figure}

\subsection{Application 1: Scientific data compression}
\label{sec:science-pod-modes}

In our first application, we use the SparseStack test matrix (\cref{def:sparse-stack}) with the generalized Nystr\"om algorithm (\cref{alg:gen_nystrom_svd}) to compress and analyze matrices of scientific simulation data.
Our experiments document up to a \textbf{12$\times$ speedup}
relative to algorithms with Gaussian test matrices
(\cref{def:gauss-test}).

Here is the basic setup.
Consider a dynamical system $\vx(t)\in\bbF^n$, and assemble a data matrix 
\begin{equation*}
    \mA \coloneqq \begin{bmatrix}
        \vertbar & \vertbar &        & \vertbar \\
        \vx(t_1) & \vx(t_2) & \cdots & \vx(t_d)    \\
        \vertbar & \vertbar &        & \vertbar 
    \end{bmatrix} \in \bbF^{n\times d}
\end{equation*}
by collating the states of the system at $d$ equally spaced times $t_1,\ldots,t_d$.
The snapshots $\vx(t_i)$ are typically obtained as outputs of a numerical differential equation solver,
although they could also be measurements from experiments.
We are interested in two tasks:
\begin{enumerate} \setlength{\itemsep}{0pt}
    \item \textbf{Compression.}
    To reduce the storage cost for the data matrix,
    we can compute and store a rank-$k$ approximation to $\mA$, where $k \ll \min\{n,d\}$.
    \item \textbf{POD modes.}
    To understand the dynamics, we can compute the $r$ dominant left singular vectors of $\mA$, which are known as the \textit{proper orthogonal decomposition} (POD) modes in this setting.
    The POD modes help us visualize the ``important structures'' in the dynamical system, and they provide a convenient basis for reduced-order modeling \cite{dawson23}.
\end{enumerate}
For both tasks, we can invoke the generalized Nystr\"om approximation (\cref{sec:gen-nystrom-via-osi}).
For compression, we use \cref{alg:gen_nystrom_outer}, which returns an approximation in outer-product form $\hat\mA = \mF\mG^\top$.
For the POD modes, we need the output in SVD form,
which is provided by \cref{alg:gen_nystrom_svd}.
\Cref{alg:gen_nystrom_outer,alg:gen_nystrom_svd} have essentially the same \emph{asymptotic runtime}, but the former is significantly faster because it avoids costly orthogonalization steps.
As an aside, the same methodology applies in the \emph{streaming setting}~\cite{tropp19}, where the snapshots $\vx(t_i)$ are delivered sequentially,
processed immediately, and then discarded.
Our paper does not include experiments with streaming data.
\begin{figure}[t]
    \centering
\includegraphics[width=.85\linewidth]{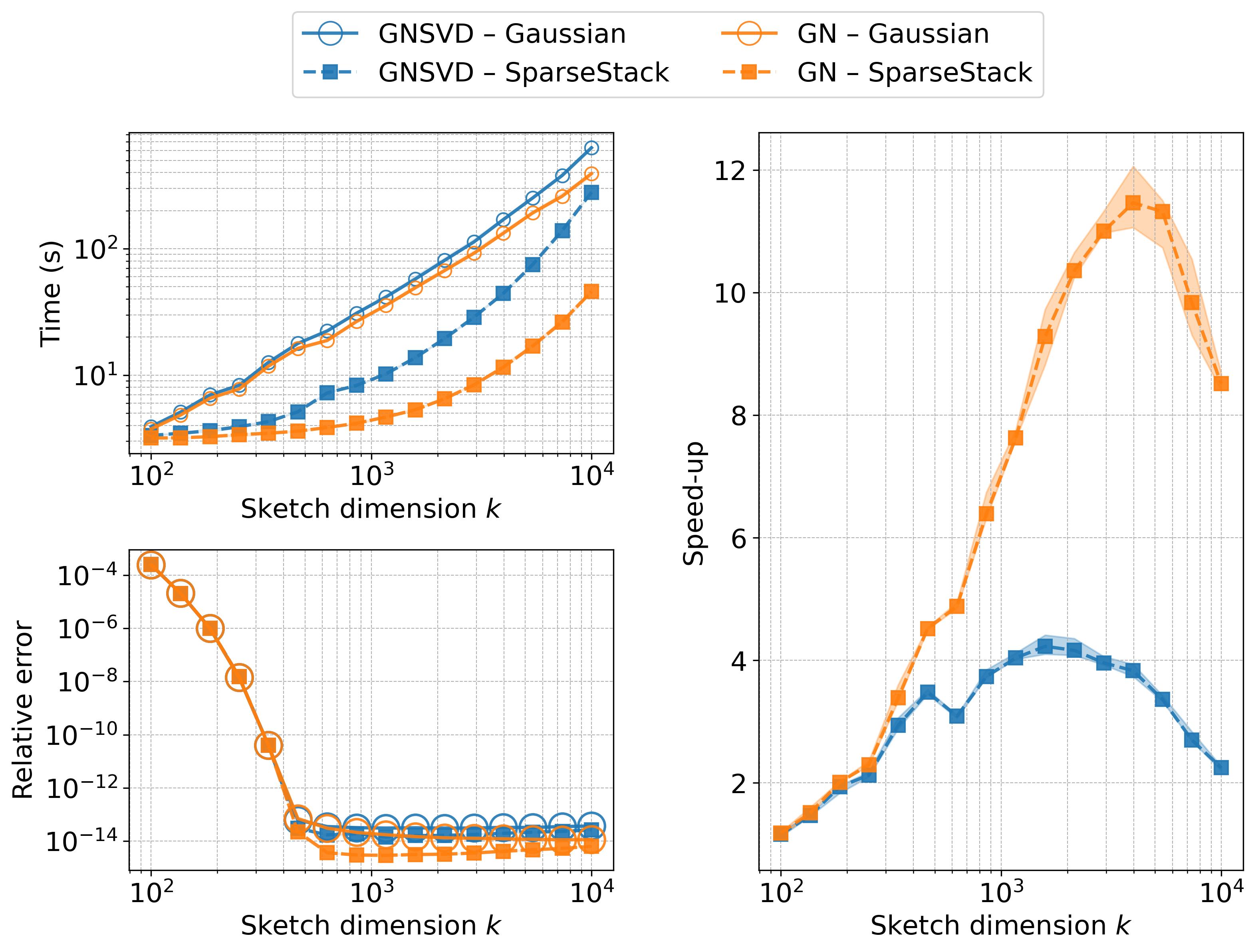}
\caption{\textbf{SparseStack: Scientific data compression.}  Relative error $\norm{\mA - \hat\mA}_{\rm F} / \norm{\mA}_{\rm F}$ (\emph{bottom left}), runtime  (\emph{top left}), and speed-up (\emph{right}) for compression of Gross--Pitaeksii (GPE) simulation data of size \(131{,}072\times 40{,}000\) using %
the generalized Nyström method (\cref{alg:gen_nystrom_outer}, GN, orange) or the generalized Nyström SVD (\cref{alg:gen_nystrom_svd}, GNSVD, blue) with Gaussian (solid circles, 
\cref{def:gauss-test}) or SparseStack test matrices (dashed squares, \cref{def:sparse-stack}).
We report the median of five runs; shaded regions are bounded by the 10\% and 90\% quantiles.}
  \label{fig:scientific-compression}
\end{figure}

To demonstrate this approach, we apply it to data from the simulation of a rotating Bose--Einstein condensate (BEC), a quantum superfluid that exhibits quantum phenomena at macroscopic scales \cite{anderson95}.
To compute the ground state of this system, condensed matter physicists often solve the imaginary-time Gross--Pitaevskii equation,
a nonlinear partial differential equation whose solutions converge to a \emph{ground-state} in the large-time limit.
These ground states, visualized in the left panel of \cref{fig:gross-pitaevskii}, are known to display complex dynamics, such as quantized vortices whose cores are spatial regions where the probability density for the BEC vanishes.
These vortices are the ``holes'' visible in this plot.

Here is a summary of the experimental setup;
see \Cref{app:gross-pitaevskii} for details.
We simulate the imaginary-time Gross--Pitaevskii equation for $d = 40{,}000$ timesteps on a discretized $256\times 256$ grid,
resulting in a data matrix $\mA\in\bbC^{256^2\times 40{,}000}$.
Following standard practice, we center each snapshot so that
each column of $\mA$ has zero mean. %
We perform both the data compression and POD-mode tasks using generalized Nystr\"om approximation (\cref{alg:gen_nystrom_outer} or \ref{alg:gen_nystrom_svd}) with SparseStack test matrices
with row sparsity $\zeta = 4$.

The right-hand panels in~\cref{fig:gross-pitaevskii}
contain visualizations of the top $6$ POD modes (left singular vectors).
These POD modes reveal symmetric density patterns in the condensate, isolating the dominant dynamics that emerge during imaginary-time evolution.  They reflect %
the formation of quantized vortices in the BEC.

\Cref{fig:scientific-compression} charts
the errors and runtime of this method. 
As compared to a baseline experiment using Gaussian test matrices (open circles), the right panel shows that SparseStack test matrices (closed squares) improve runtimes by \textbf{up to $\boldsymbol{12\times}$ for compression (orange) and $\boldsymbol{4\times}$ for computation of POD modes (blue)}.
Absolute runtimes appear in the top left panel.
The bottom left panel confirms that sparse test matrices
result in no loss of accuracy.

\subsection{Application 2: Approximating the partition function
of a quantum system}
\label{sec:science-trace}

In our second application, we exploit
Khatri--Rao test matrices to approximate
the partition function of a quantum-mechanical system.
For this computation, we apply variance-reduced
trace estimators to a tensor-structured matrix.
As compared with earlier Khatri--Rao sketching methods,
our approach improves the accuracy by six to twelve orders
of magnitude---with minimal increase in runtime.

Let \(\mH\in\bbR^{2^\ell \times 2^\ell}\) denote the (symmetric) Hamiltonian matrix for a system of $\ell$ interacting quantum particles.
We are interested in computing the \emph{partition function} \(Z(\beta)\defeq \tr(\exp(-\beta\mH))\) of the system, where \(\beta > 0\) is a parameter called the \emph{inverse temperature}.
The partition function is a central quantity in statistical mechanics that reflects the properties of a quantum system in thermal equilibrium.
Because the dimension of $\mH$ is exponential in the number of particles $\ell$, direct computation with $\mH$ rapidly becomes infeasible.

Fortunately, most quantum Hamiltonians natively possess tensor structure, and $\mH$ can often be represented using a compact tensor or tensor network format \cite{orus19}.
In particular, these tensor (network) formats allow us to access the matrix $\mH$ via Kronecker matvecs, enabling us to efficiently sketch $\mH$ with Khatri--Rao test matrices of order $\ell$.
Moreover, we can apply iterative methods to approximate tensor-structured matvecs with \(\mA\coloneqq\exp(-\beta\mH)\) \cite{al11,chen24}.
Thus, computing the partition function $Z(\beta)$ amounts to estimating
\(\tr(\mA)\) using a small number of Kronecker matvecs.

\begin{figure}[t]
    \centering
    \includegraphics[width=1\linewidth]{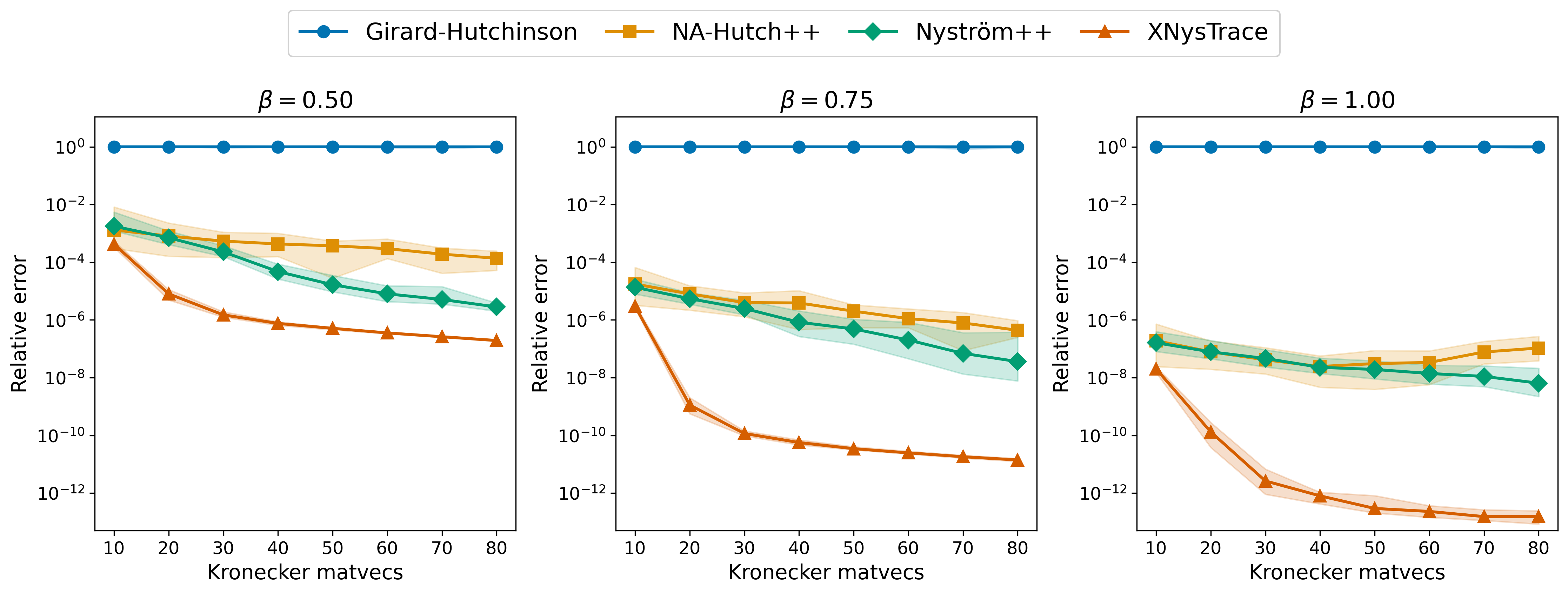}
    \caption{\textbf{Khatri--Rao test matrices: Estimating a partition function}.
    Relative error $|Z(\beta)- \widehat{Z}(\beta)|/Z(\beta) $ for estimates $\widehat{Z}(\beta)$ of the partition function $Z(\beta)=\tr(\exp(-\beta\mH))$ computed using several stochastic trace estimators with real spherical Khatri--Rao test matrices of tensor order $\ell=16$.
    We examine three values of the inverse temperature $\beta$.
    The markers track the median of 30 trials, and shaded regions are bounded by the 10\% and 90\% quantiles.}
    \label{fig:kr-trace}
\end{figure}

The most basic trace estimation algorithm is the \emph{Girard--Hutchinson trace estimator} \cite{girard87, hutchinson89}:
\[
    H_t(\mA) \defeq \tr(\mOmega^\top\mA\mOmega)
    \quad\text{where}\quad
    \text{$\mOmega$ is an isotropic test matrix with embedding dimension $t$.}
\]
This formula yields an unbiased estimate of $\tr(\mA)$,
and it requires $t$ matvecs with $\mA$.
This estimator %
is unbiased for any isotropic test matrix \mOmega and requires $t$ matrix--vector products with \mA to compute.
This estimator has been applied to partition function estimation in the past with ordinary (i.e., non-Kronecker) matvecs %
\cite{motta20,bravyi21,shen24}.
These earlier works do not fully exploit the tensor structure,
so the methods are limited to small tensor order $\ell$.
In order to reach larger values of $\ell$,
we can take $\mOmega$ to be a random Khatri--Rao test matrix.
The Girard--Hutchinson estimator $H_t$ with a Khatri--Rao test
matrix provably converges as the parameter $t$ increases,
but the convergence is very slow for large \(\ell\) \cite{feldman22,meyer23}, requiring an egregious number of matvecs to attain even one digit of accuracy; see \cref{fig:kr-trace}.

A recent line of work studies \emph{variance-reduced} algorithms for trace estimation, which augment the Girard--Hutchinson trace estimator with a low-rank approximation stage.
This strategy is exemplified by the \nahutchpp estimator \cite{meyer2021hutch++}, which returns
\[
    \nahutchpp(\mA)
    \defeq \tr(\hat\mA) + H_{t/2}(\mA - \hat\mA).
\]
In this expression, $\hat\mA$ is a rank-\(t/6\) generalized Nystr\"om approximation of $\mA$, computed with test matrices $\mOmega \in \bbF^{n\times (t/6)}$ and with $\mPsi \in \bbF^{n\times (t/3)}$.
The algorithm expends $t$ matvecs to apply $\mA$ and $\mA^*$.

The variance of the NA-Hutch++ estimator equals \(\Var[H_{t/2}(\mA-\hat\mA)]\), which tends to be much smaller than the variance
of the Girard--Hutchinson trace estimator
when the input matrix $\mA$ is close to a low-rank matrix.
Fortunately, this is the case for quantum Hamiltonians,
where the eigenvalues of the matrix \(\mA = \exp(-\beta\mH)\)
often decay rapidly.

As we showed in \cref{sec:khatri-rao-small-sketch},
Khatri--Rao test matrices can serve as OSIs,
so they offer guarantees for low-rank approximation,
and they can be used to implement variance-reduced trace estimators,
such as \nahutchpp.
In the worst case, Khatri--Rao test matrices increase errors by a factor exponential in the tensor order $\ell$.
Even so, we can still hope that variance-reduced trace estimators with Khatri--Rao test matrices perform well in practice.

\Cref{fig:kr-trace} puts this hope to the test.
We deploy several variance-reduced trace estimators
using real spherical Khatri--Rao test matrices with tensor order $\ell$.
We estimate the partition function for the transverse field Ising model \cite{pfeuty1970}, a 1D quantum system whose Hamiltonian is given as 
\begin{equation*}
    \mH=-\sum_{i=1}^\ell \mZ_i \mZ_{\operatorname{mod}(i,\ell)+1}-h \sum_{i=1}^\ell \mX_i \in \mathbb{R}^{2^\ell \times 2^\ell}\quad \text{ with } \ell=16, \: h=10.
\end{equation*}
Here, $\mX_i$ and $\mZ_i$ denote Pauli operators acting on the $i$th site and $h$ is a coupling coefficient.
For simplicity, this experiment is restricted to the moderate value $\ell = 16$,
and our code works with an explicit representation of $\mH$ as a sparse matrix.
We test three variance-reduced trace estimators that are applicable to the Kronecker matvec access model, including NA-Hutch++, Nystr\"om++ \cite{persson22}, and XNysTrace \cite{epperly24trace}.
\Cref{app:trace-est} contains further implementation details.

\Cref{fig:kr-trace} shows the results.
We see that variance-reduced estimators significantly outperform the Girard--Hutchinson estimator when using the same number of matvecs with the input matrix.
\textbf{On the most extreme example, variance-reduced trace estimators attain accuracy $\boldsymbol{10^{-12}}$, while the Kronecker Girard--Hutchinson estimator fails to achieve even a single digit of accuracy.}
Additionally, we see that \nahutchpp, which uses generalized Nystr\"om, has consistently worse error than the other two variance-reduced estimators, which rely on psd Nystr\"om.
In the right subplot, we even see the error in \nahutchpp\ %
\emph{increase} with the number of matvecs.
For partition function estimation, the overall evidence suggests that variance-reduced trace estimators with Khatri--Rao test matrices can be extremely effective.

\section{The OSI property: Proof strategies}
\label{sec:osi-proof-strats}

In \cref{sec:randnla-via-osi}, we saw that several randomized algorithms for low-rank approximation and linear regression are effective when implemented with OSI test matrices.  To complete the design of these algorithms, we must develop strategies for establishing the OSI property.

\Cref{prop:osi-spec} states that the OSI property admits
an equivalent spectral formulation.
As a consequence, we can
establish the OSI by producing lower bounds on the minimum
singular value of the test matrix, compressed to a subspace.
The literature on high-dimensional probability contains several
powerful techniques for bounding minimum singular values.
This section elaborates on these technical methods,
and it derives OSI guarantees for each of the major classes
of test matrices
discussed in this paper (SparseStack, SparseRTT, Khatri--Rao).

\Cref{sec:fourth-to-second} describes a method based on fourth-moment
bounds, which we use to identify Khatri--Rao test matrices
with constant injectivity.
\Cref{sec:small-ball} introduces the small-ball method,
which allows us to prove that certain Khatri--Rao test matrices
are subspace injections at an embedding dimension \(k=\cO(r)\).
\Cref{sec:gaussian-compare} outlines the Gaussian comparison method,
which lets us analyze sparse test matrices.
Last, \cref{sec:matrix-concentration} comments on some
matrix concentration tools that may be relevant for
other problems.

\subsection{Methodology 1: Fourth-moment bounds}
\label{sec:fourth-to-second}

The first strategy deduces the injectivity property
from a moment comparison assumption.
This theorem was established by Oliveira~\cite[Thm.~1.1]{oliveira16}
using the PAC-Bayesian method.  Tropp~\cite[Thm.~5.2]{tropp25} recently
obtained an alternative proof based on Gaussian comparison.

\begin{importedtheorem}[OSI via fourth moment bound, \protect{\cite[Thm.~5.2 and Rem.~5.1]{tropp25}}]
    \label{impthm:fourth-to-second}
    Let $\vomega \in \F^{d}$ be an isotropic random vector.
    Suppose that the second and fourth moments of all linear marginals of $\vomega$
    are comparable up to a constant factor $h_{4} \geq 1$.  That is,
    \begin{equation} \label{eqn:fourth-second-moment}
        \E\big[|\langle \vomega, \vu\rangle|^4 \big]
            \leq h_{4} \cdot
            \big(\E \big[|\langle\vomega,\vu\rangle|^2 \big]\big)^2
            = h_{4} %
        \hspace{1cm}
        \text{for each unit-norm vector $\vu\in\F^d$}.
    \end{equation}
    Consider the random test matrix 
    \begin{equation*}
        \mOmega = \frac{1}{\sqrt{k}} \begin{bmatrix}
            \vomega_1 & \cdots & \vomega_k
        \end{bmatrix} \in \F^{d\times k}
        \quad\text{where the columns $\vomega_{j} \sim \vomega$ iid.}
    \end{equation*}
    Then $\mOmega$ satisfies the $(r, \alpha)$-OSI property
    when the embedding dimension $k \geq \rC \cdot (1-\alpha)^{-2} \cdot h_{4} r$.
\end{importedtheorem}

The equivalence~\eqref{eqn:fourth-second-moment} of all second and fourth moments is a rather strong assumption.
It requires the random vector $\vomega$ to have controlled variance in all directions.
For example, standard Gaussian and standard Rademacher vectors satisfy the comparison~\eqref{eqn:fourth-second-moment} with $h_{4} = 3$.
On the other hand, for a sparse Rademacher random vector in $\F^d$ with $\xi$ nonzero entries, $h_{4} \geq d / \xi$, so \cref{impthm:fourth-to-second} gives pessimistic bounds for the embedding dimension of sparse test matrices.

\subsubsection{Application: Khatri--Rao embeddings with constant injectivity}

As an example of the fourth-moment theorem, we can establish \cref{corol:kr-large-sketch-osi}, which states that Khatri--Rao test matrices are OSIs
when the embedding dimension is exponential in the order $\ell$ of the
tensor product.  In fact, \cref{corol:kr-large-sketch-osi} follows immediately when we combine \cref{impthm:fourth-to-second} with the following result of Meyer \& Avron~\cite{meyer23}.

\begin{importedtheorem}[Random Kronecker products: Fourth-moment bounds, \protect{\cite[Thm.~4]{meyer23}}] \label{impthm:meyer-avron}
Consider an isotropic random vector $\vnu \in \F^{d_0}$ that satisfies the moment bound
\[
\E \big[ |\langle \vnu, \va \rangle|^4 \big]
    \leq \rC_{\vnu} %
\quad\text{for each unit-norm vector $\va \in \F^{d_0}$.}
\]
Construct the Kronecker product $\vomega \coloneqq \vomega^{(1)} \otimes \dots \otimes \vomega^{(\ell)} \in \F^{d_0^\ell}$ where the factors $\vomega^{(i)} \sim \vnu$ are iid.  Then
\begin{equation} \label{eqn:ma-vector}
\E \big[ |\langle \vomega, \vu\rangle|^4 \big]
    \leq \rC_{\vnu}^{\ell} %
    \quad\text{for each unit-norm vector $\vu \in \F^{d_0^\ell}$.}
\end{equation}
Equivalently,
\begin{equation} \label{eqn:ma-matrix}
\E \big[ (\vomega^\top\mM\vomega)^2 \big]
    \leq \rC_{\vnu}^{\ell} \cdot (\tr(\mM))^2
    \quad\text{for each psd $\mM \in \F^{d_0^\ell \times d_0^\ell}$.}
\end{equation}
\end{importedtheorem}

To recover the original formulation~\cite[Thm.~4]{meyer23}
of \cref{impthm:meyer-avron}, we confirm the equivalence of
\cref{eqn:ma-vector} and \cref{eqn:ma-matrix}.
By choosing a rank-one matrix $\mM = \vu \vu^\top$,
it is clear that the matrix formulation~\eqref{eqn:ma-matrix}
implies the vector formulation~\eqref{eqn:ma-vector}.
For the converse, rewrite~\eqref{eqn:ma-matrix} as
\[
    \sup_{\substack{\tr(\mM) \leq 1, \text{$\mM$ psd}}}\
    \E\bigl[ (\vomega^\top \mM \vomega)^2 \bigr] \leq \rC_{\vnu}^{\ell}.
\]
A convex function on a compact set attains its supremum
at an extreme point.  In this case, each extreme point
is a rank-one matrix $\mM = \vu \vu^{\top}$ where
$\vu$ is a unit-norm vector.

For a random Kronecker product vector, \cref{impthm:meyer-avron} indicates that the moment comparison factor $h_{4} = \rC_{\vnu}^\ell$ scales exponentially with the tensor order $\ell$.  As a consequence, our result \cref{corol:kr-large-sketch-osi} provides an $(r, \nicefrac{1}{2})$-OSI guarantee for a Khatri--Rao test matrix at some embedding dimension $k = \Theta( \rC_{\vnu}^{\ell} r )$, while it offers no information about the test matrix when $k = o( \rC_{\vnu}^{\ell} r)$.

This limitation reflects a genuine phenomenon.
Consider a Khatri--Rao test matrix $\mOmega \in \R^{2^{\ell} \times k}$, generated from a $2$-dimensional Rademacher base distribution $\vnu$ on $\R^2$.
For this base distribution,
the comparison factor $\rC_{\vnu} = 2$,
and \cref{corol:kr-large-sketch-osi} demands
an embedding dimension $k = \cO( 2^\ell r )$
to achieve an $(r, \nicefrac{1}{2})$-OSI.
The discussion in \cref{sec:khatri-rao-overwhelming-orthogonality}
implies that we must choose the embedding dimension $k = \Omega(2^{\ell})$
to achieve a nontrivial OSI guarantee; see also~\cite[Thm.~4]{meyer25}.
Confirming these predictions, our empirical work (\cref{sec:kr-rad-bad-numerics}) provides clear evidence that the Rademacher Khatri--Rao
test matrix fails catastrophically when $k = \Theta(r)$ and the tensor order \(\ell\) is large.

\subsection{Methodology 2: Small-ball probabilities}
\label{sec:small-ball}

We must exploit an alternative structure to obtain guarantees for Khatri--Rao test matrices with proportional embedding dimension $k = \cO(r)$.
To that end, we introduce an injectivity theorem based on \emph{the small-ball method}.
This strategy was originally developed by Mendelson and coauthors~\cite{Men14:Remark-Diameter,mendelson2015learning,koltchinskii2015bounding,Men18:Learning2}.
The version here %
is adapted from several other sources: a result~\cite[Thm.~1.6]{GKK+22:Geometry-Polytopes} of Gu{\'e}don et al., Tropp's lecture notes~\cite[Thm.~19.1]{tropp23hdp}, and Vershynin's textbook~\cite[Exer.~8.27]{Ver25:High-Dimensional-Probability-2ed}.
Our result improves the constants and extends the bound to the complex setting.

\begin{theorem}[OSI via small-ball method]
    \label{thm:osi-from-small-ball}
    Fix a constant $\tau > 0$.  Assume that the isotropic random vector $\vomega \in \F^{d}$ satisfies the \emph{small-ball probability} bound
    \begin{equation} \label{eqn:small-ball}
    \prob\bigl\{ | \langle \vomega, \vu \rangle |^2 \leq \tau \bigr\}
        \leq \frac{1}{100}
        \quad\text{for each unit-norm vector $\vu \in \F^d$.}
    \end{equation}
    Consider the random test matrix 
    \begin{equation*}
        \mOmega = \frac{1}{\sqrt{k}} \begin{bmatrix}
            \vomega_1 & \cdots & \vomega_k
        \end{bmatrix} \in \bbF^{d\times k}
        \quad\text{where the columns $\vomega_i \sim \vomega$ iid.}
    \end{equation*}
    Then \(\mOmega\) enjoys the $(r,\nicefrac{\tau}{10})$-OSI property when the embedding dimension satisfies
    \[
    \begin{aligned}
    k &\geq \max\big\{3r,\ 131 \big\}, & \text{when $\:\F = \R$}; \\
    k &\geq \max\big\{18r,\ 169 \big\}, & \text{when $\:\F = \C$}.
    \end{aligned}
    \]
\end{theorem}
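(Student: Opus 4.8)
The plan is to establish the OSI property (\cref{prop:osi-spec}) via an $\eps$-net argument on the unit sphere of an arbitrary $r$-dimensional subspace $\cV$. Fix an orthonormal matrix $\mQ \in \F^{d \times r}$ whose columns span $\cV$; we must lower-bound $\inf_{\vx \in \cV, \norm{\vx}_2 = 1} \norm{\mOmega^\top \vx}_2^2$. Writing the compressed test matrix action as an empirical average, $\norm{\mOmega^\top \vx}_2^2 = \frac{1}{k}\sum_{i=1}^k |\langle \vomega_i, \vx\rangle|^2$, the key observation is that each summand is nonnegative, so the Paley--Zygmund-type inequality replaces $|\langle \vomega_i, \vx\rangle|^2$ by the \emph{indicator} $\tau \cdot \indic\{|\langle \vomega_i, \vx\rangle|^2 > \tau\}$ at the cost of a constant. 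By the small-ball hypothesis~\eqref{eqn:small-ball}, each such indicator has mean at least $\frac{99}{100}$, uniformly over unit vectors $\vx$.

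First I would fix a single unit vector $\vx \in \cV$ and apply a Chernoff/Bernoulli concentration bound to $\frac{1}{k}\sum_i \indic\{|\langle \vomega_i, \vx\rangle|^2 > \tau\}$: since the indicators are iid with mean $p \geq \frac{99}{100}$, the empirical mean is at least (say) $\frac{9}{10}$ except with probability $\e^{-\mathrm{c}k}$ for an explicit constant $\mathrm{c}$. This yields $\norm{\mOmega^\top\vx}_2^2 \geq \frac{9\tau}{10}$ pointwise with exponentially small failure probability. Second, I would build a $\delta$-net $\cN$ of the unit sphere of $\cV$ with $|\cN| \leq (1 + 2/\delta)^{2r}$ (the exponent is $2r$ in the complex case, $r$ in the real case, reflecting real dimension), take a union bound over $\cN$, and choose $\delta$ and the constants so that $k \geq \mathrm{C}r$ forces $(1+2/\delta)^{\dim_{\R}\cV}\,\e^{-\mathrm{c}k} \leq \frac{1}{20}$; this is where the explicit thresholds $k \geq \max\{3r, 131\}$ (real) and $k \geq \max\{18r, 169\}$ (complex) come from, after optimizing the net radius against the Chernoff exponent. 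Third, a standard successive-approximation / net-to-sphere argument upgrades the net bound to a uniform bound: any unit $\vx \in \cV$ is within $\delta$ of some net point $\vy$, and $\norm{\mOmega^\top \vx}_2 \geq \norm{\mOmega^\top\vy}_2 - \opnorm{\mOmega^\top\mQ}\,\delta$; but here, because we only need a \emph{lower} bound on $\sigma_{\min}$ and the top singular value $\opnorm{\mOmega^\top\mQ}$ is uncontrolled for an OSI, I would instead run the net argument more carefully — e.g., apply the small-ball bound directly to the closest net point and absorb the perturbation using that $\indic\{|\langle\vomega_i,\vy\rangle|^2 > \tau\}$ and $\indic\{|\langle\vomega_i,\vx\rangle|^2 > \tau/4\}$ differ only when $|\langle\vomega_i, \vx - \vy\rangle|$ is comparable to $\sqrt{\tau}$, which for a fine enough net is itself a small-ball event controllable on average. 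The cleanest route is probably the one in \cite[Exer.~8.27]{Ver25:High-Dimensional-Probability-2ed}: bound $\sigma_{\min}$ below on the net with room to spare, then note that discretization loses only a factor absorbed into the final constant $\frac{\tau}{10}$ rather than $\frac{9\tau}{10}$.

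The main obstacle is the net-to-sphere transfer \emph{without} a dilation bound. In the usual minimum-singular-value argument one controls $\opnorm{\mOmega^\top \mQ}$ and pays $\opnorm{\mOmega^\top\mQ}\,\delta$ for discretization, but an OSI provides no such control — indeed heavy-tailed marginals can make $\opnorm{\mOmega^\top\mQ}$ enormous (this is exactly the OSE-vs-OSI distinction the paper emphasizes). The fix, and the technically delicate part, is to discretize \emph{inside} the small-ball event: show that $\{|\langle\vomega_i, \vx\rangle|^2 > \tau\}$ for the true $\vx$ is implied, on a high-probability event per column, by a slightly weaker small-ball event for the net point together with a bound on $|\langle \vomega_i, \vx-\vy\rangle|$, and that the \emph{expected} number of columns where this perturbation term is large is itself a small fraction of $k$ by a second application of~\eqref{eqn:small-ball} to the direction $(\vx - \vy)/\norm{\vx-\vy}_2$. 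Bookkeeping the two constants ($\frac{1}{100}$ for the small-ball probability, the Chernoff exponent, the net cardinality) to land exactly on the stated thresholds — with the real/complex distinction entering only through $\dim_{\R} \cV \in \{r, 2r\}$ — is then a routine but slightly fiddly computation that I would carry out last.
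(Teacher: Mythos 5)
There is a genuine gap, and it sits exactly where you flagged it: the net-to-sphere transfer. Your first step (replacing each nonnegative summand $|\langle\vomega_i,\vx\rangle|^2$ by $\tau\cdot\indic\{|\langle\vomega_i,\vx\rangle|^2\geq\tau\}$ and using the small-ball hypothesis to get mean $\geq \nicefrac{99}{100}$) coincides with the paper's opening move, and the pointwise Chernoff bound for a fixed $\vx$ is fine. But the repair you sketch for the discretization does not go through as stated. First, the hypothesis~\eqref{eqn:small-ball} is a \emph{lower-tail} (anti-concentration) bound; applying it ``to the direction $(\vx-\vy)/\norm{\vx-\vy}_2$'' gives no control whatsoever on the event that the perturbation $|\langle\vomega_i,\vx-\vy\rangle|$ is \emph{large} --- the only available upper-tail tool is isotropy plus Markov. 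Second, with Markov, forcing $|\langle\vomega_i,\vx-\vy\rangle|^2\leq\tau/4$ with constant probability per column requires the net radius $\delta\lesssim\sqrt{\tau}$, so the net has cardinality roughly $(\mathrm{C}/\sqrt{\tau})^{2r}$ and the union bound demands $k\gtrsim r\log(1/\tau)$. That is strictly weaker than the theorem, whose thresholds $k\geq\max\{3r,131\}$ and $k\geq\max\{18r,169\}$ are independent of $\tau$ (only the injectivity $\nicefrac{\tau}{10}$ degrades). Third, even granting the right scale of $\delta$, the count of ``bad'' columns must be controlled \emph{uniformly} over all $\vx$ in each net cap, i.e.\ over a continuum of perturbation directions that may depend on the realized matrix; a single expectation/Markov computation for a fixed direction does not deliver this, and making it rigorous is itself a uniform empirical-process bound over an indicator class --- at which point you have reconstructed the machinery the paper uses directly.

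The paper's proof avoids nets entirely: it views the indicators as a class of symmetric two-sided linear classifiers, bounds the growth function of that class by writing each two-sided classifier as the intersection of two halfspace classifiers (Sauer's lemma then gives $\Pi(k)\leq(\e k/r)^{2r}$), and invokes a uniform VC law of large numbers (\cref{thm:symmetric-classifier-concentration}) to get the deviation term $\sqrt{(r/k)\log(\e k/r)}+\sqrt{\log(1/\delta)/(8k)}$ simultaneously for all unit vectors in the subspace --- no dilation control, no net radius, and no $\tau$-dependence in $k$. The complex case is handled by realification ($\C^r\to\R^{2r}$) together with a self-union of classifiers (real or imaginary part exceeds $\tau/2$), which doubles the growth exponent again and accounts for the larger constants $18r$ and $169$; your attribution of the real/complex gap purely to $\dim_{\R}\cV\in\{r,2r\}$ in a net count misses this. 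If you want to salvage a net-style write-up, you would need to import exactly this VC-type uniform concentration to control the perturbation counts, so the honest conclusion is that the uniformity step must be done via the empirical-process route rather than a covering argument.
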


The small-ball method captures a fundamental
insight about injectivity.
Indeed, the main obstacle to injectivity
is the existence of a fixed unit-norm vector
that is (almost) orthogonal
to the random vector $\vomega$.
The statistic $\tau$ in the small-ball
condition~\eqref{eqn:small-ball}
quantifies the severity of this issue.

\begin{proof}[Proof sketch]
The argument relies on two ideas.  First, we can bound the minimum singular value below by introducing indicator random variables:
\[
    \sigma_{\rm min}^2(\mOmega^{\top} \mQ)
    = \min_{\norm\vu_2=1}\ \frac1k\sum_{i=1}^k | \langle \vomega_i, \mQ\vu \rangle |^2
    \geq \min_{\norm\vu_2=1}\ \frac1k\sum_{i=1}^k \tau \cdot \mathbbm1_{\{| \langle \vomega_i, \mQ\vu \rangle |^2 \geq \tau \}}.
\]
The first identity follows from the variational definition of the
minimum singular value.  The lower bound depends on the numerical inequality
$t \geq \tau \cdot \mathbbm1_{\{t \geq \tau\}}$, valid for $t \geq 0$.

Second, we regard the indicators as Boolean classifiers,
which suggests an argument based on the VC dimension.
The VC law of large numbers~\cite[Thm.~8.3.15]{Ver25:High-Dimensional-Probability-2ed} states that the empirical classification
probability concentrates uniformly around its expectation:
\[
    \frac1k\sum_{i=1}^k \mathbbm1_{\{ |\langle \vomega_i, \mQ\vu \rangle |^2 \geq \tau\}}
    \approx \prob\big\{ |\langle \vomega, \mQ\vu \rangle |^2 \geq \tau\big\} \geq 0.99.
\]
The family of linear classifiers in $\F^d$ has VC dimension $\Theta(d)$.
With some additional arguments to handle the absolute values, we determine that $k = \Theta(d)$ samples suffice to obtain concentration simultaneously for all $\vu \in \F^d$.
See~\cref{app:small-ball-analysis} for the details.
\end{proof}

The small-ball method yields OSI guarantees for a wide range
of random test matrices.  It is well-suited for problems where
the random vector $\vomega$ is continuous, with a bounded density.
As with \cref{impthm:fourth-to-second},
when the random vector $\vomega$ has controlled variance
in every direction, the small-ball condition~\eqref{eqn:small-ball}
holds.
In fact, we can sometimes activate the small-ball condition
in settings where the random vector $\vomega$ lacks finite
polynomial moments of any order.

On the other hand, the small-ball method is ill-suited for
problems where the distribution of the random vector $\vomega$
is spiky.
The condition~\eqref{eqn:small-ball} {never}
holds for a (very) sparse random vector $\vomega$.
The small-ball condition also fails when $\vomega$
is a Kronecker product of iid Rademacher vectors
because of the overwhelming orthogonality property
(\cref{sec:khatri-rao-overwhelming-orthogonality}).

\subsubsection{Application: Khatri--Rao embeddings with proportional embedding dimension} \label{sec:khatri-rao-proportional-proof}

We can deploy the small-ball technology to study Khatri--Rao test matrices.
To that end, we need a small-ball probability bound
for the Kronecker product of random vectors.
This property is somewhat delicate, especially when the base distribution
is discrete.
As a consequence, it is natural to assume that the base distribution is continuous, with an anticoncentration property.
Recently, Hu \& Paouris \cite{hu2024small} established a result of this type for \emph{real} random vectors.

\begin{importedtheorem}[Random Kronecker products: Small-ball probability, \protect{\cite[Thms.~1.1 and 1.4]{hu2024small}}] \label{thm:hupao}
    Consider a \emph{real} random vector $\vnu \in \R^{d_0}$
    with independent entries, each
    drawn from an absolutely continuous distribution whose density
    is uniformly bounded by $M$.  Construct the Kronecker product
    $\vomega \coloneqq \vomega^{(1)} \otimes \cdots \otimes \vomega^{(\ell)} \in \R^{d_0^\ell}$ where the factors $\vomega^{(i)} \sim \vnu$ iid.
    Then we have the small-ball probability bound
    \[
    \prob\left\{ | \langle \vomega, \vu \rangle | \leq (\rC M)^{-\ell} \right\}
    \leq \frac{1}{100}
    \quad\text{for each unit-norm vector $\vu \in \R^{d_0^\ell}$.}
    \]
    Alternatively, if the distribution of $\vnu \in \R^{d_0}$ is log-concave, then the same bound holds with $M = 1$.
\end{importedtheorem}

To establish our result on Khatri--Rao products at the proportional embedding dimension (\cref{corol:kr-small-sketch}), we simply invoke the small-ball method
(\cref{thm:osi-from-small-ball}) with the small-bound probability bound from \cref{thm:hupao}.

We can extend \cref{thm:hupao} to handle the
real spherical base distribution.
Here is a sketch of the argument.
Consider the base distribution $\vnu$ that is
uniform on the \emph{ball} with radius $\sqrt{d_0}$.
This base distribution is log-concave,
so \cref{thm:hupao} provides a small-ball
probability bound with $M = 1$.
To pass to the spherical distribution,
we dilate each tensor factor $\vomega^{(i)}$
to increase its norm to $\sqrt{d_0}$.
This operation only decreases the small-ball probability.
\subsection{Methodology 3: Comparison with a Gaussian matrix}
\label{sec:gaussian-compare}

The first two methodologies work well for some types of test matrices,
but they are unable to capture the behavior of sparse test matrices.
To treat sparse random matrix models,
we rely on the Gaussian comparison method,
recently devised by Tropp~\cite{tropp25}.  This technique allows us
to obtain a lower bound for the minimum eigenvalue of a random psd
matrix by comparison with a suitable Gaussian matrix.

To state Tropp's result, we introduce several functions that capture
the variability of a random matrix.
For a random self-adjoint matrix $\mX \in \F^{d\times d}$,
we define the variance and the second moment in the
direction of the self-adjoint matrix $\mM \in \F^{d\times d}$:
\[
\Var[\mX](\mM) \coloneqq \Var\bigl[ \tr(\mM \mX) \bigr]
\quad\text{and}\quad
\Mom[\mX](\mM) \coloneqq \E\bigl[ (\tr(\mM \mX))^2 \bigr].
\]
The \emph{weak variance} is defined as
$\sigma_*^2(\mX) \coloneqq
    \max_{\ \norm{\vu}_2=1}\ \Var[\vu^\top\mX\vu]$.

\begin{importedtheorem}[Gaussian comparison for random psd matrices, \protect{\cite[Thm.~2.3]{tropp25}}]
    \label{impthm:comparison-iid-sum}
    Consider a random psd matrix $\mW \in \F^{d \times d}$ with two finite moments: $\E{} \norm{ \mW }_{\rm F}^2 < + \infty$.  Construct a self-adjoint Gaussian matrix $\mX \in \F^{d \times d}$ that satisfies
    \[
    \E[ \mX ] = \E[ \mW ]
    \quad\text{and}\quad
    \Var[\mX](\mM) \geq \Mom[\mW](\mM)
    \quad\text{for all self-adjoint $\mM \in \F^{d \times d}$.}
    \]
    Define the random matrices
    \begin{equation} \label{eq:gauss-comparison-mtx}
    \begin{aligned}
        \mY &\coloneqq \sum_{i=1}^k \mW_i &&\quad\text{where $\mW_i \sim \mW$ iid;} \\
        \mZ &\coloneqq \sum_{i=1}^k \mX_i &&\quad\text{where $\mX_i \sim \mX$ iid.}
    \end{aligned} \end{equation}
    Then, with probability at least $1 - \delta$,
    the minimum eigenvalues admit the comparison
    \[
    \lambda_{\min}(\mY) \geq \E[ \lambda_{\min}(\mZ) ]
        - \sqrt{2 \sigma_*^2(\mZ) \log(2r/\delta)}.
    \]
\end{importedtheorem}

\Cref{impthm:comparison-iid-sum} requires more effort than
the other methodologies, but it has a wide scope of application.
(For instance, Tropp invokes the result %
to establish \cref{impthm:fourth-to-second}.)
In our setting, the Gaussian comparison method can be used to verify
the $(r, \alpha)$-OSI property for a test matrix $\mOmega$
whose outer product decomposes %
as a sum of %
iid terms:
\[
\mOmega \mOmega^\top = \sum_{j=1}^k \mW_j
\quad\text{where %
$\mW_j \sim \mW$ iid.}
\]
As an example of this strategy, \cref{sec:sparsecol-pf}
provides a detailed analysis of the
SparseCol test matrix. %
\Cref{app:sprase-stack} presents the analysis of the
SparseStack test matrix. %

\subsubsection{Additional background} \label{sec:gauss-background}

To employ \cref{impthm:comparison-iid-sum}, we collect a few additional facts from Tropp's paper~\cite{tropp25}.
Let $\mW, \mX \in \F^{d \times d}$ be random self-adjoint matrices,
and let $\mQ \in \F^{d \times r}$ be a fixed matrix.
Then
\begin{equation} \label{eqn:gaussian-compare-conjugation}
\begin{aligned}
\E[\mX] &= \E[\mW] & \text{implies} && \E[ \mQ^\top \mX \mQ ] &= \E[ \mQ^\top \mW \mQ]; \\
\Var[\mX] &\geq \Mom[\mW] & \text{implies} &&
    \Var[ \mQ^\top \mX \mQ ] &\geq \Mom[ \mQ^\top \mW \mQ ].
\end{aligned}
\end{equation}
The inequalities between the variance function and the second-moment
function hold pointwise for each self-adjoint argument $\mM \in \F^{d \times d}$.
In words, first and second moments are equivariant under
conjugation~\cite[Prop.~3.1]{tropp25}.

Next, we summarize the statistics of several common Gaussian matrix models.
\Cref{table:gaussian-library} lists the mean, variance, expected maximum eigenvalue, and weak variance for each instance.
The first example is a \emph{scalar Gaussian} matrix, defined as $g \mI_d$
where $g \sim \cN(0,1)$.
The second is an \emph{iid diagonal Gaussian} matrix, namely the diagonal matrix $\mD_d \in \F^{d \times d}$ with iid real diagonal entries drawn from
$\cN(0,1)$.
The third is the \emph{compressed diagonal Gaussian} matrix $\mQ^\top \mD_d \mQ$, where $\mQ \in \F^{d \times r}$ is an orthonormal matrix that depends on the context.
Last, we consider a \emph{Gaussian unitary ensemble (GUE)} matrix:
\[
\mG_{\rm gue}^{(d)} = (\mG + \mG^{\top}) / \sqrt{2}
\quad\text{where $\mG \in \C^{d \times d}$ has iid complex $\cN_{\C}(0,1)$ entries.}
\]
The GUE matrix is rotationally invariant.
\begin{importedlemma}[Rotational invariance, \protect{\cite[Sec.~3.7.4]{tropp25}}]
    \label{implem:ortho-conjugate-gue}
    Suppose that \(\mQ\in\bbC^{d \times r}\) is orthonormal.
    Then the matrix \(\mQ^\top\mG_{\rm gue}^{(d)}\mQ\) follows the same distribution as \(\mG_{\rm gue}^{(r)}\).
\end{importedlemma}

\begin{table}[t]
\centering
\begin{tabular}{@{}lcclll@{}} \toprule
	Description & Notation $(\mZ)$ & \(\E[\mZ]\) & \(\Var[\mZ](\mM)\)
    & \(\E[\lambda_{\rm max}(\mZ)]\) & \(\sigma_*^2(\mZ)\) \\ \midrule
	Scalar Gaussian & \(g\mI_d\)
        & \(\mat0\)
        & \((\tr(\mM))^2\)
		& \(0\)
		& \(1\)
		\\[0.25em]
	IID real diagonal & \(\mD_d\)
        & \(\mat0\)
        & \(\sum_{i=1}^d (\mM_{ii})^2\)
		& \(\leq \sqrt{2\log d}\)
		& \(1\)
		\\[0.25em]
	Compressed diagonal & \(\mQ^\top\mD_d\mQ\)
        & \(\mat0\)
        & \(\sum_{i=1}^r ((\mQ\mM\mQ^\top)_{ii})^2\)
        & \(\leq \sqrt{2\mu(\mQ)\log r}\) 
        & \(\leq \mu(\mQ)\)
		\\[0.25em]
    GUE & \(\mG_{\rm gue}^{(d)}\)
        & \(\mat0\)
        & \(\norm\mM_F^2\)
		& \(2\sqrt{d}\)
		& \(1\)
		\\[0.25em]
		\bottomrule
\end{tabular}
\caption{
    \textbf{Gaussian matrix models,
    \protect{\cite[Secs.~3 and~6.3.2]{tropp25}}.}
    Entries with a leading \(\leq\) indicate inequalities;
    entries without a leading \(\leq\) are exact computations.
    The symbol \(\mu(\mQ)\) denotes the coherence of the matrix $\mQ\in\bbF^{d \times r}$ (\cref{sec:prelims}).
    See~\cref{sec:gauss-background} for other definitions.
}
\label{table:gaussian-library}
\end{table}

\subsubsection{Application: SparseCol embeddings} \label{sec:sparsecol-pf}

This section deploys the Gaussian comparison method to
establish \cref{thm:sparse-indep-cols-osi},
which studies the injectivity properties
of the SparseCol test matrix (\cref{def:sparsecol}).
The argument depends on a moment calculation,
encapsulated in the next lemma.
Throughout this section, we work in the complex
field, as the real field is a special case.

\begin{lemma}[SparseCol: Moment computations]
    \label{lem:indep-cols-moments}
    Suppose that \(d \geq 2\).
    Let \(\vomega \in \C^d\) be the sparse random vector
    with $\xi$ nonzero entries, appearing in \Cref{def:sparsecol} of the SparseCol matrix.
Define the random matrix $\mW = k^{-1} \vomega \vomega^\top \in \C^{d\times d}$.
The first moment satisfies~ $\E{}[\mW] = k^{-1} \Id$.
The second moment in the direction of the self-adjoint
matrix $\mM \in \C^{d\times d}$ satisfies
    \begin{align*}
        \Mom[\mW](\mM)
        &= \frac1{k^2}\left(
            \frac{d}{\xi} \sum_{i=1}^n (\mM_{ii})^2
            + 2 \frac{\xi-1}{\xi}(\tr(\mM))^2
            + 4\frac{\xi-1}{\xi}\norm{\mM}_F^2
        \right) \\
        &\leq
        \frac1{k^2}\left(
            \frac{d}{\xi} \sum_{i=1}^n (\mM_{ii})^2
            + 2 (\tr(\mM))^2
            + 4 \norm{\mM}_F^2 \right).
    \end{align*}
\end{lemma}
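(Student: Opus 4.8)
The plan is to carry out two direct moment computations for the SparseCol random vector $\vomega = \sqrt{d/\xi}\sum_{i=1}^\xi \rho_i \ve_{s_i}$, where the signs $\rho_i$ are iid unit-modulus mean-zero variables, the selectors $s_1,\dots,s_\xi$ form a uniformly random $\xi$-subset of $\{1,\dots,d\}$, and all of these are independent. For the first moment, expand $\mW = k^{-1}\vomega\vomega^\top$ into a double sum over $(i,j)$ and split off the diagonal $i=j$: independence and mean-zeroness annihilate the $i\neq j$ terms, while each diagonal term contributes $\E[|\rho_i|^2]\,\E[\ve_{s_i}\ve_{s_i}^\top] = d^{-1}\Id$; summing the $\xi$ of them and rescaling yields $\E[\mW] = k^{-1}\Id$ (which also re-derives the isotropy of $\vomega$).

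For the second moment, use $\tr(\mM\mW) = k^{-1}\vomega^\top\mM\vomega$, so $\Mom[\mW](\mM) = k^{-2}\,\E\big[(\vomega^\top\mM\vomega)^2\big]$. Writing $A \coloneqq \sum_i \mM_{s_is_i}$ (which depends only on the support $\{s_i\}$, since $|\rho_i|^2=1$) and $B \coloneqq \sum_{i\neq j}\bar\rho_i\rho_j\mM_{s_is_j}$, one has
\[
\vomega^\top\mM\vomega = \tfrac{d}{\xi}(A+B)
\qquad\text{with}\qquad
\E_\rho[B] = 0 .
\]
Squaring and averaging over the signs kills the cross term, leaving $\E_\rho[(\vomega^\top\mM\vomega)^2] = (d/\xi)^2\big(A^2 + \E_\rho[B^2]\big)$. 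The quantity $\E_\rho[B^2]$ is a fourth-order sum in the $\rho$'s; enumerating which index-coincidence patterns survive --- only the conjugate-matched pairing $i=k,\ j=l$ when $\E[\rho^2]=0$, plus an extra $i=l,\ j=k$ pairing for real Rademachers, which one controls via $|z|^2 + \Re(z^2)\le 2|z|^2$ --- shows $\E_\rho[B^2]\le 2\sum_{i\neq j}|\mM_{s_is_j}|^2$. Meanwhile $A^2 = \sum_i \mM_{s_is_i}^2 + \sum_{i\neq j}\mM_{s_is_i}\mM_{s_js_j}$.

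It then remains to average over the selectors. By exchangeability, each $s_i$ is uniform on $\{1,\dots,d\}$ and each pair is uniform among ordered distinct pairs, so $\E_s[\mM_{s_is_i}^2] = d^{-1}\sum_p \mM_{pp}^2$, $\E_s[\mM_{s_is_i}\mM_{s_js_j}] = \frac{1}{d(d-1)}\sum_{p\neq q}\mM_{pp}\mM_{qq}$, and $\E_s[|\mM_{s_is_j}|^2] = \frac{1}{d(d-1)}\sum_{p\neq q}|\mM_{pq}|^2$. Multiplying by $(d/\xi)^2$ and using $\sum_{p\neq q}\mM_{pp}\mM_{qq}\le(\tr\mM)^2$, $\sum_{p\neq q}|\mM_{pq}|^2\le\norm{\mM}_{\rm F}^2$, and (via the hypothesis $d\ge2$) $\tfrac{d}{d-1}\le2$ collapses the estimate to $\tfrac1{k^2}\big(\tfrac{d}{\xi}\sum_i\mM_{ii}^2 + 2\tfrac{\xi-1}{\xi}(\tr\mM)^2 + 4\tfrac{\xi-1}{\xi}\norm{\mM}_{\rm F}^2\big)$; the final displayed bound is then immediate from $\tfrac{\xi-1}{\xi}\le1$.

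The main obstacle is purely the bookkeeping in $\E_\rho[B^2]$ and $\E_s[A^2]$: one must correctly identify which of the several index-coincidence patterns in these fourth-order sums contribute, handle the complex conjugates together with the Hermitian symmetry of $\mM$ so that the pairings governed by $\E[\rho^2]$ either vanish (complex case) or get absorbed (real case), and simultaneously respect the dependence from sampling selectors without replacement. None of this is deep, but it is constant-sensitive, so the care goes into extracting exact expressions for the moments and only then coarsening them to the stated constants $2$ and $4$.
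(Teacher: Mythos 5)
Your proposal is correct and follows essentially the same route as the paper's proof: expand the fourth moment $\E\big[(\vomega^\top\mM\vomega)^2\big]$, use the Rademacher pairing rule to identify the surviving index patterns (with the $i=l,\,j=k$ pairing absorbed via $\Re(z^2)\le |z|^2$, which is where the constant $4$ comes from), evaluate the selector probabilities $1/d$ and $1/(d(d-1))$ arising from sampling without replacement, and coarsen with $d/(d-1)\le 2$; your diagonal/off-diagonal split $A+B$ with conditioning on the selectors is just a reorganization of the paper's four-index sum over $(i,j,u,t)$ and $(q_1,\dots,q_4)$. One remark: you obtain the first displayed expression only as an upper bound rather than the stated equality, but the ``$=$'' in the lemma is loose anyway---the exact second moment carries factors $d/(d-1)$ and off-diagonal-only sums and already disagrees with that display for $\mM=\mI$, $\xi=d=2$---and only the final inequality is invoked downstream (in the proof of \cref{thm:sparse-indep-cols-osi}), so your argument establishes everything that is actually true and needed.
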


The proof of Lemma~\ref{lem:indep-cols-moments}
is similar with moment computations from~\cite{tropp25},
so we postpone it to \Cref{app:sparse-col}.
Let us continue with the proof of
\Cref{thm:sparse-indep-cols-osi}.

\begin{proof}[Proof of \Cref{thm:sparse-indep-cols-osi}]
To apply~\cref{impthm:comparison-iid-sum},
we must develop a Gaussian model
that compares with the SparseCol random matrix.
Define the random matrix $\mW$ as in \cref{lem:indep-cols-moments}.
Consider a Gaussian random matrix \(\mX\in\bbC^{d \times d}\)
of the form
\begin{equation} \label{eqn:sparsecol-X-def}
    \mX = \frac1k \left( \mI_d + \sqrt{\frac{d}{\xi}} \mD_d + \sqrt{2} g \mI_d + 2 \mG_{\rm gue}^{(d)}\right).
\end{equation}
As in~\cref{table:gaussian-library},
the independent Gaussian random matrices $\mD_d$ and $ g\mI_d$ and $\mG_{\rm gue}^{(d)}$ are diagonal, scalar, and GUE.
Using this dictionary, we quickly determine that
\(\E[\mX] = k^{-1} \mI = \E[\mW] \)
and that %
\begin{align*}
	\Var[\mX](\mM)
	&= \frac1{k^2}\left(\frac{d}{\xi}\Var[\mD](\mM) + 2 \Var[g \mI_d](\mM) + 4\Var[\mG_{\rm gue}^{(d)}](\mM)\right) \\
	&= \frac1{k^2}\left(\frac{d}{\xi}\sum_{i=1}^d (\mM_{ii})^2 + 2 (\tr(\mM))^2 + 4\norm{\mM}_F^2\right)
	\geq \Mom[\mW](\mM).
\end{align*}
The inequality follows from \cref{lem:indep-cols-moments}.
In other words, the moments of $\mX$ compare with
the moments of $\mW$.
Now, fix any matrix \(\mQ\in\bbF^{d \times r}\)
with orthonormal columns.
Since moments are equivariant under
conjugation~\cref{eqn:gaussian-compare-conjugation},
the random matrices $\mQ^{\top} \mX \mQ$
and $\mQ^{\top} \mW \mQ$ admit the same moment comparison.

Introduce a SparseCol random matrix $\mS \in \F^{d \times k}$,
as in the statement of \cref{thm:sparse-indep-cols-osi}.
Observe that
\[
\mS\mS^\top = \sum_{i=1}^k \mW_i
\quad\text{where $\mW_i \sim \mW$ iid.}
\]
We wish to control the quantity \(\lambda_{\rm min}(\mQ^\top\mS\mS^\top\mQ) = \lambda_{\rm min}(\sum_{i=1}^k \mQ^\top\mW_i\mQ)\).
To that end, select
\[
	\mY \defeq \sum_{i=1}^k \mQ^\top\mW_i\mQ
	\quad\text{and}\quad
	\mZ \defeq \sum_{i=1}^k \mQ^\top\mX_i\mQ
    \quad\text{where $\mX_i \sim \mX$ iid.}
\]
Both $\mY$ and $\mZ$ are the sums of iid matrices
in $\F^{r \times r}$
with the appropriate moment relationships,
so we can invoke \Cref{impthm:comparison-iid-sum}.
With probability at least \(1-\delta\),
\[
	\lambda_{\rm min}(\mQ^\top\mS\mS^\top\mQ)
	= \lambda_{\rm min}(\mY)
	\geq \E[\lambda_{\rm min}(\mZ)] - \sqrt{2\sigma_*^2(\mZ)\log(2r/\delta)}
\]
It remains to bound the expected minimum eigenvalue below and
to bound the weak variance above.

To achieve this goal, we must examine the distribution of $\mZ$. %
Owing to the definition~\eqref{eqn:sparsecol-X-def} of $\mX$,
\[
	\sum_{i=1}^k \mX_i
	\sim \mI_d + \sqrt{\frac{d}{k \xi}} \mD_d + \sqrt{\frac{2}{k}} g \mI_d + \sqrt{\frac{4}{k}} \mG_{\rm gue}^{(d)}.
\]
This statement follows from %
the stability of Gaussian distributions.
\Cref{implem:ortho-conjugate-gue} states that
\(\mQ^\top\mG_{\rm gue}^{(d)}\mQ \sim \mG_{\rm gue}^{(r)}\)
because $\mQ \in \F^{d\times r}$ is an orthonormal matrix.  Thus,
\[
    \mZ
    = \mQ^\top\left(\sum_{i=1}^k \mX_i\right)\mQ
    \sim \mI_r + \sqrt{\frac{d}{k \xi}} \mQ^\top\mD_d\mQ + \sqrt{\frac{2}{k}} g \mI_r + \sqrt{\frac{4}{k}} \mG_{\rm gue}^{(r)}.
\]
According to \cref{table:gaussian-library},
the expected minimum eigenvalue of $\mZ$ admits the lower bound
\begin{align*}
    \E\left[\lambda_{\rm min}(\mZ)\right]
    &\geq 1 - \sqrt{\frac d{k\xi}} \E\left[\lambda_{\rm max}(\mQ^\top\mD_d\mQ)\right] - \sqrt{\frac{2}{k}}\E\left[\lambda_{\rm max}(g \mI_r)\right] - \sqrt{\frac{4}{k}} \E\bigl[\lambda_{\rm max}\bigl(\mG_{\rm gue}^{(r)}\bigr)\bigr] \\
    &\geq 1 - \sqrt{\frac d{k\xi} \cdot 2\mu(\mQ)\log r} - \sqrt{\frac{4}{k} \cdot 4r}. %
\end{align*}
The coherence \(\mu(\mQ)\) is defined in \cref{sec:prelims}.
An upper bound for the weak variance \(\sigma_*^2(\mZ)\)
follows from the subadditivity property: $\sigma_*^2(\mA+\mB) \leq \sigma_*^2(\mA)+\sigma_*^2(\mB)$ for statistically independent $\mA$ and $\mB$ \cite[Sec.~3.6]{tropp25}.  Using the calculations from~\cref{table:gaussian-library},
\begin{align*}
	\sigma_*^2(\mZ)
	\leq %
    \frac{d}{k \xi} \sigma_*^2(\mQ^\top\mD_d\mQ) + \frac{2}{k} \sigma_*^2(g \mI_r) + \frac{4}{k} \sigma_*^2(\mG_{\rm gue}^{(r)})
	\leq \frac{d \mu(\mQ)}{k \xi} + \frac{6}{k}.
\end{align*}
Combine these results to reach a bound
for the minimum eigenvalue of $\mY$.
With probability at least \(1-\delta\), %
\[
    \lambda_{\rm min}(\mQ^\top\mS\mS^\top\mQ)
    \geq 1 - \sqrt{\frac{2d\mu(\mQ)}{k\xi} \log r} - \sqrt{\frac{16r}k} - \sqrt{2\left(\frac{d \mu(\mQ)}{k \xi} + \frac{6}{k}\right)\log(2d/\delta)}.
\]
In particular, we can choose the column sparsity parameter \(\xi = \cO\bigl(\frac kd \mu(\mQ) \log(r/\delta)\bigr)\)
and the embedding dimension \(k = \cO(\max\{r, \log(2r/\delta)\})\).
We arrive at the bound
\[
	\lambda_{\rm min}(\mQ^\top\mS\mS^\top\mQ)
	\geq \nicefrac{1}{2}
\]
with probability at least \(1-\delta\).
\end{proof}

\subsection{Methodology 4: Matrix concentration}
\label{sec:matrix-concentration}

We would be remiss not to mention another established
technique for obtaining injectivity properties of
random matrices, based on matrix concentration tools.

Several of the standard matrix concentration inequalities
from Tropp's monograph~\cite{Tro15:Introduction-Matrix}
yield lower bounds on the minimum
eigenvalue of a self-adjoint random matrix,
expressed as an independent sum.
In particular, the lower matrix
Chernoff inequality~\citep[Thm.~5.1.1]{Tro15:Introduction-Matrix}
provides a good bound on the minimum eigenvalue
of a psd random matrix under minimal conditions.
For example, Tropp used this strategy
to analyze SRTT test matrices in his paper~\cite{tropp11SRHT}.
Unfortunately, classic matrix concentration tools
are usually not powerful enough to achieve the $(r, \alpha)$-OSI
property unless the embedding dimension $k = \Omega(r \log r)$.
Since we aim to reach the optimal scaling $k = \cO(r)$,
we must employ alternative methods.

Over the last several years, van Handel and coauthors
have developed more advanced matrix concentration tools,
including universality methods~\cite{BvH24:Universality-Sharp}
and intrinsic freeness results~\cite{BBvH23:Matrix-Concentration}.
These techniques can also be used
to prove injectivity results.
For example, Chenakkod \etal \cite{chenakkod24,chenakkod25} employed the universality method to study the SparseStack test matrix.
They achieve an \((r,1-\eps,1+\eps)\)-OSE with %
optimal dependence on \(\eps\),
with an optimal embedding dimension \(k=\cO(r)\),
but with a suboptimal row sparsity \(\zeta = \cO(\log^3 r)\).
As compared with the analysis via Gaussian comparison
(\cref{thm:sparse-stack-osi}), this approach results
in excess logarithmic factors. %
It is likely that contemporary matrix concentration
techniques can be used to establish injectivity properties
for other types of random test matrices.

\appendix

\section{Analysis of the SparseCol test matrix}
\label{app:sparse-col}

This section establishes \cref{lem:indep-cols-moments},
which controls the first two moments of the SparseCol test matrix.

\begin{proof}[Proof of \cref{lem:indep-cols-moments}]
    Recall that $\mW = k^{-1} \vomega \vomega^\top$,
    where $\vomega \in \F^d$ is the sparse random
    vector introduced in the construction of the
    SparseCol random matrix (\cref{def:sparsecol}).
    We can easily obtain the first moment of the
    matrix $\mW$
    from the linearity of expectation,
    so we focus on the second-moment bound.
    The quantity of interest can be expressed as
    \begin{align}
        \Mom[\mW](\mM)
        = \E\left[\bigl(\tr(k^{-1} \vomega\vomega^\top \mM)\bigr)^2\right]
        = \frac1{k^2} \sum_{i,j,u,t=1}^d m_{ij} m_{ut} \E\bigl[\omega_i \omega_j \omega_u \omega_t \bigr].
        \label{eq:sparse-moment-m-sum}
    \end{align}
    Here, $m_{ij}$ and $\omega_i$ refer to the entries of $\mM$ and $\vomega$.
	In view of \cref{def:sparsecol}, the entries of the sparse random vector take the form \(\omega_i = \sqrt{d/\xi} \sum_{q=1}^\xi \rad_q \mathbbm1_{\{s_q = i\}}\).
	Therefore, the product of four such entries %
    expands as a sum:
	\begin{align}
		\E\bigl[\omega_i \omega_j \omega_u \omega_t\bigr]
		= \frac{d^2}{\xi^2}
			\sum_{q_1,q_2,q_3,q_4=1}^\xi
				\E\bigl[\rad_{q_1} \rad_{q_2} \rad_{q_3} \rad_{q_4}\bigr] \cdot
				\E\bigl[\mathbbm1_{\{s_{q_1}=i\}}
				\mathbbm1_{\{s_{q_2}=j\}}
				\mathbbm1_{\{s_{q_3}=u\}}
		\mathbbm1_{\{s_{q_4}=t\}}\bigr]
		\label{eq:sparse-moment-omega-foil}
	\end{align}
    Each random variable \(\rad_q \sim \textsc{rademacher}\), so
	\[
		\E\bigl[\rad_{q_1} \rad_{q_2} \rad_{q_3} \rad_{q_4}\bigr]
		= \mathbbm1_{\{q_1 = q_2 = q_3 = q_4\}}
		+ \mathbbm1_{\{q_1 = q_2 \neq q_3 = q_4\}}
		+ \mathbbm1_{\{q_1 = q_3 \neq q_2 = q_4\}}
		+ \mathbbm1_{\{q_1 = q_4 \neq q_2 = q_3\}}.
	\]
    Therefore, in \cref{eq:sparse-moment-omega-foil},
    it suffices to examine the sum
    over tuples \((q_1,q_2,q_3,q_4)\)
    where the indicators are nonzero.
    We proceed through each of the four expectations.
    In the first case, we have \(q_1 = q_2 = q_3 = q_4\),
    whence %
    \[
        \E\bigl[\mathbbm1_{\{s_{q_1}=i\}}
            \mathbbm1_{\{s_{q_1}=j\}}
            \mathbbm1_{\{s_{q_1}=u\}}
            \mathbbm1_{\{s_{q_1}=t\}}\bigr]
        = \prob\bigl\{s_{q_1} = i\bigr\} \cdot \mathbbm1_{\{i=j=u=t\}}
        = \frac{1}{d} \mathbbm1_{\{i=j=u=t\}}.
    \]
    In the second case, we have \(q_1 = q_2 \neq q_3 = q_4\),
    whence %
    \[
        \E\bigl[\mathbbm1_{\{s_{q_1}=i\}}
            \mathbbm1_{\{s_{q_1}=j\}}
            \mathbbm1_{\{s_{q_3}=u\}}
            \mathbbm1_{\{s_{q_3}=t\}}\bigr]
        = \prob\bigl\{ s_{q_1} = i \bigr\}
        \cdot \Pr\bigl\{ s_{q_3} = u\bigr\} \cdot \mathbbm1_{\{i=j\neq u=t\}}
        = \frac1{d(d-1)} \mathbbm1_{\{i=j \neq u=t\}}.
    \]
    The third and fourth cases are symmetric to the second, but with the \(i,j,u,t\) symbols permuted appropriately.
    Substitute the last three displays into \cref{eq:sparse-moment-omega-foil} and then into \cref{eq:sparse-moment-m-sum}, and simplify the result.
\end{proof}

\section{Analysis of the SparseStack test matrix}
\label{app:sprase-stack}

In this section, we prove \Cref{thm:sparse-stack-osi},
which establishes an OSI guarantee for the SparseStack
test matrix (\cref{def:sparse-stack}).
We work in the complex field, as the real field is
just a special case.
The SparseStack random matrix $\mOmega \in \C^{d \times k}$
has a sparsity parameter $\zeta \in \bbN$
and a block size parameter $b \in \bbN$.
The embedding dimension $k \coloneqq b \zeta$.

We can express the SparseStack matrix in terms of a
CountSketch matrix~\cite{charikar04,Clarkson13,meng13,nelson13}:
\begin{equation} \label{eq:countsketch}
    \mPhi = \bmat{
            \rad_1 \ve_{s_1}^\top \\
            \vdots \\
            \rad_d \ve_{s_d}^\top
        } \in \C^{d \times b}
        \quad\text{where}\quad
        \begin{aligned}
        &\text{$\rad_i \sim \textsc{rademacher}$ iid;} \\
        &\text{$s_i \sim \textsc{uniform}\{1,\dots, b\}$ iid.}
        \end{aligned}
\end{equation}
As usual, $\mathbf{e}_i \in \C^b$ is a standard basis vector.
Then the SparseStack matrix takes the form
\begin{equation} \label{eqn:sparsestack-app}
        \mOmega = \frac{1}{\sqrt\zeta} \bmat{
            \mPhi_1 & \cdots & \mPhi_\zeta
        } \in \C^{k \times d}
        \quad
        \text{where $\mPhi_1, \ldots, \mPhi_\zeta \sim \mPhi$ iid}.
    \end{equation}
This characterization is valuable because it
allow us to write $\mOmega \mOmega^{\top} = \zeta^{-1} \sum_{i=1}^\zeta \mPhi_i \mPhi_i^\top$, which is an average of iid random psd matrices.

To prove \cref{thm:sparse-stack-osi}, we employ the Gaussian comparison approach (\cref{impthm:comparison-iid-sum}).
As with the SparseCol matrix, the first step requires us
to compute first and second moments of our test matrix.

\begin{lemma}[CountSketch moments]
    \label{lem:sparse-stack-moments}
    Let \(\mPhi \in \bbC^{d \times b}\) be the CountSketch matrix \cref{eq:countsketch}.
    Then~ \(\E[\mPhi\mPhi^\top]=\mI\).
    In addition, for each self-adjoint matrix $\mM \in \C^{d \times d}$,
    \[
        \Mom[\mPhi\mPhi^\top](\mM) \leq (\tr(\mM))^2 + \frac2b\norm{\mM}_F^2.
    \]
\end{lemma}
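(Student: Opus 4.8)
The plan is to read off the entries of $\mPhi\mPhi^\top$ from the row structure in~\eqref{eq:countsketch} and then expand the fourth-moment sum, closely mirroring the proof of \cref{lem:indep-cols-moments}. First I would observe that the $(i,j)$ entry of $\mPhi\mPhi^\top$ is the (conjugate) inner product of rows $i$ and $j$; since each row has the form $\rad_i \ve_{s_i}^\top$ with a \emph{real} Rademacher sign, this gives $(\mPhi\mPhi^\top)_{ij} = \rad_i\rad_j\mathbbm1_{\{s_i=s_j\}}$. The diagonal entries equal $\rad_i^2 = 1$, and for $i\neq j$ the independence of $\rad_i$ and $\rad_j$ forces the expectation to vanish, so $\E[\mPhi\mPhi^\top]=\mI$. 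That disposes of the first claim.

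For the second moment, write $T \coloneqq \tr(\mM\mPhi\mPhi^\top) = \sum_{i,j} M_{ij}\rad_i\rad_j\mathbbm1_{\{s_i=s_j\}}$, noting that $T$ is real because $\mM$ and $\mPhi\mPhi^\top$ are self-adjoint. Squaring produces a sum over four indices $(i,j,u,t)$, and since the Rademachers are independent of the selectors, each summand factors as $M_{ij}M_{ut}\,\E[\rad_i\rad_j\rad_u\rad_t]\,\E[\mathbbm1_{\{s_i=s_j\}}\mathbbm1_{\{s_u=s_t\}}]$. The Rademacher fourth moment is supported on the four pairing patterns $i{=}j{=}u{=}t$, $i{=}j{\neq}u{=}t$, $i{=}u{\neq}j{=}t$, and $i{=}t{\neq}j{=}u$, exactly as in the SparseCol computation. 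For each pattern I would evaluate the selector factor: it equals $1$ when the pattern makes every indicator automatic (the first two patterns) and equals $1/b$ when it forces two independent uniform selectors to coincide (the last two). Assembling the contributions, the two ``diagonal'' patterns combine to $\big(\sum_i M_{ii}\big)^2 = (\tr\mM)^2$, and the two ``cross'' patterns contribute $\tfrac1b\sum_{i\neq j}M_{ij}^2$ and $\tfrac1b\sum_{i\neq j}|M_{ij}|^2$, respectively.

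The main obstacle is the complex case, where $\sum_{i\neq j} M_{ij}^2$ is a priori complex. I would handle it by noting that self-adjointness gives $M_{ji}=\overline{M_{ij}}$, so this sum is unchanged under conjugation and hence real; being real, it is bounded by its own modulus, which is at most $\sum_{i\neq j}|M_{ij}|^2$. Putting the pieces together yields $\Mom[\mPhi\mPhi^\top](\mM) \le (\tr\mM)^2 + \tfrac2b\sum_{i\neq j}|M_{ij}|^2 \le (\tr\mM)^2 + \tfrac2b\norm{\mM}_F^2$, as claimed. The remaining steps are bookkeeping essentially identical to the proof of \cref{lem:indep-cols-moments}, so I would either cite that argument for the shared parts or reproduce the handful of lines needed.
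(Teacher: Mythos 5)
Your proposal is correct and follows essentially the same route as the paper's proof: expand the quadratic form, split the Rademacher fourth moment into the four pairing patterns, evaluate the selector collision probabilities ($1$ for the diagonal patterns, $1/b$ for the cross patterns), and use self-adjointness to argue that the term $\sum_{i\neq j} M_{ij}^2$ is real and bounded by $\norm{\mM}_{\rm F}^2$. The only difference is cosmetic—you work directly with the entries $(\mPhi\mPhi^\top)_{ij} = \rad_i\rad_j\mathbbm1_{\{s_i=s_j\}}$, whereas the paper expands over the entries $\phi_{i\alpha}$ and sums out the $\alpha,\beta$ indices afterward.
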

We defer proof of \cref{lem:sparse-stack-moments}
to \cref{sec:countsketch-moments}.
With the result at hand, we turn to the proof of~\cref{thm:sparse-stack-osi}.
\begin{proof}[Proof of \Cref{thm:sparse-stack-osi}]
Let $\mOmega \in \C^{k \times d}$ be the SparseStack
test matrix constructed in~\eqref{eqn:sparsestack-app}.
It is easy to verify that $\mOmega$ is isotropic~\eqref{eqn:osi-isotropy}, so we focus on establishing the injectivity property~\eqref{eq:injectivity}.

To begin, write the outer product \(\mOmega\mOmega^\top\) as an iid sum.  Define $\mW \coloneqq \zeta^{-1} \mPhi \mPhi^\top \in \C^{d \times d}$,
and note that
\[
    \mOmega\mOmega^\top = \sum_{i=1}^\zeta \mW_i %
    \quad
    \text{where $\mW_i \sim \mW$ iid.}
\]
\Cref{lem:sparse-stack-moments} implies that the first moment \(\E[\mW] = \zeta^{-1} \mI\) and that the second moment
\[
    \Mom[\mW](\mM)
    \leq \frac1{\zeta^2} (\tr(\mM))^2 + \frac{2}{\zeta^2b} \norm{\mM}_F^2.
\]
Now, we build a Gaussian comparison model.
Let \(\mX \in \C^{d\times d}\) be the Gaussian matrix
\[
    \mX \defeq \frac1\zeta\mI_d + \frac1\zeta g\mI_d + \frac1\zeta\sqrt{\frac2b} \mG_{\rm gue}^{(d)},
\]
where \(g\mI_d\) and \(\mG_{\rm gue}^{(d)}\) are the scaled identity and GUE matrices defined in \Cref{table:gaussian-library}.
We can compare the moments of $\mX$ and $\mW$.
The first moments are equal: \(\E[\mX]=\frac1\zeta\mI_d = \E[\mW]\). The second moments satisfy
\[
    \Var[\mX](\mM)
    = \frac1{\zeta^2} \tr(\mM))^2 + \frac{2}{b\zeta^2}\norm{\mM}_F^2
    \geq \Mom[\mW](\mM).
\]
In other words, $\mX$ is a Gaussian comparison model for $\mW$.
For any orthonormal matrix $\mQ \in \C^{d \times r}$, we can
use equivariance~\eqref{eqn:gaussian-compare-conjugation} to confirm that
$\mQ^\top \mX \mQ$ is a Gaussian comparison model for $\mQ^\top \mW \mQ$.

We seek to establish control on the minimum eigenvalue \(\lambda_{\rm min}(\mQ^\top\mOmega\mOmega^\top\mQ) = \lambda_{\rm min}(\sum_{i=1}^\zeta \mQ^\top\mW_i\mQ)\).
To do so, we introduce iid copies \(\mX_1,\ldots,\mX_\zeta\) of the Gaussian matrix $\mX$, and we define
\[
    \mY \defeq \sum_{i=1}^\zeta \mQ^\top\mW_i\mQ
    \quad\text{and}\quad
    \mZ \defeq \sum_{i=1}^\zeta \mQ^\top\mX_i\mQ.
\]
Since \mY and \mZ are sums of iid matrices in $\C^{d\times d}$, \cref{impthm:comparison-iid-sum} implies that %
\[
    \lambda_{\rm min}(\mQ^\top\mOmega\mOmega^\top\mQ)
    = \lambda_{\rm min}(\mY)
    \geq \E[\lambda_{\rm min}(\mZ)] - \sqrt{2\sigma_*^2(\mZ)\log(2r/\delta)} \quad \text{with probability at least \(1-\delta\)}.
\]
To prove the theorem, we must establish a lower bound on the expected minimum eigenvalue $\E[\lambda_{\rm min}(\mZ)]$ and an upper bound on the weak variance $\sigma_*^2(\mZ)$.
To do so, we first examine the distribution of \mZ more closely.
By linearity, Gaussianity, and the definition $k = b \zeta$, %
\[
    \sum_{i=1}^\zeta \mX_i
    \sim \mI_d + \frac1{\sqrt\zeta} g\mI_d + \sqrt{\frac2{k}} \mG_{\rm gue}^{(d)}.
\]
Now, we conjugate by $\mQ \in \F^{d\times r}$.
By rotational invariance (\cref{implem:ortho-conjugate-gue}), \(\mQ^\top\mG_{\rm gue}^{(d)}\mQ \sim \mG_{\rm gue}^{(r)}\).
Thus,
\[
    \mZ =  \sum_{i=1}^\zeta \mQ^\top\mX_i\mQ \sim \mI_r + \frac1{\sqrt\zeta} g\mI_r + \sqrt{\frac2k} \mG_{\rm gue}^{(r)}.
\]
By symmetry of the Gaussian distribution with respect to negation,
we can bound the expected minimum eigenvalue of $\mZ$ below: %
\begin{align*}
    \E\bigl[\lambda_{\rm min}(\mZ)\bigr]
    &\geq 1 - \frac1{\sqrt\zeta}\E\bigl[\lambda_{\rm max}(g\mI_r)\bigr] - \sqrt{\frac2k} \E\bigl[\lambda_{\rm max}(\mG_{\rm gue}^{(r)})\bigr] 
    = 1 - \sqrt{\frac{4r}k}. %
\end{align*}
We have introduced results from \Cref{table:gaussian-library}.
Next, we bound the weak variance above
via the subadditivity property~\cite[Sec.~3.6]{tropp25}
and the results from \Cref{table:gaussian-library}:
\begin{align*}
    \sigma_*^2(\mZ)
    &\leq %
    \frac1\zeta \sigma_*^2(g\mI_r) + \frac2k \sigma_*^2(\mG_{\rm gue}^{(d)}) 
    \leq %
    \frac1\zeta + \frac1k \leq \frac3\zeta,
\end{align*}
where the last inequality relies on the fact \(\zeta \leq k\).
We conclude that, with probability at least \(1-\delta\), %
\[
    \lambda_{\rm min}(\mQ^\top\mOmega\mOmega^\top\mQ)
    \geq 1 - \sqrt{\frac{4r}{k}} - \sqrt{\frac{6}{\zeta}\log(2r/\delta)}.
\]
Set $\delta = \nicefrac{1}{20}$, and note that
the right-hand side exceeds $\nicefrac{1}{2}$
at an embedding dimension \(k = \cO(r)\) and a sparsity level \(\zeta = \cO(\log r)\).  This argument confirms the injectivity
property~\eqref{eq:injectivity}.
\end{proof}

\subsection{Moments of the CountSketch matrix} \label{sec:countsketch-moments}
In this section, we calculate the first two moments of the CountSketch matrix.

\begin{proof}[Proof of \cref{lem:sparse-stack-moments}]
Introduce the CountSketch matrix, $\mPhi \in \C^{d \times b}$
described in~\eqref{eq:countsketch}, whose entries are denoted
$\phi_{i\alpha}$.
The first moment $\E{}[ \mPhi \mPhi^\top ] = \Id$ by linearity of
expectation, so we focus on the second moment.
    To begin, we use the definition of the second moment
    function %
    to reach
    the expansion %
    \begin{multline}\label{eq:indep-subrow-sum-of-c}
        \Mom[\mPhi\mPhi^\top](\mM)
        = \E[(\tr(\mM(\mPhi\mPhi^\top)))^2]= \E\left[\left|\sum_{i,j=1}^d \sum_{\alpha=1}^b m_{ij}\phi_{i\alpha}\phi_{j\alpha}\right|^2\right] \\ = \sum_{i,j,t,u=1}^d \sum_{\alpha,\beta=1}^b m_{ij}m_{t u} \E[\phi_{i \alpha} \phi_{j \alpha} \phi_{t \beta} \phi_{u \beta}] \eqqcolon \sum_{i,j,t,u=1}^d m_{i j} m_{t u} \theta_{i j t u}.
    \end{multline}
    In the last line, we defined the quantities \(\theta_{i j t u} 
    \defeq \sum_{\alpha,\beta=1}^b \E[\phi_{i \alpha} \phi_{j \alpha} \phi_{t \beta} \phi_{u \beta}]\).
    We will compute each of these terms
    before returning to \eqref{eq:indep-subrow-sum-of-c}.
    
    By construction of the CountSketch matrix, its entries take the form \(\phi_{i \alpha} = \rad_i \mathbbm1_{\{s_i = \alpha\}}\).  Thus,
    \begin{align}
        \theta_{i j t u}
        = \sum_{\alpha,\beta=1}^b \E\bigl[\rad_i \rad_j \rad_{t} \rad_{u}\bigr] \cdot \E\bigl[\mathbbm1_{\{s_i=\alpha\}}\mathbbm1_{\{s_j=\alpha\}}\mathbbm1_{\{s_t=\beta\}}\mathbbm1_{\{s_{u}=\beta\}}\bigr].
        \label{eq:indep-subrow-expand-c-as-sum}
    \end{align}
    Each random variable \(\rad_q \sim \textsc{rademacher}\), so
    \[
        \E\bigl[\rad_i \rad_j \rad_{t} \rad_{u}\bigr]
        = \mathbbm1_{\{i = j = t = u\}}
        + \mathbbm1_{\{i = j \neq t = u\}}
        + \mathbbm1_{\{i = t \neq j = u\}}
        + \mathbbm1_{\{i = u \neq j = t\}}.
    \]
    This equation follows from %
    the fact that the rows of $\mPhi$ are independent.
    Therefore, in \cref{eq:indep-subrow-expand-c-as-sum},
    it suffices to compute \(\theta_{i j t u}\) for tuples \((i, j, t, u)\) that make the indicators nonzero.
    We proceed through each of the four indicators.
    In the first case, we have \(i=j=t=u\).
    Then
    \[
        \E\bigl[\mathbbm1_{\{s_i=\alpha\}}\mathbbm1_{\{s_i=\alpha\}}\mathbbm1_{\{s_i=\beta\}}\mathbbm1_{\{s_i=\beta\}}\bigr]
        = \frac1b \mathbbm1_{\{\alpha=\beta\}}.
    \]
    Thus, \(\theta_{i i i i} = \sum_{\alpha,\beta=1}^b b^{-1} \mathbbm1_{\{\alpha=\beta\}} = 1\).
    In the second case, \(i = j \neq t = u\), so %
    \[
        \E\bigl[\mathbbm1_{\{s_i=\alpha\}}\mathbbm1_{\{s_i=\alpha\}}\mathbbm1_{\{s_t=\beta\}}\mathbbm1_{\{s_t=\beta\}}\bigr]
        = \frac1{b^2}.
    \]
    Thus, \(\theta_{i i t t} = \sum_{\alpha,\beta=1}^b b^{-2} = 1\).
    In the third case, \(i = t \neq j = u\), so %
    \[
        \E \bigl[ \mathbbm1_{\{s_i=\alpha\}} \mathbbm1_{\{s_j=\alpha\}} \mathbbm1_{\{s_i=\beta\}} \mathbbm1_{\{s_j=\beta\}} \bigr]
    = \frac1{b^2} \mathbbm1_{\{\alpha=\beta\}}.
    \]
    Thus, \(\theta_{i j i j} = \sum_{\alpha,\beta=1}^b b^{-2} \mathbbm1_{\{\alpha=\beta\}} = b^{-1} \).
    In the fourth case, \(i = u \neq j = t\).
    This case is the same as the third, and we determine that \(\theta_{i j j i} = b^{-1} \).
    Substitute these identities for $\theta_{ijtu}$ back into \cref{eq:indep-subrow-sum-of-c} to reach
    \begin{align*}
        \Mom[\mPhi\mPhi^\top](\mM)
        &= (\tr(\mM))^2 + \frac1b \sum_{\substack{i,j=1\\i\neq j}}^d |m_{i j}|^2 + \frac1b \sum_{\substack{i,j=1\\i\neq j}}^d m_{i j}^2
        \leq (\tr(\mM))^2 + \frac2b \norm{\mM}_F^2.
    \end{align*}
    In the last step, since $\mM$ is self-adjoint, we may bound $\sum_{i \neq j} m_{ij}^2 = \sum_{i \neq j} \Re(m_{ij}^2)
    \leq \norm{\mM}_{\rm F}^2$.
\end{proof}
\section{Proof of \cref{thm:osi-from-small-ball}}
\label{app:small-ball-analysis}

To prove \cref{thm:osi-from-small-ball}, we use the following result from VC theory:
\begin{theorem}[Symmetric classifiers]
    \label{thm:symmetric-classifier-concentration}
    Let \(\cF\) be the class of symmetric two-sided linear classifiers
    \[
        \cF \defeq 
        \left\{
            \vx \mapsto \mathbbm1_{\{|\langle \vx,\vz\rangle|^2 \geq \tau\}}
            ~:~
            \vz\in\bbR^r,\tau\geq0
        \right\},
    \]
    and let \(\vx_1,\ldots,\vx_k\) be iid copies of a random vector \(\vx\in\bbR^r\).
    With probability at least \(1-\delta\), all \(f\in\cF\) have
    \[
        \frac1k \sum_{i=1}^k f(\vx_i)
        ~\geq~
        \E[f(\vx)]
        ~- \sqrt{
            \tfrac{r}{k}\log\left(\tfrac{\e k}{r}\right)
        } - \sqrt{
            \tfrac{\log(\nicefrac1\delta)}{8k}
        }.
    \]
\end{theorem}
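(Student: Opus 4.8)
The plan is to prove \cref{thm:symmetric-classifier-concentration} first, since it is the workhorse, and then deduce \cref{thm:osi-from-small-ball} from it by the two-step argument sketched in the proof sketch. For the classifier concentration result, I would invoke the VC law of large numbers (uniform Glivenko--Cantelli bound) applied to the family $\cF$ of symmetric two-sided linear classifiers. The key structural fact is that $\cF$ has VC dimension $\cO(r)$: the events $\{|\langle \vx, \vz\rangle|^2 \geq \tau\}$ can be written as $\{\langle \vx, \vz\rangle \geq \sqrt\tau\} \cup \{\langle\vx,\vz\rangle \leq -\sqrt\tau\}$, a union of two halfspace complements, so $\cF$ is contained in the family of unions of two sets from a class of VC dimension $r+1$ (affine halfspaces in $\bbR^r$), hence has VC dimension $\cO(r)$. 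A careful accounting (e.g.\ via Sauer--Shelah applied to the union class) gives a growth-function bound of the form $(\e k / r)^{\cO(r)}$, and plugging this into the standard one-sided uniform deviation inequality (symmetrization plus a bounded-differences / McDiarmid concentration step for the $\sqrt{\log(1/\delta)/(8k)}$ tail) yields the stated bound. I would either cite \cite[Thm.~8.3.15]{Ver25:High-Dimensional-Probability-2ed} directly or reproduce the short symmetrization argument; the constants in the statement ($r/k \cdot \log(\e k/r)$ inside the first square root, and $1/(8k)$ in the second) are exactly what comes out of the textbook bound, so the main work is just verifying the VC-dimension claim for the \emph{two-sided} classifier.

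With \cref{thm:symmetric-classifier-concentration} in hand, I would prove \cref{thm:osi-from-small-ball} as follows. Fix an orthonormal $\mQ \in \F^{d\times r}$; by \cref{prop:osi-spec} it suffices to lower-bound $\sigma_{\min}^2(\mOmega^\top \mQ) = \min_{\|\vu\|_2 = 1} \tfrac1k \sum_i |\langle \vomega_i, \mQ\vu\rangle|^2$. Using the elementary inequality $t \geq \tau \cdot \indic_{\{t \geq \tau\}}$ for $t \geq 0$, we have
\[
\sigma_{\min}^2(\mOmega^\top \mQ) \geq \tau \cdot \min_{\|\vu\|_2 = 1} \frac1k \sum_{i=1}^k \indic_{\{|\langle \vomega_i, \mQ\vu\rangle|^2 \geq \tau\}}.
\]
Now set $\vx_i \coloneqq \mQ^\top \vomega_i \in \F^r$; these are iid, and $\langle \vomega_i, \mQ\vu\rangle = \langle \vx_i, \vu\rangle$ for $\|\vu\|_2 = 1$. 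The small-ball hypothesis~\eqref{eqn:small-ball} transfers: for any unit $\vu \in \F^r$, the vector $\mQ\vu$ is a unit vector in $\F^d$, so $\prob\{|\langle \vx_i, \vu\rangle|^2 \geq \tau\} \geq 1 - \tfrac1{100}$. Applying \cref{thm:symmetric-classifier-concentration} (in the real case directly; in the complex case after identifying $\C^r \cong \R^{2r}$, which accounts for the doubled dimension and the $18r$ versus $3r$ threshold), with $\delta = \tfrac1{20}$, gives that with probability at least $\tfrac{19}{20}$, uniformly over all unit $\vu$,
\[
\frac1k \sum_{i=1}^k \indic_{\{|\langle \vx_i, \vu\rangle|^2 \geq \tau\}} \geq 0.99 - \sqrt{\tfrac{r'}{k}\log(\e k/r')} - \sqrt{\tfrac{\log 20}{8k}},
\]
where $r' = r$ (real case) or $r' = 2r$ (complex case). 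The stated embedding-dimension thresholds $k \geq \max\{3r, 131\}$ (real) and $k \geq \max\{18r, 169\}$ (complex) are precisely what force the right-hand side to be at least $\tfrac1{10}$: the $\log(\e k/r')$ factor is maximized relative to $k/r'$ at the smallest allowed ratio, so one checks the inequality $0.99 - \sqrt{(r'/k)\log(\e k/r')} - \sqrt{\log 20/(8k)} \geq 0.1$ at the boundary values. Combining, $\sigma_{\min}^2(\mOmega^\top \mQ) \geq \tau/10$ with probability at least $\tfrac{19}{20}$, which is the $(r, \tau/10)$-OSI property.

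The main obstacle is the VC-dimension bookkeeping for the two-sided classifier class together with getting the explicit constants right. The one-sided halfspace case is textbook (VC dimension $r+1$), but the two-sided band complement $\{|\langle\vx,\vz\rangle| \geq \sqrt\tau\}$ requires a union-of-two-classes argument, and the bound on the growth function of a union class (roughly the square of the growth function, or $2 S_{\cH}(k)$ with a log correction) is what produces the effective dimension — I need to confirm it is $\cO(r)$ with a clean enough constant that the final thresholds $3r$ and $18r$ come out as stated rather than with extra slack. The complex-to-real reduction is the source of the jump from $3r$ to $18r$ (dimension doubling inside the log, plus the absolute value of a complex inner product not being a one-dimensional band), and I would need to be slightly careful that the classifier $\vx \mapsto \indic_{\{|\langle \vx, \vz\rangle_\C|^2 \geq \tau\}}$ on $\C^r$ corresponds, under $\C^r \cong \R^{2r}$, to a classifier $\indic_{\{\|\mathbf{P}_{\vz}\vx\|_2^2 \geq \tau\}}$ where $\mathbf{P}_{\vz}$ is a rank-$2$ real projection — so the relevant real class is $\{\vx \mapsto \indic_{\{q(\vx) \geq \tau\}}\}$ for rank-$\leq 2$ psd quadratic forms $q$, which still has VC dimension $\cO(r)$ but needs a slightly more general argument (e.g.\ lifting to degree-two monomials, so effective dimension $\cO(r^2)$ — I would need to check whether the rank-$2$ restriction keeps it at $\cO(r)$, which it does, since such forms are parametrized by a $\cO(r)$-dimensional family of pairs of linear functionals). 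Everything else — the $t \geq \tau\indic_{\{t\geq\tau\}}$ trick, the transfer of the small-ball bound through the isometry $\mQ$, and the elementary inequality at the boundary — is routine, so I would keep those compressed.
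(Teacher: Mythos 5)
Your plan for this statement is essentially the paper's own proof: the paper likewise expresses the two-sided classifier as a Boolean combination of two one-sided linear classifiers, bounds the growth function of the combined class by the square of the halfspace growth function via Sauer's lemma (yielding $(\e k/r)^{2r}$), and plugs this into the standard one-sided uniform deviation bound (symmetrization plus a McDiarmid-type tail, imported from Mohri et al.), which produces exactly the two error terms in the statement. The only detail to pin down is the one you flag yourself: the paper's imported lemma takes the thresholded linear classifiers to have VC dimension $r$ (not $r+1$), which is what makes the first term come out precisely as $\sqrt{\tfrac{r}{k}\log(\e k/r)}$; with the affine-halfspace value $r+1$ your route gives the same bound up to this negligible shift.
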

Asymptotically equivalent results are known \cite{tropp23hdp,mohri2012foundations}; we provide sharper constants.
Our version of \cref{thm:symmetric-classifier-concentration} is proven in \cref{app:vc-dim-analysis}.
This result simplifies the results in \cite{tropp2015convex} and obtains tighter constants.
The proof is by the small-ball method \cite{GKK+22:Geometry-Polytopes,tropp23hdp,Ver25:High-Dimensional-Probability-2ed}.
With this preparation in place, we frame a bound on the minimum singular value of a matrix with iid columns:
\begin{theorem}[Minimum singular value by small-ball method, real case]
    \label{thm:small-ball-min-sing-val}
    Let \(\mX = [\vx_1 ~ \cdots ~ \vx_k] \in \bbR^{r \times k}\), where \(\vx_i \in \bbR^r\) are iid copies of random vector \vx.
    Then, for all \(\tau > 0\) and \(\delta\in(0,1)\), we have that
    \[
        \sigma_{\rm min}^2(\mX) \geq \tau k \left(
            \min_{\norm\vu_2=1}
            \Pr\left\{|\langle \vu, \vx\rangle|^2 \geq \tau\right\}
            -   
            \sqrt{\tsfrac{r}{k}\log\left(\tsfrac{\e k}{r}\right)} - \sqrt{\tsfrac{1}{8k}\log\left(\tsfrac1\delta\right)}
        \right)
    \]
    holds with probability at least \(1-\delta\).
\end{theorem}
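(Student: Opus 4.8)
The plan is to reduce the statement to the uniform classification bound of \cref{thm:symmetric-classifier-concentration} via a ``small-ball'' thresholding step. First I would record the variational identity for the smallest singular value. Since $\mX$ and $\mX^\top$ share the same nonzero singular values, for any unit vector $\vu \in \bbR^r$ we have $\norm{\mX^\top\vu}_2^2 = \sum_{i=1}^k |\langle \vu,\vx_i\rangle|^2$, so
\[
    \sigma_{\min}^2(\mX) = \min_{\norm{\vu}_2 = 1}\ \sum_{i=1}^k |\langle \vu, \vx_i\rangle|^2 .
\]
(This is an equality in the regime $k \geq r$, which is the only regime where the claimed bound carries content; in the complementary range the right-hand side of the theorem is vacuous, so there is nothing to prove.)

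Next I would invoke the elementary inequality $t \geq \tau\cdot\mathbbm1_{\{t \geq \tau\}}$, valid for every $t \geq 0$, applied to each summand $t = |\langle \vu, \vx_i\rangle|^2$. This replaces the quadratic lower bound by a counting lower bound:
\[
    \sigma_{\min}^2(\mX) \geq \tau \min_{\norm\vu_2 = 1}\ \sum_{i=1}^k \mathbbm1_{\{|\langle \vu,\vx_i\rangle|^2 \geq \tau\}} = \tau k \cdot \min_{\norm\vu_2 = 1}\ \frac1k\sum_{i=1}^k \mathbbm1_{\{|\langle \vu,\vx_i\rangle|^2 \geq \tau\}} .
\]
The inner quantity is precisely the empirical average of the symmetric two-sided linear classifier $\vx \mapsto \mathbbm1_{\{|\langle \vx,\vu\rangle|^2 \geq \tau\}}$ on the iid sample $\vx_1,\ldots,\vx_k$.

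Then I would apply \cref{thm:symmetric-classifier-concentration} to the class $\cF$, which contains, for our fixed threshold $\tau$ and every unit vector $\vu \in \bbR^r$, the classifier above. The theorem supplies, on an event of probability at least $1-\delta$, the uniform estimate
\[
    \frac1k\sum_{i=1}^k \mathbbm1_{\{|\langle \vu,\vx_i\rangle|^2 \geq \tau\}} \geq \Pr\bigl\{ |\langle \vu,\vx\rangle|^2 \geq \tau \bigr\} - \sqrt{\tfrac{r}{k}\log\bigl(\tfrac{\e k}{r}\bigr)} - \sqrt{\tfrac{1}{8k}\log\bigl(\tfrac1\delta\bigr)}
\]
for all such $\vu$ simultaneously. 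Since the two deviation terms do not depend on $\vu$, I can take the minimum over the unit sphere on both sides, then multiply through by $\tau k$ and combine with the previous display to land exactly on the asserted inequality.

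The argument has no genuine obstacle once \cref{thm:symmetric-classifier-concentration} is available; the only points demanding care are bookkeeping the directions of the chain of inequalities and confirming the classifier family is rich enough — in particular that the required uniformity is over the whole sphere with a \emph{fixed} threshold, which is exactly why the ``two-sided'' absolute-value form of the classifiers in \cref{thm:symmetric-classifier-concentration} is essential (ordinary half-space classifiers would not capture $|\langle\vu,\vx\rangle|^2$). The genuinely technical work — bounding the VC dimension of the symmetric classifier class and extracting the deviation bound with sharp constants — is the content of \cref{thm:symmetric-classifier-concentration} itself, whose proof is deferred to \cref{app:vc-dim-analysis}, so it sits outside this proof.
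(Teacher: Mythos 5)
Your proposal is correct and follows essentially the same route as the paper's proof: the variational expression for $\sigma_{\min}^2(\mX)$, the pointwise bound $t \geq \tau\,\mathbbm{1}_{\{t\geq\tau\}}$ to pass to an empirical mean of two-sided classifiers, and a direct appeal to \cref{thm:symmetric-classifier-concentration}, whose uniformity over the class lets you take the minimum over unit vectors. Your extra remarks (restriction to the regime $k\geq r$ and the necessity of the absolute-value classifiers) are sensible clarifications but do not change the argument.
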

\begin{proof}
    We expand the minimum singular value of \mX:
    \[
        \sigma_{\rm min}^2(\mX)
        = \min_{\norm\vu_2=1} \norm{\vu^\top\mX}_2^2
        = \min_{\norm\vu_2=1} \sum_{i=1}^k |\langle \vu, \vx_i\rangle|^2
    \]
    Then, since \(t \geq \tau \mathbbm1_{\{t \geq \tau\}}\) for positive \(t\), we have that
    \[
        \sigma_{\rm min}^2(\mX)
        \geq \tau k \left( \min_{\norm\vu_2=1} \frac1k \sum_{i=1}^k \mathbbm1_{\left\{|\langle \vu, \vx_i\rangle|^2 \geq \tau\right\}} \right).
    \]
    Notice that the term inside the minimum is an empirical mean of indicator variables.
    Thus, the stated result follows immediately by \cref{thm:symmetric-classifier-concentration}.
\end{proof}

Notice that \cref{thm:small-ball-min-sing-val} is locked into the real field \(\bbF=\bbR\).
The real case of \cref{thm:osi-from-small-ball} follows immediately from this result and a few short numerical calculations.
We discuss the complex case below in \cref{app:small-ball-complex}.
Of potential independent interest, we also provide a high-probability version of this result with small explicit constants:

\begin{theorem}[High probability subspace injection by small-ball method]
    Let \(\mOmega = \frac1{\sqrt k}[\vomega_1 ~ \cdots ~ \vomega_k] \in \bbF^{d \times k}\) be generated from iid copies \(\vomega_i\) of an isotropic random vector \(\vomega\).
    Suppose that \(\Pr\bigl\{|\langle \vu, \vomega\rangle|^2 \geq \tau\bigr\} \geq 0.99\) for all unit vectors \(\vu\in\bbF^d\) and some value \(\tau>0\).
    Then for all orthonormal matrices \(\mQ\in\bbF^{d \times r}\), it holds that \(\sigma_{\rm min}(\mQ^\top\mOmega) \geq \frac{\tau}{10}\) with probability at least \(1-\delta\) provided \(k \geq \max\{4r,10\log(1/\delta)\}\) if \(\bbF=\bbR\) or \(k \geq \max\{18r,57\log(1/\delta)\}\) if \(\:\bbF=\bbC\).
\end{theorem}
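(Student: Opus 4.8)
The plan is to deduce the bound from the minimum--singular-value estimate obtained via the small-ball method---\cref{thm:small-ball-min-sing-val} in the real case and its complex analogue established in \cref{app:small-ball-complex}---together with an elementary optimization of the resulting concentration error. First I would pass to a matrix with iid columns: set $\mX \coloneqq \sqrt{k}\,\mQ^\top\mOmega = \bmat{\mQ^\top\vomega_1 & \cdots & \mQ^\top\vomega_k} \in \F^{r \times k}$, whose columns are iid copies of $\vx \coloneqq \mQ^\top\vomega$. Since $\mQ$ has orthonormal columns, $\mQ\vu$ is a unit vector in $\F^d$ whenever $\vu \in \F^r$ is a unit vector, and $\langle \vu, \vx\rangle = \langle \mQ\vu, \vomega\rangle$; hence the small-ball hypothesis transfers verbatim, $\Pr\{|\langle\vu,\vx\rangle|^2 \geq \tau\} = \Pr\{|\langle\mQ\vu,\vomega\rangle|^2 \geq \tau\} \geq 0.99$ for every unit $\vu \in \F^r$. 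Because $\sigma_{\min}^2(\mQ^\top\mOmega) = k^{-1}\sigma_{\min}^2(\mX)$, it suffices to establish the OSI estimate $\sigma_{\min}^2(\mX) \geq (\tau/10)\,k$ with probability at least $1-\delta$.

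In the real case ($\F = \R$), I would apply \cref{thm:small-ball-min-sing-val} directly to $\mX$ with the small-ball constant $0.99$, obtaining
\[
    \sigma_{\min}^2(\mX) \;\geq\; \tau k\Bigl( 0.99 - \sqrt{\tfrac{r}{k}\log\tfrac{\e k}{r}} - \sqrt{\tfrac{1}{8k}\log\tfrac1\delta}\Bigr)
\]
with probability at least $1-\delta$. It then remains to check that the parenthesis is at least $\tfrac{1}{10}$ under $k \geq \max\{4r,\, 10\log(1/\delta)\}$. The map $t \mapsto t^{-1}\log(\e t)$ is decreasing on $[1,\infty)$, so the first error term is largest at $k/r = 4$, where it equals $\sqrt{\tfrac14\log(4\e)} < 0.773$; the second term is at most $\sqrt{1/80} < 0.112$ once $k \geq 10\log(1/\delta)$. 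Thus the parenthesis exceeds $0.99 - 0.773 - 0.112 > \tfrac{1}{10}$, which yields $\sigma_{\min}^2(\mQ^\top\mOmega) \geq \tau/10$.

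The complex case ($\F = \C$) runs along the same lines, with the real version of \cref{thm:small-ball-min-sing-val} replaced by its complex counterpart from \cref{app:small-ball-complex}. That version is proved by realifying $\F^r \cong \R^{2r}$ and splitting $|\langle\vu,\vx\rangle|^2$ into its squared real and imaginary parts, which inflates the effective dimension from $r$ to a constant multiple of $2r$ and costs a further bounded factor in the small-ball threshold; the constants $18$ and $57$ in the assumption on $k$ are calibrated to absorb exactly these losses, after which the identical two-term numerical estimate closes the argument. The main obstacle is therefore not conceptual but the explicit constant bookkeeping: one must verify that with the stated bounds on $k$ the total concentration error stays strictly below its critical value (about $0.89$ in the real case) so that the parenthesis clears $\tfrac1{10}$, and in particular one must use the monotonicity of $t \mapsto t^{-1}\log(\e t)$ to see that the worst-case pair $(r,k)$ sits on the boundary $k = 4r$ (respectively $k = 18r$). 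The care really resides in the complex reduction in \cref{app:small-ball-complex}---correctly tracking the doubling of the dimension and the modulus-splitting loss---and once that is in hand the present statement is a short corollary.
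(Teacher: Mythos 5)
Your proposal is correct and follows exactly the route the paper intends: the statement is a direct corollary of \cref{thm:small-ball-min-sing-val} (real case) and \cref{thm:small-ball-min-sing-val-complex} (complex case), applied to $\mX = \sqrt{k}\,\mQ^\top\mOmega$ with the small-ball hypothesis transferred through the orthonormal $\mQ$, followed by checking the error terms at the boundary ratios $k/r$ and $k/\log(1/\delta)$, which are the worst cases by the monotonicity you note. The one computation you left implicit, the complex case, does clear the bar---$0.99 - \sqrt{\tfrac{2}{9}\log(9\e)} - \sqrt{\tfrac{1}{456}} \approx 0.100 > \tfrac{1}{10}$---though only barely, so that margin is worth recording explicitly.
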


\subsection{Refined VC dimension bounds}
\label{app:vc-dim-analysis}

In this section, we develop \Cref{thm:symmetric-classifier-concentration} using standard tools from VC theory carefully.
We first prove the real case, and extend our results to the complex case in \cref{app:small-ball-complex}.
We start by importing useful definitions and concentration results.
We start with the growth function:
\begin{definition}[Growth function, \protect{\cite[Def.~3.6]{mohri2012foundations}}]
    \label{def:growth-func}
    Let \cX be a set, and let \(\cF\) be a set of functions mapping from \cX to \(\{0,1\}\).
    Then, the \emph{growth function} of \cF is defined as 
    \[
        \Pi_{\cF}(k)
		\defeq
		\max_{\vx_1,\ldots,\vx_k\in\cX}
		|\{
            (f(\vx_1),\ldots,f(\vx_k))
            ~:~
            f\in\cF
        \}|.
	\]
    The growth function $\Pi_{\cF}(k)$ counts the number of ways functions in \cF can assign labels to $k$ points in \cX.
\end{definition}
We list two standard properties of the growth function:
\begin{importedtheorem}[Concentration of empirical mean, adapted from \protect{\cite[Cor.~3.9]{mohri2012foundations}}]
	\label{impthm:growth-func-generalization}
	Let \cX be a set, and let \cF be a set of functions that map from \cX to \(\{0,1\}\).
	Let \(\vx_1,\ldots,\vx_i\) be iid copies of random variable \(\vx\in\cX\).
	Then, with probability at least \(1-\delta\), it holds simultaneously for all \(f\in\cF\) that
	\[
		\frac1k \sum_{i=1}^k f(\vx_i)
		\geq
		\E[f(\vx)]
		- \sqrt{
			\frac{\log(\Pi_{\cF}(k))}{2k}
		} - \sqrt{
			\frac{\log(\frac1\delta)}{8k}
		}
	\]
\end{importedtheorem}

\begin{importedlemma}[Self-intersection/union, \protect{\cite[Exercise 3.23]{mohri2012foundations}}]
	\label{implem:growth-func-self-intersec}
	Let \(\cF\) be a set of functions that maps from \cX to \(\{0,1\}\), and define the \emph{self-intersection} of \cF as
	\(
		\cF_{\cap} \defeq 
        \left\{
			\vx \mapsto f_1(\vx) f_2(\vx)
			~:~
			f_1,f_2\in\cF
		\right\}.
	\)
	Then, it holds that
	\(
		\Pi_{\cF_\cap} \leq \Pi_{\cF}^2.
	\)
    The same bound holds for the \emph{self-union} \(\cF_{\cup} \defeq \left\{
        \vx \mapsto \max\{f_1(\vx), f_2(\vx)\}
        ~:~
        f_1,f_2\in\cF
    \right\}\).
\end{importedlemma}
The product and maximum in the above result represent the boolean ``and'' and ``or'' operations on \(\{0,1\}\).
The bound for the self-union follows from the bound for the self-intersection by De Morgan's law.
Next, we import the definition of the VC dimension and recall its relation to the growth function:
\begin{definition}[VC dimension, \protect{\cite[Def.~3.10]{mohri2012foundations}}]
	\label{def:vc-dim}
	Let \cX be a set, and let \cF be a set of functions that map from \cX to \(\{0,1\}\).
	The \emph{VC dimension} of \cF is
	\(
		\text{VCdim}(\cF)
		\defeq
		\max \{k ~:~ \Pi_{\cF}(k) = 2^k\}.
	\)
\end{definition}
\begin{importedlemma}[VC dimensions of linear classifiers, \protect{\cite[Thm.~9.2]{shalev14}}]
    \label{implem:linear-vc-dim}
    The VC dimension of the set of linear classifiers
    \(
        \cF \defeq \left\{
            \vx \mapsto \mathbbm1_{\{\langle\vx,\vu\rangle>\tau\}}
            ~:~
            \vu\in\bbR^r, \tau\in\bbR
        \right\}
    \) is \(r\).
\end{importedlemma}
\begin{importedlemma}[Sauer's Lemma, \protect{\cite[Cor.~3.18]{mohri2012foundations}}]
	\label{implem:sauer}
	Let \cF be a set of function with VC dimension \(r\).
	Then for \(k \geq r\), the growth function is bounded
	\(
		\Pi_{\cF}(k) \leq ( \e k/r )^r.
	\)
\end{importedlemma}
\begin{proof}[Proof of \cref{thm:symmetric-classifier-concentration}]
    Let \(\cH\) be the set of linear classifiers in \cref{implem:linear-vc-dim}.
	Notice that for any two-sided linear classifier \(f\in\cF\), we can write
	\[
		f(\vx)
		= \mathbbm1_{\{\langle\vx,\vu\rangle^2\leq\tau\}}
		= \mathbbm1_{\{\langle\vx,\vu\rangle\leq\tau\}} \mathbbm1_{\{\langle\vx, \vu\rangle\leq-\tau\}}
		= h_1(\vx)h_2(\vx) \quad \text{for some \(h_1,h_2\in\cH\).}
	\]
    We conclude $\cF \subseteq \cH_\cap$.
	Thus, by \Cref{implem:growth-func-self-intersec,implem:linear-vc-dim,implem:sauer}, we find that
	\[
		\Pi_{\cH_{\cap}}(k)
		\leq (\Pi_{\cH}(k))^2
		\leq \left(\e k/{r}\right)^{2r} \quad \text{for all } k \ge r.
	\]
	Therefore, by \Cref{impthm:growth-func-generalization}, we know that simultaneously for all \(f\in\cF\subseteq \cH_{\cap}\),
	\begin{align*}
		\frac1k \sum_{i=1}^k f(\vx_i)
		&\geq
		\E[f(\vx)]
		- \sqrt{
			\frac{\log(\Pi_{\cF}(k))}{2k}
		} - \sqrt{
			\frac{\log(\nicefrac1\delta)}{8k}
		} %
		\geq
		\E[f(\vx)]
		- \sqrt{
			\frac{r}{k}\log\left(\frac{\e k}{r}\right)
		} - \sqrt{
			\frac{\log(\nicefrac1\delta)}{8k}
		}.
	\end{align*}
    Taking a supremum of bound sides yields the stated result.
\end{proof}

\subsection{Analysis in the complex field}
\label{app:small-ball-complex}

To adapt our results to the complex setting, we first develop a version of \cref{thm:symmetric-classifier-concentration} for unions of symmetric two-sided linear classifiers and a complex-valued version of \cref{thm:small-ball-min-sing-val}.

\begin{theorem}[Union of symmetric classifiers]
    \label{thm:symmetric-classifier-concentration-complex}
    Let \(\cF\) be the class of unions of symmetric classifiers
    \[
        \cF \defeq 
        \left\{
            \vx \mapsto \mathbbm1_{\left\{|\langle \vx,\vz\rangle|^2 \geq \tau \text{ or } |\langle \vx,\vw\rangle|^2 \geq \tau\right\}}
            ~:~
            \vz,\vw\in\bbR^r,\tau\geq0
        \right\},
    \]
    and let \(\vx_1,\ldots,\vx_k\) be iid copies of a random vector \(\vx\in\bbR^r\).
    Then, with probability at least \(1-\delta\), we have that
    \[
        \sup_{f\in\cF} \frac1k \sum_{i=1}^k f(\vx_i)
        ~\geq~
        \sup_{f\in\cF} \E[f(\vx)]
        ~- \sqrt{
            \tfrac{2r}{k}\log\left(\tfrac{\e k}{r}\right)
        } - \sqrt{
            \tfrac{\log(\nicefrac1\delta)}{8k}
        }.
    \]
\end{theorem}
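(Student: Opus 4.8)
The plan is to follow the proof of \cref{thm:symmetric-classifier-concentration} essentially verbatim, the only difference being that the relevant hypothesis class is now built from \emph{four} one-sided linear classifiers instead of two; this inflates the VC-type exponent from $2r$ to $4r$ and produces the $\sqrt{(2r/k)\log(\e k/r)}$ term. Write $\cH \coloneqq \{\vx \mapsto \mathbbm1_{\{\langle\vx,\vu\rangle > t\}} : \vu\in\R^r,\ t\in\R\}$ for the class of open halfspace indicators, which has VC dimension $r$ by \cref{implem:linear-vc-dim}; since the growth function is unchanged when every function of a class is complemented, the closed halfspaces $\{\vx : \langle\vx,\vu\rangle \ge t\}$ obey the same bound $\Pi(k) \le (\e k/r)^{r}$ for $k \ge r$ (\cref{implem:sauer}).

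First I would rewrite a generic member of $\cF$ as a boolean combination of halfspaces. For fixed $\vz,\vw,\tau$ one has $\{\vx : |\langle\vx,\vz\rangle|^2 \ge \tau\} = \{\langle\vx,\vz\rangle \ge \sqrt\tau\} \cup \{\langle\vx,\vz\rangle \le -\sqrt\tau\}$, and likewise for the $\vw$-term, so the event in the definition of $\cF$ is a union of four (closed) halfspaces. Hence $\cF$ is contained in the class obtained from $\cH$ by two successive self-unions, and two applications of \cref{implem:growth-func-self-intersec} give
\[
\Pi_{\cF}(k) \;\le\; \Pi_{\cH}(k)^{4} \;\le\; (\e k/r)^{4r} \qquad\text{for } k \ge r.
\]

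Next I would feed this into the uniform generalization inequality \cref{impthm:growth-func-generalization}. Writing $D \coloneqq \sqrt{(2r/k)\log(\e k/r)} + \sqrt{\log(1/\delta)/(8k)}$ and using $\tfrac12\cdot 4r\log(\e k/r) = 2r\log(\e k/r)$, we get that with probability at least $1-\delta$, the bound $\tfrac1k\sum_{i=1}^k f(\vx_i) \ge \E[f(\vx)] - D$ holds simultaneously for every $f\in\cF$. Because $D$ is a deterministic quantity independent of $f$, the supremum may be moved across the inequality: on this event, $\sup_{g\in\cF}\tfrac1k\sum_i g(\vx_i) \ge \tfrac1k\sum_i f(\vx_i) \ge \E[f(\vx)] - D$ for every $f\in\cF$, and taking the supremum over $f$ on the right yields $\sup_{f\in\cF}\tfrac1k\sum_i f(\vx_i) \ge \sup_{f\in\cF}\E[f(\vx)] - D$, which is the claim.

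The hard part is not analytic but combinatorial bookkeeping: one must (i) count correctly that the event unpacks into four halfspaces, so that the exponent is exactly $4r$ and the displayed constant is $2r$ (and not $r$ or $3r/2$); (ii) dispatch the open-versus-closed halfspace mismatch via complement-invariance of the growth function; and (iii) keep in mind that Sauer's lemma---and therefore the whole statement---is only informative for $k \ge r$, the regime in which the right-hand side is nonnegative. For context, the reason \cref{app:small-ball-complex} needs this two-classifier version rather than \cref{thm:symmetric-classifier-concentration} is that, after realification of $\C^d$, the event $|\langle\vomega,\vu\rangle|^2 \ge \tau$ is implied by $(\Re\langle\vomega,\vu\rangle)^2 \ge \tau$ \emph{or} $(\Im\langle\vomega,\vu\rangle)^2 \ge \tau$, which is exactly a member of $\cF$; controlling the empirical frequency of this union is what lets the complex minimum-singular-value bound go through.
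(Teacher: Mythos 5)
Your proof is correct and follows essentially the same route as the paper: you bound the growth function of $\cF$ by $(\e k/r)^{4r}$ via \cref{implem:growth-func-self-intersec} (the paper views $\cF$ as the self-union of the two-sided class from \cref{thm:symmetric-classifier-concentration}, you unpack it directly into four halfspaces — the same count) and then plug into \cref{impthm:growth-func-generalization}. Your extra care about open versus closed halfspaces and the $k\ge r$ regime is a welcome tightening of details the paper glosses over, but it is not a different argument.
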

\begin{proof}
    Notice that \cF is exactly the self-union of the class of two-sided symmetric linear classifiers as defined in \cref{thm:symmetric-classifier-concentration}.
    The proof of the advertise result then follows the proof of \cref{thm:symmetric-classifier-concentration} but instead using \cref{implem:growth-func-self-intersec} to bound \(\Pi_\cF(k) \leq (\e k/r)^{4r}\).
\end{proof}

\begin{theorem}[Minimum singular value by small-ball method, complex case]
    \label{thm:small-ball-min-sing-val-complex}
    Let \(\mX = [\vx_1 ~ \cdots ~ \vx_k] \in \bbC^{r \times k}\), where \(\vx_i \in \bbC^r\) are iid copies of random vector \vx.
    Then, for all \(\tau > 0\) and \(\delta\in(0,1)\), we have that
    \[
        \sigma_{\rm min}^2(\mX) \geq \tau k \left(
            \min_{\norm\vu_2=1}
            \Pr\bigl\{|\langle \vu, \vx\rangle|^2 \geq \tau\bigr\}
            -   
            \sqrt{\tsfrac{4r}{k}\log\left(\tsfrac{\e k}{2r}\right)} - \sqrt{\tsfrac{1}{8k}\log\left(\tsfrac1\delta\right)}
        \right)
    \]
    holds with probability at least \(1-\delta\).
\end{theorem}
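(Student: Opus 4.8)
The plan is to follow the real-case argument of \cref{thm:small-ball-min-sing-val} step by step, replacing the symmetric-classifier concentration bound \cref{thm:symmetric-classifier-concentration} by its ``union of symmetric classifiers'' counterpart \cref{thm:symmetric-classifier-concentration-complex}. First I would realify: identify $\bbC^r$ with $\bbR^{2r}$ via $\vz \mapsto \vz_\bbR \coloneqq (\Re\vz, \Im\vz)$ and write $\vx_{i,\bbR}$ for the realification of the column $\vx_i$. Using the paper's inner-product convention one has $\Re\langle\vu,\vx\rangle = \langle\vu_\bbR,\vx_\bbR\rangle$ and $\Im\langle\vu,\vx\rangle = \langle(\mathrm{i}\vu)_\bbR,\vx_\bbR\rangle$, with $\|\vu_\bbR\|_2 = \|(\mathrm{i}\vu)_\bbR\|_2 = \|\vu\|_2$, so that $|\langle\vu,\vx\rangle|^2 = \langle\vu_\bbR,\vx_\bbR\rangle^2 + \langle(\mathrm{i}\vu)_\bbR,\vx_\bbR\rangle^2$ is a sum of two squared real-linear marginals of $\vx_\bbR$. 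Then, exactly as in the real case, I would write $\sigma_{\rm min}^2(\mX) = \min_{\|\vu\|_2 = 1}\sum_{i=1}^k |\langle\vu,\vx_i\rangle|^2$ and apply the numerical inequality $t \ge \tau\,\mathbbm1_{\{t \ge \tau\}}$ (valid for $t \ge 0$) to get $\sigma_{\rm min}^2(\mX) \ge \tau k \cdot \min_{\|\vu\|_2=1} \frac1k\sum_{i=1}^k \mathbbm1_{\{|\langle\vu,\vx_i\rangle|^2 \ge \tau\}}$, reducing the theorem to a uniform lower bound on an empirical classification frequency.

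The core step is the uniform concentration. Since $\langle\vu_\bbR,\vx_\bbR\rangle^2 \ge \tau$ or $\langle(\mathrm{i}\vu)_\bbR,\vx_\bbR\rangle^2 \ge \tau$ each forces $|\langle\vu,\vx\rangle|^2 \ge \tau$, I would lower-bound each indicator by $f_\vu(\vx_{i,\bbR})$, where $f_\vu \coloneqq \mathbbm1_{\{(\Re\langle\vu,\vx\rangle)^2 \ge \tau \text{ or } (\Im\langle\vu,\vx\rangle)^2 \ge \tau\}}$ is a union of two symmetric two-sided linear classifiers on $\bbR^{2r}$, i.e.\ a member of the class $\cF$ in \cref{thm:symmetric-classifier-concentration-complex}. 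Such an $f_\vu$ is an ``or'' of four halfspaces, so by \cref{implem:linear-vc-dim,implem:growth-func-self-intersec,implem:sauer} the growth function of $\cF$ over $\bbR^{2r}$ is at most $(\e k/(2r))^{8r}$, which is precisely what produces the $\sqrt{(4r/k)\log(\e k/(2r))}$ term when fed through \cref{impthm:growth-func-generalization}. I would note that this growth-function argument is uniform over the whole class: the statement of \cref{thm:symmetric-classifier-concentration-complex} is phrased with a supremum, but its proof (like that of \cref{thm:symmetric-classifier-concentration}) controls every $f \in \cF$ simultaneously, which is what I need after taking the minimum over $\vu$. Thus with probability at least $1-\delta$ one gets $\frac1k\sum_{i=1}^k f_\vu(\vx_{i,\bbR}) \ge \E[f_\vu(\vx_\bbR)] - \sqrt{(4r/k)\log(\e k/(2r))} - \sqrt{\log(1/\delta)/(8k)}$ for every unit $\vu$ at once; combined with the first paragraph this yields a lower bound on $\sigma_{\rm min}^2(\mX)$ of the advertised shape, with $\E[f_\vu(\vx_\bbR)] = \Pr\{(\Re\langle\vu,\vx\rangle)^2 \ge \tau \text{ or } (\Im\langle\vu,\vx\rangle)^2 \ge \tau\}$ playing the role of the small-ball probability.

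The hard part will be closing the gap between $\Pr\{(\Re\langle\vu,\vx\rangle)^2 \ge \tau \text{ or } (\Im\langle\vu,\vx\rangle)^2 \ge \tau\}$ and the quantity $\Pr\{|\langle\vu,\vx\rangle|^2 \ge \tau\}$ appearing in the statement: the latter is a \emph{sum}-of-squares event, strictly larger than the ``or''-of-squares event controlled by the VC bound, so the expectation term is not literally the small-ball probability. The remedy I would use is the elementary implication $a^2 + b^2 \ge 2\tau \Rightarrow \max\{a^2,b^2\} \ge \tau$, which trades one event for the other at the cost of a factor $2$ in the threshold; rescaling $\tau$ then recovers the stated form. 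This factor-$2$ step is exactly what distinguishes the complex case from the real case, where $|\langle\vu,\vx\rangle|^2 \ge \tau$ is itself a symmetric classifier and no loss occurs. The rest is mechanical: confirming the proof of \cref{thm:symmetric-classifier-concentration-complex} supplies the uniform inequality I invoke, and redoing the short numerical estimates --- as in the real case --- to extract the explicit embedding-dimension thresholds used downstream in \cref{thm:osi-from-small-ball}.
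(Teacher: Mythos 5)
Your proposal follows essentially the same route as the paper's proof: realify \(\bbC^r\) as \(\bbR^{2r}\), lower-bound each indicator \(\mathbbm1_{\{|\langle\vu,\vx_i\rangle|^2\ge\tau\}}\) by the indicator of an ``or'' of two symmetric two-sided linear classifiers, and feed the resulting empirical frequency into \cref{thm:symmetric-classifier-concentration-complex}; your growth-function bookkeeping (the class over \(\bbR^{2r}\) yielding the \(\sqrt{(4r/k)\log(\e k/(2r))}\) term) matches the paper's, and your observation that one needs the uniform ``for all \(f\)'' form of that theorem (which its proof, via \cref{impthm:growth-func-generalization}, does supply) rather than the sup-form in its statement is correct and is a point the paper glosses over. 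Your realification via \((\mathrm{i}\vu)_{\bbR}\) is in fact cleaner than the paper's permutation-matrix formulation, which has a sign slip in the imaginary part that is immaterial since only squares enter.

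The one substantive discrepancy is your factor-of-two bookkeeping at the end, and it does not quite do what you claim. With the or-event at threshold \(\tau\), the expectation you control is \(\Pr\{\max\{(\Re\langle\vu,\vx\rangle)^2,(\Im\langle\vu,\vx\rangle)^2\}\ge\tau\}\), which dominates \(\Pr\{|\langle\vu,\vx\rangle|^2\ge 2\tau\}\); rescaling \(\tau\mapsto\tau/2\) then produces a prefactor \(\tau k/2\), not \(\tau k\), so your argument establishes the statement only with either the probability evaluated at \(2\tau\) or with half the stated prefactor --- the closing claim that rescaling ``recovers the stated form'' is not literally accurate. To be fair, the paper's own proof suffers the identical slippage in mirror image: it passes to the or-event at threshold \(\tau/2\) (so that its expectation dominates \(\Pr\{|\langle\vu,\vx\rangle|^2\ge\tau\}\) directly), but the pointwise bound \(|\langle\vu,\vx_i\rangle|^2\ge\max\{(\Re\langle\vu,\vx_i\rangle)^2,(\Im\langle\vu,\vx_i\rangle)^2\}\) then only supports the prefactor \(\tau k/2\), while the displayed intermediate inequality silently keeps \(\tau k\). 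So neither route delivers the literal constant in the statement; the honest conclusion of both is the bound with \(\tau k/2\) (equivalently, with the small-ball probability at threshold \(2\tau\)). This is a constant-factor matter that can be absorbed in the numerical estimates behind the complex case of \cref{thm:osi-from-small-ball} (at the cost of slightly larger explicit constants there), but if you carry out your proof you should state the theorem with the honest prefactor rather than assert the claimed form.
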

\begin{proof}
    As in the proof of \cref{thm:small-ball-min-sing-val}, we lower bound the minimum singular value of \mX as
    \[
        \sigma_{\rm min}^2(\mX)
        \geq \tau k \left( \min_{\norm\vu_2=1} \frac1k \sum_{i=1}^k \mathbbm1_{\left\{|\langle \vu, \vx_i\rangle|^2 \geq \tau \right\}} \right).
    \]
    Note that \(|\langle \vu, \vx_i\rangle|^2 \geq \tau\) only if \(|\Re \: \langle \vu, \vx_i\rangle|^2 \geq \tau/2\) or \(|\Im \: \langle \vu, \vx_i\rangle|^2 \geq \tau/2\).
    Now let \(\varphi:\bbC^r \rightarrow \bbR^{2r}\) be the map \(\varphi(\vx) \defeq \sbmat{\Re \vx \\ \Im \vx}\).
    We have that \(\Re \langle \vu, \vx_i\rangle = \langle \varphi(\vu), \varphi(\vx_i)\rangle\) and \(\Im \langle \vu, \vx_i\rangle = \langle \mP\varphi(\vu), \varphi(\vx_i)\rangle\) for permutation matrix \(\mP = \sbmat{\mat0 & \mI \\ \mI & \mat0}\).
    By substituting in this map, we get
    \[
        \sigma_{\rm min}^2(\mX)
        \geq \tau k \left(
            \min_{\norm\vu_2=1} \frac1k \sum_{i=1}^k \mathbbm1_{
            \left\{
                |\langle \varphi(\vu), \varphi(\vx_i)\rangle|^2 \geq \tau/ 2
                \text{ or }
                |\langle \mP\varphi(\vu), \varphi(\vx_i)\rangle|^2 \geq \tau/2
            \right\}}
        \right).
    \]
    Notice that the term inside the minimum is an empirical mean of indicator variables involving vectors in \(\bbR^{2r}\).
    Thus, the stated result follows immediately by \cref{thm:symmetric-classifier-concentration-complex}.
\end{proof}

The complex case of \cref{thm:osi-from-small-ball} now follows as a corollary of \cref{thm:small-ball-min-sing-val-complex}. 
\section{Implementation of scientific data compression experiment}
    \label{app:gross-pitaevskii}

In this appendix, we describe the experiment from \cref{sec:science-pod-modes} in greater detail.
The dynamics of a Bose--Einstein condensate trapped in an external potential $V_{\mathrm{trap}}$ rotating around the $z$-axis at an angular frequency $\Omega$ are governed by the \textit{Gross--Pitaevskii equation} (GPE):
$$\mathrm{i} \hbar \frac{\partial \psi(\mathbf{x}, t)}{\partial t}=\left[-\frac{\hbar^2}{2 m} \nabla^2+V_{\text {trap }}(\mathbf{x})+g|\psi|^2-\Omega L_z\right] \psi(\mathbf{x}, t) \quad \text{for} t \ge 0,$$
where $\hbar$ is the reduced Planck constant, $m$ is the atomic mass, $g$ is the interparticle interaction strength, and 
$
  L_z = -i\hbar\bigl(x\,\partial_y - y\,\partial_x\bigr)
$
is the \(z\)-component of the angular-momentum operator.

To study ground states, we follow the standard procedure \cite{zeng09} and consider the evolution of this equation in \emph{imaginary time}.
Which choose the initial state $\psi_0(\vx,0)$ to be a Gaussian profile, carrying a phase defect in a harmonic potential $V_h$. 
In order to simulate the imaginary-time GPE evolution, we use a Fourier pseudospectral method in space and a semi-implicit backward/forward Euler scheme in time following \cite{zeng09}. 
The initial state $\psi_0(\vx,0)$ is evolved for $40{,}000$ time steps using a time step of $\Delta t=10^{-4}$ on a $256\times 256$ domain. %
The pde data $\vx(t)\in\bbC^{65{,}536}$ is natively complex, but we convert to real data by concatenating the real and imaginary parts $(\Re \vx(t),\Im \vx(t)) \in \bbR^{131{,}072}$.

\section{Implementation of partition function estimation experiment}
\label{app:trace-est}

In this appendix, we describe the experimental setup for \cref{sec:science-trace} in greater detail.
Partition function estimation for the transverse-field Ising model was studied in \cite{epperly24trace}.
Following their lead, we shift the Hamiltonian matrix by a constant $b=(1+h)\ell$ before applying stochastic trace estimation so that $\mH+b\mI$ is psd, allowing us to use variance-reduced trace estimators for psd matrices \mA like Nyström++ \cite{persson22} and XNysTrace \cite{epperly24trace}.
To compute \(\exp(-\beta\mH)\mOmega\) we follow  \cite{al11} evaluating the matrix exponential matvec implicitly via a truncated Taylor approximation.
We observe that when paired with Khatri--Rao test matrices, the previously mentioned variance-reduced trace estimators strongly outperform the naive Girard--Hutchinson trace estimator for this problem.

\section*{Acknowledgements}

CC was supported by the Caltech Kortschak Scholars Program
and the NSF GRFP Award 2139433.
RAM, ENE, and JAT were supported by ONR Award N00014-24-1-2223,
a Caltech Center for Sensing to Intelligence grant,
and the Caltech Carver Mead New Adventures Fund.
ENE was also supported
by the DOE CSGF under Award DE-SC0021110.

\bibliographystyle{siamplain}

{\footnotesize
}

\end{document}